\titleformat*{\section}{\bf\Large\center}
\newcommand{\proper}{the only test statistics in Table \ref{tb:tt} proper for testing $\HN$ via {\frt}.}
\newcommand{\rtn}{\sqrt N}
\def\nm{\neyman}
\def\orange{\color{orange}}
\def\pso{pseudo-outcome}
\def\mo{model-output}
\newcommand{\dgp}{For  $(Y_i, x_i, Z_i)_{i=1}^N$ from arbitrary data generating process with $\bar x=0_J$,} 
\def\prob{P}
\def\begini{\begin{itemize}}
\def\endi{\end{itemize}}
\newcommand{\ols}{\textsc{ols}}
\newcommand{\oo}{o(1)}
\newcommand{\asz}{P_Z\text{-}\as}
\newcommand{\asp}{P_\pi\text{-}\as}
\newcommand{\assp}{P_\textsc{s}\text{-}\as}
\newcommand{\aszp}{P_{(Z,\pi)}\text{-}\as}
\newcommand{\ooas}{\oo \ \asz}
\newcommand{\ooaszp}{\oo \ \aszp}
\newcommand{\xdi}{d_i}
\newcommand{\htd}{\hat\tau_d}
\newcommand{\olsf}{{\textsc{ols} fit}}
\newcommand{\ass}{Lemma \ref{as_Z} ensures that it suffices to verify the result for sequences of $Z$ that satisfy Lemma \ref{as_Z} statements \eqref{as_1}--\eqref{as_f}. 
Fix one such sequence for the rest of the proof. }
\newcommand{\asym}{Assume  Condition \ref{asym} and complete randomization.}
\newcommand{\assrem}{Assume Condition \ref{asym}, ReM in design, and $p_{\frt,\ma}$ in \eqref{pfrt_a} in analysis.}
\newcommand{\htxp}{\htau_{x}^\pi}
\newcommand{\htns}{\htn/\sen}
\newcommand{\htnsr}{\htn/\tsen}
\newcommand{\tsen}{\tse_\neyman}
\newcommand{\htnpa}{\htn^{\pi|\ma}}
\newcommand{\htnspa}{(\htns)^{\pi|\ma}}
\newcommand{\htnsrpa}{(\htnsr)^{\pi|\ma}}
 \newcommand{\htsspa}{(\htss)^{\pi|\ma}}
\newcommand{\htspa}{\hts^{\pi|\ma}}
\newcommand{\htnp}{\htn^\pi}
 \newcommand{\htrp}{\htr^\pi }
\newcommand{\htlp}{\htl^\pi}
\newcommand{\htfp}{\htf^\pi}
 \newcommand{\hts}{ \htau_{*}}
\newcommand{\htsp}{ \htau^\pi_*}
\newcommand{\htss}{\htau_* /\ese_*}
\newcommand{\htssr}{\htau_* /\tse_*}
\newcommand{\tsef}{{\tse_\fisher}}
\newcommand{\sen}{{\ese_\neyman}}
\newcommand{\sef}{{\ese_\fisher}}
\newcommand{\senp}{{\ese_\neyman^\pi}}
\newcommand{\tsenp}{{\tse_\neyman^\pi}}
\newcommand{\usq}{\|u\|^2_2}
\newcommand{\vsq}{\|v\|^2_2}
\newcommand{\dsq}{\|\delta\|^2_2}
\newcommand{\esq}{\|e\|^2_2}
\newcommand{\epfsq}{\|\epf\|^2_2}
\newcommand{\ysq}{\|Y\|^2_2}
\newcommand{\zt}{Z^\T}
\newcommand{\on}{1_N}
\newcommand{\ont}{1_N^\T}
\newcommand{\epf}{\epsilon_{\fisher}}
\newcommand{\epfp}{\epsilon_{\fisher,\pi}}
\newcommand{\epfi}{\epsilon_{\fisher,i}}
\newcommand{\hsrz}{\hat S^2_{\rosenbaum(z)}}
\newcommand{\hsfz}{\hat S^2_{\fisher(z)}}
\newcommand{\hslz}{\hat S^2_{\lin(z)}}
\newcommand{\hssz}{\hat S^2_{*(z)}}
\newcommand{\hsnz}{\hat S^2_{\neyman(z)}}
\newcommand{\hsszp}{(\hat S^2_{*(z)})^\pi}
\newcommand{\efl}{\eta_\fl^\pi}
\newcommand{\ek}{\eta_\knd^\pi}
\newcommand{\emly}{\eta_\mly^\pi}
\newcommand{\etb}{\eta_\tb^\pi}
\newcommand{\op}{{ o_P(1)}}
\newcommand{\opz}{o_{P,Z}(1)}
\newcommand{\opp}{o_{P,\pi}(1)}
\newcommand{\ops}{o_{P,\textsc{s}}(1)}
\newcommand{\opss}{o_{P,*}(1)}
\newcommand{\hgr}{\hat\gamma_\rosenbaum}
\newcommand{\hgf}{\hat\gamma_\fisher}
\newcommand{\hglo}{\hat\gamma_{\lin,1}}
\newcommand{\hglz}{\hat\gamma_{\lin,0}}
\newcommand{\hglzz}{\hat\gamma_{\lin,z}}
\newcommand{\htf}{\hat\tau_\fisher}
\newcommand{\htl}{\hat\tau_\lin}
\newcommand{\htr}{\hat\tau_\rosenbaum}
\newcommand{\htn}{\hat\tau_\neyman}
\newcommand{\hbflp}{ \hb_\fl^\pi}
\newcommand{\seflp}{ \ese_\fl^\pi}
\newcommand{\tseflp}{ \tse_\fl^\pi}
\newcommand{\hbsp}{ \hb_*^\pi}
\newcommand{\sesp}{ \ese_*^\pi}
\newcommand{\tsesp}{ \tse_*^\pi}
\newcommand{\hbssp}{ \hbsp /\sesp }
\newcommand{\hbssrp}{ \hbsp /\tsesp }
\newcommand{\hbkp}{\hb_\knd^\pi}
\newcommand{\sekp}{ \ese_\knd^\pi}
\newcommand{\tsekp}{ \tse_\knd^\pi}
\newcommand{\hbmp}{ \hb_\mly^\pi}
\newcommand{\semp}{ \ese_\mly^\pi}
\newcommand{\tsemp}{ \tse_\mly^\pi}
\newcommand{\hbtbp}{ \hb_\tb^\pi}
\newcommand{\setbp}{ \ese_\tb^\pi}
\newcommand{\tsetbp}{ \tse_\tb^\pi}
\newcommand{\fl}{{\textsc{fl}}}
\newcommand{\knd}{{\textsc{k}}}
\newcommand{\mly}{{\textsc{m}}}
\newcommand{\tb}{{\textsc{tb}}}
\newcommand{\qxy}{\hat S_{xY}}
\newcommand{\sxx}{S^2_{x}}
\newcommand{\hs}{\hat S}
\newcommand{\ep}{\epsilon}
 \newcommand{\leqst}{\leq_\textup{st}}
\newcommand{\dsim}{\overset{.}{\sim}}
\newcommand{\wa}[1]{\renewcommand{\arraystretch}{#1}}
\newcommand{\ph}{\ (* = \neyman, \rosenbaum, \fisher, \lin)}
\newcommand{\phs}{* = \neyman, \rosenbaum, \fisher, \lin}
\newcommand{\sumi}{\sum_{i=1}^N}
\newcommand{\sumpi}{\sum_{\pi\in\Pi}}
\newcommand{\sx}{S_x^2}
\newcommand{\mz}{\mathcal{Z}}
\def\T{{ \mathrm{\scriptscriptstyle T} }}
\def\sumM{\sum_{i=1}^M}
\def\se{\textup{se}}
\def\ese{\hat{\textup{se}}}
\def\tse{\tilde{\textup{se}}}
\def\neyman{{\textsc{n}}}
\def\fisher{{\textsc{f}}}
\def\lin{{\textsc{l}}}
\def\rosenbaum{{\textsc{r}}}
\newcommand{\pr}{{\rm pr}}
 \newcommand{\HN}{H_{0\neyman}}
 \newcommand{\HF}{H_{0\fisher}}
\newcommand{\bY}{\bar Y}
\newcommand{\hmu}{\hat{\mu}}
\newcommand{\hb}{\hat{\beta}}
\newcommand{\hg}{\hat\gamma}
\newcommand{\he}{\hat e}
\newcommand{\htau}{\hat \tau}
\newcommand{\htx}{\hat \tau_x}
\newcommand{\htL}{\hat \tau_\lin}
\newcommand{\htR}{\hat \tau_\rosenbaum}
\newcommand{\htN}{\hat \tau_\neyman}
\newcommand{\htH}{\hat \tau_*}
\newcommand{\htF}{\hat \tau_\fisher}
 \newcommand{\mM}{\mathcal{A}}
  \newcommand{\mMc}{\mathcal{A}^c}
 \newcommand{\ma}{\mathcal{A}}
  \newcommand{\mU}{\mathcal{U}}
\newcommand{\mL}{\mathcal{L}}
\newcommand{\ei}{e_i}
\newcommand{\e}{e}
\newcommand{\Yi}{Y_i}
\newcommand{\Zi}{Z_i}
\newcommand{\hY}{\hat Y}
\newcommand{\mZ}{\mathcal{Z}}
\newcommand{\mN}{\mathcal{N}}
\newcommand{\frt}{\textsc{frt}}
\newcommand{\sumN}{\sum_{i=1}^N}
\newcommand{\limN}{\lmt{N}}
\newcommand{\as}{{ \text{a.s.}}}
\newcommand{\ind}{\overset{\textup{d}}{\to}}
\newcommand{\lmt}[1]{\lim_{#1 \to \infty}}
\newcommand{\ot}[1]{1, \ldots,#1}
\DeclareMathOperator{\var}{var}
\DeclareMathOperator{\cov}{cov}
\DeclareMathOperator{\diag}{diag}
\DeclareMathOperator{\Unif}{Unif}
\DeclareMathOperator{\unif}{Unif}
    \newcommand{\ki}{{[k]i}}
\newcommand{\sumj}{\sum_{k=1}^K}
\newcommand{\sumji}{\sum_{i=1}^{\nj}}
  \newcommand{\nj}{N_{[k]}}
      \newcommand{\tauj}{\tau_{[k]}}
\newcommand{\pia}{\mz_a}
\def\beginy{\begin{eqnarray}}
\def\endy{\end{eqnarray}}
\def\begina{\begin{eqnarray*}}
\def\enda{\end{eqnarray*}}
\def\begine{\begin{enumerate}}
\def\ende{\end{enumerate}}
\newcommand{\GG}[1]{}
\theoremstyle{definition}
\newtheorem*{theorem*}{Theorem}
\newtheorem{theorem}{Theorem}
\newtheorem*{rmk*}{remark}
\newtheorem{proposition}{Proposition}
\newtheorem{lemma}{Lemma}
\newtheorem{condition}{Condition}
\newtheorem{definition}{Definition}
\newtheorem{remark}{Remark}
\newtheorem{corollary}{Corollary}
\newtheorem*{corollary*}{Corollary}
\apptocmd{\sloppy}{\hbadness 10000\relax}{}{} 
\newcites{sec}{References}
\def\ind{\begin{picture}(9,8)
         \put(0,0){\line(1,0){9}}
         \put(3,0){\line(0,1){8}}
         \put(6,0){\line(0,1){8}}
         \end{picture}
        }
\newcommand{\red}{\color{red}}
\newcommand{\blue}{\color{blue}}
\newcommand{\dd}{\diag(\delta_i^2)}
\begin{document}


\onehalfspacing

\title{\bf 
Covariate-adjusted Fisher randomization tests for the average treatment effect
}
\author{Anqi Zhao and Peng Ding
\footnote{Anqi Zhao, Department of Statistics and Applied Probability, National University of Singapore, 117546, Singapore (E-mail: staza@nus.edu.sg). Peng Ding, Department of Statistics, University of California, Berkeley, CA 94720 (E-mail: pengdingpku@berkeley.edu).
Peng Ding was partially funded by the U.S. National Science Foundation (grant \# 1945136). We thank the Associate Editor and three referees for their most insightful comments. We thank Jason Wu, Cheng Gao, Kevin Guo, Thomas Richardson, Avi Feller, Xiaokang Luo, Xinran Li, Zhichao Jiang, Mengsi Gao, Bin Yu, and Philip Stark  for helpful suggestions. 
}
}
\date{}
\maketitle

\setcounter{page}{1}

\begin{abstract}
Fisher's  randomization test (\textsc{frt}) delivers exact $p$-values under the strong null hypothesis of no treatment effect on any units whatsoever and allows for flexible covariate adjustment to improve the power.  
Of interest is whether the resulting covariate-adjusted procedure could also be valid for testing the weak null hypothesis of zero average treatment effect.  
To this end, we evaluate two general strategies for conducting covariate adjustment in {\frt}s: 
the ``\pso" strategy  that uses 
the residuals from an outcome model with only the covariates as the pseudo, covariate-adjusted outcomes to form the test statistic, and the ``\mo" strategy  that directly uses the output from an outcome model with both the treatment and covariates as the covariate-adjusted test statistic. 
Based on theory and simulation, we recommend using the ordinary least squares (\textsc{ols}) fit  of the observed outcome on the treatment, centered covariates, and their interactions for covariate adjustment, and conducting \textsc{frt} with the robust $t$-value of the treatment as the test statistic. The resulting {\frt} is finite-sample exact for testing the strong null hypothesis, 
asymptotically valid for testing the weak null hypothesis, and more powerful than the unadjusted counterpart under alternatives, all irrespective of whether the linear model is correctly specified or not.
We start with complete randomization, and then extend the theory to cluster randomization, stratified randomization, and rerandomization,  respectively, giving a recommendation for the test procedure and test statistic under each design.
Our theory is design-based,  also known as randomization-based, in which we condition on the potential outcomes but average over the random treatment assignment.
\end{abstract}
{\bf Keywords}: Finite-population inference; permutation test; randomization distribution; robust standard error; studentization; super-population inference

\newpage 
\section{Fisher's randomization test with covariate adjustment}\label{sec:intro}

\citet{Fisher35} viewed randomization as a ``reasoned basis'' for inference and proposed the randomization test as a universal way to generate finite-sample exact $p$-values without imposing modeling assumptions on the experimental outcomes.  Fisher's  randomization test (\textsc{frt}) becomes increasingly important with the popularity of field experiments in social sciences in addition to the traditional biomedical experiments. 
\citet{proschan2019re} reviewed the use of \textsc{frt}s  in randomized clinical trials  and highlighted its strength in analyzing complex data.  
There is an increasing interest in economics and related fields to use {\frt}s to analyze various types of empirical data \citep{fl, knd, lee2014, cattaneo2015randomization, canay2017randomization, ganong2018permutation, athey2018exact, bugni2018inference, young2019channeling, hk, mackinnon2020randomization, heckman2020}.

The flexibility of {\frt} enables two natural strategies to incorporate covariate information via statistical modeling. First, we can fit a statistical model of the outcome on the covariates to obtain the residuals as the covariate-adjusted pseudo outcomes, and proceed with the usual {\frt} in a covariate-free fashion. 
\citet{tukey1993tightening} used it with linear models, \citet{gail1988tests} used it with generalized linear models, 
\citet{raz1990testing} used it with nonparametric regressions, 
\citet{stephens2013flexible} used it with the generalized estimating equation for clustered data, and  \citet{CovAdjRosen02} reviewed and extended it to not only randomized experiments but also matched observational studies. 
Second, we can directly fit an outcome model with both the treatment and covariates, and use the model output, such as the coefficient of the treatment or the corresponding $t$-value, as the test statistic. The canonical choice is a linear model on the treatment and covariates, often known as the analysis of covariance \citep{Fisher35, Freedman08a, Lin13, young2019channeling}.  
\citet{brillinger1978management} gave an early application of this strategy with more complex statistical models. 
This defines two model-assisted approaches to conducting covariate-adjusted {\frt}s.

The strong guarantees of \textsc{frt} hold only under the strong null hypothesis of zero individual treatment effects, 
which is often criticized for being too restrictive for many practical applications.  
Adaptation to the weak null hypothesis of zero average treatment effect is one important direction for broadening its application.
A natural class of test statistics for this purpose are the coefficients of the treatment from various outcome models, with or without covariate adjustment, along with their classic or robustly-studentized $t$-statistics \citep{eicker1967limit, huber::1967, White80}. 
These coefficients are consistent estimators of the average treatment effect and are thus  sensitive to deviations from both the strong and weak null hypotheses \citep{Freedman08a, Lin13}. 
Of interest is whether these intuitive test statistics can preserve the correct type one error rates when only the weak null hypothesis holds \citep{romano1990behavior, Romano13, wuanding2020jasa}, and if covariate adjustment delivers additional power under alternatives.

To this end, we extend the discussion by \cite{DD18} and \cite{wuanding2020jasa} on the utility of {\frt} for testing the weak null hypothesis to the presence of covariates under the finite-population framework, and examine the asymptotic operating characteristics of nine covariate-adjusted test statistics as the coefficients from three common outcome models and their respective classic and robustly-studentized $t$-statistics. 
The results establish the permutational limiting theorems of  the {\ols} coefficients and standard errors
under the possibly misspecified linear models, and shed light on the utility of model-assisted covariate adjustment for testing the weak null hypothesis via \frt.
Building upon previous work on using studentized statistics for permutation tests under the super-population \citep{Janssen97, Romano13, Pauly15, bugni2018inference} and finite-population \citep{wuanding2020jasa} frameworks, respectively, 
we extend the discussion to the covariate-adjusted test statistics under the finite-population framework, and show the necessity of robust studentization for ensuring the asymptotic validity of the covariate-adjusted test statistics when only the weak null hypothesis holds. 
The robustly studentized $t$-statistic based on \citet{Lin13}'s estimator, as it turns out, guarantees both asymptotic validity and the highest power for testing the weak null hypothesis. The  estimator equals the coefficient of the treatment in the \textsc{ols} fit of the outcome on the treatment, centered covariates, and their interactions, but the aforementioned theoretical guarantees hold irrespective of whether the linear model is correctly specified or not.  Together with its finite-sample exactness under the strong null hypothesis, it is thus our final recommendation for testing both the strong and weak null hypotheses under complete randomization.

We first focus on complete randomization and then generalize the theory to other types of design. 
The extension to cluster randomization and stratified randomization is direct whereas that to rerandomization \citep{morgan2012rerandomization} has some distinct features.
In particular, covariate adjustment becomes more crucial since studentization alone no longer ensures the appropriateness of \textsc{frt} for the weak null hypothesis. 
In addition, it is common that the designer and analyzer do not communicate \citep{bm, heckman2020, hk}, and if this happens, we recommend using \textsc{frt} pretending that the experiment was completely randomized. 
In this non-ideal case, the proposed {\frt} is no longer finite-sample exact under the strong null hypothesis unless the original experiment is indeed completely randomized, 
but at least it preserves the correct type one error rates under the weak null hypothesis. 
Based on extensive theoretical investigations, we make final recommendations for \textsc{\frt} and the test statistic  in each experimental design. 

We will use the following notation for permutations.
Let $\Pi$ be the set of all $N!$ random permutations of $\{1,\dots, N\}$, indexed by $\pi$. 
For an $N\times 1$ vector $a = (a_1, \dots, a_N)^\T$, let $a_\pi = (a_{\pi(1)}, \dots, a_{\pi(N)})^\T$ be a permutation of its elements. 
If $b = b(a)$ is a function of $a$, let $b^\pi = b(a_\pi)$ be its value evaluated at $a_\pi$.
Without introducing new notation, use $\pi$ to also represent a random draw from $\Pi$,  namely  $\pi \sim \Unif(\Pi)$, with meaning clear from the context.  
With a slight abuse of notation, assume sets like $\{a_\pi: \pi\in\Pi\}$
to contain $|\Pi| = N!$ elements defined by $\pi \in\Pi$ throughout, such that $a_\pi$ and $a_{\pi'}$ are two distinct elements so long as $\pi \neq \pi'$, even if $a_\pi = a_{\pi'}$.

\section{Basic setup under complete randomization}
\label{sec::basicsetup}

\subsection{Potential outcomes and Fisher's randomization test}
Consider an intervention of two levels, $z = 0, 1$, and a finite population of $N$ units, $i =1,\ldots, N$.  Let $Y_i(z)$ be the potential outcome of unit $i$ under treatment $z$ \citep{Neyman23}. 
The individual treatment effect is $\tau_i  = Y_i(1) - Y_i(0)$, and the finite-population average treatment effect is $\tau = N^{-1} \sumN  \tau_i$. 
We focus on the finite-population inference in the main text, and will refer to $\tau$ as the ``average treatment effect" when no confusion would arise.
Let $x_i = (x_{i1}, \ldots, x_{iJ})^\T$ be the covariates for unit $i$, concatenated as an $N\times J$ matrix $X = (x_1, \dots, x_N)^\T$.
Center the covariates at $\bar x = N^{-1}\sumi x_i = 0_J$ to simplify the presentation. 

The designer  assigns $N_z$ units to receive level $z$ with $N_1 + N_0 = N$ and $(p_1, p_0) =(N_1/N , N_0/N).$ 
Let $Z_i$ denote the treatment level received by unit $i$, with $Z_i=1$ for treatment and $Z_i=0$ for control, vectorized as $Z = (Z_1, \dots, Z_N)^\T$.
Complete randomization samples $Z$ uniformly from the set $\mZ$
that contains all permutations of $N_1$ 1's and $N_0$ 0's.
The observed outcome is $ Y_i  = \Zi\Yi(1) + (1-\Zi)\Yi(0) $ for unit $i$, vectorized as   $  Y  = (  Y _1, \ldots,   Y _N)^\T$. 
A test statistic is a function of the treatment vector, observed outcomes, and  covariates, denoted by $T = T(Z, Y, X)$.  

Write $Y = Y(Z)$ and $T = T(Z, Y(Z), X)$ to highlight the dependence of the observed outcomes and the test statistic on the treatment vector. 
Complete randomization induces a uniform distribution over $  \{ T(\bm z, Y(\bm z), X) : \bm z \in \mz\}$ as the \emph{sampling distribution} of $T$.
\citet{Fisher35} considered testing the strong null hypothesis  
$$
\HF: Y_i(1) = Y_i(0) \quad \text{for all $ i=1,\ldots, N$}
$$
and proposed  \textsc{frt} to compute the $p$-value as 
\begin{equation}\label{pfrt}
p_\frt 
 = |\Pi|^{-1}\sumpi1\{ T(Z_\pi, Y, X ) \geq T(Z,Y,X)\},
\end{equation}
assuming a one-sided test. Each $Z_\pi $ is a permutation of $Z$, and by symmetry, all possible values of  $Z_\pi $ over $\pi \in \Pi$ consist of $\mz$.  \textsc{frt} thus induces a uniform distribution over $  \{ T(\bm z, Y(Z), X) : \bm z \in \mz\}$ conditioning on the observed $Z$,  known as the {\it randomization distribution} of $T$. 
Let  $T^\pi  = T(Z_\pi, Y(Z),X)$, where $\pi\sim \unif(\Pi)$, be a random variable from this distribution {conditioning on $Z$}. 
The $p_\frt$ in \eqref{pfrt} gives the right-tail probability of the observed value of the test statistic with regard to its randomization distribution.
Under $\HF$, the randomization distribution equals the sampling distribution, and thereby ensures the finite-sample exactness of \textsc{frt} for arbitrary $T$. 

In practice, we need to choose a test statistic that is sensitive to deviations from $\HF$.
Computationally, {\frt} involves randomly permuting the treatment vector to generate $Z_\pi$. This justifies ``permutation test'' as its other name. 
If $|\Pi|=N!$ is too large, we can take a simple random sample from $\Pi$ to obtain a Monte Carlo approximation of  $p_\frt $. 

\begin{remark}\label{rmk:pt}
Based on the definition in \eqref{pfrt}, {\frt} does not require any algebraic group structure of the treatment assignment mechanism. Therefore, it is more general than the usual definition of permutation test which requires certain invariance under group transformations \citep{hoeffding1952large, Lehmann2005}. Nevertheless, we will focus on {\frt}s that can be implemented by permutations or restricted permutations in this paper. See \cite{Basse2019} for more discussion on the connections and distinctions. 
\end{remark}

The nice properties of \textsc{frt} under $\HF$ inspire endeavors to extend it to other types of hypotheses. 
Consider the weak null hypothesis of zero average treatment effect \citep{neyman1935statistical}: 
$$
\HN: \tau = 0.
$$
We can proceed with \textsc{frt} by permuting the treatment vector $Z$  and report $p_\frt$ by \eqref{pfrt} as if we were testing $\HF$. 
Under $\HN$, $Y(\bm{z})$ varies with $\bm{z} \in  \mz$ such that the randomization distribution $T^\pi$ no longer equals the sampling distribution $T$. Consequently, $p_\frt$ loses its finite-sample exactness as the basis for controlling the type one error rates in general. 
\citet{wuanding2020jasa} gave examples in which {\frt} yields invalid type one error rates under $\HN$ even asymptotically.

The discrepancy between the distributions of $T$ and $T^\pi$ in the absence of $\HF$ is at the heart of the loss of finite-sample exactness when applying {\frt} to hypotheses other than $\HF$. 
The sampling distribution of $T$, on the one hand, is induced by the randomization of $Z$ based on the ``true finite population" of $\{(Y_i(0), Y_i(1), x_i\}_{i=1}^N$. 
The {\frt} procedure, on the other hand, assumes the observed outcomes $Y= (Y_1, \dots, Y_N)^\T$ remain unchanged over all possible assignments, and is essentially using $\{(Y_i'(0), Y_i'(1), x_i\}_{i=1}^N$, where $Y_i'(0) = Y_i'(1) = Y_i$, as the ``pseudo finite population" to generate the randomization distribution of $T$, represented by $T^\pi$. 
This ends up mixing   $\{Y_i(0)\}_{i=1}^N$ and $\{Y_i(1)\}_{i=1}^N$ with proportions $p_0$ and $p_1$  in the absence of $\HF$, and thereby results in the different distributions of $T$ and $T^\pi$.

Despite the possibly liberal type one error rates  in general, sensible choice of the test statistic restores the validity of \textsc{frt} for testing $\HN$ at least asymptotically.  \citet{wuanding2020jasa} showed that {\frt}  preserves the correct type one error rates asymptotically  with a class of robustly studentized statistics.  
We extend their discussion to the setting with covariates,  and propose a general strategy for covariate-adjusted {\frt} that ensures both asymptotic validity and higher power for testing $\HN$.

Assume the finite-population asymptotic framework that embeds $\mathcal{S} = \{ Y_i(0), Y_i(1), x_i \}_{i=1}^N$ and  $Z = (Z_i)_{i=1}^N$ into a sequence of finite populations and assignments for $N = \ot{\infty}$.
Technically, all quantities depend on $N$, but we omit the subscript $N$ for simplicity. 
 
\begin{definition}\label{prop}
A test statistic $T$ is \emph{proper} for testing $\HN$ if under $\HN$,  
$$
\limN  \pr(  p_\frt \leq \alpha   ) \leq \alpha \qquad \text{for all }\alpha\in(0,1)
$$ 
holds for  any $\mathcal{S}$.
\end{definition}

Assume a one-sided test and $p$-value as the right-tail probability as in \eqref{pfrt}. 
A statistic $T$ is proper for testing $\HN$  if under $\HN$,  the sampling distribution of $T$ is stochastically dominated by its randomization distribution  for almost all sequences of $Z$ as $N \to \infty$.

%
\subsection{Two strategies for covariate-adjusted FRT and twelve test statistics}\label{sec:tt}
%
We review two general strategies for covariate adjustment in \frt. We focus on test statistics based on estimators of $\tau$ to accommodate both $\HF$ and $\HN$, 
and unify them under the {\ols} formulation for easy implementation.

Let $\hY(z) = N_z^{-1}\sum_{i:Z_i = z}  Y_i $ be the sample average of the outcomes under treatment $z$. 
The difference-in-means estimator $\hat{\tau}_\neyman = \hY(1) - \hY(0)$ is unbiased for $\tau$ under complete randomization \citep{Neyman23}, and affords a natural statistic for testing both $\HF$ and $\HN$. 
Algebraically, it equals the coefficient of $Z_i$ from the \textsc{ols} fit of $Y_i$ on $(1,Z_i)$. 
It is also common to use $\hat{\tau}_\neyman / \sen $ or $\hat{\tau}_\neyman / \tse_\neyman$ as the test statistic,  where $\sen$ and $\tse_\neyman$ are the classic and robust standard errors from the same \textsc{ols} fit: 
\begina
\ese_\nm^2 = \frac{N(N_1-1)}{(N-2)N_1N_0} \hat  S_{1}^2 +  \frac{N(N_0-1)}{(N-2)N_1N_0}  \hat  S_{0}^2 
\approx \frac{  \hat  S_{1}^2  }{ N_0 }    +  \frac{  \hat  S_{0}^2  }{ N_1} , 
\quad 
\tse_\nm^2 =   \frac{N_1-1}{ N_1^2}  \hat  S_{1}^2  +  \frac{N_0-1}{ N_0^2} \hat  S_{0}^2
\approx \frac{  \hat  S_{1}^2  }{ N_1 }    +  \frac{  \hat  S_{0}^2  }{ N_0} 
\enda
with $\hat{S}_z^2 = (N_z-1)^{-1}\sum_{i:Z_i=z} \{Y_i - \hY(z) \}^2 \ (z=0,1)$ 
 \citep{AngristEcon}.
 This yields three unadjusted test statistics as the baseline for discussing the possible improvement via covariate adjustment. 
 The randomization distributions can then be generated by replacing $Z_i$ with $Z_{\pi(i)}$ in the above {\olsf} over all $\pi\in\Pi$. 
In particular, 
we can compute $\htnp$ as the coefficient of $Z_{\pi(i)}$ from the {\olsf} of $Y_i$ on $(1, Z_{\pi(i)})$ with $(\hat\tau_\neyman/\ese_\neyman)^\pi = \htnp /\senp$  and $(\hat\tau_\neyman/\tse_\neyman)^\pi = \htnp /\tsenp$, where $\tsenp$ and $\senp$ are the corresponding classic and robust standard errors. The same intuition extends to the covariate-adjusted variants below as we shall introduce in a minute. 
\citet{Romano13} and \citet{wuanding2020jasa} showed that randomization tests with the robustly studentized $\htn/\tse_\nm$ 
are asymptotically valid for $\HN$   under the super- and finite-population frameworks, respectively.
We thus also consider studentization in covariate adjustment.

The first strategy for covariate adjustment is to fit an outcome model with covariates alone  and use the residuals as the fixed, covariate-adjusted
pseudo outcomes for conducting {\frt}.
This appears to be the dominating approach advocated by \citet{CovAdjRosen02}; see also
  \citet{gail1988tests}, \citet{raz1990testing}, \citet{tukey1993tightening} and  \citet{Otto}. 
Let $ e  = ( e _1, \dots,  e _N)^\T$ be the residuals from the \textsc{ols} fit of $Y_i$ on $(1,x_i)$, which can be viewed as pseudo outcomes unaffected by the treatment under $\HF$. 
The difference in means of  the residuals, $\htR =   \hat{e}    (1) - \hat{e}   (0)$, equals the coefficient of $Z_i$ from the \textsc{ols} fit of $e_i$ on $(1,Z_i)$ and affords an intuitive estimator of $\tau$ after adjusting for the covariates. 
Similar to the discussion of $\hat{\tau}_\neyman$, 
we can use $\htR$, $\htR/\ese_\rosenbaum$, and $\htR/\tse_\rosenbaum$ as the test statistics for testing $\HF$ or $\HN$ by {\frt}, 
where $\ese_\rosenbaum $ and $\tse_\rosenbaum$ are the classic and robust standard errors from the \textsc{ols} fit that yields $\hat{\tau}_\rosenbaum$.
We regress $Y_i$ on $(1,x_i)$ to form the residuals $e$ whereas \citet{CovAdjRosen02} regressed $Y_i$ on $x_i$ alone without the intercept. 
The difference does not affect $\htR$ with centered covariates.

The second strategy for covariate adjustment is to directly fit an outcome model with both the treatment and covariates, and use the coefficient or $t$-values of the treatment as the test statistic for $\frt$.
\citet{Fisher35} suggested an estimator $\htau_\fisher$ for $\tau$, which equals  the coefficient of $Z_i$ from the \textsc{ols} fit of $Y_i$ on $(1,Z_i, x_i)$.  \citet{Lin13} recommended an improved estimator, $\htau_\lin$, as the coefficient of $Z_i$ from the \textsc{ols} fit of $Y_i$ on $\{ 1,Z_i,  x_i - \bar x, Z_i (x_i-\bar x)\}$ with centered covariates and treatment-covariates interactions. 
These two covariate-adjusted estimators, along with their respective studentized variants, afford six additional test statistics, namely $\hts$, $\htss$, and $\htssr\ (*=\fisher,\lin)$, for testing $\HF$ or $\HN$ by {\frt}, where $\ese_*$ and $\tse_*$ are the classic and robust standard errors from the respective {\olsf}s.

\begin{table}[t]\caption{\label{tb:tt} Twelve test statistics  where $\ese$ and $\tse$ denote the classic and robust standard errors.}
\begin{center}
\begin{tabular}{|c|c|c|c|c|}\hline
&\citet{Neyman23}& \citet{CovAdjRosen02} & \citet{Fisher35} & \citet{Lin13} \\\hline
unstudentized & $\htN$&$\htR$ &$\htF$ &$\htl$  \\
studentized by $\ese$  & $\htns $ &$\htR/\ese_\rosenbaum$ &$\htF/\ese_\fisher$&$\htl/\ese_\lin$\\
studentized by $\tse$ & $\htnsr$ &$\htR/\tse_\rosenbaum$ &$\htF/\tse_\fisher$&$\htl/\tse_\lin$\\\hline
\end{tabular}
\end{center}
\end{table}

This gives us a total of twelve test statistics, three unadjusted and nine adjusted, for testing the treatment effects via {\frt}. Table \ref{tb:tt} summarizes them, with the subscripts $\neyman$, $\rosenbaum$, $\fisher$, and $\lin$ indicating \citet{Neyman23}, \citet{CovAdjRosen02}, \citet{Fisher35}, and \citet{Lin13}, respectively. 
All twelve statistics are finite-sample exact for testing $\HF$ irrespective of whether the models are correctly specified or not. 
Our goal is to evaluate their abilities to preserve the correct type one error rates under $\HN$. Without loss of generality, we assume two-sided {\frt} for the rest of the text, or, equivalently, we use the absolute values of the test statistics in Table \ref{tb:tt} to compute the $p_\frt$ in \eqref{pfrt}.

The two strategies for covariate adjustment unify nicely under the {\ols} formulation yet differ materially with regard to the role of covariates under the permutations induced by the {\frt} procedure. The first strategy, on the one hand, adjusts for the covariates only once to form the pseudo outcomes $e$  and proceeds with the permutations in a covariate-free fashion. The second strategy, on the other hand, adjusts for the covariates in each of the $ N!$ permutations of $Z$.

 Before giving the formal results on the finite-population asymptotics, we unify below the three covariate-adjusted estimators 
as the differences in means of distinct adjusted outcomes. 
Let $\sx = (N-1)^{-1}\sumi x_ix_i^\T$ and $\qxy = (N-1)^{-1}\sumi x_iY_i$ be the finite-population  covariance matrices of the centered $(x_i)_{i=1}^N$ with itself and $(Y_i)_{i=1}^N$, respectively.
Let $\htx= \hat x(1) - \hat x(0)$ be the difference in means of the covariates under treatment and control, where $\hat x(z) = N_z^{-1} \sum_{i: Z_i = z}x_i$.
Let $\hat{S}_{x(z)}^2= (N_z-1)^{-1} \sum_{i: Z_i = z} \{x_i - \hat x (z)\}\{x_i - \hat x (z)\}^\T$ 
and $\hat{S}_{xY(z)} = (N_z-1)^{-1} \sum_{i: Z_i = z} \{x_i - \hat x(z)\}\{Y_i - \hat Y(z)\} $
be the sample covariance matrices of $x_i$ with itself and $Y_i$ under treatment $z$.   
Let $\hgr$ and $\hgf$ be the coefficients of $x_i$ from the \textsc{ols} fits of $Y_i$ on  $(1,x_i)$ and $(1,Z_i,x_i)$, respectively. 
Let $\hg_\lin =p_0\hglo+ p_1\hglz$, where $\hglzz$ is the coefficient of $x_i$ from the \textsc{ols} fit of $Y_i$ on $(1,x_i)$ over the units under treatment $z$. 

\begin{proposition}\label{FLRtoN} 
We have 
\begina
\hts &=& N^{-1}_1\sum_{i:Z_i = 1}(Y_i -  x_i^\T\hg_*) - N^{-1}_0\sum_{i:Z_i = 0}(Y_i -  x_i^\T\hg_*) = \htN-  \htx^\T \hg_*, \quad (* = \rosenbaum, \fisher)\\
\htl  &=& N^{-1}_1\sum_{i:Z_i = 1}(Y_i - x_i^\T\hglo ) - N^{-1}_0\sum_{i:Z_i = 0}(Y_i - x_i^\T\hglz) = \htN- \htx^\T\hg_\lin ,
\enda
where $
\hgr =(\sxx)^{-1}\hat S_{xY}$,  $ \hgf = \hgr -  (1-1/N)^{-1}p_1p_0\htf ( S_x^2)^{-1} \htx$, and $\hglzz = (\hat S_{x(z)}^2)^{-1} \hat{S}_{xY(z)}$. 
\end{proposition}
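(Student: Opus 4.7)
The plan is to verify each of the three equalities using standard OLS identities, with the centering $\bar x = 0_J$ doing most of the collapsing work.

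For $\htR$, since $\bar x = 0$ the intercept of the OLS of $Y_i$ on $(1, x_i)$ equals $\bar Y$, so the residuals are $e_i = Y_i - \bar Y - x_i^\T \hgr$. Taking arm averages, $\hat e(1) - \hat e(0) = \hat Y(1) - \hat Y(0) - (\hat x(1) - \hat x(0))^\T \hgr = \htN - \htx^\T \hgr$, and the same quantity equals $N_1^{-1}\sum_{i: Z_i = 1}(Y_i - x_i^\T \hgr) - N_0^{-1}\sum_{i: Z_i = 0}(Y_i - x_i^\T \hgr)$ since the $\bar Y$ terms cancel in the difference. The formula $\hgr = (\sxx)^{-1}\hat S_{xY}$ is then immediate from $X^\T X = (N-1)\sxx$ and $X^\T Y = (N-1)\hat S_{xY}$, both using $\bar x = 0$.

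For $\htF$, I would work with the normal equations of the OLS of $Y_i$ on $(1, Z_i, x_i)$. The $X$-block equation gives $X^\T Y = X^\T Z\, \htF + X^\T X\, \hgf$; subtracting the Rosenbaum-level identity $X^\T Y = X^\T X\, \hgr$ yields $\hgf = \hgr - (X^\T X)^{-1} X^\T Z\, \htF$. Substituting $X^\T X = (N-1)\sxx$ and $X^\T Z = N_1 \hat x(1) = N p_1 p_0\, \htx$ — the latter from the centering identity $p_1 \hat x(1) + p_0 \hat x(0) = 0$, hence $\hat x(1) = p_0\, \htx$ — delivers the stated closed form for $\hgf$. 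For the difference-in-adjusted-means representation, the normal equations in the intercept and in $Z$ combine to give $\hat Y(1) - \bar Y = p_0 \htF + \hat x(1)^\T \hgf$; since $\hat Y(1) - \bar Y = p_0 \htN$ and $\hat x(1) = p_0\, \htx$, dividing by $p_0$ gives $\htF = \htN - \htx^\T \hgf$, which is algebraically the same as the displayed arm-average form.

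For $\htL$, I would invoke that the regressor set $\{1, Z_i, x_i, Z_i x_i\}$ is saturated in the treatment indicator, so the OLS fit is equivalent to running separate regressions of $Y_i$ on $(1, x_i)$ within each treatment arm. This immediately delivers $\hglzz = (\hat S_{x(z)}^2)^{-1}\hat S_{xY(z)}$ and shows that the arm-$z$ fitted value at $x = \bar x = 0$ is $\hat Y(z) - \hat x(z)^\T \hglzz$; the coefficient of $Z_i$ is the difference of these, which is the first displayed equality. To reduce the covariate term to $\htx^\T \hg_\lin$, I would plug in the centering-induced identities $\hat x(1) = p_0\, \htx$ and $\hat x(0) = -p_1\, \htx$ into $\hat x(1)^\T \hglo - \hat x(0)^\T \hglz$; the cross-weighting $\hg_\lin = p_0 \hglo + p_1 \hglz$ emerges precisely because arm $z$ carries a factor of the \emph{opposite} proportion under centering. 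This bookkeeping around the flipped weights is the one step that is not purely mechanical, and is where a careful reader is most likely to pause; everything else is routine linear algebra enabled by $\bar x = 0$.
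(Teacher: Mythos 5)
Your proof is correct and follows essentially the same route as the paper's: the residual algebra for $\htR$, the normal equations of the fit on $(1,Z,X)$ together with $X^\T 1_N = 0$ and $X^\T Z = Np_1p_0\htx$ for $\htF$, and the centering identities $\hat x(1) = p_0\htx$, $\hat x(0) = -p_1\htx$ to collapse the Lin term. The only cosmetic difference is that you derive the arm-average form of $\htl$ from the saturated-regression (separate fits per arm) argument, whereas the paper simply cites \citet{Lin13} for that identity; your version is self-contained and the bookkeeping with the flipped weights is handled correctly.
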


Proposition \ref{FLRtoN} entails only the algebraic properties of the {\ols} fits and holds under arbitrary data generating process. It unifies
$\hts \  (* = \rosenbaum, \fisher, \lin)$ as the difference-in-means estimators defined on the adjusted outcomes, or, equivalently, as $\hat{\tau}_\neyman$ with corrections based on the imbalance in means of the covariates.

Under $\HN$, {\frt} with $\hat{\tau}_\neyman$ does not preserve the correct type one error rates but {\frt} with $\htnsr$ does \citep{DD18}. 
In the next section, we will extend the result to the nine covariate-adjusted test statistics in Table \ref{tb:tt} and establish  the properness of the four robustly studentized $t$-statistics, namely $\hts / \tse_*\ (\phs)$, for testing $\HN$. 
Refer to them as the \emph{robust $t$-statistics} hence. We will further show that among them, $\htl/\tse_\lin$ delivers the highest power under alternative hypotheses.

\begin{remark}\label{rmk:taup}
Inspired by the distinction between $\htf$ and $\htl$ under the second  covariate adjustment strategy, an alternative way to implement the first, pseudo-outcome-based strategy is to fit two separate {\ols} regressions of $Y_i$ on $x_i$ for the treated and control units, respectively, both without the intercept, and then use the resulting residuals for conducting {\frt}. 
Despite the computational advantage of this approach in that it adjusts for the covariates only once, the resulting tests lead to distinct sampling and randomization distributions even under $\HF$, and are thus not finite-sample exact for testing $\HF$. 
We see the finite-sample exactness under $\HF$ the first criterion for a test to qualify as {\frt}, and thus do not pursue this direction. In particular, properness under $\HN$ can be a rather weak requirement without the finite-sample exactness under $\HF$. See Section \ref{sec::pt} in the supplementary material for more examples of permutation tests of this type that we do not recommend in general. 
\end{remark}

\section{Asymptotic theory for FRTs for testing $  \tau = 0$}\label{main}

\subsection{Limiting distributions under complete randomization}\label{sec:thm1-4}

We will develop in Theorems \ref{neyman}--\ref{lin} the limiting distributions of the twelve test statistics in Table \ref{tb:tt}  under the finite-population framework conditioning on $\mathcal{S} = \{Y_i(0), Y_i(1), x_i\}_{i=1}^N$.
The theorems assume neither $\HF$ nor $\HN$ but hold for arbitrary $\mathcal{S}$ that satisfies the regularity conditions specified in Condition \ref{asym} below.  
Applying them to finite populations that actually satisfy $\HN$ elucidates the asymptotic validity and power of {\frt} for testing $\HN$ in Sections \ref{sec::typeoneerror} and \ref{sec:power}. 
Let $\bar{Y}(z) = N^{-1} \sumN Y_i(z)$ and  $S_{z}^2  = (N-1)^{-1} \sumN  \{ Y_i(z) - \bar{Y}(z)  \}^2$ be the finite-population mean and variance of $\{ Y_i(z) \}_{i=1}^N$, respectively. Let $ S_\tau^2  = (N-1)^{-1} \sumN (\tau_i -\tau)^2$  be the finite-population variance of $(\tau_i)_{i=1}^N$. 
Let $S_{xY(z)} = (N-1)^{-1}\sumN x_i Y_i(z)$ be the finite-population covariance matrix of $\{x_i, Y_i(z)\}_{i=1}^N$.
Let $w_i(z) =  (S_x^2)^{-1}x_i Y_i(z)$ with  $\bar w(z)=N^{-1}\sumi w_i(z) = (1-1/N) (S_x^2)^{-1} S_{xY(z)}$.

\begin{condition}\label{asym} As $N \to \infty$, for $z =  0,1$, 
(i) $p_z  $ has a limit in $  (0,1)$, 
(ii) the first two moments of $\{Y_i(0), Y_i(1), x_i\}_{i=1}^N $  have finite limits;
$S_x^2$ and its limit are both positive definite; 
$ S_{z}^2 - S^\T_{xY(z)} (\sxx)^{-1}S_{xY(z)}$ has a finite positive limit;  
the second moments of $\{  w_i(0),  w_i(1)\}_{i=1}^N$ have finite limits, and 
(iii)  there exists a $ c_0 < \infty$ independent of $N$ such that $N^{-1}\sum_{i=1}^N Y^4_i(z) \leq c_0$, $N^{-1}\sum_{i=1}^N \|x_i\|^4_4 \leq c_0$, and  $N^{-1}\sum_{i=1}^N \|w_i(z)\|_4^4 \leq c_0$.
\end{condition}

Condition \ref{asym}(ii) ensures $S_\tau^2$ has a finite limit. We also use $p_z$, $\bar{Y}(z)$, $\tau$, $S_{z}^2 $, $\sxx$, $S_{xY(z)}$, and $ S_\tau^2 $ to denote their limiting values without introducing new symbols. The exact meaning should be clear from the context. 

Denote by $\asz$ a statement that holds for almost all sequences of $Z$.  
We review in Theorem \ref{neyman} the asymptotic distributions of the three unadjusted test statistics from \citet{DD18}, and extend them to the covariate-adjusted cases in Theorems \ref{rosenbaum}--\ref{lin}.
 
\begin{theorem}\label{neyman}
{\asym}  
\begin{enumerate}
[(a)]
\item
$\rtn  (\htN - \tau) \rightsquigarrow  \mN(0, v_\neyman)$,
and
$ \rtn   \htnp     \rightsquigarrow  \mN(0, v_{\neyman0}) \ \asz$,
where
$v_\neyman  =  p_1^{-1} S_{1}^2 + p_0^{-1} S_{0}^2 - S_{\tau}^2$
and
$v_{\neyman0}  = p_0^{-1}S_{1}^2 + p_1^{-1} S_0^2 +\tau^2 $.
\item
$(\htN-\tau) /\sen   \rightsquigarrow  \mN(0, c'_\neyman)$,
and 
$(\htN /\sen  )^\pi   \rightsquigarrow  \mN(0, 1) \ \asz$, where $c'_\neyman = v_\neyman / (v_{\neyman0} - \tau^2)$.
\item
$(\htN -\tau) /\tse_\neyman   \rightsquigarrow  \mN(0, c_\neyman)$,
and 
$(\htN /\tse_\neyman)^\pi   \rightsquigarrow  \mN(0, 1) \ \asz$, 
where $c_\neyman = v_\neyman / (v_\neyman +     S_\tau^2     ) \leq 1$.
\end{enumerate}
\end{theorem}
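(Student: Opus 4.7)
The plan is to prove the sampling and randomization halves of each claim in parallel, using one main tool — the finite-population central limit theorem for the difference-in-means estimator under complete randomization — applied to two distinct finite populations: the true population $\mathcal{S}$ for the sampling distribution and the pseudo population $\mathcal{S}' = \{(Y_i'(0), Y_i'(1), x_i)\}_{i=1}^N$ with $Y_i'(0)=Y_i'(1)=Y_i$ for the randomization distribution.

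For part (a), the sampling statement is the classical \citet{Neyman23} CLT. For the randomization statement, note that conditionally on $Z$ (hence on $Y$), $Z_\pi$ is uniformly distributed over $\mz$, so $\htN^\pi$ is exactly the difference-in-means estimator for $\mathcal{S}'$ under a fresh complete randomization. The finite-population CLT applied to $\mathcal{S}'$ then delivers $\rtn\htN^\pi \rs \mN(0, p_1^{-1}S_{Y'(1)}^2 + p_0^{-1}S_{Y'(0)}^2 - S_{\tau'}^2)$; since the pseudo-population treatment effects vanish identically, $S_{\tau'}^2=0$ and $S_{Y'(z)}^2$ equals the sample variance $S_Y^2$ of the observed outcomes. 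It remains to show $S_Y^2 \to p_1 p_0 \tau^2 + p_1 S_1^2 + p_0 S_0^2$ $\asz$, which I would obtain by applying a strong LLN for sampling without replacement to $N^{-1}\sumi Y_i$ and $N^{-1}\sumi Y_i^2$ and then expanding; routine algebra recovers $v_{\neyman 0}$.

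For parts (b) and (c), once the CLT for the numerator is in hand it remains to establish consistency of the variance estimators. Direct computation under Condition \ref{asym} gives $N\sen^2 \toinp p_0^{-1}S_1^2 + p_1^{-1}S_0^2 = v_{\neyman 0} - \tau^2$ and $N\tsen^2 \toinp p_1^{-1}S_1^2 + p_0^{-1}S_0^2 = v_\neyman + S_\tau^2$, so Slutsky delivers the constants $c_\neyman' = v_\neyman/(v_{\neyman 0} - \tau^2)$ and $c_\neyman = v_\neyman/(v_\neyman + S_\tau^2)\leq 1$. For the randomization side, the key observation is that under permutation both $\hsnz$ (computed from $Y$ with $Z$ replaced by $Z_\pi$, for $z=0,1$) concentrate $\asz$ on $S_Y^2$, since each is the sample variance of a simple random sample of size $N_z$ drawn from the fixed pool $\{Y_1,\ldots,Y_N\}$. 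Consequently $N(\senp)^2$ and $N(\tsenp)^2$ share the common limit $(p_1p_0)^{-1}S_Y^2 = v_{\neyman 0}$, which matches the variance of the randomization CLT exactly; the two studentized randomization distributions are therefore asymptotically $\mN(0,1)$.

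The main obstacle is the almost-sure verification of the finite-population CLT's regularity (Lindeberg / fourth-moment) conditions for the pseudo-population $\mathcal{S}'$, whose moments inherit randomness from $Z$. The bounded fourth-moment hypothesis in Condition \ref{asym}(iii), combined with a strong LLN for sampling without replacement, is precisely what is needed to bound $N^{-1}\sumi Y_i^4 = N^{-1}\sumi\{Z_i Y_i(1)^4 + (1-Z_i)Y_i(0)^4\}$ $\asz$ and, more generally, to ensure the pseudo-population moments converge a.s. to the correct nonrandom limits. Everything else is bookkeeping around Slutsky and the permutation-invariance observation above.
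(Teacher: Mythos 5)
Your proposal is correct and follows essentially the same route as the paper: the sampling half is Neyman's finite-population CLT, the randomization half is the same CLT applied to the pseudo population $\{(Y_i, Y_i, x_i)\}_{i=1}^N$ conditional on $Z$ (the paper outsources this step to a citation of \citet{DD18}, but the underlying argument is the one you sketch), and the studentized parts follow from Slutsky together with almost-sure consistency of the variance estimators, which the paper packages as Lemmas \ref{as_Z} and \ref{v_lim} built on the finite-population strong law of large numbers. The only cosmetic caveat is that the concentration of $(\hat S^2_{\neyman(z)})^\pi$ on $\hat S^2$ is a statement under the permutation measure holding for almost all sequences of $Z$ (i.e.\ $\aszp$ rather than $\asz$), but you have correctly identified the substantive issue of verifying the pseudo-population moment conditions almost surely.
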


 Theorem \ref{neyman} gives the asymptotic sampling and randomization distributions of $\htn$, $\htn/\ese_\nm$, and $\htn/\tse_\nm$.
  Building up intuitions  for  Theorem \ref{neyman} 
helps to understand Theorems \ref{rosenbaum}--\ref{lin} for the covariate-adjusted cases below.

First, it clarifies $\sqrt N \htn$ and $\sqrt N \htnp$ as both asymptotically normal with asymptotic variances $v_\neyman$ and $v_{\nm0}$ that are in general not equal.
Recall from the definition of randomization distribution that the distribution of $\htnp$ is always conditional on the assignment vector $Z$. 
The fact that $ \rtn   \htnp     \rightsquigarrow  \mN(0, v_{\neyman0}) \ \asz$ suggests this randomization distribution is asymptotically identical for almost all sequences of $Z$.

Second, the component $S^2_\tau$ in $v_\nm$ is unique to the finite-population inference and cannot be estimated consistently from the observed outcomes. 
The resulting inferences are thus necessarily conservative unless $\tau_i = \tau$ for all $i = \ot{N}$ \citep{Neyman23}. 
The classic and robust standard errors afford two convenient estimators of $\cov(\htn)$ with 
\begin{eqnarray}
\label{eq::standarderrors}
N\ese^2_\nm = v_{\nm0} - \tau^2 + \op,\quad N\tse^2_\nm = v_\nm + S_\tau^2 + \op 
\end{eqnarray}
by Lemma \ref{v_lim} in the supplementary material.
This implies $\tse^2_\nm$ is asymptotically conservative for $\cov(\htn)$
whereas $\ese^2_\nm$ in general is not, entailing the asymptotic sampling distributions of the classic and robust $t$-statistics with variances $c_\nm'$ and $c_\nm$, respectively. 
In contrast, the super-population framework has no analog of $S_\tau^2$ such that $\tse_\neyman^2$ is consistent for the true sampling variance \citep{Romano13}.

Further, Theorem \ref{neyman} does not assume $\HN$ or $\HF$  but holds for arbitrary $\mathcal{S}$ that satisfies Condition \ref{asym}.  
 The null hypotheses $\HN$ and $\HF$ impose restrictions on $\mathcal{S}$, and thereby enable more informative comparisons between the asymptotic sampling and randomization distributions. 
The weak null hypothesis $\HN$, on the one hand, ensures $\tau = 0$ such that all six normal distributions center at zero with $v_{\nm0} = p_0^{-1}S_{1}^2 + p_1^{-1} S_0^2$ and $c_\nm' = v_\nm/v_{\nm0}$. 
Compare the expressions of $v_\nm$ and $v_{\nm0}$ to see that the asymptotic distributions of $\htn$ and $\htnp$ still differ unless $(p_1-p_0)(S_{1}^2 - S_0^2)  + S_{\tau}^2 = 0$. 
This demonstrates that $p_\frt$  can be invalid  even asymptotically. 
Compare the distributions of $\htn/\tse_\nm$ and $(\htn/\tse_\nm)^\pi$ to see $|\htn/\tse_\nm|\leqst|(\htn/\tse_\nm)^\pi|$ asymptotically $\asz$. 
This underpins the asymptotic validity of {\frt} with $\htn/\tse_\nm$ for testing $\HN$ as we shall elaborate in more detail in Section \ref{sec::typeoneerror}. 

The strong null hypothesis $\HF$, on the other hand, further entails $S^2_0 = S^2_1$ and $S^2_\tau = 0$ such that $v_\nm = v_{\nm0} =  (p_0p_1)^{-1} S^2$ and $c_\nm= c_\nm' = 1$, where $S^2$ denotes the common value of $S^2_0$ and $S^2_1$. 
The resulting identical asymptotic distributions of $\htn$ and $\htnp$ are a trivial consequence of their exact equivalence in finite samples.  
From \eqref{eq::standarderrors},  the strong null hypothesis also ensures the asymptotic equivalence of the classic and robust standard errors.

Last but not least, the way in which {\frt} is conducted ensures $(\htn/\ese_\nm)^\pi$ and $(\htn/\tse_\nm)^\pi$ converge in distribution to standard normal. 
Recall   the ``pseudo finite population"   $\{Y_i'(0), Y_i'(1), x_i\}_{i=1}^N$ with $Y_i'(0)=Y_i'(1)=Y_i$  from which the {\frt} procedure generates the randomization distributions of all three test statistics. It satisfies $H_{0\fisher}$, so the same intuition from the last paragraph extends here and ensures the consistency of both $N(\senp)^2$ and $N(\tsenp)^2$ for estimating the asymptotic variance of $\rtn  \htnp$.
This guarantees the convergence of $(\htN /\sen  )^\pi$ and $(\htN /\tse_\neyman)^\pi $ to the standard normal irrespective of the true value of $\tau$. \color{black} The same intuition carries over to the nine covariate-adjusted test statistics 
with the original potential outcomes replaced by the adjusted counterparts.
We formalize the idea in Theorems \ref{rosenbaum}--\ref{lin}.

Let $\gamma_z = (\sxx)^{-1}S_{xY(z)}$ be the coefficient of $x_i$ from the {\olsf} of $Y_i(z)$ on $(1, x_i)$. 
Let $a_i(z) =  Y_i(z) - \bar Y(z) -  x_i^\T (p_1\gamma_1+p_0\gamma_0)$ be the adjusted potential outcomes under treatment $z$, with finite-population mean zero and  variance $ S_{a(z)}^2$.  
Theorem \ref{rosenbaum} gives the asymptotic distributions of $\htR$, $\htR/\ese_\rosenbaum$, and $\htR/\tse_\rosenbaum$ from the first,  pseudo-outcome-based strategy for covariate adjustment.

\begin{theorem}\label{rosenbaum}
{\asym}  
\begin{enumerate}
[(a)]
\item
$\rtn  (\htR-\tau)  \rightsquigarrow  \mN(0, v_\rosenbaum)$,
and
$\rtn   \htrp \rightsquigarrow  \mN(0, v_{\rosenbaum0}) \ \asz$, 
where
$v_\rosenbaum  =  p_1^{-1} S_{a(1)}^2  + p_0^{-1} S_{a(0)}^2  -   S_\tau^2  $
and
$v_{\rosenbaum0}  =   p_0^{-1}  S_{a(1)}^2  + p_1^{-1}  S_{a(0)}^2 + \tau^2 $. 
\item
$(\htR-\tau) /\ese_\rosenbaum  \rightsquigarrow  \mN(0, c'_\rosenbaum)$,
and 
$(\htR /\ese_\rosenbaum)^\pi   \rightsquigarrow  \mN(0, 1) \ \asz$,  where 
$c'_\rosenbaum  = v_\rosenbaum /  (v_{\rosenbaum0}-\tau^2)$.
\item
$(\htR-\tau) /\tse_\rosenbaum   \rightsquigarrow  \mN(0, c_\rosenbaum)$, 
and
$(\htR /\tse_\rosenbaum)^\pi  \rightsquigarrow  \mN(0, 1) \ \asz$, 
where
$c_\rosenbaum =  v_\rosenbaum/(v_\rosenbaum +  S_\tau^2  ) \leq 1.$ 
\end{enumerate}
\end{theorem}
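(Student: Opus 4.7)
My plan is to view $\htR$ as a difference-in-means estimator on appropriately adjusted pseudo-outcomes and thereby reduce the three parts to Theorem \ref{neyman}. The algebraic backbone will be Proposition \ref{FLRtoN}, which gives $\htR = \htN - \htx^\T \hgr$ with $\hgr = (\sxx)^{-1}\hat S_{xY}$. Two properties of $\hgr$ will drive the argument: (i) under Condition \ref{asym} a finite-population LLN should give $\hgr = \gamma + \op$ with $\gamma := p_1\gamma_1 + p_0\gamma_0$; and (ii) $\hgr$ depends on $(Y_i, x_i)$ alone and is therefore invariant under any permutation $\pi$ of $Z$, which is the defining feature of the pseudo-outcome strategy and is what will make the randomization analysis tractable.

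For the sampling claim in (a), I would combine (i) with $\rtn \htx = O_P(1)$ (from Theorem \ref{neyman} applied coordinatewise to the covariates) to obtain $\rtn(\htR - \tau) = \rtn(\htN - \tau) - (\rtn \htx)^\T \gamma + \op$. The right-hand side is $\rtn$ times the Neyman-type contrast on the pseudo-population with potential outcomes $b_i(z) := Y_i(z) - x_i^\T \gamma$, whose individual treatment effects are still $\tau_i$ and whose within-group variances equal $S_{a(z)}^2$ (since $a_i(z) = b_i(z) - \bar Y(z)$ differs from $b_i(z)$ only by a constant). Applying Theorem \ref{neyman}(a) to this pseudo-population should deliver $\mN(0, v_\rosenbaum)$ with the stated $v_\rosenbaum$. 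For the randomization claim in (a), invariance of $\hgr$ under $\pi$ turns $\htrp$ into $N_1^{-1}\sum_{i: Z_{\pi(i)}=1} r_i - N_0^{-1}\sum_{i: Z_{\pi(i)}=0} r_i$ with $r_i := Y_i - x_i^\T \hgr$ fixed given $Z$. Viewing this as Neyman on a pseudo-pseudo-population with all potential outcomes equal to $r_i$, Theorem \ref{neyman}(a) will yield $\mN(0, (p_0p_1)^{-1} S_r^2)$ $\asz$; the desired variance will then follow from the residual identity $S_r^2 = \hat S_Y^2 - \hat S_{xY}^\T (\sxx)^{-1}\hat S_{xY}$, the sampling limit $\hat S_Y^2 \to p_1 S_1^2 + p_0 S_0^2 + p_1p_0\tau^2$, and the algebraic identity $p_1 S_1^2 + p_0 S_0^2 - \gamma^\T \sxx \gamma = p_1 S_{a(1)}^2 + p_0 S_{a(0)}^2$, which together give $\lim S_r^2 = p_1 p_0 \, v_{\rosenbaum0}$.

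Parts (b) and (c) should follow the same template applied to the standard errors: since $\ese_\rosenbaum$ and $\tse_\rosenbaum$ are literally $\sen$ and $\tse_\nm$ computed on the pseudo outcomes $e_i$, the formulas in \eqref{eq::standarderrors} applied to $e_i$ will give $N\ese_\rosenbaum^2 \to v_{\rosenbaum0} - \tau^2$ and $N\tse_\rosenbaum^2 \to v_\rosenbaum + S_\tau^2$, after which Slutsky will deliver the sampling variances $c_\rosenbaum'$ and $c_\rosenbaum$. On the randomization side, the pseudo-pseudo-population with identical potential outcomes $r_i$ automatically satisfies $\HF$, so both $N(\serp)^2$ and $N(\tserp)^2$ will be consistent for $(p_0p_1)^{-1}S_r^2$ and yield standard-normal limits. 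The main obstacle will be the $\asz$ qualifier on the randomization statements: the various convergences for $\hgr$, $\hat S_Y^2$, and $\hat S_{xY}$ must hold for almost all sequences of $Z$ rather than merely in probability. I would handle this by conditioning at the outset on the probability-one event on which these sample moments converge pathwise (as packaged by Lemma \ref{as_Z} in the supplementary material), after which every invocation of Theorem \ref{neyman} and Slutsky can be read along a fixed sequence.
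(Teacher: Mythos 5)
Your proposal is correct and follows essentially the same route as the paper's unified proof of Theorems 1--4: the decomposition $\htR=\htN-\htx^\T\hgr$ from Proposition \ref{FLRtoN}, the almost-sure convergence $\hgr\to\gamma$ via the finite-population SLLN (Lemmas \ref{as_Z} and \ref{v_lim}), reduction to the Neyman CLT on adjusted (pseudo-)potential outcomes (the paper phrases this as the joint \textsc{fpclt} for $(\htN,\htx)$ plus Lemma \ref{juxta}, which is equivalent to your ``apply Theorem \ref{neyman} to $Y_i(z)-x_i^\T\gamma$'' step), the residual-variance identity for $v_{\rosenbaum0}$, and Slutsky with the standard-error limits for parts (b) and (c). Your device for the $\asz$ qualifier---conditioning on the full-measure set of $Z$-sequences on which the sample moments converge pathwise---is exactly the paper's use of Lemma \ref{as_Z}.
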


Interestingly, Theorem \ref{rosenbaum} also holds for $\htau_\fisher, \htau_\fisher /\ese_\fisher$, and $ \htau_\fisher/ \tse_\fisher$ from the second,  model-output-based strategy.  
 This echos the numeric result from Proposition \ref{FLRtoN}, which implies that the difference between $\hgf$ and $\hgr$ is of higher order under complete randomization.

\begin{theorem}\label{fisher}
Theorem \ref{rosenbaum} holds if we replace all the subscripts \textsc{r} with \textsc{f}. 
\end{theorem}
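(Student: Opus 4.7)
My strategy is to reduce Theorem \ref{fisher} to Theorem \ref{rosenbaum} by showing that, up to $\op$ remainders, each $\fisher$-subscripted quantity coincides with its $\rosenbaum$-subscripted counterpart, both under the sampling distribution and $\asz$ under the randomization distribution. The starting point is Proposition \ref{FLRtoN}, which after subtracting the expressions for $\htf$ and $\htr$ yields the exact identity
\begin{equation*}
\htf - \htr \;=\; \htx^\T (\hgr - \hgf) \;=\; \frac{p_1 p_0}{1-1/N}\, \htf \cdot \htx^\T (\sxx)^{-1} \htx.
\end{equation*}
Under Condition \ref{asym} and complete randomization, a finite-population CLT for sample means gives $\htx = O_P(N^{-1/2})$, Condition \ref{asym}(ii) makes $(\sxx)^{-1}$ bounded, and the boundedness of the \textsc{ols} coefficient of $Z_i$ in the fit of $Y_i$ on $(1,Z_i,x_i)$ yields $\htf = O_P(1)$, so the identity delivers $\sqrt N(\htf - \htr) = O_P(N^{-1/2}) = \op$. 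Slutsky's theorem then lifts Theorem \ref{rosenbaum}(a) to $\sqrt N(\htf - \tau) \rightsquigarrow \mN(0, v_\rosenbaum)$. The identical identity applied to permuted quantities, $\htf^\pi - \htr^\pi = \{p_1 p_0/(1-1/N)\}\, \htf^\pi \cdot (\htx^\pi)^\T (\sxx)^{-1} \htx^\pi$, together with $\htx^\pi = O_P(N^{-1/2})$ and $\htf^\pi = O_P(1)$ $\asz$ (both because the conditional law of $Z_\pi$ given $Z$ is uniform on $\mz$, so the same CLT and \textsc{ols} moment bounds apply), gives $\sqrt N \htf^\pi \rightsquigarrow \mN(0, v_{\rosenbaum 0})$ $\asz$, completing part (a).

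For parts (b)--(c), the remaining task is to establish $N \ese_\fisher^2 - N \ese_\rosenbaum^2 = \op$ and $N \tse_\fisher^2 - N \tse_\rosenbaum^2 = \op$, together with the analogous permuted statements $\asz$. A direct computation using $\bar x = 0$ gives the residual identity
\begin{equation*}
\hat\varepsilon_{\fisher, i} - \hat\varepsilon_{\rosenbaum, i} \;=\; (\htf - \htr)(p_1 - Z_i) + x_i^\T(\hgr - \hgf)
\end{equation*}
for the two \textsc{ols} fits underlying the respective standard errors. The bounds from the previous paragraph combined with Condition \ref{asym}(iii) imply the empirical second moment of this difference is $O_P(N^{-1})$, and Cauchy--Schwarz then yields $N^{-1}\sum_i \hat\varepsilon_{\fisher, i}^2 = N^{-1}\sum_i \hat\varepsilon_{\rosenbaum, i}^2 + \op$. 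Pairing this with the Frisch--Waugh--Lovell identity $\sum_i \tilde Z_i^2 = N p_0 p_1 (1 + \op)$, where $\tilde Z_i$ is the residual of $Z_i$ on $(1, x_i)$, handles the classic standard errors; an analogous sandwich-form computation delivers $N \tse_\fisher^2 - N \tse_\rosenbaum^2 = \op$ for the HC-type robust standard errors. Slutsky then transfers parts (b)--(c) of Theorem \ref{rosenbaum}.

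The step I expect to be the main obstacle is the uniform control of the $\op$ remainders $\asz$ for the studentized permutation distributions, since $\ese_\fisher^\pi$ and $\tse_\fisher^\pi$ are built from residuals and sandwich matrices that jointly depend on the permuted $Z_\pi$ and the fixed observed $Y$; one must reapply the finite-population CLT and fourth-moment bounds of Condition \ref{asym}(iii) conditionally on $Z$ and check they hold along almost every sequence. The rest of the proof is essentially algebra combined with Slutsky's theorem.
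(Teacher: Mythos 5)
Your proposal is correct in substance but organizes the reduction differently from the paper. You compare $\htf$ to $\htr$ \emph{directly}, via the exact identity $\htf - \htr = \lambda^{-1}p_1p_0\,\htf\,\htx^\T(\sxx)^{-1}\htx = O_P(N^{-1})$ extracted from Proposition \ref{FLRtoN}; the paper instead routes both estimators through a common asymptotic representation, proving in Lemma \ref{juxta} that $\htF \dsim \htR \dsim \htn - \gamma^\T\htx$ (and the permuted analogue $\asz$), which it then feeds into the unified proof of Theorems \ref{neyman}--\ref{lin} together with the \textsc{fpclt}. The two routes rest on the same fact --- $\htx = O_P(N^{-1/2})$ makes the $\hgf$-vs-$\hgr$ discrepancy higher order, exactly as the paper remarks after Theorem \ref{fisher} --- but your direct comparison buys a shorter part (a) that inherits $v_\rosenbaum$ and $v_{\rosenbaum 0}$ without recomputing them, while the paper's common-representation device is reused verbatim for $\htl$ and for the ReM analysis, so it amortizes better across the paper. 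Your permuted identity and the observation that $\hgr^\pi = \hgr$ is fixed under the pseudo-outcome strategy are both consistent with how the paper sets up Lemma \ref{v_lim}(b).

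Two cautions. First, the robust standard errors are where most of the genuine work lives, and your proposal compresses it into ``an analogous sandwich-form computation.'' The issue is not only that the residuals differ (your residual identity is correct and its second moment is indeed $O_P(N^{-1})$), but that the two sandwiches use \emph{different weights and different design matrices}: $\tse_\fisher^2$ is the $(2,2)$ element of a $(J+2)\times(J+2)$ sandwich built on $(\on, Z, X)$ with weights governed by $\delta_i = Z_i - p_1 - \lambda^{-1}p_1p_0 x_i^\T\htx$, whereas $\tse_\rosenbaum^2$ is the two-group formula with weights $(Z_i-p_1)^2$. Controlling the cross terms $N^{-1}X^\T\Delta X$, $N^{-1}Z^\T\Delta X$, etc., is precisely what the paper's Lemma \ref{alg_f} does using the fourth-moment bounds of Condition \ref{asym}(iii); you should either reproduce that block-matrix computation or cite it, since it does not follow from the residual identity alone. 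Second, the theorem's randomization statements are $\asz$, not merely in probability; the paper secures this with the finite-population strong law (Lemma \ref{slln_rem} feeding Lemmas \ref{as_Z} and \ref{v_lim}). You correctly flag this as the main obstacle, but as written your argument only delivers conditional $o_{P,\pi}(1)$ statements along sequences of $Z$ whose sample moments converge; you need to invoke the strong law to show such sequences have $P_Z$-probability one. Neither point invalidates the approach --- both are fillable with the paper's own lemmas.
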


The asymptotic equivalence of $\hat{\tau}_\rosenbaum$ and $\hat{\tau}_\fisher$ is perhaps no surprise after all, despite the distinction in procedure.  Both statistics use a common coefficient, namely $\hgr$   and $\hgf$, to adjust the observed outcomes under both treatment and control,  and estimate this coefficient using the pooled data.
Such practice, despite expeditious, can be problematic in experiments with unequal group sizes and heterogeneous treatment effects with respect to covariates \citep{Freedman08a}.

\citet{Lin13}'s estimator, on the other hand, accommodates separate adjustments for outcomes under treatment and control evident from Proposition \ref{FLRtoN}. 
Let $b_i(z) = Y_i(z)   - \bar Y(z)  - x_i^\T\gamma_z$ be the adjusted potential outcomes under treatment-specific coefficient  $\gamma_z$, with mean zero and finite-population variance  $S_{b(z)}^2$. 
Let $S_{\xi}^2$ be the finite-population variance of $\xi_i  =   b_i(1) - b_i(0)$ for $i = \ot{N}$. 
Theorem \ref{lin}  gives the asymptotic distributions of $\htau_\lin$, $\htau_\lin /\ese_\lin$, and $ \htau_\lin/ \tse_\lin$.

\begin{theorem}\label{lin}
{\asym}  
\begin{enumerate}
[(a)]
\item
$\rtn  ( \htl-\tau)    \rightsquigarrow  \mN(0, v_\lin)$,  
and
$\rtn   \htlp \rightsquigarrow  \mN(0, v_{\lin0}) \ \asz$,  where
$v_\lin  = p_1^{-1} S_{b(1)}^2  + p_0^{-1} S_{b(0)}^2  -  S_{\xi}^2$
and $v_{\lin 0} =v_{\rosenbaum 0} =  v_{\fisher 0} =  p_0^{-1}  S_{a(1)}^2  + p_1^{-1}  S_{a(0)}^2  +\tau^2 $.
\item
$
(\htl -\tau) /\ese_\lin   \rightsquigarrow  \mN(0, c'_\lin),
$   
and
$
(\htl /\ese_\lin )^\pi   \rightsquigarrow  \mN(0, 1) \ \asz$, where 
$
c'_\lin  =    v_\lin / (p_0^{-1}  S_{b(1)}^2  + p_1^{-1}  S_{b(0)}^2 ).
$
\item
$
(\htl -\tau)/\tse_\lin    \rightsquigarrow  \mN(0, c_\lin),
$
and 
$
(\htl /\tse_\lin)^\pi   \rightsquigarrow  \mN(0, 1) \ \asz$, where 
$
c_\lin =   v_\lin / (v_\lin +  S_{\xi}^2 ) \leq 1 .
$
\end{enumerate}
\end{theorem}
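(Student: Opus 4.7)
The plan is to reduce Lin's estimator to a difference-in-means estimator on the adjusted potential outcomes $b_i(z)$ and then invoke finite-population CLTs in the spirit of Theorem \ref{neyman} for both the sampling and the randomization distributions. Starting from Proposition \ref{FLRtoN}, we have $\htl = \htN - \htx^\T \hg_\lin$ with $\hg_\lin = p_0 \hglo + p_1 \hglz$. Standard OLS arguments under Condition \ref{asym} yield $\hglzz - \gamma_z = \op$ for $z = 0, 1$, so $\hg_\lin - (p_0 \gamma_1 + p_1 \gamma_0) = \op$.

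The key algebraic step exploits $\bar x = 0_J$, which gives $p_1 \hat x(1) + p_0 \hat x(0) = 0$. A direct computation then yields the identity $\hat x(1)^\T \gamma_1 - \hat x(0)^\T \gamma_0 = \htx^\T (p_0 \gamma_1 + p_1 \gamma_0)$, so that
\[
\htl - \tau = \{\hat b(1) - \hat b(0)\} + \op(N^{-1/2}),
\]
where $\hat b(z) = N_z^{-1} \sum_{i:Z_i=z} b_i(z)$. Because the $b_i(z)$'s are fixed constants, applying the finite-population CLT for the difference-in-means estimator under complete randomization gives $\sqrt{N}(\htl - \tau) \rightsquigarrow \mathcal{N}(0, v_\lin)$ with the stated $v_\lin = p_1^{-1} S_{b(1)}^2 + p_0^{-1} S_{b(0)}^2 - S_\xi^2$, completing the sampling half of (a).

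For the randomization distribution, apply the same reduction in the pseudo-population with $Y_i'(1) = Y_i'(0) = Y_i$. Here the population-level $\gamma_0$ and $\gamma_1$ both collapse to $\tilde\gamma = (\sxx)^{-1} \hat S_{xY}$, which under complete randomization satisfies $\tilde\gamma \to \gamma^* := p_1 \gamma_1 + p_0 \gamma_0$ a.s. The pseudo-adjusted outcomes $\tilde b_i = Y_i - \bar Y - x_i^\T \tilde\gamma$ are common to $z = 0, 1$, so the pseudo-population analog of $S_\xi^2$ vanishes. Writing $Y_i = Z_i Y_i(1) + (1-Z_i) Y_i(0)$ and using $\bar Y \to p_1 \bar Y(1) + p_0 \bar Y(0)$ a.s., a direct calculation shows that $(p_0 p_1)^{-1} S_{\tilde b}^2 \to p_0^{-1} S_{a(1)}^2 + p_1^{-1} S_{a(0)}^2 + \tau^2 = v_{\lin 0}$, matching $v_{\rosenbaum 0}$ and $v_{\fisher 0}$ from Theorems \ref{rosenbaum}--\ref{fisher}. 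The finite-population CLT then delivers $\sqrt{N}\htlp \rightsquigarrow \mathcal{N}(0, v_{\lin 0})$ a.s.\ over sequences of $Z$.

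For parts (b) and (c), analyze the classic and robust OLS standard errors from Lin's regression. The squared residuals converge to $b_i(Z_i)^2$, and a standard calculation yields $N \ese_\lin^2 \to p_0^{-1} S_{b(1)}^2 + p_1^{-1} S_{b(0)}^2$ and $N \tse_\lin^2 \to p_1^{-1} S_{b(1)}^2 + p_0^{-1} S_{b(0)}^2 = v_\lin + S_\xi^2$ under the sampling framework, giving the stated $c'_\lin$ and $c_\lin \leq 1$ via Slutsky. In the pseudo-population, the identity $\tilde\gamma_1 = \tilde\gamma_0$ renders the classic and robust standard errors asymptotically equivalent and both consistent for $v_{\lin 0}$, so $(\htl/\ese_\lin)^\pi$ and $(\htl/\tse_\lin)^\pi$ both converge to $\mathcal{N}(0,1)$ a.s. The main obstacle is the careful book-keeping on the randomization side: $\hg_\lin^\pi$, $\ese_\lin^\pi$, and $\tse_\lin^\pi$ all depend on $Z$ through the observed $Y_i$ as well as on $\pi$, so the asymptotics demand simultaneous a.s.\ convergence of several permutation-sensitive quantities, most conveniently handled via Hoeffding-type concentration for sampling without replacement combined with Slutsky-style assembly.
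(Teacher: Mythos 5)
Your proposal is correct and follows essentially the same route as the paper's proof: the algebraic reduction $\htl-\tau \dsim \hat b(1)-\hat b(0)$ via Proposition \ref{FLRtoN} and $\hat x(1)=p_0\htx$, $\hat x(0)=-p_1\htx$ is exactly the paper's Lemma \ref{juxta}, the pseudo-population argument with the pooled coefficient $\hgr\to p_1\gamma_1+p_0\gamma_0$ is exactly how the paper obtains $v_{\lin0}=v_{\rosenbaum0}=v_{\fisher0}$, and the standard-error limits plus Slutsky match Lemma \ref{v_lim}. The only cosmetic difference is that you apply the finite-population CLT directly to the adjusted potential outcomes $b_i(z)$ whereas the paper applies the joint CLT to $(\htN-\tau,\htx)$ and takes a linear combination; the almost-sure bookkeeping you flag at the end is handled in the paper by the finite-population strong law of large numbers (Lemma \ref{slln_rem}).
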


The asymptotic variance   of $\htl$ is less than or equal to $v_\fisher = v_\rosenbaum$ \citep{Lin13}, but those of the randomization distributions are all equal, $ v_{\lin 0 } =   v_{\rosenbaum 0} =  v_{\fisher 0} $. Similar to the comments on the ``pseudo finite population" after Theorem \ref{neyman}, this is due to the mixing of the treated and control outcomes in the {\frt} procedure, which effectively results in covariate adjustment based on the pooled data even in constructing \citet{Lin13}'s estimator. In fact, Lemma \ref{juxta} in the supplementary material gives a stronger result that $\hat{\tau}_*^\pi \ (*=\rosenbaum, \fisher, \lin)$ are all asymptotically equivalent.

The asymptotic sampling distributions in Theorems \ref{fisher} and \ref{lin} are not new \citep{Freedman08a, Lin13}, but the randomization distributions are. Both the asymptotic sampling and randomization distributions of $\htau_\rosenbaum$, $\htau_\rosenbaum /\ese_\rosenbaum$, and $ \htau_\rosenbaum/ \tse_\rosenbaum$ in Theorem \ref{rosenbaum} are new.   
The analysis of the randomization distributions builds upon 
the existing sampling distributions
but requires additional technical tools, such as the finite-population strong law of large numbers.  We unify the existing and new results in the above four theorems to facilitate discussions on the asymptotic validity. 

Technically, Condition \ref{asym} requires more moments than the usual asymptotic analysis of $\htau_*\ (*=\neyman, \fisher, \lin)$. 
This is due to the strong statement of the almost sure convergence of the randomization distributions in Theorems \ref{neyman}--\ref{lin}. 
This is sufficient but unnecessary for showing that  {\frt} controls the asymptotic type one error rates, which only requires that the quantiles of the asymptotic randomization distribution are greater than or equal to those of the asymptotic sampling distribution. 
We can use the ``subsequence argument," a standard proving device for the bootstrap \citep{vdv-yellowbook}, to relax the moment conditions. However, we keep the current version of Condition \ref{asym} to simplify the statements of the theorems and their proofs.

\begin{remark}\label{rmk:sp}
 All results in Theorems \ref{neyman}--\ref{lin} extend to the super-population framework under independent treatment assignments with minor modifications. 
A key distinction is that the robust standard error $\tse_\lin$ must be modified to ensure consistency \citep{berk2013covariance, negi2020revisiting}. 
Motivated by the similarity in procedure, we also evaluate the design-based properties of five existing permutation tests originally for linear models and show the superiority of the proposed {\frt} for testing the treatment effects \citep{romano, fl, knd, tb, manly}.
We relegate the details to Section \ref{sec:connections} in the supplementary material.
\end{remark}

\subsection{Asymptotic validity for testing $\tau = 0$}\label{sec::typeoneerror}
Theorems \ref{neyman}--\ref{lin} establish the sampling and randomization distributions for all twelve test statistics in Table \ref{tb:tt}
as asymptotically normal.  A statistic as such is proper under two-sided {\frt} if under $\HN$, the asymptotic variance of its randomization distribution is greater than or equal to that of its sampling distribution.
In general, $v_{*}/v_{*0}$ and $c'_*$ can be either greater or less than $1$, suggesting the improperness of $\hts$ and $\htss $ for $*=\neyman,\rosenbaum, \fisher, \lin$.  
On the other hand, $c_* \leq 1$, ensuring the properness of $\hts/\tse_*$ for $* = \neyman, \rosenbaum, \fisher, \lin$.

\begin{corollary}\label{rec}
{\asym}  The robust $t$-statistics $\htH/\tse_* \ph$
are {\proper}
\end{corollary}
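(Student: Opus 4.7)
The plan is to derive Corollary \ref{rec} directly from Theorems \ref{neyman}--\ref{lin} by translating the variance comparison $c_* \leq 1$ into the required domination of the sampling $p$-value by $\alpha$.

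First, impose $\HN$, so $\tau = 0$. For each $* \in \{\neyman, \rosenbaum, \fisher, \lin\}$, Theorems \ref{neyman}--\ref{lin} give the asymptotic sampling distribution $\hts/\tse_* \rightsquigarrow \mN(0, c_*)$ with $c_* \leq 1$, and the randomization distribution $(\hts/\tse_*)^\pi \rightsquigarrow \mN(0, 1) \ \asz$. Because the two-sided \frt\ uses $|\hts/\tse_*|$, let $q_{1-\alpha}^\pi(Z)$ denote the $(1-\alpha)$-quantile of the randomization distribution of $|(\hts/\tse_*)^\pi|$ conditional on $Z$. By the standard inversion of $p_\frt$, the event $\{p_\frt \leq \alpha\}$ is (up to ties that occur with asymptotically vanishing probability because the limit is continuous) contained in $\{|\hts/\tse_*| \geq q_{1-\alpha}^\pi(Z)\}$.

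Next I would argue that the thresholds converge to the Gaussian quantile. The $\asz$ convergence of $(\hts/\tse_*)^\pi$ to $\mN(0,1)$ in Theorems \ref{neyman}--\ref{lin}, together with the continuous mapping theorem applied to the absolute value and the continuity of the standard-normal CDF, yields $q_{1-\alpha}^\pi(Z) \to z_{1-\alpha/2} \ \asz$, where $z_{1-\alpha/2}$ is the $(1-\alpha/2)$-quantile of $\mN(0,1)$. Combining this with the marginal convergence $|\hts/\tse_*| \rightsquigarrow |\mN(0,c_*)|$ via a Slutsky/Portmanteau argument gives
\begin{align*}
\limN \pr(p_\frt \leq \alpha) \;\leq\; \pr\bigl(|\mN(0,c_*)| \geq z_{1-\alpha/2}\bigr).
\end{align*}

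Finally, since $c_* \leq 1$, the random variable $|\mN(0,c_*)|$ is stochastically dominated by $|\mN(0,1)|$, so the right-hand side is bounded by $\pr(|\mN(0,1)| \geq z_{1-\alpha/2}) = \alpha$. This holds for every $\alpha \in (0,1)$ and every $\mathcal{S}$ satisfying Condition \ref{asym} under $\HN$, which is exactly the definition of properness in Definition \ref{prop}. No step should be a real obstacle here: the content of the corollary is entirely packaged inside the inequality $c_* \leq 1$ proved in Theorems \ref{neyman}--\ref{lin}, and the only mildly delicate point is handling the conditional quantile $q_{1-\alpha}^\pi(Z)$ rigorously, which is standard once the $\asz$ weak convergence of the randomization distribution is in hand.
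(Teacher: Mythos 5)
Your argument for the properness of the four robust $t$-statistics is correct and follows essentially the paper's own route: the paper reduces properness under a two-sided {\frt} to comparing the asymptotic variances of the sampling and randomization distributions and then invokes $c_* \leq 1$ from Theorems \ref{neyman}--\ref{lin}; your quantile argument (almost-sure convergence of the conditional randomization quantile to $z_{1-\alpha/2}$, Slutsky, then stochastic dominance of $|\mN(0,c_*)|$ by $|\mN(0,1)|$) simply makes that reduction explicit, which is a reasonable and careful way to fill in the one step the paper leaves implicit.

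However, the corollary as stated is stronger than what you prove. Its wording is that the robust $t$-statistics are \emph{the only} test statistics in Table \ref{tb:tt} proper for testing $\HN$ via {\frt}, so it asserts both that $\hts/\tse_*$ are proper and that the remaining eight statistics are not. Since Definition \ref{prop} requires the error bound to hold for any $\mathcal{S}$, disproving properness of a statistic requires exhibiting a sequence of finite populations satisfying $\HN$ and Condition \ref{asym} for which the asymptotic variance of the sampling distribution strictly exceeds that of the randomization distribution. The paper handles this by observing that under $\HN$ one has $v_\neyman - v_{\neyman 0} = (p_1^{-1}-p_0^{-1})(S_1^2-S_0^2) - S_\tau^2$, which can be made positive (for instance $p_1 < p_0$, $S_1^2 > S_0^2$, and $S_\tau^2$ small), and that the analogous ratios $v_*/v_{*0}$ and $c'_*$ can likewise exceed one for the other unstudentized and classically studentized statistics. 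Your proposal never touches this half of the claim, so as written it establishes the properness of the robust $t$-statistics but not the exclusivity asserted in the statement.
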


Corollary \ref{rec} highlights the necessity of robust studentization in constructing asymptotically valid {\frt} for testing $\HN$. 
The other eight test statistics may also preserve the correct type one error rates asymptotically with additional conditions on $(p_0,p_1)$ or $\mathcal{S}$.
The former is within the control of the designer  whereas the latter is not.

 \begin{corollary}\label{cor:one half}
{\asym}  As $N$ goes to infinity, 
\begin{enumerate}[(a)]
\item all twelve test statistics in Table \ref{tb:tt} preserve the correct type one error rates  if $p_0 = p_1= 1/2$ or $\tau_i = \tau$ for all $i=\ot{N}$; 
\item $\htN$ and $\htns $ preserve the correct type one error rates  if $ S_0^2  =  S_1^2 $; $\htR$, $\htR/\ese_\rosenbaum$, $\htF$, and $\htF/\ese_\fisher$ do so  if $ S_{a(0)}^2  =  S_{a(1)}^2 $; 
$\htl$ and $\htl/\ese_\lin$ do so   if $  S_{b(0)}^2  =  S_{b(1)}^2 $.
\end{enumerate} 
\end{corollary}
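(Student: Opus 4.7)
The plan is to directly compare, under $\HN$ (hence $\tau=0$), the asymptotic variances of the sampling and randomization distributions delivered by Theorems \ref{neyman}--\ref{lin}. Every limit is a centered normal once $\tau=0$, so for two-sided \frt\ the statistic is proper as soon as the randomization variance dominates the sampling variance; the standard quantile argument then yields $\limN \pr(p_\frt\leq\alpha)\leq\alpha$. The robust $t$-statistics $\hts/\tse_*$ are already covered by Corollary \ref{rec}, so only the eight unstudentized and classically-studentized statistics need attention.

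First dispose of the sub-case $\tau_i=\tau$ in part~(a): combined with $\HN$, this forces $\tau_i=0$, hence $Y_i(1)=Y_i(0)$ for all $i$, which is exactly $\HF$, and finite-sample exactness of all twelve statistics makes properness immediate. For the remaining conditions, plug $\tau=0$ into the formulas from Theorems \ref{neyman}--\ref{fisher} and compute
\begin{align*}
v_{\neyman 0}-v_\neyman &= (p_0^{-1}-p_1^{-1})(S_1^2-S_0^2)+S_\tau^2,\\
v_{\rosenbaum 0}-v_\rosenbaum &= (p_0^{-1}-p_1^{-1})(S_{a(1)}^2-S_{a(0)}^2)+S_\tau^2,
\end{align*}
with the same expression for $v_{\fisher 0}-v_\fisher$ by Theorem \ref{fisher}. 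Each gap is nonnegative under $p_0=p_1=1/2$ or the matching variance equality in~(b). The classically-studentized counterparts $\htn/\sen$, $\htR/\ser$, and $\htF/\sef$ have sampling variance $c'_*=v_*/v_{*0}$ once $\tau=0$, so $c'_*\leq 1$ follows from the very same inequalities.

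The Lin statistics require one extra identity, because $v_\lin$ is expressed through $S_{b(z)}^2$ while $v_{\lin 0}$ uses $S_{a(z)}^2$. Since $b_i(z)$ are the residuals from the {\ols} fit of $Y_i(z)$ on $(1,x_i)$ with $\bar x=0$, we have $\sum_i x_i b_i(z)=0$, which gives the orthogonal decomposition $S_{a(z)}^2=S_{b(z)}^2+(\gamma_z-p_1\gamma_1-p_0\gamma_0)^\T S_x^2 (\gamma_z-p_1\gamma_1-p_0\gamma_0)$. Writing $\delta=\gamma_1-\gamma_0$, this specializes to $S_{a(1)}^2=S_{b(1)}^2+p_0^2\delta^\T S_x^2\delta$ and $S_{a(0)}^2=S_{b(0)}^2+p_1^2\delta^\T S_x^2\delta$. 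Substituting into $v_{\lin 0}-v_\lin$ and collecting terms yields
\[
v_{\lin 0}-v_\lin=(p_0^{-1}-p_1^{-1})(S_{b(1)}^2-S_{b(0)}^2)+\delta^\T S_x^2\delta+S_\xi^2,
\]
nonnegative whenever $p_0=p_1$ or $S_{b(0)}^2=S_{b(1)}^2$. An analogous simplification for the ratio $c'_\lin$ gives denominator minus numerator equal to $(p_0^{-1}-p_1^{-1})(S_{b(1)}^2-S_{b(0)}^2)+S_\xi^2$, nonnegative in the same two regimes. The only delicate step is this Lin reconciliation, since the randomization and sampling variances live on different adjusted outcomes; the orthogonality of $b_i(z)$ against $x_i$ in the finite-population {\ols} is the key identity that makes the two families of variances commensurable.
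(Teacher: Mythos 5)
Your proposal is correct and follows the same route the paper intends: under $\HN$ all twelve limits in Theorems \ref{neyman}--\ref{lin} are centered normals, so properness reduces to comparing $v_*$ with $v_{*0}$ (and $c'_*$ with $1$), and your variance gaps — including the Lin reconciliation via $S_{a(z)}^2 - S_{b(z)}^2 = \|\gamma_z-\gamma\|_2^2$, which is already recorded in the paper's display \eqref{Sab} — all check out. The constant-effects case handled via reduction to $\HF$ and finite-sample exactness is also fine.
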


 Corollary \ref{cor:one half} states the properness of the unstudentized coefficients and classic $t$-statistics when $p_0 = p_1 = 1/2$. 
The result echos the fact that for the usual two-sample $t$-test, one may use either the pooled or unpooled estimate of the variance whenever the ratio of the two sample sizes tends to one.
\citet{Freedman08a} and  \citet{Lin13} discovered this result in complete randomization under the finite-population inference;  
\cite{bugni2018inference} discovered parallel results  in covariate-adaptive randomization under the super-population inference.

\subsection{Insights for power under alternative hypotheses}\label{sec:power}
 
A natural next question is the relative power of {\frt} with the four robust $t$-statistics under alternative hypotheses. 
Recall that Theorems \ref{neyman}--\ref{lin} hold for arbitrary $\tau$. 
A deviation from $\HN$ shifts the center of $\htau_*/\tse_*$ while leaving its asymptotic randomization distribution intact at $\mN(0,1)$.
{With $|\tau|/\tse_*$ tending to $\infty$ for any fixed $\tau\neq 0$ as $N$ goes to infinity, all four statistics would have power converging to one under the alternative hypothesis  for fixed $\tau\neq 0$, with the relative power determined by $| \hat{\tau}_*/\tse_*|$ as the distance between the observed value of the test statistics, $\hts/\tse_*$, and the center of the reference distribution, namely 0.}
With $\hat\tau_* $ converging to $\tau$ in probability for all $* =\neyman,\rosenbaum,\fisher,\lin$, a heuristic argument is that the smaller the robust standard error, the higher the asymptotic relative power. 
We give in Corollary \ref{power} the relative order of the robust standard errors as $N$ goes to infinity, and demonstrate its heuristic relation with the power via simulation in Section \ref{simulation}.
Rigorous quantification of the power entails specification of the alternative distributions \citep{lehmann1975nonparametrics, rosenbaum2020}. 
We leave the technical details to future work.

\begin{corollary}\label{power} 
Assume complete randomization and Condition \ref{asym}. We have 
\begina
\frac{\tse_*^2}{\tse_\neyman^2} -  \frac{p_1^{-1}  S_{a(1)}^2+ p_0^{-1}  S_{a(0)}^2}{p_1^{-1}  S_1^2 + p_0^{-1}  S_0^2} = o(1) \text{\quad for $*=\rosenbaum$ and $\fisher$,}\quad\quad 
\frac{\tse_\lin^2}{\tse_\neyman^2} - \frac{p_1^{-1}  S_{b(1)}^2 + p_0^{-1}  S_{b(0)}^2}{p_1^{-1}  S_1^2 + p_0^{-1}  S_0^2} =o(1)
\enda
hold $\asz$, with the limiting values of $\tse_\lin^2/\tse_*^2$ less than or equal to 1  for $*=\neyman, \rosenbaum, \fisher$. 
\end{corollary}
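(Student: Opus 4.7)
The plan is to reduce the corollary to two ingredients: (i) explicit almost-sure limits of $N \tse_*^2$ for each $*\in\{\neyman,\rosenbaum,\fisher,\lin\}$, already present implicitly in the proofs of Theorems \ref{neyman}--\ref{lin}, and (ii) an elementary finite-population inequality derived from OLS optimality.

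First I would collect the limits of the robust variances. Combining the expressions of $v_\neyman$, $v_\rosenbaum=v_\fisher$, and $v_\lin$ with the variances of the randomization distributions in Theorems \ref{neyman}--\ref{lin}, we obtain the telescoping identities $v_\neyman+S_\tau^2 = p_1^{-1}S_1^2+p_0^{-1}S_0^2$, $v_*+S_\tau^2 = p_1^{-1}S_{a(1)}^2+p_0^{-1}S_{a(0)}^2$ for $*=\rosenbaum,\fisher$, and $v_\lin+S_\xi^2 = p_1^{-1}S_{b(1)}^2+p_0^{-1}S_{b(0)}^2$. The display \eqref{eq::standarderrors} and its analogues, together with Lemma \ref{v_lim} in the supplement, then yield $N\tse_\neyman^2 = p_1^{-1}S_1^2+p_0^{-1}S_0^2 + o(1)$ $\asz$, and likewise $N\tse_*^2 = p_1^{-1}S_{a(1)}^2+p_0^{-1}S_{a(0)}^2+o(1)$ $\asz$ for $*=\rosenbaum,\fisher$, and $N\tse_\lin^2 = p_1^{-1}S_{b(1)}^2+p_0^{-1}S_{b(0)}^2+o(1)$ $\asz$. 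Condition \ref{asym}(ii) ensures the limits $p_1^{-1}S_1^2+p_0^{-1}S_0^2$ are bounded away from zero, so Slutsky together with continuous mapping on the ratio immediately gives the two displayed $o(1)$ statements.

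The second part reduces to the pointwise inequalities among the three numerators. By definition $b_i(z)=Y_i(z)-\bar Y(z)-x_i^\T\gamma_z$ is the OLS residual of $Y_i(z)$ regressed on $(1,x_i)$, so $S_{b(z)}^2 = S_z^2 - S_{xY(z)}^\T(S_x^2)^{-1}S_{xY(z)}\le S_z^2$. For the comparison with $a_i(z)=Y_i(z)-\bar Y(z)-x_i^\T(p_1\gamma_1+p_0\gamma_0)$, observe that for any fixed $\beta\in\mathbb{R}^J$, the finite-population variance of $Y_i(z)-\bar Y(z)-x_i^\T\beta$ is a quadratic in $\beta$ minimized at $\beta=\gamma_z$; applying this with $\beta=p_1\gamma_1+p_0\gamma_0$ yields $S_{b(z)}^2\le S_{a(z)}^2$. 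Summing the two inequalities with positive weights $p_1^{-1}$ and $p_0^{-1}$ gives $p_1^{-1}S_{b(1)}^2+p_0^{-1}S_{b(0)}^2\le p_1^{-1}S_{a(1)}^2+p_0^{-1}S_{a(0)}^2$ and $\le p_1^{-1}S_1^2+p_0^{-1}S_0^2$, which via the ratio limits established above delivers $\tse_\lin^2/\tse_*^2\le 1+o(1)$ $\asz$ for $*=\neyman,\rosenbaum,\fisher$.

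No step looks like a genuine obstacle; the only subtlety is making sure the convergence of the variance estimators is quoted in its almost-sure form rather than the in-probability form used in Section \ref{sec::typeoneerror}, which is where citing Lemma \ref{v_lim} (as opposed to the weaker display \eqref{eq::standarderrors}) matters. Everything else is either a direct consequence of Theorems \ref{neyman}--\ref{lin} or a one-line OLS-optimality argument.
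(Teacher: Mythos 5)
Your proposal is correct and follows essentially the same route as the paper: the paper's proof is the one-liner ``apply Lemma \ref{v_lim} together with $S_{b(z)}^2 \leq S_{a(z)}^2$ and $S_{b(z)}^2 \leq S_z^2$,'' and you reconstruct exactly those ingredients, correctly insisting on the almost-sure form of the variance limits. The only cosmetic difference is that you derive the two inequalities from OLS optimality whereas the paper reads them off the algebraic identities in \eqref{Sab}; these are the same fact.
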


With $S^2_{b(z)} \leq S^2_{a(z)}$ and $S^2_{b(z)} \leq S^2_{ z }$ for $z=0,1$, Corollary \ref{power} ensures $\tse_\lin$ has the smallest limiting value among $\tse_* \ (* = \neyman, \rosenbaum, \fisher, \lin)$, and thereby ensures $\htl/\tse_\lin$ has the highest power asymptotically. 
The limiting values of $\tse_\rosenbaum$ and $\tse_\fisher$, on the other hand, can be even greater than that of $\tse_\neyman$.
This mirrors the asymptotic efficiency theory of point estimation and suggests $\htR/\tse_\rosenbaum$ and $\htF/\tse_\fisher$ can be even less powerful than  $\htnsr$ despite the extra use of covariates \citep{Freedman08a, Lin13}. 
See also \citet[Corollary 1]{fogarty2018mitigating} for analogous results in the context of finely stratified experiments.

{\frt} with $\htl/\tse_\lin$, as a result, is finite-sample exact for testing $\HF$, asymptotically valid for testing $\HN$, and enjoys the highest power under alternatives, all irrespective of whether the linear model that generates it is correctly specified or not. 
It is thus our final recommendation for testing both $\HF$ and $\HN$ by {\frt}.

\subsection{Confidence interval by inverting FRTs}\label{CI}

We next extend the theory from testing hypotheses to constructing confidence intervals. 
This is conceptually straightforward given their duality. 
Consider using {\frt} to test $\HN(c): \tau = c$. 
We can pretend to be testing a strong null hypothesis of constant effect,  $\HF(c):\tau_i=c $ for all $i = \ot{N}$, 
and compute the $p$-value, denoted by $p_\frt(c)$, by using $ Y  - cZ$ as the fixed outcomes for  {\frt}.
Inverting a sequence of such {\frt}s on a bounded set of the possible values of $c$ yields 
$ 
\textsc{ci}_{\frt, \alpha}
$ 
as a tentative interval estimator for the average treatment effect $\tau$. 
By duality, it is an asymptotic $1-\alpha $ confidence interval for $\tau$  if we use the robust $t$-statistics to perform the {\frt}s. Duality further suggests the one based on $\htl/\tse_\lin$ to have the smallest width asymptotically.

Alternatively, the robust Wald-type confidence intervals $(\hat{\tau}_*-q_{1-\alpha/2}\tse_*, \hat{\tau}_* + q_{1-\alpha/2}\tse_*)\ (*=\neyman, \fisher, \rosenbaum, \lin)$ cover $\tau$ with probability approaching $1-\alpha$ as $N$ goes to infinity, where $q_{1-\alpha/2}$ is the $(1- \alpha/2)$th quantile of the standard normal.  These confidence intervals are asymptotically identical to $\textsc{ci}_{\frt, \alpha}$ based on the robust $t$-statistics. They are convenient approximations for $\textsc{ci}_{\frt, \alpha} $ which can be used as initial values in the grid search over $c$. We recommend using $\textsc{ci}_{\frt, \alpha} $ based on $\htl/\tse_\lin$ because of its multiple guarantees: 
it has finite-sample exact coverage rate when $\tau_i = \tau$ for all $i = \ot{N}$, has correct asymptotic coverage rate when $\tau_i$'s vary, and has smaller width compared to the confidence interval without covariate adjustment.

 \section{Extensions to other experimental designs}

\subsection{Cluster randomization}\label{sec:cluster}
Consider $N$ units nested in $M$ clusters of sizes $n_i \  (i = \ot{M}; \ \sumM n_i = N)$. The average cluster size is $\bar n = N/M$.  
Cluster randomization randomly assigns $M_1$ clusters to receive the treatment and the rest $M_0 = M - M_1$ clusters to receive the control.
Let $x_{ij}$ and $\{Y_{ij}(z): z= 0,1\}$ be the covariate and potential outcomes for the $j$th unit in cluster $i \ (i=1,\ldots,M;\ j=1,\ldots, n_i)$.
 The  average treatment effect equals
\beginy\label{cr}
\tau  = N^{-1}  \sumM \sum_{j=1}^{n_i} \{ Y_{ij}(1) - Y_{ij}(0) \}
= M^{-1} \sum_{i=1}^M
\{\tilde{Y}_{i\cdot}(1)-\tilde{Y}_{i\cdot}(0)\},
\endy
where $\tilde{Y}_{i\cdot}(z)=\sum_{j=1}^{n_i}Y_{ij}(z) / \bar n$ is the cluster total of potential outcomes  scaled by $1/\bar n$. 

Let $Z_i$ be the treatment level received by cluster $i$.  
The observed outcome for unit $ij$ is $Y_{ij}=Z_i Y_{ij}(1)+(1- Z_i )Y_{ij}(0)$. 
Let $\tilde{Y}_{i\cdot}= \sum_{j=1}^{n_i}Y_{ij} / \bar n$ 
and $\tilde{x}_{i\cdot}= \sum_{j=1}^{n_i}x_{ij} / \bar n$
be the scaled cluster totals of observed outcomes and covariates in cluster $i$.
Then $\tilde Y_{i\cdot} = Z_i \tilde Y_{i\cdot} (1)+ (1-Z_i) \tilde Y_{i\cdot}(0)$ gives the observed analog of $\tilde Y_{i\cdot}(z)$ under cluster randomization. 
This, together with the expression of $\tau$ from  \eqref{cr}, suggests the equivalence of $( Z_i,  \tilde{Y}_{i\cdot},  \tilde{x}_{i\cdot})_{i=1}^M$ to data from a complete randomization with potential outcomes $\{ \tilde Y_{i\cdot}(0), \tilde Y_{i\cdot}(1)\}_{i=1}^M$ and average treatment effect $\tau$  \citep{MiddletonCl15, DingCLT}, and allows us to derive results in parallel with Theorems \ref{neyman}--\ref{lin} under a modified version of Condition \ref{asym} on cluster totals by replacing $Y_i(z)$ with $\tilde Y_{i\cdot}(z)$, $x_i$ with $\tilde x_{i\cdot}$, and $(N, N_0, N_1)$ with $(M, M_0, M_1)$ as $M$ goes to infinity.
This requires a large number of clusters to ensure the  accuracy of the asymptotic approximation.
See \cite{DS} for more subtle requirements on the cluster sizes when the regularity conditions are given in terms of the individual potential outcomes.

\subsection{Stratified randomization}
Consider $N$ units in $K$ strata of sizes $\nj \ (k = 1, \dots, K; \ \sumj \nj = N)$.
Stratified randomization conducts an independent complete randomization in each stratum, and assigns at complete random $N_{[k]z}$ units to receive treatment $z$ in stratum $k \ ( k = \ot{K}; \ z = 0,1)$.
Denote by $x_\ki$, $\{Y_\ki(z): z = 0,1\}$, and $Z_\ki$ the covariate, potential outcomes, and treatment indicator for the $i$th unit in stratum $k \ ( k = \ot{K}; \ i = \ot{\nj})$.
The randomization scheme independently draws $Z_{[k]} = (Z_{[k]1}, \dots, Z_{[k]N_{[k]}})^\T$ as a random permutation of $N_{[k]1}$ $1$'s and $N_{[k]0}$ $0$'s  for $k=1,\ldots, K$. Equivalently, it draws
$Z = (Z_{[1]}^\T, \dots, Z_{[K]}^\T)^\T$  uniformly from the set 
\begina
\mathcal{Z}_{\textup{str}} = \left\{ \bm z = (z_{[k]i}) \in \{0,1\}^N: \sum_{i=1}^{N_{[k]}} z_{[k]i} = N_{[k]1} \text{ for } k = \ot{K}\right\}
\enda
subject to the stratum-wise treatment size restriction.
Let $Y$ and $X$ be the vectorization and concatenation of $Y_{[k]i}(z)$'s and $x_{[k]i}(z)$'s, 
respectively. 
For arbitrary test statistic $T(Z, Y, X)$, a two-sided {\frt} under stratified randomization permutes the treatment vector $Z$ within $\mathcal{Z}_{\textup{str}}$, and computes the $p$-value as 
\begina
p_{\frt, \textup{str}} = |\mathcal{Z}_\textup{str}|^{-1}\sum_{ \pi: Z_\pi\in \mathcal{Z}_\textup{str}}1\{ |T(Z_\pi,   Y, X )| \geq |T(Z, Y, X)|\}.
\enda
The finite-population average treatment effect equals 
$$
\tau = N^{-1}\sumj \sumji \{Y_\ki(1)-Y_\ki(0)\} = \sumj \omega_{[k]} \tauj,
$$
where $\omega_{[k]} = \nj / N$ and 
$\tauj = \nj^{-1} \sum_{i=1}^{\nj} \{Y_\ki(1)-Y_\ki(0)\} $
define the relative size and stratum-wise average treatment effect of stratum $k$, respectively. 
Of interest is the choice of the test statistic that ensures valid test of $\HN: \tau = 0$ via \frt.


Assume $\hat{\tau}_{*[k]}$ and  $\tse_{*[k]}$ as the basic estimator and robust standard error obtained from stratum $k$, where $*$ can be $\neyman$, $\rosenbaum$, $\fisher$, and $\lin$.  
The weighted average $\hat{\tau}_{*} = \sumj   \omega_{[k]} \hat{\tau}_{*[k]}$, with a slight abuse of notation, affords an intuitive estimator of $\tau$ with squared robust standard error $\tse^2_{*} = \sumj  \omega_{[k]}^2 \tse^2_{*[k]} $. 
The abuse of notation causes little confusion because $\hat{\tau}_{*}$ and $\tse_{*} $ reduce to their definitions under complete randomization when $K=1$. 
This suggests $\hat{\tau}_{*} / \tse_{*} \ (*=\nm,\rosenbaum,\fisher,\lin)$ as four intuitive choices of the test statistic for testing both $\HF$ and $\HN$ under stratified randomization.
The properness of $\hat{\tau}_{*} / \tse_{*} $ for testing $\HN$ is a direct application of Theorems \ref{neyman}--\ref{lin}.

 \begin{corollary}\label{sre}
Assume stratified randomization and Condition \ref{asym} holds within all strata $k=\ot{K}$. We have $(\hat\tau_*-\tau) /\tse_*   \rightsquigarrow  \mN(0, c_*)$,  and $(\hat\tau_* /\tse_*)^\pi \rightsquigarrow  \mN(0, 1)$ $P_Z$-a.s., where $c_* \leq 1$, for $*= \textsc{n}, \textsc{r}, \textsc{f}, \textsc{l}$.
\end{corollary}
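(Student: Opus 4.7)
The strategy is to reduce stratified randomization to Theorems \ref{neyman}--\ref{lin} applied within each stratum and then combine via the mutual independence of within-stratum assignments. First, decompose
\[
\sqrt{N}(\hat\tau_* - \tau) = \sumj \sqrt{\omega_{[k]}}\,\sqrt{\nj}\bigl(\htau_{*[k]} - \tauj\bigr), \qquad N\tse_*^2 = \sumj \omega_{[k]}\bigl(\nj \tse_{*[k]}^2\bigr),
\]
using $\omega_{[k]} = \nj/N$ and $\tau = \sumj \omega_{[k]} \tauj$. Within stratum $k$, the data $\{Y_{[k]i}(z), x_{[k]i}, Z_{[k]i}\}_{i=1}^{\nj}$ constitute a completely randomized experiment satisfying Condition \ref{asym}. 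Theorems \ref{neyman}--\ref{lin} applied stratum-wise then yield $\sqrt{\nj}(\htau_{*[k]} - \tauj) \rs \mN(0, v_{*[k]})$ and $\nj \tse_{*[k]}^2 = v_{*[k]}^{\mathrm{c}} + \op$, where $v_{*[k]}^{\mathrm{c}} - v_{*[k]}$ is the non-identifiable positive component of the robust-standard-error limit read off from Theorems \ref{neyman}--\ref{lin} (namely $S_\tau^2$ within stratum $k$ for $*=\neyman,\rosenbaum,\fisher$ and $S_\xi^2$ within stratum $k$ for $*=\lin$).

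Because stratified randomization draws the $Z_{[k]}$'s mutually independently, the stratum-wise standardized estimators are independent. For fixed $K$, convolving the marginal limits (equivalently, applying the continuous mapping theorem to the product limit) gives
\[
\sqrt{N}(\hat\tau_* - \tau) \rs \mN(0, v_*), \qquad v_* = \sumj \omega_{[k]} v_{*[k]},
\]
while $N \tse_*^2 \toinp v_*^{\mathrm{c}} = \sumj \omega_{[k]} v_{*[k]}^{\mathrm{c}}$ by the stratum-wise consistency combined with weighted summation. Slutsky then yields $(\hat\tau_* - \tau)/\tse_* \rs \mN(0, c_*)$ with $c_* = v_*/v_*^{\mathrm{c}} \leq 1$, since $v_{*[k]} \leq v_{*[k]}^{\mathrm{c}}$ within every stratum.

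For the randomization distribution, {\frt} permutes $Z$ within $\mathcal{Z}_{\textup{str}}$, which acts independently inside each stratum. The randomization statements in Theorems \ref{neyman}--\ref{lin} give $\sqrt{\nj}\,\htau_{*[k]}^\pi \rs \mN(0, v_{*[k],0})$ and $\nj (\tse_{*[k]}^\pi)^2 \toinp v_{*[k],0}$, both \asz. A finite intersection of $P_Z$-probability-one sets remains $P_Z$-probability one, so the weighted combination yields $\sqrt{N}\,\hat\tau_*^\pi \rs \mN(0,\sumj \omega_{[k]} v_{*[k],0})$ and $N(\tse_*^\pi)^2 \toinp \sumj \omega_{[k]} v_{*[k],0}$, both \asz, whence $(\hat\tau_*/\tse_*)^\pi \rs \mN(0,1)$ \asz\ by Slutsky.

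The main obstacle is the bookkeeping of almost-sure convergence for the randomization distribution jointly across strata. This is clean for fixed $K$ via the finite-intersection argument above; if $K$ were allowed to grow one would need a Lindeberg--Feller-type argument together with uniform moment control across strata. Since Condition \ref{asym} is posited within every stratum, the corollary is naturally stated in the fixed-$K$ regime where this technicality does not arise.
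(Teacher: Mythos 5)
Your proof is correct and takes essentially the same route the paper intends: the paper treats Corollary \ref{sre} as a direct application of Theorems \ref{neyman}--\ref{lin} stratum by stratum, combined via the independence of the within-stratum assignments (and of the within-stratum permutations, since $\mathcal{Z}_{\textup{str}}$ is a product set), exactly as you spell out. Your decomposition $\sqrt{N}(\hat\tau_*-\tau)=\sumj\sqrt{\omega_{[k]}}\sqrt{\nj}(\htau_{*[k]}-\tauj)$, the identification of the conservative limits of $\nj\tse_{*[k]}^2$ from Lemma \ref{v_lim}, and the finite-intersection argument for the $P_Z$-a.s.\ statements are the intended (and valid) fleshing-out of the paper's one-line justification.
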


\cite{bugni2018inference} focused on covariate-adaptive experiments in which the proportions of the treatment are homogeneous across strata. 
They showed that in those covariate-adaptive experiments, one can form a simpler estimator from the {\ols} fit of the outcome on the treatment and the stratum indicators. 
When the proportions of treatment vary across strata,  this simple {\ols} fit yields inconsistent estimator for $\tau$. 
To address this issue, \cite{bugni2019} proposed to use the weighted average of the stratum-specific estimators, and \cite{liu2019regression} and \cite{ye2020inference} discussed covariate adjustment allowing for additional covariates beyond the stratum indicators. 
We further their theory to {\frt}s,  and use the weighted average of the stratum-specific estimators to allow for the proportions of treatment to vary across strata. 
%
%

Even if the original experiment is completely randomized,  if a discrete covariate $X$ is available, we can condition on the numbers of treated and control units landing in each stratum.  
The resulting assignment mechanism is identical to stratified randomization, 
such that we can permute the subvector of $Z$ within each stratum of $X$ as if the original experiment were stratified.
This is known as the {\it conditional randomization test}. 
\citet{zheng2008multi} and \citet{hennessy2016conditional} observed that
they typically enhance the power if the covariates are predictive of the outcomes.

Among the four robust $t$-statistics,  $\hat{\tau}_\neyman / \tse_\neyman $ is the simplest and $\hat{\tau}_\lin / \tse_\lin $ is the most powerful.
Corollary \ref{sre} assumes $N_{[k]}$ goes to infinity for each $k$. With a large number of small strata, we need to modify the test statistic and the asymptotic scheme \citep{liu2019regression}. Since this involves different technical tools, we defer the technical details to future work.

 \subsection{Rerandomization}\label{rem}

\subsubsection{FRT with rerandomization}\label{sec:rem_basic}

Rerandomization, termed by \citet{cox:1982} and \citet{morgan2012rerandomization}, samples the treatment indicators under covariate balance constraints. 
\cite{bm} reviewed several field experiments in economics and suggested that rerandomization is widespread although often poorly documented. \cite{banerjee} discussed the pros and cons of such experimental design. Although rerandomization can improve covariate balance, it also imposes challenges for  the subsequent data analysis.
We focus here on a special rerandomization that uses the Mahalanobis distance between covariate means as the balance criterion, known as ReM. 
Although it might not be the exact rerandomization used in field experiments in economics, it has nice statistical properties that allow for simple analysis of the experimental data.

%
%
%
%
%

The designer of ReM accepts a treatment vector $Z$ if and only if 
\beginy\label{A}
\ma: \quad \htx^\T \{\cov(\htx)\}^{-1}\htx <a
 \endy
for a predetermined constant $a$. 
Let $\pia= \{\bm z: \text{$\bm z \in \mZ$ satisfies \eqref{A}}\}$ be the set of acceptable assignments under threshold $a$. 
The sampling distribution of the test statistic $T$ is uniform over $  \{ T(\bm z, Y(\bm z), X) : \bm z \bm \in\pia  \}$. 
{\frt} under ReM proceeds by permuting $Z$ within  $\pia$  and computes the    $p$-value as
\beginy\label{pfrt_a}
p_{\frt, \ma} = |\pia|^{-1}\sum_{ \pi: Z_\pi\in \pia}1\{ |T(Z_\pi,   Y, X )| \geq |T(Z, Y, X)|\}.
\endy
It compares the observed value of $T$ to its randomization distribution under ReM, denoted by $T^{\pi|\ma}$. Under ReM in \eqref{A}, $p_{\frt,\ma}$ is  finite-sample exact for $\HF$ for arbitrary $T$. 
Of interest is its large-sample validity for testing $\HN$, which depends on the stochastic dominance relation between the asymptotic distributions of the test statistic. Theorem  \ref{Rem} summarizes the results based on the additional notation below. 

Let $\ep \sim \mN(0,1)$ and $\mL \sim D_1 \mid (\|D\|_2^2\leq a)$,  where $D = (D_1, \dots, D_J)^\T \sim \mN(0_J,I_J)$, be independent standard and truncated normals, respectively, and let $ r_{J,a} = P (\chi^2_{J+2} \leq a)/ P (\chi^2_J \leq a) \in (0, 1]$ be the variance of $\mL$.
Let $\mU (\rho ) = (1-\rho ^2)^{1/2}\cdot \ep + \rho \cdot \mL $ be a linear combination of $\ep$ and $\mL$ for $\rho  \in [0,1]$ with mean 0 and variance $v(\rho ) = 1-(1- r_{J,a}) \rho ^2$.
Recall $v_*$ and $v_{*0}$ in Theorems \ref{neyman}--\ref{lin} as the asymptotic variances of $\rtn  \hts$ and $\rtn \htsp$ under complete randomization, with $v_\rosenbaum = v_\fisher$ and $v_{\rosenbaum0}= v_{\fisher0}$. Let 
$
\rho _*^2   = 1-v_\lin/v_* 
$
and
$
 \rho _{* 0 }^2  =   1-v_{\lin0}/v_{*0} 
$
for $*=\neyman, \rosenbaum, \fisher$, with $\rho _{\rosenbaum 0} =  \rho _{\fisher 0 } = 0$. 

\begin{theorem}\label{Rem}
{\assrem}
\begin{enumerate}
[(a)]
\item
$
\rtn  ( \htN - \tau)   \rightsquigarrow  v^{1/2}_\neyman \cdot \mU (\rho _\neyman),\ 
 ( \htN - \tau) /\sen     \rightsquigarrow  (c'_\neyman)^{1/2} \cdot \mU (\rho _\neyman)
$,
and
$
 ( \htN - \tau) /\tsen      \rightsquigarrow  c^{1/2}_\neyman \cdot \mU (\rho _\neyman);
$
$
\rtn  \htN^{\pi|\ma}  \rightsquigarrow  v_{\neyman0}^{1/2} \cdot \mU (\rho _{\neyman0}) ,\ 
(\htns )^{\pi|\ma}   \rightsquigarrow   \mU (\rho _{\neyman0}) 
$,
and
$
(\htnsr)^{\pi|\ma}  \rightsquigarrow  \mU (\rho _{\neyman0})
$
hold $P_Z$-a.s..

\item 
$ \rtn  ( \hts - \tau)   \rightsquigarrow  v^{1/2}_* \cdot \mU (\rho _*), \ 
( \hts - \tau) /\ese_*     \rightsquigarrow  (c'_*)^{1/2} \cdot \mU (\rho _*)
$,
and
$
( \hts - \tau) /\tse_*      \rightsquigarrow c^{1/2}_*  \cdot \mU (\rho _*);
$
$
\rtn  \hts^{\pi|\ma}  \rightsquigarrow  \mN(0, v_{*0}) ,\ 
(\htss )^{\pi|\ma}   \rightsquigarrow   \mN(0,1)
$,
and
$ (\htssr)^{\pi|\ma}  \rightsquigarrow   \mN(0,1)$ hold $P_Z$-a.s. ($* = \rosenbaum, \fisher$).
 \item $\htl$, $\htl/\ese_\lin$, and $\htl/\tse_\lin$ have identical sampling and randomization distributions as under complete randomization in Theorem \ref{lin}.
\end{enumerate}
\end{theorem}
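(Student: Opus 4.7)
The plan is to reduce every statistic via the algebraic decomposition of Proposition \ref{FLRtoN}, namely $\hts = \htN - \htx^\T \hg_*$ for $* = \rosenbaum, \fisher, \lin$, and then analyze the effect of the ReM constraint $\{\|\htx\|_\mm \leq a\}$ through a standard conditioning argument for jointly Gaussian limits. Under complete randomization, Theorems \ref{neyman}--\ref{lin} combined with a finite-population Cramér--Wold argument yield joint asymptotic normality of $\rtn(\hts - \tau, \htx^\T)^\T$ for each $* \in \{\neyman, \rosenbaum, \fisher, \lin\}$. The key algebraic fact I would verify is that the $L^2$-projection of $\rtn(\hts - \tau)$ onto $\rtn\htx$ leaves the residual $\rtn(\htl - \tau)$ up to $\op$ for all three choices $* = \neyman, \rosenbaum, \fisher$; this follows from Proposition \ref{FLRtoN} together with the convergence of $\hgr, \hgf$ to the pooled projection coefficient and of $\hg_\lin$ to the corresponding weighted coefficient. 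In particular, the residual variance is $v_\lin$ and the projection variance is $v_* - v_\lin$.

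Part (a) for the sampling distribution then follows by writing $\rtn(\htN - \tau) = L^\T(\rtn\htx) + R$ with $R$ asymptotically independent of $\htx$ (joint normality plus uncorrelatedness), conditioning on $\{\|\htx\|_\mm \leq a\}$ to truncate $\rtn\htx$ to a Mahalanobis-truncated Gaussian, and recombining; this yields $v_\neyman^{1/2}\cdot\mU(\rho_\neyman)$ with $\rho_\neyman^2 = 1 - v_\lin/v_\neyman$. The studentized variants follow by Slutsky once I verify that $\sen^2$ and $\tsen^2$ are asymptotically unaffected by the ReM constraint, which holds because they depend only on within-group variances and are asymptotically independent of $\htx$. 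Parts (b) and (c) use the same template: for $* = \rosenbaum, \fisher$, the projection residual is still $\rtn(\htl - \tau)$, giving $\rho_*^2 = 1 - v_\lin/v_*$; for $* = \lin$, the statistic \emph{is} the projection residual, so the projection onto $\htx$ vanishes asymptotically and the ReM constraint is irrelevant to the limit distribution.

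For the randomization distributions, I would apply the same projection argument to the pseudo finite population $\{Y_i'(0) = Y_i'(1) = Y_i, x_i\}_{i=1}^N$, on which the $\frt$-under-ReM procedure is effectively performing a fresh complete randomization subject to $\{\|\htx^\pi\|_\mm \leq a\}$. The pseudo-population analogues of $v_*$ and $v_\lin$ are $v_{*0}$ and $v_{\lin 0}$, and by Theorem \ref{lin} they satisfy $v_{\rosenbaum 0} = v_{\fisher 0} = v_{\lin 0}$. Consequently the projection of $\htsp$ onto $\htx^\pi$ has asymptotically vanishing residual variance for $* \in \{\rosenbaum, \fisher\}$, so $\rho_{*0} = 0$ and the randomization limit collapses to $\mN(0, v_{*0})$ with no surviving truncation effect. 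For $* = \neyman$ the residual has nontrivial variance $v_{\lin 0}$, yielding $\rho_{\neyman 0}^2 = 1 - v_{\lin 0}/v_{\neyman 0}$ and the stated truncated-mixture limit. For $* = \lin$, the adjusted statistic again coincides with its own projection residual, so ReM has no effect and the randomization distribution from Theorem \ref{lin} carries over verbatim. The $P_Z$-a.s.\ qualifier follows by verifying, along almost every $Z$-sequence, the pseudo-population analogue of Condition \ref{asym} via a finite-population strong law.

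\textbf{Main obstacle.} The principal technical difficulty is justifying the projection-and-truncate step uniformly over the ReM event while retaining the almost-sure claim for the randomization distribution. This requires a joint CLT of $(\rtn\hts, \rtn\htx)$ strong enough for continuous mapping and Slutsky to pass through the conditioning set, together with almost-sure control of the empirical covariance between $\htsp$ and $\htx^\pi$ along permutation sequences restricted to $\pia$, so that asymptotic independence of projection and residual components survives in the truncated setting; both can be handled by a finite-population strong law combined with the moment bounds in Condition \ref{asym}.
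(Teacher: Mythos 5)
Your proposal is correct and follows essentially the same route as the paper: the paper also decomposes $\hts$ as $\htl$ plus a linear function of $\htx$ (via Lemma \ref{juxta}), establishes the joint Gaussian limit of $(\rtn(\htl-\tau),\rtn\htx)$ with asymptotic independence, conditions on the ReM event by rotating the truncated Gaussian to isolate the projection direction, and handles studentization by Slutsky after showing the standard-error limits are unchanged under ReM. One wording slip: for $*=\rosenbaum,\fisher$ the randomization limit is normal because the \emph{projection} of $\htsp$ onto $\htx^\pi$ has asymptotically vanishing variance (since $v_{\rosenbaum0}=v_{\fisher0}=v_{\lin0}$), not the residual—your stated conclusion $\rho_{*0}=0$ is nevertheless the correct one.
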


Compare Theorem \ref{Rem} under ReM with Theorems \ref{neyman}--\ref{lin} under complete randomization. 
The asymptotic sampling and randomization distributions of $\htn$, $\htns$, and $\htnsr$ change to non-normal. 
The asymptotic sampling distributions of $\hts$, $\htss$, and $\htssr\ (* = \rosenbaum, \fisher)$ change to non-normal, whereas their asymptotic randomization distributions remain the same. 
ReM does not affect these two sets of asymptotic randomization distributions because of the asymptotic independence between $\hts^\pi \ ( * = \rosenbaum, \fisher)$ and $\htx^\pi$. 
The asymptotic sampling and randomization distributions of $\htl$, ${\htl}/{\ese_\lin}$, and ${\htl}/{\tse_\lin}$ all remain unchanged. ReM does not affect them because of   the asymptotic independence between $\htl$ and $\htx$ and that between $\htl^\pi$ and $\htx^\pi$.

In the case of symmetric yet non-normal limiting distributions as those of $\hts$, $\htss$, and $\hts/\tse_*$ for $*=\neyman, \rosenbaum, \fisher$,  determination of properness entails comparisons of not only the variances but also all the central quantile ranges. 
A test statistic $T$ is proper under a two-sided {\frt} if $T$ has wider or equal central quantile ranges than $T^{\pi|\ma}$  for all quantiles.

\begin{corollary}\label{cor:rem}
{\assrem} The covariate-adjusted robust $t$-statistics $\hat{\tau}_* / \tse_*\ (*=\rosenbaum, \fisher, \lin)$
are {\proper} 
\end{corollary}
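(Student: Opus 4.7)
The plan is to reduce the corollary to stochastic-dominance comparisons of the asymptotic sampling and randomization distributions already furnished by Theorem \ref{Rem}, and then to invoke peakedness properties of the truncated-normal mixture $\mU(\rho)$. Recall from Definition \ref{prop} that for a two-sided {\frt} the statistic $T$ is proper for testing $\HN$ if, under $\HN$, the asymptotic sampling distribution of $|T|$ is stochastically dominated by its asymptotic randomization distribution. Since all six limiting laws appearing in Theorem \ref{Rem}(b)--(c) under $\HN$ (so $\tau=0$) are symmetric about $0$, this reduces to comparing the central quantile ranges of two symmetric laws for each of $*=\rosenbaum,\fisher,\lin$.

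First I would dispose of $*=\lin$, which is immediate: by Theorem \ref{Rem}(c), $\htl/\tse_\lin$ has the same sampling and randomization limits under ReM as under complete randomization in Theorem \ref{lin}(c), namely $\mN(0,c_\lin)$ and $\mN(0,1)$ with $c_\lin\leq 1$; under $\HN$ these both center at $0$, so the sampling law is tighter and properness follows exactly as in Corollary \ref{rec}. Next, for $*=\rosenbaum,\fisher$, Theorem \ref{Rem}(b) under $\HN$ gives $\hts/\tse_* \rightsquigarrow c_*^{1/2}\mU(\rho_*)$ and $(\hts/\tse_*)^{\pi|\ma}\rightsquigarrow \mN(0,1)$, so it suffices to show
\begin{equation*}
|c_*^{1/2}\mU(\rho_*)|\leqst |\ep|,\qquad \ep\sim\mN(0,1).
\end{equation*}
Because $c_*\in(0,1]$ by Theorems \ref{rosenbaum}(c) and \ref{fisher}(c), the scaling factor $c_*^{1/2}\leq 1$ only contracts quantiles, so it is enough to prove $|\mU(\rho)|\leqst|\ep|$ for every $\rho\in[0,1]$.

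The key step is the peakedness of $\mU(\rho)$. Writing $\mU(\rho)=(1-\rho^2)^{1/2}\ep + \rho\mL$ with $\ep\perp\mL$, I would first note that $\mL=D_1\mid\|D\|_2^2\leq a$ is a symmetric truncated normal whose density is $\phi(x)$ reweighted by $P(\chi^2_{J-1}\leq a-x^2)\mathbf{1}\{x^2\leq a\}$, a function decreasing in $|x|$; hence $\mL$ is symmetric and unimodal, with $|\mL|\leqst|\ep|$ a direct truncation fact. Convolving with an independent Gaussian $(1-\rho^2)^{1/2}\ep$ preserves symmetry and unimodality, so $\mU(\rho)$ itself is symmetric unimodal. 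Applying Anderson's inequality (or equivalently a coupling using the monotone likelihood ratio of the Gaussian kernel) to the densities of $\mU(\rho)$ and $\ep$, which have variances $v(\rho)=1-(1-r_{J,a})\rho^2\leq 1$ and $1$ respectively, yields $P(|\mU(\rho)|\leq t)\geq P(|\ep|\leq t)$ for every $t>0$, i.e.\ $|\mU(\rho)|\leqst|\ep|$. This is precisely the peakedness result used by the rerandomization literature (e.g., Li--Ding--Rubin), so the step can be cited rather than re-derived.

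Combining the two cases, the asymptotic sampling distribution of $\hts/\tse_*$ has central quantile ranges no larger than those of its asymptotic randomization distribution for each $*=\rosenbaum,\fisher,\lin$; by the portmanteau theorem together with the almost-sure convergence of the randomization limits in Theorem \ref{Rem}, this yields $\limN\pr(p_{\frt,\ma}\leq\alpha)\leq\alpha$ for all $\alpha\in(0,1)$, which is the properness in Definition \ref{prop}. The main obstacle is the peakedness step: while $|\mL|\leqst|\ep|$ is elementary, extending it to the linear combination $\mU(\rho)$ requires either Anderson's inequality on symmetric unimodal laws or an explicit coupling via the Gaussian kernel, and keeping track of the truncation constant $r_{J,a}$ so as to cover the whole range $\rho\in[0,1]$, including the boundary cases $\rho=0$ (trivial) and $\rho=1$ (pure truncated normal).
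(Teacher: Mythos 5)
Your positive half takes essentially the paper's route: both arguments reduce properness of $\hts/\tse_*$ to the peakedness statement $|\mU(\rho)|\leqst|\ep|$ for all $\rho\in[0,1]$, which the paper simply imports as Lemma A4 of \citet{LD20} in the form $q_{1-\alpha/2}(\rho)\leq q_{1-\alpha/2}(0)$. Your sketched derivation of that statement, however, does not work as written. ``Symmetric unimodal with variance $v(\rho)\leq 1$'' does not imply $|\mU(\rho)|\leqst|\ep|$ — a symmetric unimodal law with variance below one can still put more mass in the tails than the standard normal — and Anderson's inequality concerns shifting a symmetric unimodal density over a symmetric convex set, not comparing two densities by their variances. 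The argument that actually works (and underlies the cited lemma) conditions on the truncated-normal component: given $\mL=x$ one has $\mU(\rho)\sim\mN(\rho x,1-\rho^2)$, the map $x\mapsto P\{|\mN(\rho x,1-\rho^2)|\leq t\}$ is decreasing in $|x|$ by the Gaussian case of Anderson's inequality, and combining this with $|\mL|\leqst|D_1|$ yields $P\{|\mU(\rho)|\leq t\}\geq P(|\ep|\leq t)$. Since you also propose to cite the result rather than re-derive it, this is reparable, but the variance-based justification should be dropped.

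The more substantive omission is that {\proper} asserts exclusivity: the corollary claims these three are the \emph{only} proper statistics in Table \ref{tb:tt}, and your proof never addresses the other nine. The paper's proof devotes most of its length to exactly this. For $\htn$, $\htns$, and $\htnsr$ it observes that the ratio of asymptotic variances of the sampling to the randomization distribution is, up to $c_\neyman'$ or $c_\neyman$, the factor $v(\rho_\neyman)/v(\rho_{\neyman0})$, which can lie on either side of one; crucially, under ReM the randomization limit of $\htnsr$ is the non-normal $\mU(\rho_{\neyman0})$, which is why even the robustly studentized unadjusted statistic fails here, in contrast to Corollary \ref{rec}. For $\hts$ and $\htss$ with $*=\rosenbaum,\fisher$ the relevant factor is $c_\rosenbaum'\, v(\rho_\rosenbaum)$, again on either side of one; and the improperness of $\htl$ and $\htl/\ese_\lin$ carries over from complete randomization via Theorem \ref{Rem}(c). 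Without this half you have proved that the three statistics are proper, but not the corollary as stated.
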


Compare Corollary \ref{cor:rem} with Corollary \ref{rec} to see that the unadjusted $\htnsr$, whereas proper under complete randomization,  is no longer proper under ReM due to  the non-normal limiting distribution of $\htnp$.
\citet{colin} also noticed this phenomenon and gave a numeric example. 
They proposed a prepivoting approach to improve studentization. 
We do not pursue that direction given $\htnsr$ is inferior to $\htL/\tse_\lin$  even under complete randomization.  
The three covariate-adjusted robust $t$-statistics, namely $\htR/\tse_\rosenbaum$, $\htF/\tse_\fisher$, and  $\htl/\tse_\lin$, are the only options in Table \ref{tb:tt} proper for testing $\HN$ under ReM. 
Covariate adjustment is thus essential for securing properness under ReM in addition to robust studentization.  
The same reasoning as that leads to Corollary \ref{power} ensures \textsc{frt} with ${\htl}/{\tse_\lin}$ delivers the highest power among the three proper statistics. 
It is thus our recommendation for conducting \textsc{frt} under ReM.

\begin{remark}
When covariates   have different levels of importance for the outcomes, \citet{morgan2012rerandomization} proposed using ReM with differing criteria for different tiers of covariates. The resulting {\frt}  permutes the treatment vector $Z$ within the subset of $\mZ$'s that satisfy the tiered balance criteria. 
The   sampling and randomization distributions of the twelve test statistics in Table \ref{tb:tt} parallel those in Theorem \ref{Rem}, with ${\htl}/{\tse_\lin}$ being  the most powerful among the proper options. It is thus our recommendation for this extension as well. We omit the technical details due to its repetitiveness.
\end{remark}

\subsubsection{FRT in the case of designer-analyzer information discrepancy}\label{sec:rem_d}
Discussion so far assumes the analyzer and the designer  use the same covariates $(x_i)_{i=1}^N$ and threshold $a$ for doing ReM in the design and analysis stages, respectively.  
An interesting question, also a real concern in practice, is what if the designer and the analyzer do not communicate? \citet{bm},  \cite{hk}, and \cite{heckman2020} gave examples arising in field experiments in economics. \citet{LD20} discussed optimal covariate adjustment based on estimation precision.

A relatively easy case is that the analyzer has access to additional covariates beyond those used in the design of ReM. 
Using \textsc{frt} under this ReM with ${\htl}/{\tse_\lin}$ is again our recommendation in this case. 
A more challenging case is that the analyzer is either unaware of the ReM in the design stage or does not have access to all covariates used in the ReM.  
In the absence of full information about the design, \citet{hk} proposed to  use the maximum $p$-value from the worst-case {\frt} over a set of designs consistent with the available information.
Without completely specifying these designs, an alternative  option is to use $p_\frt$ in \eqref{pfrt} such that the analysis coincides with that under complete randomization. 
Under $\HF$, the finite-sample exactness is lost unless the original experiment is  indeed completely randomized. Of interest is how such information discrepancy further affects the test's properness for testing $\HN$.  

Keep $x_i$ as the covariates the analyzer uses in the analysis stage, and let $\xdi $ be the covariates the designer used for conducting ReM in the design stage, 
possibly different from $x_i$.  
The designer accepts an allocation if  $\htd^\T \{\cov(\htd)\}^{-1} \htd <a$ with $\htd $ being the difference in means of $d_i$. 
The analyzer, on the other hand,  uses $X = (x_1, \dots, x_N)^\T$ in addition to $Y$ and $Z$ to form the test statistic, and proceeds with the standard, ``unrestricted" {\frt} that permutes $Z$ over all possible permutations in $\mathcal{Z}$. 
Of interest is whether the resulting $p$-value, namely $p_\frt$ in \eqref{pfrt}, can still preserve the correct type one error rates under $\HN$ despite the information discrepancy.   

Focus on the twelve test statistics in Table \ref{tb:tt} for the rest of the discussion.
 The key is, again, the comparison of the stochastic dominance relations between their respective sampling and randomization distributions when only the weak null hypothesis holds. 
The randomization distributions, on the one hand, are readily available from Theorems \ref{neyman}--\ref{lin} given the analysis is based on the unrestricted \frt.
The possible discrepancy between $\xdi $ and $x_i$, on the other hand, causes the sampling distributions to deviate from those in Theorem \ref{Rem}.
 We furnish this missing piece in Proposition \ref{thm:rem_d}, and state the sampling distributions of the twelve test statistics under ReM using $d_i$'s for arbitrary $\mathcal{S}' = \{Y_i(0), Y_i(1), x_i, d_i\}_{i=1}^N$  that satisfies the regularity conditions.  

Let $S_{z|d}^2$, $S_{a(z)|d}^2$, $S_{b(z)|d}^2$, $S_{\tau|d}^2$, and $S_{\xi|d}^2$ be the finite-population variances of the linear projections of $Y_i(z)$, $a_i(z)$, $b_i(z)$, $\tau_i$, and $\xi_i$ onto $\xdi $, respectively, for $z = 0,1$.  Let  
\begina
&&\rho_{\neyman\mid d}^2 =  \frac{p_1^{-1}S_{1|d}^2+ p_0^{-1}S_{0|d}^2 - S_{\tau|d}^2}{p_1^{-1}S_1^2+ p_0^{-1}S_0^2 - S_{\tau }^2},  \\
&&\rho_{\rosenbaum\mid d}^2 = \rho_{\fisher\mid d}^2 = \frac{p_1^{-1}S_{a(1)|d}^2+ p_0^{-1}S_{a(0)|d}^2 - S_{\tau|d}^2}{p_1^{-1}S_{a(1)}^2+ p_0^{-1}S_{a(0)}^2 - S_{\tau }^2} ,\\
&&\rho_{\lin\mid d}^2 =  \frac{p_1^{-1}S_{b(1)|d}^2+ p_0^{-1}S_{b(0)|d}^2 - S_{\xi|d}^2}{p_1^{-1}S_{b(1)}^2+ p_0^{-1}S_{b(0)}^2 - S_{\xi }^2}
\enda
be the squared multiple correlations between $\hts \ (\phs)$ and $\htau_d$.

\begin{proposition}\label{thm:rem_d} 
Assume Condition  \ref{asym} holds for 
$\{  Y_i(0), Y_i(1), x_i' \} _{i=1}^N$ with $x_i'$ being the union of the $x_i$ and $d_i$, and ReM using $d_i$'s. For $\phs$, we have 
\begina
\rtn  (\hts  - \tau)     \rightsquigarrow  v_*^{1/2} \cdot  \mU (\rho _{*\mid d}), \quad 
(\hts - \tau) /\ese_*   \rightsquigarrow  (c_*')^{1/2} \cdot  \mU (\rho _{*\mid d}),\quad 
(\hts - \tau) /\tse_*    \rightsquigarrow c_*^{1/2} \cdot  \mU (\rho _{*\mid d}) .
\enda
\end{proposition}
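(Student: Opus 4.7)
}
My plan is to derive the joint asymptotic distribution of $\rtn(\hts - \tau, \htd)$ under complete randomization, and then condition on the Mahalanobis balance criterion $\htd^\T\{\cov(\htd)\}^{-1}\htd < a$. Because the analyzer performs unrestricted permutations, the randomization distributions coincide with those in Theorems \ref{neyman}--\ref{lin}; only the sampling distributions shift under ReM in $\xdi$, possibly different from $x_i$.

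\textbf{Step 1: Joint CLT under complete randomization.} Under Condition \ref{asym} for the augmented finite population $\{Y_i(0), Y_i(1), x_i'\}_{i=1}^N$, I would show
\begin{equation*}
\rtn\bigl( \hts - \tau, \ \htd \bigr) \rightsquigarrow \mN \left(0, \ \begin{pmatrix} v_* & C_* \\ C_*^\T & V_d \end{pmatrix} \right) \quad (\phs)
\end{equation*}
via a Cram\'er--Wold reduction to arbitrary linear combinations, reusing the finite-population CLT already invoked for Theorems \ref{neyman}--\ref{lin}. Proposition \ref{FLRtoN} represents each $\hts$ asymptotically as a difference-in-means of ``adjusted outcomes'' (the $Y_i(z)$'s themselves, the $a_i(z)$'s, or the $b_i(z)$'s); $C_*$ is then the asymptotic covariance of that difference-in-means with $\htd$, and a direct computation identifies $C_* V_d^{-1} C_*^\T / v_*$ as exactly $\rho_{*\mid d}^2$, since each $S_{\cdot\mid d}^2$ in the definition of $\rho_{*\mid d}^2$ is the finite-population variance of the linear projection of the corresponding adjusted outcome onto $\xdi$.

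\textbf{Step 2: Conditioning via the Mahalanobis rotation.} Rotate and rescale so that $D = \{\cov(\htd)\}^{-1/2}\htd$ is asymptotically $\mN(0_J, I_J)$ with the first coordinate aligned along the asymptotic covariance vector with $\hts$. The Gaussian structure in Step 1 then yields the decomposition
\begin{equation*}
\rtn(\hts - \tau) = v_*^{1/2}\bigl\{(1-\rho_{*\mid d}^2)^{1/2}\ep + \rho_{*\mid d}\cdot D_1\bigr\} + \op,
\end{equation*}
with $\ep \sim \mN(0,1)$ asymptotically independent of $D_1$. Conditioning on $D^\T D < a$ leaves $\ep$ standard normal and forces $D_1 \rightsquigarrow \mL$, so $\rtn(\hts-\tau) \rightsquigarrow v_*^{1/2}\cdot\mU(\rho_{*\mid d})$. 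This mirrors the argument behind Theorem \ref{Rem}, with $x_i$ replaced by $\xdi$ in the balance criterion. Studentized versions follow by Slutsky's theorem once I show that $N\ese_*^2$ and $N\tse_*^2$ retain the same probability limits as in Theorems \ref{neyman}--\ref{lin} under the conditioning, giving multiplicative factors $(c_*')^{1/2}$ and $c_*^{1/2}$ as claimed.

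\textbf{Anticipated main obstacle.} The main technical hurdle is verifying that $\ese_*^2$ and $\tse_*^2$ preserve their unconditional probability limits under the Mahalanobis constraint on $\htd$. Unlike Theorem \ref{Rem}, where the constraint uses $x_i$, which also enters the studentization, here the constraint uses a possibly different $\xdi$, and one must rule out asymptotic dependence between the variance estimators and $\htd$. I expect this to follow from a second-moment concentration argument: the within-group sample variances underlying $\ese_*^2$ and $\tse_*^2$ concentrate at their finite-population limits at rate $O_P(N^{-1/2})$, so conditioning on the event $\{D^\T D < a\}$, which has non-vanishing probability, does not disturb these limits. The boundedness conditions in Condition \ref{asym} extended to $x_i'$ should supply the needed moment control.
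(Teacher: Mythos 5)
Your proposal is correct and follows the same underlying route as the paper: the paper's own proof simply cites \citet[][Theorem 2]{LD20} for the unstudentized sampling distributions and \citet[][Lemma A5]{LD20} (plus Slutsky's theorem) to guarantee that Lemma \ref{v_lim} survives ReM on $d_i \neq x_i$, and those cited results rest on exactly your joint-CLT-plus-Mahalanobis-rotation decomposition and your observation that conditioning on an event of asymptotically positive probability preserves the probability limits of the standard errors, both of which also mirror the paper's own proof of Theorem \ref{Rem}. In short, you have unpacked the two citations rather than taken a different path, and your identification of $\rho_{*\mid d}^2$ as the squared multiple correlation and of the main technical hurdle (consistency of $\ese_*^2$ and $\tse_*^2$ under the balance constraint) matches the paper.
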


Assuming the actual assignment is conducted by ReM using $d_i$'s that are possibly different from $x_i$'s, Proposition \ref{thm:rem_d} is a special case of \citet{LD20} and  generalizes the sampling distributions in Theorem \ref{Rem} to allow for distinct covariates for the design and analysis stages, respectively. The resulting sampling distributions are in general scaled $\mU$ distributions as linear combinations of independent standard and truncated normals.  In particular, $\rho_{\lin\mid d} = 0$ if $x_i$ can linearly represent $\xdi $, rendering the limiting distributions of $\htl$, $\htl/\ese_\lin$, and $\htl/\tse_\lin$ identical to those under complete randomization in Theorem \ref{lin}. 
The following corollary holds by comparing Proposition \ref{thm:rem_d} with Theorems \ref{neyman}--\ref{lin}.

\begin{corollary}\label{cor:rem_d}
Assume Condition  \ref{asym} holds for 
$  \{ Y_i(0), Y_i(1), x_i'\}_{i=1}^N $ with $x_i'$ being the union of the $x_i$ and $d_i$, ReM using $d_i$'s in design, and $p_\frt$ in \eqref{pfrt} in analysis. The robust $t$-statistics $\hat{\tau}_* / \tse_*\ (*=\neyman, \rosenbaum, \fisher, \lin)$
are {\proper}
\end{corollary}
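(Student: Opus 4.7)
The plan is to combine the asymptotic sampling distribution from Proposition \ref{thm:rem_d} with the asymptotic randomization distribution from Theorems \ref{neyman}--\ref{lin}, then reduce properness to a one-dimensional stochastic-dominance statement in absolute value. Under $\HN$ and ReM using $d_i$'s, Proposition \ref{thm:rem_d} delivers the sampling limit $\hts / \tse_* \rightsquigarrow c_*^{1/2} \cdot \mU(\rho_{* \mid d})$ for each $* = \neyman, \rosenbaum, \fisher, \lin$. Because the analyzer ignores the design and computes $p_\frt$ by \eqref{pfrt}, the randomization reference distribution is exactly the unrestricted one from Theorems \ref{neyman}--\ref{lin}, namely $(\hts/\tse_*)^\pi \rightsquigarrow \mN(0,1)$ $\asz$. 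Properness therefore follows once the sampling limit is shown to be stochastically dominated in absolute value by the randomization limit.

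Two auxiliary pieces are needed. First, the $P_Z$-almost-sure convergence in Theorems \ref{neyman}--\ref{lin} is stated for $Z$ drawn from complete randomization, whereas here $Z$ is drawn from ReM using $d_i$'s. I would transfer the statement by noting that ReM is complete randomization restricted to a Mahalanobis ball whose probability under complete randomization has positive limit (equal to $P(\chi^2_{\dim(d)} \leq a)$); absolute continuity of the ReM law with respect to complete randomization then preserves the almost-sure conclusion. Second, since $c_* \leq 1$ by Theorems \ref{neyman}--\ref{lin}, the dominance reduces to $|\mU(\rho)| \leqst |\mN(0,1)|$ for every $\rho \in [0,1]$, which is the substantive part of the proof.

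For this substantive part, take $\ep \sim \mN(0,1)$ independent of $D \sim \mN(0, I_{\dim(d)})$ and set $V = \sqrt{1-\rho^2}\,\ep + \rho D_1$, so that $V \sim \mN(0,1)$ and $\mU(\rho) \eqd V \mid \{\|D\|_2^2 \leq a\}$. The question becomes whether conditioning $V$ on the symmetric convex event $\{\|D\|_2^2 \leq a\}$ shrinks $|V|$ stochastically, which is an immediate consequence of the Gaussian correlation inequality applied to the two symmetric convex sets $\{|V| \leq t\}$ and $\{\|D\|_2^2 \leq a\}$ in the joint Gaussian space of $(\ep, D)$; this yields $P(|V| \leq t \mid \|D\|_2^2 \leq a) \geq P(|V| \leq t)$ for every $t > 0$. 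Once $|\mU(\rho)| \leqst |\mN(0,1)|$ is in hand, a Portmanteau argument along the lines of \citet{wuanding2020jasa} gives $\limsup_N \pr(p_\frt \leq \alpha) \leq \alpha$ for every $\alpha \in (0,1)$, delivering properness in the sense of Definition \ref{prop}. The main obstacle is the stochastic dominance of $|\mU(\rho)|$ by $|\mN(0,1)|$, which would be trivial from variance comparison alone if the limits were normal but requires the Gaussian correlation inequality here because $\mU(\rho)$ is a non-normal mixture of a normal and a truncated normal.
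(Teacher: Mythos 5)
Your proposal is correct and follows the same overall architecture as the paper: under $\HN$ the sampling limit $c_*^{1/2}\cdot\mU(\rho_{*\mid d})$ from Proposition \ref{thm:rem_d} is compared with the unrestricted randomization limit $\mN(0,1)$ from Theorems \ref{neyman}--\ref{lin}, and everything reduces to $c_*\leq 1$ together with $|\mU(\rho)|\leqst|\ep|$ for all $\rho\in[0,1]$. Where you genuinely diverge is in how that last dominance is established. The paper simply imports it from \citet[][Lemma A4]{LD20}, which compares the quantiles $q_{1-\alpha/2}(\rho)\leq q_{1-\alpha/2}(0)$ of $\mU(\rho)$ directly. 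You instead observe that $\mU(\rho)\eqd V\mid\{\|D\|_2^2\leq a\}$ with $V=\sqrt{1-\rho^2}\,\ep+\rho D_1\sim\mN(0,1)$ and apply the Gaussian correlation inequality to the symmetric convex sets $\{|V|\leq t\}$ and $\{\|D\|_2^2\leq a\}$; this is valid, self-contained, and in fact needs only the classical Khatri--\v{S}id\'ak inequality rather than the full correlation inequality, since $\{|V|\leq t\}$ is a symmetric slab. What your route buys is independence from the external lemma and a transparent reason why conditioning on the balance event can only shrink $|V|$ stochastically; what the paper's route buys is a ready-made quantile statement that it also reuses verbatim in Corollary \ref{cor:rem}. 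Your remaining step, transferring the $P_Z$-a.s.\ randomization limits from complete randomization to ReM on the $d_i$'s, is stated somewhat loosely as ``absolute continuity,'' but it is the right idea and is exactly what the paper makes rigorous in Lemma \ref{slln_rem} by bounding conditional tail probabilities by $p_\ma^{-1}$ times the unconditional ones and invoking Borel--Cantelli; the paper routes this through \citet[][Lemma A5]{LD20} inside the proof of Proposition \ref{thm:rem_d}, so no gap remains.
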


The four robust $t$-statistics thus ensure $p_\frt$ in \eqref{pfrt} remains asymptotically valid  under ReM even if the analyzer has only partial information on the covariates  the designer used to form the balance criterion.
Ironically, a less informed analysis restores the properness of $\hat{\tau}_\neyman / \tse_\neyman$ under ReM by restoring its asymptotic randomization distribution back to the standard normal. Nevertheless, this properness comes at the cost of being overly conservative.

Further, the asymptotic randomization distributions of $\hat{\tau}_* / \tse_*\ (* = \rosenbaum, \fisher, \lin)$ remain  unchanged in computing $p_\frt$ in \eqref{pfrt} and $p_{\frt,\ma}$ in \eqref{pfrt_a}. 
It  might thus be tempting to ignore the rerandomization and conduct unrestricted \textsc{frt} in the analysis stage whatsoever, even when exact information is available.
We do not encourage such practice given its lack of finite-sample exactness under $\HF$ in the first place.

\section{Simulation} 
\label{simulation}

We examine in this section the validity and power of the proposed method for testing the weak null hypothesis via simulation. 
We conducted the simulation under complete randomization, stratified randomization, and rerandomization, respectively, and summarized the $p$-values over 1,000 independent repetitions to approximate the error rates. 
The patterns are almost identical for the three design types, 
highlighting the importance of robust studentization and efficient covariate adjustment for securing large-sample validity and additional power, respectively. 
To avoid repetitiveness, 
we present below the results from the stratified randomization.

We first examine the large-sample validity of the twelve test statistics under $\HN$.
Consider a finite population of $N=500$ units, $i = \ot{N}$, with a univariate covariate, $(x_i)_{i=1}^N$, as i.i.d.\ ${\rm Unif}(-1,1)$. 
We generate the potential outcomes as $Y_i(1) \sim \mathcal{N}( x_i^3 , 1)$ and $Y_i(0) \sim \mathcal{N}( -x_i^3 , 0.5^2)$, and center $Y_i(1)$'s and $Y_i(0)$'s respectively to ensure $\tau= 0$. 

We divide the units into $K=3$ strata by the values of their covariates at cutoffs $-0.3$ and $0.3$. 
The resulting strata consist of units with $x_i$'s in $[-1, -0.3]$,  $x_i$'s in $(-0.3, 0.3]$, and $x_i$'s in $(0.3,1]$, respectively.  
Denote by $N_{[k]}$ the number of units in stratum $k$, and set $N_{[k]1} = [0.2N_{k}]$ and $N_{[k]0} = N_{[k]} - N_{[k]1}$ as the corresponding stratum-wise treatment sizes.  
We fix $\{Y_i(0), Y_i(1), x_i\}_{i=1}^N$ in the simulation, and draw a random permutation of $N_{[k]1}$ 1's and $N_{[k]0}$ 0's within stratum $k$ for $k = 1,2,3$ to obtain the observed outcomes and conduct {\frt}s.  

The procedure is repeated 1,000 times, with the $p$-values approximated by 500 independent permutations of the treatment vector in each replication. 
Figure \ref{fig::typeoneerror-stratified}(a) shows the $p$-values under $\HN $.  
The four robust $t$-statistics, 
as shown in the last row, are the only ones that preserve the correct type one error rates. 
In fact, they are conservative, which is coherent with Corollary \ref{sre}. 
All the other eight statistics yield type one error rates greater than the nominal levels and are thus not proper for testing $\HN$.

We then evaluate the power of the four proper test statistics when $\tau \neq 0$. 
Take $Y_i(1) \sim \mathcal{N}( 0.1 + x_i , 0.4^2)$ and $Y_i(0) \sim \mathcal{N}( -x_i , 0.1^2)$
for an alternative with $\tau$ close to 0.1, and inherit the rest of the settings from the last two paragraphs. 
Figure \ref{fig::typeoneerror-stratified}(b) shows the $p$-values of the four proper test statistics under the alternative. 
The theoretically most powerful $\hat{\tau}_\lin/\tse_\lin$ indeed delivers the highest power among the four proper options.
The tests based on $\htF/\tse_\fisher$ and $\htR/\tse_\rosenbaum$, on the other hand, show even lower power than the unadjusted $\htnsr$. 
This is coherent with the theoretical results from Corollary \ref{power} and concludes $\htl/\tse_\lin$ as our final recommendation for conducting {\frt} under stratified randomization.

\begin{figure}[t]
\centering
\includegraphics[width=0.9\textwidth]{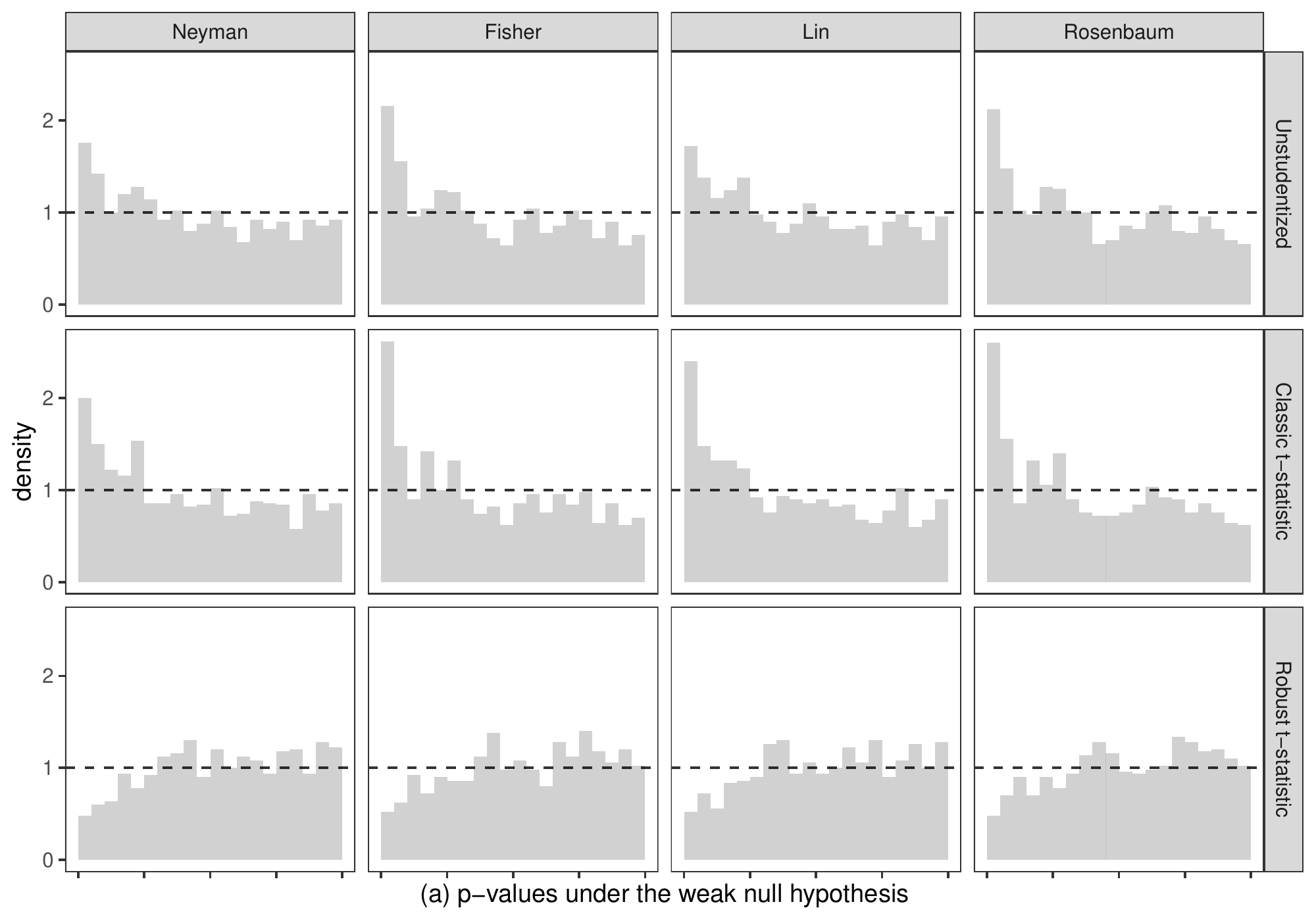}
\includegraphics[width=0.9\textwidth]{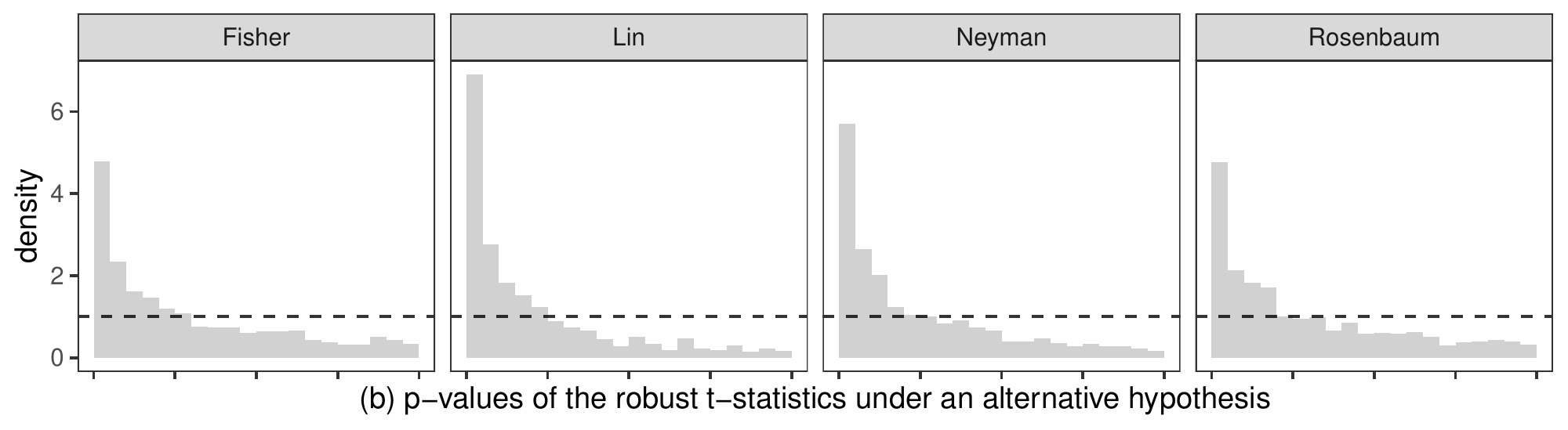}
\caption{Empirical histograms of the $p_{\frt,\textup{str}}$'s with 20 bins in $(0,1)$.
}\label{fig::typeoneerror-stratified}
\end{figure}

\section{Application} 
\citet{chong2016iron} conducted a randomized experiment on   219 students of a rural secondary school in the Cajamarca district of Peru during the 2009 school year. 
They first provided the village with free iron supplements  and trained the local staffs to distribute one free iron pill  to any adolescent who requested one in person. 
They then randomly assigned the students to three arms with three different types of videos: in the first video, a popular soccer player was encouraging the use of iron supplements to maximize energy (``soccer'' arm); in the second video, a physician was encouraging the use of iron supplements to improve overall health (``physician'' arm); the third video did not mention iron and served as the control (``control'' arm). 
The experiment was stratified by the class level from 1 to 5. 
The treatment group sizes within classes are shown in the matrix below:
$$
\bordermatrix{
                     & \text{class 1}& \text{class 2} &\text{class 3} &\text{class 4} &\text{class 5}      \cr
\text{soccer} & 16& 19& 15 &10 &10 \cr
  \text{physician} &    17 &20& 15& 11 &10 \cr
  \text{control}      & 15 &19& 16 &12 &10 \cr
}.
$$


One outcome of interest is the average grade in the third and fourth quarters of 2009, and an important background covariate is the anemia status at baseline. 
We make pairwise comparisons of the ``soccer'' arm versus the ``control'' arm and the ``physician'' arm versus the ``control'' arm. We also compare   {\frt}s with and without adjusting for the covariate of baseline anemia status. 
We use their data set to illustrate   {\frt}s under complete randomization and stratified randomization.  
The ten subgroup analyses use   {\frt}s for complete randomization within each class level.
The two overall analyses use   {\frt}s for stratified randomization averaging over all class levels.

\begin{table}[t]
\centering
\caption{Re-analyzing the data from \citet{chong2016iron}. ``\textsc{n}'' denotes the unadjusted estimators and tests, and ``\textsc{l}'' denotes the covariate-adjusted estimators and tests. The ``$p_\frt$" values for the overall comparisons in the last two rows are all $p_{\frt,\text{str}}$. }\label{tb::pv-values-chong}
\begin{subtable}[h]{0.48\textwidth}
\caption{soccer versus control}
    \begin{tabular}{rrrrr}
    \hline 
   &    est & s.e. &$p_\text{normal}$ & $p_\frt$ \\\hline
       class 1& &&&\\
\textsc{n}  &  0.051 &0.502 &     0.919  & 0.924 \\
\textsc{l}   &    0.050 & 0.489 &     0.919 &  0.929 \\
       class 2& &&&\\
\textsc{n}  &  $-0.158$ &0.451  &    0.726  & 0.722\\
\textsc{l}   &  $-0.176$ & 0.452   &   0.698  & 0.700 \\
       class 3& &&&\\
\textsc{n}  &  0.005 &0.403  &    0.990 &  0.989 \\
\textsc{l}   &   $-0.096$& 0.385 &     0.803&   0.806 \\
       class 4& &&&\\
\textsc{n}  &  $-0.492$& 0.447  &    0.271 &  0.288 \\
\textsc{l}   &    $-0.511$& 0.447  &    0.253 &  0.283 \\
       class 5& &&&\\
\textsc{n}  &  0.390& 0.369 &     0.291 &  0.314 \\
\textsc{l}   &   0.443 &0.318  &    0.164  & 0.186 \\
       all& &&&\\
\textsc{n}  &  $-0.051$ & 0.204  &    0.802 &  0.800 \\
\textsc{l}   &    $-0.074$& 0.200   &   0.712  & 0.712\\
\hline 
    \end{tabular}
\end{subtable}
\begin{subtable}[h]{0.48\textwidth} 
\caption{physician versus control}
    \begin{tabular}{rrrrr}
    \hline 
   &    est & s.e. &$p_\text{normal}$ & $p_\frt$ \\\hline
       class 1& &&&\\
\textsc{n}  &  0.567 &0.426  &    0.183 &  0.192 \\
\textsc{l}   &   0.588& 0.418  &   0.160  & 0.174 \\
       class 2& &&&\\
\textsc{n}  &   0.193& 0.438 &     0.659 &  0.666 \\
\textsc{l}   &  0.265& 0.409   &   0.517  & 0.523 \\
       class 3& &&&\\
\textsc{n}  &  1.305& 0.494  &    0.008 &  0.012\\
\textsc{l}   &  1.501& 0.462   &   0.001  & 0.003 \\
       class 4& &&&\\
\textsc{n}  &  $-0.273$& 0.413  &    0.508 &  0.515 \\
\textsc{l}   &  $-0.313$& 0.417   &   0.454  & 0.462\\
       class 5& &&&\\
\textsc{n}  & $-0.050$ &0.379  &    0.895 &  0.912 \\
\textsc{l}   &  $-0.067$& 0.279  &    0.811 &  0.816 \\
       all& &&&\\
\textsc{n}  &  0.406 & 0.202   &   0.045 &  0.047\\
\textsc{l}   &    0.463 & 0.190   &   0.015  & 0.017 \\
\hline 
    \end{tabular}
\end{subtable}
\end{table}

Table \ref{tb::pv-values-chong} shows the point estimators, the robust standard errors, the $p$-values based on large-sample approximations of the robust $t$-statistics, and the $p$-values based on {\frt}s. 
In most strata, covariate adjustment decreases the standard errors since the baseline anemia status is predictive of the outcome. 
Two exceptions are the pairwise comparison of the ``soccer'' arm versus the ``control'' arm within class 2 and the pairwise comparison of the ``physician'' arm versus the ``control'' arm within class 4, 
with differences both in the third digit after the decimal point. 
This is likely due to the small group sizes within these strata, leaving the asymptotic approximations dubious. 
The $p$-values from the large-sample approximations and {\frt}s are close with the latter being slightly larger in most cases. 
Based on the theory, the $p$-values based on {\frt}s should be trusted more given their additional guarantee of finite-sample exactness under the strong null hypothesis. This becomes important in this example given the relatively small group  sizes within strata.

\citet{bind2020possible} suggested reporting not only the $p$-values but also the randomization distributions of the test statistics when conducting \frt.  Echoing their recommendation, we show in Figure \ref{fig::randomization-distributions} the histograms of the randomization distributions of the robust $t$-statistics alongside the asymptotic approximations. 
The discrepancy is quite clear in the subgroup analyses yet becomes unnoticeable after averaged over all class levels.
Overall, the $p$-values based on large-sample approximations do not differ substantially from those based on {\frt}s in this application. 
The two approaches yield coherent conclusions: the video with a physician telling the benefits of iron supplements improved the academic performance and the effect was most significant among students in class 3; in contrast, the video with a popular soccer player telling the benefits did not have any significant effect.

\begin{figure}[t]
\centering
\includegraphics[width =0.9\textwidth]{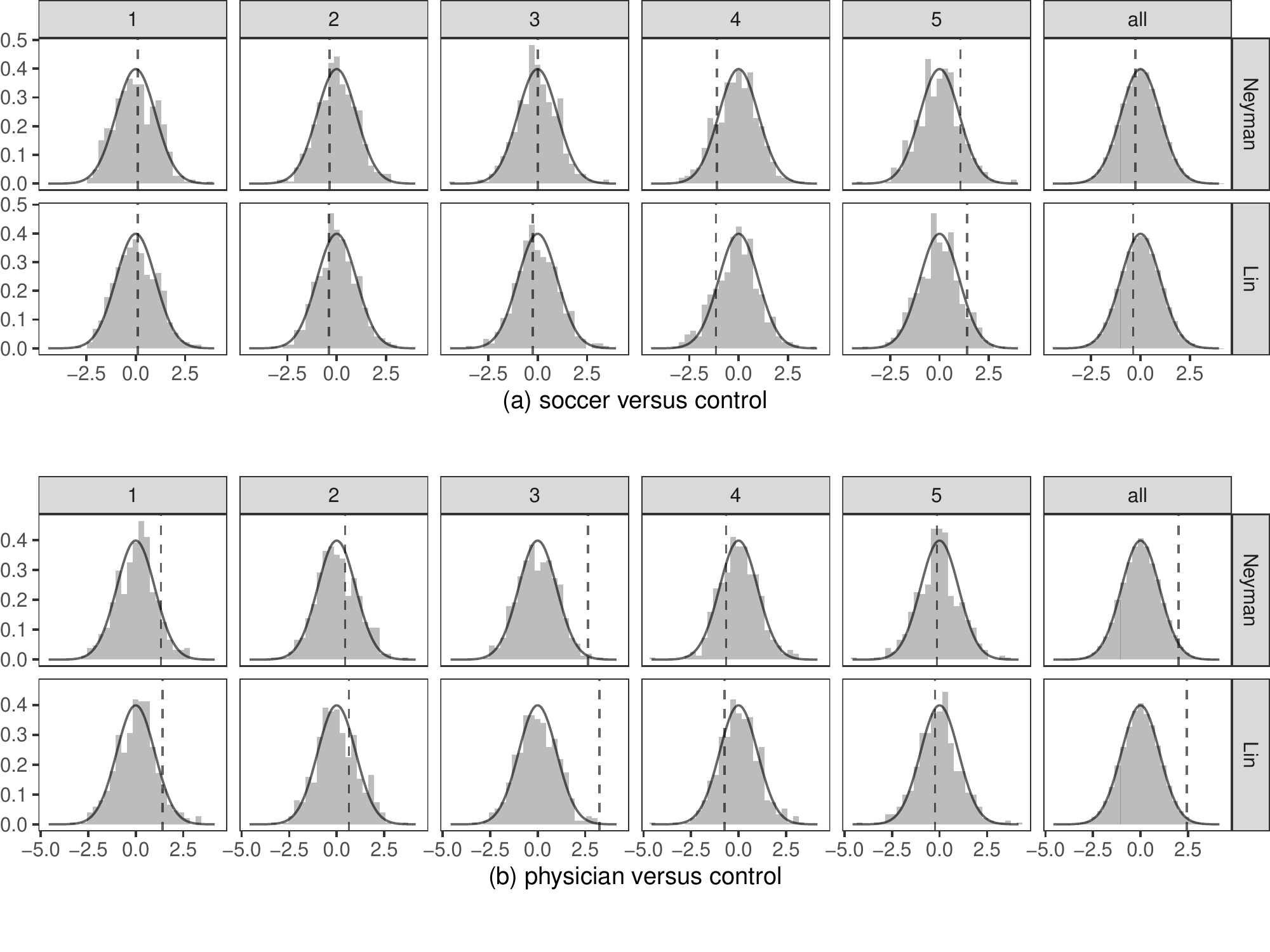}
\caption{Randomization distributions based on $5\times 10^4$ Monte Carlo simulations versus  
$\mathcal{N}(0,1)$.}\label{fig::randomization-distributions}
\end{figure}

\section{Discussion}\label{sec:discussion}
Echoing \citet{Fisher35}, \citet{proschan2019re}, \citet{young2019channeling}, and \citet{bind2020possible}, 
we believe {\frt} should be the default choice for analyzing experimental data given its flexibility to accommodate complex randomization schemes and arbitrary outcome generating processes. 
We established in this paper the theory for covariate adjustment in {\frt} under complete randomization, cluster randomization, stratified randomization, and rerandomization using the Mahalanobis distance, respectively, with final recommendations of the test statistics summarized in Table \ref{fr}. Equipped with the finite-sample exactness under the strong null hypothesis, 
the recommended {\frt}s promise  an additional guarantee under the weak null hypothesis and strictly dominate the counterparts based on large-sample approximations. 
A key point to note is that robust studentization is necessary for the resulting {\frt} to retain asymptotic validity when only the weak null hypothesis holds.
A casual choice of the test statistic is likely to lead to misleading conclusions. 

\begin{table}[t]\caption{\label{fr}Final recommendations for  {\frt} and test statistic $\hts/\tse_*$  in different experiments. 
} 
\begin{tabular}{| l|p{1.6cm}|p{1.6cm} | c| }\hline
 \multicolumn{1}{|c|}{design} & \multicolumn{2}{c|}{ presence of covariates} & other comments\\\cline{2-3}
&\multicolumn{1}{c|}{{\color{white}aaa}no{\color{white}aaa}} &  \multicolumn{1}{c|}{yes} & \\ \hline
complete randomization & \multicolumn{1}{c|}{$*=\textsc{n}$} &   \multicolumn{1}{c|}{$*=\textsc{l}$} & \\\hline
cluster randomization &  \multicolumn{1}{c|}{$*=\textsc{n}$} &   \multicolumn{1}{c|}{$*=\textsc{l}$}  & use cluster total outcomes \\\hline
stratified randomization &  \multicolumn{1}{c|}{$*=\textsc{n}$} &   \multicolumn{1}{c|}{$*=\textsc{l}$}  & weighted average over strata \\\hline
ReM, complete design information &  &   \multicolumn{1}{c|}{$*=\textsc{l}$}  & \\\hline
ReM, incomplete design information &  \multicolumn{1}{c|}{$*=\textsc{n}$} &   \multicolumn{1}{c|}{$*=\textsc{l}$}  & use $p_\frt$ not $p_{\frt,\ma}$\\\hline
\end{tabular}
\end{table}

We conjecture that the strategy of appropriately studentizing an efficient, covariate-adjusted estimator works for {\frt}  in general experiments as well \citep[e.g.,][]{DasFact15, lu2016covariate, mukerjee2018using, middleton2018unified, fogarty2018mitigating, fogarty2018regression}. This strategy works for estimators with normal limiting distributions and may also work for estimators with non-normal limiting distributions as shown in the asymptotic theory of rerandomization. \citet{colin}'s prepivoting approach may work more broadly but we leave the general theory to future research. 


We focused on procedures based on  {\ols}. It is of great interest to extend the theory to high dimensional settings \citep{LassoTE16, lei2018regression}, nonlinear models \citep{zhang2008improving, moore2009covariate, moore2011robust, jiang2019robust, guo2020generalized}, and even estimators based on machine learning algorithms \citep{wager2016high, wu2018loop, farrell2020deep, chen2020efficient}.

If the main parameter of interest is the average treatment effect, the asymptotic theory inevitably involves some moment conditions. 
Without these conditions, the inference becomes challenging \citep{bahadur1956nonexistence}, and {\frt}  may not control type one error rates even asymptotically with heavy-tailed outcomes. An alternative class of {\frt}s use rank statistics to gain robustness with respect to outliers \citep{lehmann1975nonparametrics, CovAdjRosen02}. Although different rank statistics always work under the strong null hypothesis, they in general target parameters other than the average treatment effect \citep[e.g.,][]{rosenbaum1999reduced, Rosenbaum:2003en, chung2016asymptotically}. \citet{chung2016asymptotically} proposed to studentize the Wilcoxon statistic in a permutation test, shedding light on the general theory of {\frt} with rank statistics.


\bibliographystyle{plainnat}
\bibliography{refs_xfrt}

\begin{thebibliography}{96}
\providecommand{\natexlab}[1]{#1}
\providecommand{\url}[1]{\texttt{#1}}
\expandafter\ifx\csname urlstyle\endcsname\relax
  \providecommand{\doi}[1]{doi: #1}\else
  \providecommand{\doi}{doi: \begingroup \urlstyle{rm}\Url}\fi

\bibitem[Anderson and Legendre(1999)]{anderson1999empirical}
M.~J. Anderson and P.~Legendre.
\newblock An empirical comparison of permutation methods for tests of partial
  regression coefficients in a linear model.
\newblock \emph{Journal of Statistical Computation and Simulation},
  62:\penalty0 271--303, 1999.

\bibitem[Anderson and Robinson(2001)]{ad}
M.~J. Anderson and J.~Robinson.
\newblock Permutation tests for linear models.
\newblock \emph{Australian and New Zealand Journal of Statistics}, 43:\penalty0
  75--88, 2001.

\bibitem[Angrist and Pischke(2009)]{AngristEcon}
J.~D. Angrist and J.-S. Pischke.
\newblock \emph{Mostly Harmless Econometrics}.
\newblock Princeton University Press, 2009.

\bibitem[Aronow et~al.(2014)Aronow, Green, and Lee]{agl}
P.~Aronow, D.~Green, and D.~Lee.
\newblock Sharp bounds on the variance in randomized experiments.
\newblock \emph{Annals of Statistics}, 42:\penalty0 850--871, 2014.

\bibitem[Athey et~al.(2018)Athey, Eckles, and Imbens]{athey2018exact}
S.~Athey, D.~Eckles, and G.~W. Imbens.
\newblock Exact p-values for network interference.
\newblock \emph{Journal of the American Statistical Association}, 113:\penalty0
  230--240, 2018.

\bibitem[Bahadur and Savage(1956)]{bahadur1956nonexistence}
R.~R. Bahadur and L.~J. Savage.
\newblock The nonexistence of certain statistical procedures in nonparametric
  problems.
\newblock \emph{Annals of Mathematical Statistics}, 27:\penalty0 1115--1122,
  1956.

\bibitem[Bai et~al.(2021)Bai, Shaikh, and Romano]{bai2020}
Y.~H. Bai, A.~M. Shaikh, and J.~P. Romano.
\newblock Inference in experiments with matched pairs.
\newblock \emph{Journal of the American Statistical Association}, page to
  appear, 2021.

\bibitem[Banerjee et~al.(2020)Banerjee, Chassang, Montero, and
  Snowberg]{banerjee}
A.~V. Banerjee, S.~Chassang, S.~Montero, and E.~Snowberg.
\newblock A theory of experimenters: Robustness, randomization, and balance.
\newblock \emph{American Economic Review}, 110:\penalty0 1206--1230, 2020.

\bibitem[Basse et~al.(2019)Basse, Ding, Feller, and Toulis]{Basse2019}
G.~Basse, P.~Ding, A.~Feller, and P.~Toulis.
\newblock Randomization tests for peer effects in group formation experiments.
\newblock \emph{arXiv}, page 1904.02308, 2019.

\bibitem[Berk et~al.(2013)Berk, Pitkin, Brown, Buja, George, and
  Zhao]{berk2013covariance}
R.~Berk, E.~Pitkin, L.~Brown, A.~Buja, E.~George, and L.~Zhao.
\newblock Covariance adjustments for the analysis of randomized field
  experiments.
\newblock \emph{Evaluation Review}, 37:\penalty0 170--196, 2013.

\bibitem[Bind and Rubin(2020)]{bind2020possible}
M.~A.~C. Bind and D.~B. Rubin.
\newblock {When possible, report a Fisher-exact P value and display its
  underlying null randomization distribution}.
\newblock \emph{Proceedings of the National Academy of Sciences of the United
  States of America}, 117:\penalty0 19151--19158, 2020.

\bibitem[Bloniarz et~al.(2016)Bloniarz, Liu, Zhang, Sekhon, and Yu]{LassoTE16}
A.~Bloniarz, H.~Liu, C.~Zhang, J.~Sekhon, and B.~Yu.
\newblock Lasso adjustments of treatment effect estimates in randomized
  experiments.
\newblock \emph{Proceedings of the National Academy of Sciences of the United
  States of America}, 113:\penalty0 7383--7390, 2016.

\bibitem[Brillinger et~al.(1978)Brillinger, Jones, and
  Tukey]{brillinger1978management}
D.~R. Brillinger, L.~V. Jones, and J.~W. Tukey.
\newblock The management of weather resources.
\newblock Technical report, US Government Printing Office, Washington, DC,
  1978.

\bibitem[Bruhn and McKenzie(2009)]{bm}
M.~Bruhn and D.~McKenzie.
\newblock In pursuit of balance: Randomization in practice in development field
  experiments.
\newblock \emph{American Economic Journal: Applied Economics}, 1:\penalty0
  200--232, 2009.

\bibitem[Bugni et~al.(2018)Bugni, Canay, and Shaikh]{bugni2018inference}
F.~A. Bugni, I.~A. Canay, and A.~M. Shaikh.
\newblock Inference under covariate-adaptive randomization.
\newblock \emph{Journal of the American Statistical Association}, 113:\penalty0
  1784--1796, 2018.

\bibitem[Bugni et~al.(2019)Bugni, Canay, and Shaikh]{bugni2019}
F.~A. Bugni, I.~A. Canay, and A.~M. Shaikh.
\newblock Inference under covariate-adaptive randomization with multiple
  treatments.
\newblock \emph{Quantitative Economics}, 10:\penalty0 1747--1785, 2019.

\bibitem[Canay et~al.(2017)Canay, Romano, and Shaikh]{canay2017randomization}
I.~A. Canay, J.~P. Romano, and A.~M. Shaikh.
\newblock Randomization tests under an approximate symmetry assumption.
\newblock \emph{Econometrica}, 85:\penalty0 1013--1030, 2017.

\bibitem[Cattaneo et~al.(2015)Cattaneo, Frandsen, and
  Titiunik]{cattaneo2015randomization}
M.~D. Cattaneo, B.~R. Frandsen, and R.~Titiunik.
\newblock {Randomization inference in the regression discontinuity design: An
  application to party advantages in the US Senate}.
\newblock \emph{Journal of Causal Inference}, 3:\penalty0 1--24, 2015.

\bibitem[Chen et~al.(2020)Chen, Liu, Ma, and Zhang]{chen2020efficient}
X.~Chen, Y.~Liu, S.~Ma, and Z.~Zhang.
\newblock Efficient estimation of general treatment effects using neural
  networks with a diverging number of confounders.
\newblock \emph{arXiv preprint arXiv:2009.07055}, 2020.

\bibitem[Chong et~al.(2016)Chong, Cohen, Field, Nakasone, and
  Torero]{chong2016iron}
A.~Chong, I.~Cohen, E.~Field, E.~Nakasone, and M.~Torero.
\newblock {Iron deficiency and schooling attainment in Peru}.
\newblock \emph{American Economic Journal: Applied Economics}, 8:\penalty0
  222--55, 2016.

\bibitem[Chung and Romano(2013)]{Romano13}
E.~Chung and J.~P. Romano.
\newblock Exact and asymptotically robust permutation tests.
\newblock \emph{Annals of Statistics}, 41:\penalty0 484--507, 2013.

\bibitem[Chung and Romano(2016)]{chung2016asymptotically}
E.~Chung and J.~P. Romano.
\newblock {Asymptotically valid and exact permutation tests based on two-sample
  $U$-statistics}.
\newblock \emph{Journal of Statistical Planning and Inference}, 168:\penalty0
  97--105, 2016.

\bibitem[Cohen and Fogarty(2020)]{colin}
P.~L. Cohen and C.~B. Fogarty.
\newblock Gaussian prepivoting for finite population causal inference.
\newblock \emph{https://arxiv.org/abs/2002.06654}, 2020.

\bibitem[Cox(1982)]{cox:1982}
D.~R. Cox.
\newblock Randomization and concomitant variables in the design of experiments.
\newblock In P.~R.~Krishnaiah G.~Kallianpur and J.~K. Ghosh, editors,
  \emph{Statistics and Probability: Essays in Honor of C. R. Rao}, pages
  197--202. North-Holland, Amsterdam, 1982.

\bibitem[Dasgupta et~al.(2015)Dasgupta, Pillai, and Rubin]{DasFact15}
T.~Dasgupta, N.~Pillai, and D.~B. Rubin.
\newblock Causal inference from $2^{K}$ factorial designs by using potential
  outcomes.
\newblock \emph{Journal of the Royal Statistical Society, Series B (Statistical
  Methodology)}, 77:\penalty0 727--753, 2015.

\bibitem[DiCiccio and Romano(2017)]{romano}
C.~J. DiCiccio and J.~P. Romano.
\newblock Robust permutation tests for correlation and regression coefficients.
\newblock \emph{Journal of the American Statistical Association}, 112:\penalty0
  1211--1220, 2017.

\bibitem[Ding(2020)]{fwl20}
P.~Ding.
\newblock {The Frisch--Waugh--Lovell} theorem for standard errors.
\newblock \emph{Statistics and Probability Letters}, 168:\penalty0 108945,
  2020.

\bibitem[Ding and Dasgupta(2018)]{DD18}
P.~Ding and T.~Dasgupta.
\newblock A randomization-based perspective of analysis of variance: a test
  statistic robust to treatment effect heterogeneity.
\newblock \emph{Biometrika}, 105:\penalty0 45--56, 2018.

\bibitem[Draper and Stoneman(1966)]{draper}
N.~R. Draper and D.~M. Stoneman.
\newblock Testing for the inclusion of variables in linear regression by a
  randomisation technique.
\newblock \emph{Technometrics}, 8:\penalty0 695--699, 1966.

\bibitem[Eicker(1967)]{eicker1967limit}
F.~Eicker.
\newblock Limit theorems for regressions with unequal and dependent errors.
\newblock In \emph{Proceedings of the fifth Berkeley symposium on mathematical
  statistics and probability}, volume~1, pages 59--82. Berkeley, CA: University
  of California Press, 1967.

\bibitem[Farrell et~al.(2021)Farrell, Liang, and Misra]{farrell2020deep}
M.~H. Farrell, T.~Liang, and S.~Misra.
\newblock Deep neural networks for estimation and inference.
\newblock \emph{Econometrica}, 89:\penalty0 181--213, 2021.

\bibitem[Fisher(1935)]{Fisher35}
R.~A. Fisher.
\newblock \emph{The Design of Experiments}.
\newblock Edinburgh, London: Oliver and Boyd, 1st edition, 1935.

\bibitem[Fogarty(2018{\natexlab{a}})]{fogarty2018mitigating}
C.~B. Fogarty.
\newblock On mitigating the analytical limitations of finely stratified
  experiments.
\newblock \emph{Journal of the Royal Statistical Society: Series B (Statistical
  Methodology)}, 80:\penalty0 1035--1056, 2018{\natexlab{a}}.

\bibitem[Fogarty(2018{\natexlab{b}})]{fogarty2018regression}
C.~B. Fogarty.
\newblock Regression assisted inference for the average treatment effect in
  paired experiments.
\newblock \emph{Biometrika}, 105:\penalty0 994--1000, 2018{\natexlab{b}}.

\bibitem[Freedman and Lane(1983)]{fl}
D.~Freedman and D.~Lane.
\newblock A nonstochastic interpretation of reported significance levels.
\newblock \emph{Journal of Business and Economic Statistics}, 1:\penalty0
  292--298, 1983.

\bibitem[Freedman(2008)]{Freedman08a}
D.~A. Freedman.
\newblock On regression adjustments to experimental data.
\newblock \emph{Advances in Applied Mathematics}, 40:\penalty0 180--193, 2008.

\bibitem[Fuller(2009)]{fuller}
W.~A. Fuller.
\newblock Some design properties of a rejective sampling procedure.
\newblock \emph{Biometrika}, 96:\penalty0 933--944, 2009.

\bibitem[Gail et~al.(1988)Gail, Tan, and Piantadosi]{gail1988tests}
M.~H. Gail, W.~Y. Tan, and S.~Piantadosi.
\newblock Tests for no treatment effect in randomized clinical trials.
\newblock \emph{Biometrika}, 75:\penalty0 57--64, 1988.

\bibitem[Ganong and J{\"a}ger(2018)]{ganong2018permutation}
P.~Ganong and S.~J{\"a}ger.
\newblock A permutation test for the regression kink design.
\newblock \emph{Journal of the American Statistical Association}, 113:\penalty0
  494--504, 2018.

\bibitem[Guo and Basse(2020)]{guo2020generalized}
K.~Guo and G.~Basse.
\newblock The generalized {Oaxaca--Blinder} estimator.
\newblock \emph{arXiv}, page 2004.11615, 2020.

\bibitem[H{\'a}jek(1961)]{hajek1961some}
J.~H{\'a}jek.
\newblock {Some extensions of the Wald--Wolfowitz--Noether theorem}.
\newblock \emph{Annals of Mathematical Statistics}, 32:\penalty0 506--523,
  1961.

\bibitem[Heckman and Karapakula(2021)]{hk}
J.~J. Heckman and G.~Karapakula.
\newblock Using a satisficing model of experimenter decision-making to guide
  finite-sample inference for compromised experiments.
\newblock \emph{Econometrics Journal}, page in press, 2021.

\bibitem[Heckman et~al.(2020)Heckman, Pinto, and Shaikh]{heckman2020}
J.~J. Heckman, R.~Pinto, and A.~M. Shaikh.
\newblock Inference with imperfect randomization: The case of the {Perry}
  preschool program.
\newblock Working paper, University of Chicago, 2020.

\bibitem[Hennessy et~al.(2016)Hennessy, Dasgupta, Miratrix, Pattanayak, and
  Sarkar]{hennessy2016conditional}
J.~Hennessy, T.~Dasgupta, L.~Miratrix, C.~Pattanayak, and P.~Sarkar.
\newblock A conditional randomization test to account for covariate imbalance
  in randomized experiments.
\newblock \emph{Journal of Causal Inference}, 4:\penalty0 61--80, 2016.

\bibitem[Hoeffding(1952)]{hoeffding1952large}
W.~Hoeffding.
\newblock The large-sample power of tests based on permutations of
  observations.
\newblock \emph{The Annals of Mathematical Statistics}, 23:\penalty0 169--192,
  1952.

\bibitem[Huber(1967)]{huber::1967}
P.~J. Huber.
\newblock The behavior of maximum likelihood estimates under nonstandard
  conditions.
\newblock In Lucien M.~Le Cam and Jerzy Neyman, editors, \emph{Proceedings of
  the Fifth Berkeley Symposium on Mathematical Statistics and Probability},
  volume~1, pages 221--233. Berkeley, California: University of California
  Press, 1967.

\bibitem[Janssen(1997)]{Janssen97}
A.~Janssen.
\newblock Studentized permutation tests for non-iid hypotheses and the
  generalized {B}ehrens--{F}isher problem.
\newblock \emph{Statistics and Probability Letters}, 36:\penalty0 9--21, 1997.

\bibitem[Jiang et~al.(2019)Jiang, Tian, Fu, Hasegawa, and Wei]{jiang2019robust}
F.~Jiang, L.~Tian, H.~Fu, T.~Hasegawa, and L.~J. Wei.
\newblock {Robust alternatives to ANCOVA for estimating the treatment effect
  via a randomized comparative study}.
\newblock \emph{Journal of the American Statistical Association}, 114:\penalty0
  1854--1864, 2019.

\bibitem[Kennedy(1995)]{knd}
P.~E. Kennedy.
\newblock Randomization tests in econometrics.
\newblock \emph{Journal of Business and Economic Statistics}, 13:\penalty0
  85--94, 1995.

\bibitem[Lee and Shaikh(2014)]{lee2014}
S.~Lee and A.~M. Shaikh.
\newblock Multiple testing and heterogeneous treatment effects: Re-evaluating
  the effect of progresa on school enrollment.
\newblock \emph{Journal of Applied Econometrics}, 29:\penalty0 612--626, 2014.

\bibitem[Lehmann(1975)]{lehmann1975nonparametrics}
E.~L. Lehmann.
\newblock \emph{Nonparametrics: Statistical Methods Based on Ranks}.
\newblock San Francisco: Holden-Day, Inc., 1975.

\bibitem[Lehmann and Romano(2005)]{Lehmann2005}
E.~L. Lehmann and J.~P. Romano.
\newblock \emph{Testing Statistical Hypotheses}.
\newblock New York: Springer, 3rd edition, 2005.

\bibitem[Lei and Bickel(2020)]{lei2019assumption}
L.~Lei and P.~J. Bickel.
\newblock An assumption-free exact test for fixed-design linear models with
  exchangeable errors.
\newblock \emph{Biometrika}, page in press, 2020.

\bibitem[Lei and Ding(2020)]{lei2018regression}
L.~Lei and P.~Ding.
\newblock Regression adjustment in completely randomized experiments with a
  diverging number of covariates.
\newblock \emph{Biometrika}, page in press, 2020.

\bibitem[Li and Ding(2017)]{DingCLT}
X.~Li and P.~Ding.
\newblock General forms of finite population central limit theorems with
  applications to causal inference.
\newblock \emph{Journal of the American Statistical Association}, 112:\penalty0
  1759--1169, 2017.

\bibitem[Li and Ding(2020)]{LD20}
X.~Li and P.~Ding.
\newblock Rerandomization and regression adjustment.
\newblock \emph{Journal of the Royal Statistical Society, Series B
  (Methodological)}, 82:\penalty0 241--268, 2020.

\bibitem[Li et~al.(2018)Li, Ding, and Rubin]{LD2018}
X.~Li, P.~Ding, and D.~B. Rubin.
\newblock Asymptotic theory of rerandomization in treatment--control
  experiments.
\newblock \emph{Proceedings of the National Academy of Sciences of the United
  States of America}, 115:\penalty0 9157--9162, 2018.

\bibitem[Lin(2013)]{Lin13}
W.~Lin.
\newblock {Agnostic notes on regression adjustments to experimental data:
  Reexamining Freedman's critique}.
\newblock \emph{Annals of Applied Statistics}, 7:\penalty0 295--318, 2013.

\bibitem[Liu and Yang(2020)]{liu2019regression}
H.~Liu and Y.~Yang.
\newblock Regression-adjusted average treatment effect estimates in stratified
  randomized experiments.
\newblock \emph{Biometrika}, 107:\penalty0 935--948, 2020.

\bibitem[Lu(2016)]{lu2016covariate}
J.~Lu.
\newblock {Covariate adjustment in randomization-based causal inference for
  $2^{K}$ factorial designs}.
\newblock \emph{Statistics and Probability Letters}, 119:\penalty0 11--20,
  2016.

\bibitem[MacKinnon and Webb(2020)]{mackinnon2020randomization}
J.~G. MacKinnon and M.~D. Webb.
\newblock Randomization inference for difference-in-differences with few
  treated clusters.
\newblock \emph{Journal of Econometrics}, 218:\penalty0 435--450, 2020.

\bibitem[Manly(1997)]{manly}
B.~F.~J. Manly.
\newblock \emph{Randomization, Bootstrap and Monte Carlo Methods in Biology}.
\newblock Chapman \& Hall, 1997.

\bibitem[Middleton(2018)]{middleton2018unified}
J.~A. Middleton.
\newblock A unified theory of regression adjustment for design-based inference.
\newblock \emph{arXiv preprint arXiv:1803.06011}, 2018.

\bibitem[Middleton and Aronow(2015)]{MiddletonCl15}
J.~A. Middleton and P.~M. Aronow.
\newblock Unbiased estimation of the average treatment effect in
  cluster-randomized experiments.
\newblock \emph{Statistics, Politics and Policy}, 6:\penalty0 39--75, 2015.

\bibitem[Moore and van~der Laan(2009)]{moore2009covariate}
K.~L. Moore and M.~J. van~der Laan.
\newblock Covariate adjustment in randomized trials with binary outcomes:
  targeted maximum likelihood estimation.
\newblock \emph{Statistics in Medicine}, 28:\penalty0 39--64, 2009.

\bibitem[Moore et~al.(2011)Moore, Neugebauer, Valappil, and van~der
  Laan]{moore2011robust}
K.~L. Moore, R.~Neugebauer, T.~Valappil, and M.~J. van~der Laan.
\newblock Robust extraction of covariate information to improve estimation
  efficiency in randomized trials.
\newblock \emph{Statistics in Medicine}, 30:\penalty0 2389--2408, 2011.

\bibitem[Morgan and Rubin(2012)]{morgan2012rerandomization}
K.~L. Morgan and D.~B. Rubin.
\newblock Rerandomization to improve covariate balance in experiments.
\newblock \emph{Annals of Statistics}, 40:\penalty0 1263--1282, 2012.

\bibitem[Mukerjee et~al.(2018)Mukerjee, Dasgupta, and Rubin]{mukerjee2018using}
R.~Mukerjee, T.~Dasgupta, and D.~B. Rubin.
\newblock Using standard tools from finite population sampling to improve
  causal inference for complex experiments.
\newblock \emph{Journal of the American Statistical Association}, 113:\penalty0
  868--881, 2018.

\bibitem[Negi and Wooldridge(2021)]{negi2020revisiting}
A.~Negi and J.~M. Wooldridge.
\newblock Revisiting regression adjustment in experiments with heterogeneous
  treatment effects.
\newblock \emph{Econometric Reviews}, 40:\penalty0 504--534, 2021.

\bibitem[Neyman(1923/1990)]{Neyman23}
J.~Neyman.
\newblock On the application of probability theory to agricultural experiments
  (with discussion).
\newblock \emph{Statistical Science}, 5:\penalty0 465--472, 1923/1990.

\bibitem[Neyman(1935)]{neyman1935statistical}
J.~Neyman.
\newblock Statistical problems in agricultural experimentation (with
  discussion).
\newblock \emph{Supplement to the Journal of the Royal Statistical Society},
  2:\penalty0 107--180, 1935.

\bibitem[Ottoboni et~al.(2018)Ottoboni, Lewis, and Salmaso]{Otto}
K.~Ottoboni, F.~Lewis, and L.~Salmaso.
\newblock An empirical comparison of parametric and permutation tests for
  regression analysis of randomized experiments.
\newblock \emph{Statistics in Biopharmaceutical Research}, 10:\penalty0
  264--273, 2018.

\bibitem[Pauly et~al.(2015)Pauly, Brunner, and Konietschke]{Pauly15}
M.~Pauly, E.~Brunner, and F.~Konietschke.
\newblock Asymptotic permutation tests in general factorial designs.
\newblock \emph{Journal of the Royal Statistical Society, Series B (Statistical
  Methodology)}, 77:\penalty0 461--473, 2015.

\bibitem[Proschan and Dodd(2019)]{proschan2019re}
M.~A. Proschan and L.~E. Dodd.
\newblock Re-randomization tests in clinical trials.
\newblock \emph{Statistics in Medicine}, 38:\penalty0 2292--2302, 2019.

\bibitem[Raz(1990)]{raz1990testing}
J.~Raz.
\newblock Testing for no effect when estimating a smooth function by
  nonparametric regression: a randomization approach.
\newblock \emph{Journal of the American Statistical Association}, 85:\penalty0
  132--138, 1990.

\bibitem[Romano(1990)]{romano1990behavior}
J.~P. Romano.
\newblock On the behavior of randomization tests without a group invariance
  assumption.
\newblock \emph{Journal of the American Statistical Association}, 85:\penalty0
  686--692, 1990.

\bibitem[Rosenbaum(1999)]{rosenbaum1999reduced}
P.~R. Rosenbaum.
\newblock Reduced sensitivity to hidden bias at upper quantiles in
  observational studies with dilated treatment effects.
\newblock \emph{Biometrics}, 55:\penalty0 560--564, 1999.

\bibitem[Rosenbaum(2002)]{CovAdjRosen02}
P.~R. Rosenbaum.
\newblock Covariance adjustment in randomized experiments and observational
  studies.
\newblock \emph{Statistical Science}, 17:\penalty0 286--327, 2002.

\bibitem[Rosenbaum(2003)]{Rosenbaum:2003en}
P.~R. Rosenbaum.
\newblock Exact confidence intervals for nonconstant effects by inverting the
  signed rank test.
\newblock \emph{American Statistician}, 57:\penalty0 132--138, 2003.

\bibitem[Rosenbaum(2010)]{rosenbaum2020}
P.~R. Rosenbaum.
\newblock \emph{Design of Observational Studies}.
\newblock New York: Springer, 2nd edition, 2010.

\bibitem[Stephens et~al.(2013)Stephens, Tchetgen~Tchetgen, and
  De~Gruttola]{stephens2013flexible}
A.~J. Stephens, E.~J. Tchetgen~Tchetgen, and V.~De~Gruttola.
\newblock {Flexible covariate-adjusted exact tests of randomized treatment
  effects with application to a trial of HIV education}.
\newblock \emph{Annals of Applied Statistics}, 7:\penalty0 2106--2137, 2013.

\bibitem[Su and Ding(2021)]{DS}
F.~Su and P.~Ding.
\newblock Model-assisted analyses of cluster-randomized experiments.
\newblock \emph{arXiv}, page 2104.04647, 2021.

\bibitem[ter Braak(1992)]{tb}
C.~J.~F. ter Braak.
\newblock {Permutation versus bootstrap significance tests in multiple
  regression and ANOVA}.
\newblock In K.H. J{\"o}ckel, G.~Rothe, and W.~Sendler, editors,
  \emph{Bootstrapping and Related Techniques}, pages 79--85. Berlin:
  Springer-Verlag, 1992.

\bibitem[Tsiatis et~al.(2008)Tsiatis, Davidian, Zhang, and
  Lu]{tsiatis2008covariate}
A.~A. Tsiatis, M.~Davidian, M.~Zhang, and X.~Lu.
\newblock Covariate adjustment for two-sample treatment comparisons in
  randomized clinical trials: a principled yet flexible approach.
\newblock \emph{Statistics in Medicine}, 27:\penalty0 4658--4677, 2008.

\bibitem[Tukey(1993)]{tukey1993tightening}
J.~W. Tukey.
\newblock Tightening the clinical trial.
\newblock \emph{Controlled Clinical Trials}, 14:\penalty0 266--285, 1993.

\bibitem[van~der vaart and Wellner(1996)]{vdv-yellowbook}
A.~W. van~der vaart and J.~Wellner.
\newblock \emph{{Weak Convergence and Empirical Processes: With Applications to
  Statistics}}.
\newblock New York: Springer Verlag, 1996.

\bibitem[Wager et~al.(2016)Wager, Du, Taylor, and Tibshirani]{wager2016high}
S.~Wager, W.~Du, J.~Taylor, and R.~J. Tibshirani.
\newblock High-dimensional regression adjustments in randomized experiments.
\newblock \emph{Proceedings of the National Academy of Sciences of the United
  States of America}, 113:\penalty0 12673--12678, 2016.

\bibitem[White(1980)]{White80}
H.~White.
\newblock A heteroskedasticity-consistent covariance matrix estimator and a
  direct test for heteroskedasticity.
\newblock \emph{Econometrica}, 48:\penalty0 817--838, 1980.

\bibitem[Wu and Gagnon-Bartsch(2018)]{wu2018loop}
E.~Wu and J.~A. Gagnon-Bartsch.
\newblock {The LOOP estimator: Adjusting for covariates in randomized
  experiments}.
\newblock \emph{Evaluation Review}, 42:\penalty0 458--488, 2018.

\bibitem[Wu and Ding(2020)]{wuanding2020jasa}
J.~Wu and P.~Ding.
\newblock Randomization tests for weak null hypotheses in randomized
  experiments.
\newblock \emph{Journal of American Statistical Association}, 105:\penalty0 in
  press, 2020.

\bibitem[Ye et~al.(2020{\natexlab{a}})Ye, Shao, and Zhao]{ye2020principles}
T.~Ye, J.~Shao, and Q.~Zhao.
\newblock Principles for covariate adjustment in analyzing randomized clinical
  trials.
\newblock \emph{arXiv preprint arXiv:2009.11828}, 2020{\natexlab{a}}.

\bibitem[Ye et~al.(2020{\natexlab{b}})Ye, Yi, and Zhao]{ye2020inference}
T.~Ye, Y.~Yi, and Q.~Zhao.
\newblock Inference on average treatment effect under minimization and other
  covariate-adaptive randomization methods.
\newblock \emph{arXiv preprint arXiv:2007.09576}, 2020{\natexlab{b}}.

\bibitem[Young(2019)]{young2019channeling}
A.~Young.
\newblock Channeling {F}isher: Randomization tests and the statistical
  insignificance of seemingly significant experimental results.
\newblock \emph{Quarterly Journal of Economics}, 134:\penalty0 557--598, 2019.

\bibitem[Zhang et~al.(2008)Zhang, Tsiatis, and Davidian]{zhang2008improving}
M.~Zhang, A.~A. Tsiatis, and M.~Davidian.
\newblock Improving efficiency of inferences in randomized clinical trials
  using auxiliary covariates.
\newblock \emph{Biometrics}, 64:\penalty0 707--715, 2008.

\bibitem[Zhang and Zheng(2020)]{zhang2020}
Y.~C. Zhang and X.~Zheng.
\newblock Quantile treatment effects and bootstrap inference under
  covariate-adaptive randomization.
\newblock \emph{Quantitative Economics}, 11:\penalty0 957--982, 2020.

\bibitem[Zheng and Zelen(2008)]{zheng2008multi}
L.~Zheng and M.~Zelen.
\newblock Multi-center clinical trials: Randomization and ancillary statistics.
\newblock \emph{Annals of Applied Statistics}, 2:\penalty0 582--600, 2008.

\end{thebibliography}


\newpage
\setcounter{equation}{0}
\setcounter{section}{0}
\setcounter{figure}{0}
\setcounter{example}{0}
\setcounter{proposition}{0}
\setcounter{corollary}{0}
\setcounter{theorem}{0}
\setcounter{table}{0}
\setcounter{condition}{0}
\setcounter{lemma}{0}
\setcounter{remark}{0}

\renewcommand {\theproposition} {S\arabic{proposition}}
\renewcommand {\theexample} {S\arabic{example}}
\renewcommand {\thefigure} {S\arabic{figure}}
\renewcommand {\thetable} {S\arabic{table}}
\renewcommand {\theequation} {S\arabic{equation}}
\renewcommand {\thelemma} {S\arabic{lemma}}
\renewcommand {\thesection} {S\arabic{section}}
\renewcommand {\thetheorem} {S\arabic{theorem}}
\renewcommand {\thecorollary} {S\arabic{corollary}}
\renewcommand {\thecondition} {S\arabic{condition}}
\renewcommand {\thepage} {S\arabic{page}}

\setcounter{page}{1}

\begin{center}
\bf \Large 
Supplementary Material  
\end{center}


Section \ref{sec:connections} discusses the extensions to the super-population framework and other permutation tests based on linear models.

Section \ref{sec:af} reviews the notation and some algebraic facts that hold for   arbitrary data generating process. We omit the proofs because they are straightforward. It contains Lemma \ref{lmz} on the univariate \textsc{ols} fit, a basic yet powerful tool in later proofs when coupled with the Frisch--Waugh--Lovell theorems for both the regression coefficients and standard errors \citep{fwl20}. When referring to the Frisch--Waugh--Lovell theorems, we will simply say ``by \textsc{fwl}.''

Section \ref{sec::probability-clt-lln} reviews the central limit theorems under complete randomization and random permutation, and gives a new finite population strong law of large numbers that works under not only simple random sampling and complete randomization but also rejective sampling and ReM \citep{fuller, morgan2012rerandomization}. 

Section \ref{sec:app_fp} gives the proofs of the main results  under complete randomization. 

Section \ref{sec:app_rem} gives the proofs of the results under ReM.

Section \ref{sec:app_sp} gives the proofs of the results related to the extensions to the super-population framework and permutation tests based on linear models in Section \ref{sec:connections}.

\section{Extensions to super-population inference}\label{sec:connections}
\subsection{Overview}
{We extend in this section the theory to the super-population framework and  show the   validity of the proposed procedures when the potential outcomes are independent draws from a super population. Further, the proposed procedures, though model-free in theory, make use of the \textsc{ols} coefficients and $t$-statistics for easy implementation.
It is thus curious to study their connections with existing permutation tests for coefficients in linear models \citep{fl,tb,knd,manly, romano}, as reviewed by \citet{anderson1999empirical}, \citet{ad}, and \citet{lei2019assumption}. 
We evaluate the operating characteristics of these permutation tests for testing the treatment effects,  and demonstrate the superiority of \textsc{frt} by various criteria. 
Among them, the recent proposal by \citet{romano} is the closest to \textsc{frt}  and coincides in procedure with \textsc{frt} based on \citet{Fisher35}'s estimator studentized by the robust standard error.
Recall from Corollary \ref{power} in the main text that this {\frt} does not necessarily improve the power when testing the treatment effects.
A possible improvement is to include adding the treatment-covariates interactions into the linear 
model, such that it coincides with our final recommendation of \textsc{frt} based on \citet{Lin13}'s estimator studentized by the robust standard error. \citet{romano}'s original theory was developed under the linear model assumption for testing whether a coefficient is zero.
Our extension provides an additional justification for it under the potential outcomes framework for testing the treatment effects. 
}

\subsection{Connection with the super-population framework}\label{sp}
In addition to the finite-population perspective that conditions on the potential outcomes for inferences, it is also conventional to view the experimental units as random samples from a super population \citep{tsiatis2008covariate, berk2013covariance, bugni2018inference, negi2020revisiting, ye2020principles}. 
We now extend the discussion to this framework and examine the operating characteristics of the proposed strategies for conducting {\frt} on random potential outcomes. 
Assume $\{Y_i(0), Y_i(1), Z_i, x_i\}_{i=1}^N$ are independent and identically distributed (\textsc{iid}) samples from a super population. 
With a slight abuse of notation, let $\tau = E\{ Y_i(1) - Y_i(0)\}$ be the population average treatment effect throughout Section \ref{sp}. 
The goal is to test 
$$H_{0}: \tau = 0$$
as the analog of $\HN$ in the finite-population setting.

Recall $p_z = N_z/N$ as the treatment proportions for $z = 0,1$. 
Without introducing new notation, let $p_z$ also denote $P(Z_i = z)$, equaling the probability limit of $N_z/N$ as $N$ goes to infinity. 
Let 
$$
\mu_z = E\{Y_i(z)\},\quad 
\sigma^2_z = \var\{Y_i(z)\},\quad 
\mu_x = E(x_i),\quad 
\sigma_x^2 =\cov(x_i),\quad
\sigma_{xY(z)} = \cov\{x_i, Y_i(z)\}
$$ 
be the first two moments of the potential outcomes and covariates.
We impose the following conditions under the super-population framework.

\begin{condition}\label{asym_SP}
$\{Y_i(0), Y_i(1), Z_i, x_i\}_{i=1}^N$ are \textsc{iid} draws from the population with (i)  $E(\|x_i\|_4^4) \leq \infty$, $E\{Y^4_i(z)\} \leq \infty$, $E\{\|x_iY_i(z)\|_4^4\} \leq \infty$, $\sigma_z^2 - \sigma_{xY(z)}^\T(\sigma_x^2)^{-1} \sigma_{xY(z)} >0$ for $z=0,1$, and (ii) $Z_i \ \ind \  \{ Y_i(0), Y_i(1), x_i\}$. 
\end{condition}

With a slight abuse of notation,  
let 
$\gamma_z =(\sigma_x^2)^{-1}\sigma_{xY(z)}$  
 be the coefficient of $x_i$ in the  population {\olsf} of $Y_i(z)$ on $(1, x_i)$, and let  $a_i(z) =  Y_i(z) - \mu_z - (x_i - \mu_x)^\T (p_1\gamma_1+p_0\gamma_0)$ 
 and 
 $b_i(z) = Y_i(z) - \mu_z - (x_i - \mu_x)^\T\gamma_z$ be the residuals. 
Let $\sigma^2_{a(z)}$ and $\sigma_{b(z)}^2$ be the variances of $a_i(z)$ and $b_i(z)$, respectively. 
\citet{negi2020revisiting} reviewed the asymptotic distributions of the estimators as
$\rtn  (\hts  -\tau)  \rightsquigarrow  \mN(0, v_*)\ (* = \neyman, \rosenbaum, \fisher, \lin)$  with
\begin{eqnarray}\label{eq::super-var}
v_\neyman  =  p_1^{-1} \sigma^2_1 + p_0^{-1} \sigma^2_0,  \quad  \quad  
  v_\rosenbaum = v_\fisher  =  p_1^{-1} \sigma^2_{a(1)}  + p_0^{-1} \sigma^2_{a(0)}, \quad \quad   
   v_\lin  =  p_1^{-1} \sigma_{b(1)}^2  + p_0^{-1} \sigma_{b(0)}^2 + \Delta_{\bar{x}}, 
\end{eqnarray}
where $\Delta_{\bar{x}} = (\gamma_1-\gamma_0)^\T\sigma_x^2(\gamma_1-\gamma_0)$.
Technically, \citet{negi2020revisiting} did not discuss $\hat{\tau}_\rosenbaum$, but we can show that $\hat{\tau}_\textsc{r}$ has the same asymptotic distribution as $\hat{\tau}_\textsc{f}$. So we unify the results above. 
A key distinction from the finite-population analysis is that $\htl$ now has extra variability due to the centering of the covariates. In contrast, other estimators do not have this extra term $\Delta_{\bar{x}} $ because they remain unchanged regardless of whether we center the covariates or not.

The extra term due to centering goes away if we condition on all covariates. 
But if we stick to the \textsc{iid} assumption in Condition \ref{asym_SP}, we must modify the  standard errors for \citet{Lin13}'s estimator to account for $\Delta_{\bar{x}}$ \citep{berk2013covariance, negi2020revisiting}. 
In particular, define $\ese_{\lin}^{2}$ and $\tse_{\lin}^{2}$  as the classic and robust standard errors squared plus $ \hat\Delta_{\bar{x}} /N = \hat\theta^\T S_x^2 \hat\theta/ N$, respectively, where $S_x^2$ is the sample covariance matrix of the $x_i$'s and  $\hat\theta =  \hat{\gamma}_{\lin, 1} - \hat{\gamma}_{\lin, 0} $ is the coefficient of $Z_i(x_i - \bar{x})$ in the {\olsf} of $Y_i$ on $\{1,Z_i,x_i-\bar{x}, Z_i(x_i - \bar{x})\}$.
We then use them to construct the studentized statistics 
and obtain in total twelve test statistics as in Table \ref{tb:tt}. Of interest is whether the resulting tests preserve the correct type one error rates under the super-population framework for testing $H_0$.

Theorem \ref{lin_SP} gives the asymptotic sampling and randomization distributions of the twelve test statistics in Table \ref{tb:tt}.
The same reasoning that underpins Corollaries \ref{rec} and \ref{power} ensures the four robust $t$-statistics are the only options proper for testing $H_0$, among which $\htl/\tse_\lin$ enjoys the highest power. 
Below, $\assp$ indicates a statement that holds for almost all sequences of $\{Y_i(0), Y_i(1), Z_i, x_i\}_{i=1}^N$.

\begin{theorem}\label{lin_SP}
Assume Condition \ref{asym_SP}. 
\begin{enumerate}
[(a)]
\item
$\rtn  (\hts  -\tau)  \rightsquigarrow  \mN(0, v_*)$,  
and
$\rtn  \htsp \rightsquigarrow  \mN(0, v_{*0}) \   \assp$,  
with $v_* \ (* = \neyman, \rosenbaum, \fisher, \lin)$  defined in \eqref{eq::super-var},  
$
v_{\neyman0}  = p_0^{-1}\sigma^2_1 + p_1^{-1}\sigma^2_0 + \tau^2
$, 
and
$
 v_{\rosenbaum0} =v_{\fisher0}  = v_{\lin0}  =  p_0^{-1}  \sigma_{a(1)}^2  + p_1^{-1}  \sigma_{a(0)}^2 + \tau^2.
$

\item
$(\hts -\tau)/\ese_*   \rightsquigarrow  \mN(0, c_*) $, and
$
(\hts  /\ese_* )^\pi   \rightsquigarrow  \mN(0, 1) \ \assp$, 
with 
$
c'_* = v_* / (v_{*0}-\tau^2)
$
for  $* = \neyman, \rosenbaum, \fisher$, and
$
c'_\lin  =    v_\lin / \{ p_0^{-1}  \sigma_{b(1)}^2  + p_1^{-1}  \sigma_{b(0)}^2 + \Delta_{\bar{x}} \} . 
$
 
\item
$(\hts-\tau)  /\tse_*    \rightsquigarrow  \mN(0, 1)$, 
and $(\hts  /\tse_*)^\pi   \rightsquigarrow  \mN(0, 1) \  \assp$ for $* = \neyman, \rosenbaum, \fisher, \lin$.
\end{enumerate}
\end{theorem}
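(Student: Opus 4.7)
The plan is to derive Theorem \ref{lin_SP} by reducing to the finite-population Theorems \ref{neyman}--\ref{lin} conditional on the realized data, and then invoking iid convergence. First I would verify that under Condition \ref{asym_SP}(i), the finite-population Condition \ref{asym} holds $\assp$: the strong law of large numbers applied to the bounded fourth moments yields the almost-sure limits of $S_z^2, S_x^2, S_{xY(z)}, S_\tau^2, S_{a(z)}^2, S_{b(z)}^2, S_\xi^2$, converging respectively to $\sigma_z^2, \sigma_x^2, \sigma_{xY(z)}, \var\{Y_i(1)-Y_i(0)\}, \sigma_{a(z)}^2, \sigma_{b(z)}^2$, and $\var\{b_i(1)-b_i(0)\}$, while the assumed positive-definiteness of $\sigma_x^2$ and nondegeneracy of $\sigma_z^2 - \sigma_{xY(z)}^\T(\sigma_x^2)^{-1}\sigma_{xY(z)}$ guarantee the corresponding regularity in Condition \ref{asym}. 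Condition \ref{asym_SP}(ii) ensures that the design randomness $Z$ is independent of the potential outcomes, so the two sources of variability can be separated cleanly.

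For the sampling distributions in part (a), I would condition on $\{Y_i(0), Y_i(1), x_i\}_{i=1}^N$, apply Theorems \ref{neyman}--\ref{lin}, and then combine the design-based variance with the extra sampling fluctuation of the finite-population parameters. The key identity is that the super-population $v_*$ equals the finite-population $v_*$ plus $S_\tau^2$ (or $S_\xi^2$ for $*=\lin$) in the iid limit: the $-S_\tau^2$ appearing with a negative sign in the finite-population $v_\neyman$ is canceled by the extra variance coming from treating $\tau$ as the super-population mean difference, giving $v_\neyman = p_1^{-1}\sigma_1^2 + p_0^{-1}\sigma_0^2$. For $\htl$, the centering by the sample mean $\bar x$ rather than $\mu_x$ introduces the additional term $\Delta_{\bar x} = (\gamma_1-\gamma_0)^\T\sigma_x^2(\gamma_1-\gamma_0)$, recovering $v_\lin$ in \eqref{eq::super-var}. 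For the randomization distributions in all three parts, I would fix a realization on which Condition \ref{asym} is satisfied and invoke the $\asz$ statements in Theorems \ref{neyman}--\ref{lin}; since the finite-population $v_{*0}$ converges $\assp$ to $p_0^{-1}\sigma_{a(1)}^2 + p_1^{-1}\sigma_{a(0)}^2 + \tau^2$ (with the analog for $* = \neyman$), the almost-sure randomization limits are the stated super-population quantities, and the $\tau^2$ term persists because the pseudo finite population pools $\{Y_i(1)\}$ and $\{Y_i(0)\}$ into a single vector $(Y_i)_{i=1}^N$.

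The main obstacle is part (c), where all four robust $t$-statistics should now converge to $\mathcal{N}(0,1)$ rather than to the stochastically smaller $\mathcal{N}(0, c_*)$ of the finite-population case. The reason is that what is only a conservative upper bound for $\cov(\hat\tau_*)$ under the finite-population framework becomes a \emph{consistent} estimator under iid sampling, precisely because $S_\tau^2$ and $S_\xi^2$ now represent genuine sources of variability already included in the super-population $v_*$. For $* = \neyman, \rosenbaum, \fisher$, this follows from the finite-population expression $N\tse_*^2 = v_*^{\text{FP}} + S_\tau^2 + \op$, which converges in probability to $v_*$. For $* = \lin$, the modified $\tse_\lin^2$ with the additional $\hat\Delta_{\bar x}/N = \hat\theta^\T S_x^2 \hat\theta/N$ is needed to absorb the centering variance, and I would verify $\hat\Delta_{\bar x} \to \Delta_{\bar x}$ in probability using the consistency of $\hat\gamma_{\lin,z}$ for $\gamma_z$ together with $S_x^2 \to \sigma_x^2$. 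Slutsky's theorem then yields the stated $\mathcal{N}(0,1)$ sampling limits, and the randomization-side statements in parts (b) and (c) follow as in part (a) because the pooled pseudo-outcomes $(Y_i)_{i=1}^N$ make both $N(\sesp)^2$ and $N(\tsesp)^2$ consistent for $v_{*0}$ $\assp$.
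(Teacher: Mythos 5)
Your proposal is correct, but it reaches the sampling distributions in part (a) by a genuinely different route from the paper. The paper simply imports the super-population limits of $\hat{\tau}_\neyman$, $\hat{\tau}_\fisher$, and $\hat{\tau}_\lin$ from \citet{negi2020revisiting}, handles $\hat{\tau}_\rosenbaum$ by showing $\rtn(\hat{\tau}_\fisher-\hat{\tau}_\rosenbaum)=\rtn\,\hat{\tau}_x^\T(\hat{\gamma}_\fisher-\hat{\gamma}_\rosenbaum)=o_{P,\textsc{s}}(1)$, and then gets the studentized and permutation statements from Lemma \ref{v_lim_sp} plus Slutsky. You instead condition on the realized $\{Y_i(0),Y_i(1),x_i\}_{i=1}^N$, invoke Theorems \ref{neyman}--\ref{lin}, and convolve the design-based normal with the fluctuation of the finite-population parameters; this is self-contained and makes transparent why the conservativeness disappears (the finite-population $-S_\tau^2$, resp.\ $-S_\xi^2$, is exactly replenished by $\var\{Y_i(1)-Y_i(0)\}$, and for $\lin$ the identity $\var(\tau_i)-\var(\xi_i)=\Delta_{\bar x}$ produces the extra term), but it buys this at the cost of two steps you should make explicit: (i) the convolution requires joint convergence of the conditional design-based piece and the $\mathcal{S}$-measurable piece $\rtn(\tau_N-\tau)$ to independent normals, which needs the a.s.\ conditional CLT argument (dominated convergence of conditional characteristic functions), and (ii) under Condition \ref{asym_SP} the assignment is Bernoulli rather than complete randomization, so you must additionally condition on $N_1=\sum_i Z_i$ (harmless, since $E(\hat{\tau}_*\mid N_1,\mathcal{S})=\tau_N+o(1)$ for every $N_1/N\to p_1$, so no variance is contributed by $N_1$). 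Your treatment of the standard errors and of the randomization distributions — verifying Condition \ref{asym} holds $\assp$, applying the finite-population machinery to the pooled pseudo-population, and checking $\hat\Delta_{\bar x}\to\Delta_{\bar x}$ while its permutation analogue vanishes — coincides with the paper's Lemma \ref{v_lim_sp} route.
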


The sampling distributions, except for those indexed by \rosenbaum, are known, but the permutation distributions, especially those with corrections for  $\Delta_{\bar{x}} $, are new.
Intuitively, $\sigma^2_z$ mirrors the finite-population limit of $S_z^2$ in Condition \ref{asym} and affords the probability limit of $S_z^2$ under the super-population framework. The asymptotic variances of $\hts\ (* = \neyman, \rosenbaum, \fisher)$ in Theorem \ref{lin_SP} thus mirror their finite-population analogs in Theorem \ref{neyman}--\ref{fisher} without the conservativeness issue. In contrast, the  asymptotic variance of $\htl$  features the additional term $\Delta_{\bar{x}} $. 
The randomization distribution of $\hat\Delta_{\bar{x}}$ 
is of higher order such that the randomization distributions $(\hat{\tau}_\lin  /\ese_\lin)^\pi$ and $(\hat{\tau}_\lin  /\tse_\lin)^\pi$ remain at $\mathcal{N}(0,1)$ even after this correction in the standard error.

\begin{corollary}
Assume Condition \ref{asym_SP}. 
The robustly studentized $\htH/\tse_* \ph$
are the only test statistics in Table \ref{tb:tt} proper for testing $H_0$, with the limiting values of $\tse_\lin^2/\tse_*^2$ less than or equal to 1  for $*=\neyman, \rosenbaum, \fisher$ $\assp$.
\end{corollary}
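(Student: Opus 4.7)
The plan is to leverage Theorem \ref{lin_SP} directly, since that theorem already supplies the asymptotic sampling and randomization distributions in the super-population framework. Under $H_0$, $\tau = 0$, so every limiting law listed in Theorem \ref{lin_SP} is a centered normal. Since the two-sided $p_\frt$ is based on $|T|$, properness of a test statistic $T$ is equivalent to the asymptotic variance of its randomization distribution dominating that of its sampling distribution. For each $* \in \{\neyman,\rosenbaum,\fisher,\lin\}$, Theorem \ref{lin_SP}(c) sends both $(\hts - \tau)/\tse_*$ and $(\hts/\tse_*)^\pi$ to $\mN(0,1)$ $\assp$, so the variance ratio equals one and the four robust $t$-statistics are proper.

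To rule out the remaining eight, I would exhibit parameter configurations under which the asymptotic sampling variance strictly exceeds the randomization variance. Already for $\htn$ and $\htn/\ese_\neyman$, under $H_0$ one has $v_\neyman - v_{\neyman 0} = (p_1^{-1} - p_0^{-1})(\sigma_1^2 - \sigma_0^2) > 0$ whenever $p_1 < p_0$ and $\sigma_1^2 > \sigma_0^2$, so neither statistic is proper. The analogous identities with $\sigma_z^2$ replaced by $\sigma_{a(z)}^2$ or $\sigma_{b(z)}^2$ dispose of $\htR$, $\htR/\ese_\rosenbaum$, $\htF$, $\htF/\ese_\fisher$, $\htl$, and $\htl/\ese_\lin$ in turn.

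For the efficiency claim $\tse_\lin^2/\tse_*^2 \leq 1$ asymptotically, the $\hat\Delta_{\bar x}/N$ correction built into $\tse_\lin^2$ ensures $N\tse_*^2 \to v_*$ $\assp$ for each $* \in \{\neyman,\rosenbaum,\fisher,\lin\}$ by a routine \textsc{iid} law of large numbers applied to $\hat\theta \to \gamma_1 - \gamma_0$ and $S_x^2 \to \sigma_x^2$; dividing gives $\tse_\lin^2/\tse_*^2 \to v_\lin/v_*$ $\assp$. The claim therefore reduces to the population efficiency inequalities $v_\lin \leq v_\neyman$ and $v_\lin \leq v_\rosenbaum = v_\fisher$ of \citet{Lin13}. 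Completing the square gives $v_\neyman - v_\lin = \|\sqrt{p_0/p_1}\,(\sigma_x^2)^{1/2}\gamma_1 + \sqrt{p_1/p_0}\,(\sigma_x^2)^{1/2}\gamma_0\|_2^2 \geq 0$, while the orthogonal decomposition $\sigma_{a(z)}^2 = \sigma_{b(z)}^2 + (\gamma_z - p_1\gamma_1 - p_0\gamma_0)^\T \sigma_x^2 (\gamma_z - p_1\gamma_1 - p_0\gamma_0)$ delivers $v_\rosenbaum - v_\lin = (p_0 - p_1)^2 \Delta_{\bar x}/(p_0 p_1) \geq 0$. Essentially all the analytic work is accomplished by Theorem \ref{lin_SP}; the only genuinely delicate piece is keeping track of the extra term $\Delta_{\bar x}$ peculiar to $\htl$ in the super-population setting, which must be consistently estimated so that $N\tse_\lin^2$ lines up with $v_\lin$.
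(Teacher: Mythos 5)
Your proposal is correct and follows essentially the same route as the paper, which disposes of this corollary by invoking ``the same reasoning that underpins Corollaries \ref{rec} and \ref{power}'': read off the asymptotic sampling and randomization variances from Theorem \ref{lin_SP}, note that robust studentization makes both equal to one while the other eight statistics admit configurations (e.g.\ $p_1<p_0$ with $\sigma_1^2>\sigma_0^2$) where the sampling variance strictly exceeds the randomization variance, and reduce the efficiency claim to $N\tse_*^2\to v_*$ together with $v_\lin\leq v_*$. Your only departure is that you verify $v_\lin\leq v_\neyman$ and $v_\rosenbaum-v_\lin=(p_0-p_1)^2\Delta_{\bar{x}}/(p_0p_1)\geq 0$ by explicit completion of the square, where the paper simply cites \citet{negi2020revisiting}; both computations check out.
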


\citet{agl} proposed an improvement on $\tse_\neyman$ under the finite population framework, which can be extended to $\tse_*\ (* = \rosenbaum, \fisher, \lin)$ as well. These alternatives, however, underestimate the asymptotic variances of $\hat{\tau}_*\ (*=\neyman, \rosenbaum, \fisher, \lin)$ under the super-population framework. We thus do not pursue them in our recommendation. 

 Much recent work has been done characterizing the asymptotic behaviors of more sophisticated assignment mechanisms \citep[e.g.,][]{bai2020, bugni2019, ye2020inference, zhang2020}).
We leave the general theory to future work.

%
%
%
%
%
%
\subsection{Connection with permutation tests for coefficients in linear models}\label{sec::pt}
%

We unified the twelve test statistics in Table \ref{tb:tt} as outputs from \textsc{ols} fits and permuted $Z$ to compute the $p$-values via {\frt}. 
The procedures are similar in form to permutation tests based on linear models. 
Consider a linear model
\beginy\label{lmf}
Y  = \on \alpha + Z \beta  +   X\gamma+ \epsilon 
\endy
that characterizes  the treatment effect by the coefficient of $Z$. 
Testing the treatment effect reduces to testing $\beta = 0$. 
It is standard to use the  $t$-test based on the classic or robust standard error. 
Permutation tests, on the other hand, afford compelling alternatives due to their accuracy in finite samples \citep{anderson1999empirical, ad}. 
We review here five existing permutation tests for testing $\beta=0$ under model \eqref{lmf}  and show their inferiority to \textsc{frt} with $\hat{\tau}_\textsc{l} /\tse_\textsc{l} $ for testing the treatment effects. 

\subsubsection{A review of existing permutation tests for linear models}
\label{subsubsec::review-of-permutations}

Without introducing new symbols, let $ \htf $ be the coefficient  of $ Z$ from the \textsc{ols} fit of model \eqref{lmf}, with $\ese_\fisher$ and $\tse_\fisher$ as the classic and robust standard errors.  \cite{draper} used the square of the multiple correlation coefficient
as the test statistic, and constructed the reference distribution by permuting $Z$. 
A recent proposal by \citet{romano} used the $\htf/\tse_\fisher$ as the test statistic and constructed the reference distribution by permuting $Z$. It coincides with {\frt} based on $\htf/\tse_\fisher$ in procedure and delivers asymptotically robust randomization test for $\beta=0$ under the linear model framework.
Our theory gives another justification of it for testing both $\HF$ and $\HN$ under the potential outcomes framework.
Recall from Corollary \ref{power} that {\frt} with $\htf/\tse_\fisher$ does not necessarily improve the power when testing the treatment effects. A possible improvement is thus to add treatment-covariates interactions to model \eqref{lmf} such that the procedure coincides with the more powerful test based on $\htl/\tse_\lin$. Depending on whether we condition on the covariates or treat them as \textsc{iid} draws from a super population,  we might need to modify the robust standard error as discussed in Section \ref{sp}.

 The rest four permutation tests, on the other hand, all used the classic $\htf/\ese_\fisher$ as the test statistic yet employed distinct permutation schemes to generate the reference distributions. We review below their respective procedures to highlight the difference, and extend them to a unified theory with test statistics being the coefficients and the classic and robust $t$-statistics, respectively. For simplicity of presentation, we only give the $p$-value formulas based on their original proposals with $\htf/\ese_\fisher$ as the test statistic. Those based on $\htf/\tse_\fisher$ and $\htf$ can be derived similarly by replacing all occurrences of the classic standard errors with their robust counterparts or constant $1$.

\paragraph{\citet{fl}}
Let $e  $ be the residual vector from the \textsc{ols} fit of $Y$ on $(1_N, X)$ as the reduced model, with the fitted vector $Y-e.$
\citet{fl} proposed to permute $e$ and obtain the $p$-value as follows: 
\begin{enumerate}[\fl-1:]
\item\label{fl_1} Permute $e = (e_1, \dots, e_N)^\T$ to obtain $e_\pi  $, construct $Y^\pi = Y- e + e_\pi$ as the synthetic outcomes, and  compute $(\hbflp,\seflp, \tseflp)$ as the coefficient of $Z$ and its classic and robust standard errors from the \textsc{ols} fit of $Y^\pi$ on $(\on, Z, X)$. 
\item\label{fl_2} Compute  
$
p_\fl= |\Pi|^{-1} \sum_{\pi\in\Pi} 1 ( |\hbflp| / \seflp  \geq |\htf| / \hat{\text{se}}_\fisher).
$
\end{enumerate}

The pair $(Z_i, x_i)$ remains intact under this procedure yet gets reshuffled under \textsc{frt}.
This exemplifies the difference   between $p_\frt$ and $p_\fl$.   \citet{fl} first proposed the method without the intention of making it a formal test, but rather ``an alternative interpretation of a reported significance level." 
It now becomes a standard permutation test for coefficients in linear models \citep{anderson1999empirical, ad}.     

\paragraph{\citet{knd}}
\citet{knd} also proposed to permute $e$ with a slight modification of \citet{fl}. 
Let $\delta  $ be the residual vector from the {\olsf} of $Z$ on $(\on, X)$. 
The procedure proceeds as follows:
\begin{enumerate} [\knd-1:]
\item   Permute $e  $ to obtain $e_\pi  $, and compute $(\hbkp,\sekp, \tsekp)$ as the coefficient of $\delta$ and its classic and robust standard errors from the \textsc{ols} fit of $e_\pi$ on $(\on, \delta)$.
\item
Compute  
$
p_\knd = |\Pi|^{-1} \sum_{\pi\in\Pi}   1(   |\hbkp| / \sekp  \geq |\htf| /  \hat{\text{se}}_\fisher ).
$
\end{enumerate}

\citet{knd} and \cite{ad} pointed out that $\hbkp=\hbflp$ due to the Frisch--Waugh--Lovell theorem whereas $\sekp\neq\seflp$ and $\tsekp\neq\tseflp$.

\newcommand{\epfpi}{\ep_{\fisher,\pi(i)}}
\paragraph{\citet{tb}}
\citet{tb} proposed to permute the residuals  $\epf  $  from the {\olsf} of model \eqref{lmf}, with $Y - \epf$ being the fitted vector.
The procedure proceeds as follows:
\begin{enumerate}[\tb-1]
\item\label{tb_1} Permute $\epf $ to obtain $\epfp  $, construct $Y^\pi = Y- \epf +\epfp$ as the synthetic outcomes, and  compute $(\hbtbp, \setbp, \tsetbp)$ as the coefficient of $Z$ and its classic and robust standard errors from the \textsc{ols} fit of $Y^\pi$ on $(\on, Z, X)$. 
\item\label{tb_2}
Compute 
$
p_\tb = |\Pi|^{-1} \sum_{\pi\in\Pi}  1  ( |\hbtbp-\htf| / \setbp \geq |\htf| / \ese_\fisher ) .
$
\end{enumerate}

Compare \tb-\ref{tb_1} with \fl-\ref{fl_1}  to see the difference in the models used for constructing the synthetic outcomes, with
\citet{fl} using the reduced model whereas  \citet{tb} using the full model. Consequently, \tb-\ref{tb_2} differs from  \fl-\ref{fl_2} in that $\hbtbp$ must be centered by $\htf$ under \citet{tb}'s procedure.

\paragraph{\citet{manly}}
\citet{manly}  proposed to permute $Y$ as follows:

\begin{enumerate}[\mly-1]
\item\label{m_1} Permute $Y$ to obtain $Y_\pi$; compute $(\hbmp, \semp, \tsemp)$ as the coefficient of $Z$ and its classic and standard errors from the \textsc{ols} fit of $Y_{\pi}$ on $(\on, Z, X)$. 
\item
Compute 
$
p_\mly = |\Pi|^{-1} \sum_{\pi\in\Pi}  1 (   |\hbmp| / \semp \geq |\htf| / \ese_\fisher ) .
$
\end{enumerate}

\subsubsection{Finite-sample exactness for testing $H_{0\textsc{F}}$ and properness for testing $H_{0\textsc{N}}$}\label{sec::cc}

The procedures in Section \ref{subsubsec::review-of-permutations}  employed distinct permutation schemes to generate the reference distributions.  
  \citet{romano} permuted $Z$, \citet{fl} and \citet{knd} permuted $e$, \citet{tb} permuted $\epf$, and \citet{manly} permuted   $Y$.  
Table \ref{tb:pt}   summarizes them.

\begin{table}[t]\caption{\label{tb:pt}Five permutation tests for testing $\beta=0$ in model \eqref{lmf}. 
}
\begin{tabular}{| l |c|c|c|c|c|}
\hline
procedure &  $T$ & model for computing $T$  & $T^\pi$ \\\hline
\citet{romano}
&$\htf/\tse_\fisher$  &$ Y \sim \on  + Z_\pi +X$&$   \htfp/\tse_\fisher^\pi $\\
\citet{fl}& $\htf/\ese_\fisher$
&  $Y - e + e_\pi  \sim \on  + Z +X$ & $  \hbflp /\seflp $\\
\citet{knd}
&same as above& $ e_\pi  \sim \on  + \delta$ & $  \hbkp/\sekp $ \\
\citet{tb}
&same as above& $ Y - \epf + \epfp  \sim \on  + Z+X$ & $(\hbtbp-\htf)/\setbp$\\
\citet{manly}
&same as above&
 $ Y_\pi \sim \on  + Z+X$ &$  \hbmp/\semp $\\ 
\hline
\end{tabular}
\end{table}


An interesting question is how they compare with {\frt} and each other when applied to testing the treatment effects. Whereas the operating characteristics of \citet{romano} follow directly from Theorem \ref{fisher} due to its identicalness to {\frt}, those of the rest four tests are less straightforward and hinge on their respective reference distributions resulting from the distinct permutation schemes.

With a slight abuse of notation, we use the random variables under the reference distributions to represent the respective procedures for the rest of this section. 
The finite-sample exactness for testing $\HF$, on the one hand, requires exact match of the reference distribution with the sampling distribution under $\HF$. 
The difference in permutation schemes leaves the distributions of 
the unstudentized $\hbflp$, $\hbkp$, $\hbtbp-\htau_\fisher$, $\hbmp$ and their studentized variants distinct from those of $\htf$, $\htf/\ese_\fisher$,  and $\htf/\tsef$ under $\HF$,
such that none of them is finite-sample exact for testing $\HF$.
Their properness and relative power for testing $\HN$, on the other hand, depend on the  asymptotic behaviors of the reference distributions, which are summarized below. 
 

\begin{theorem}\label{nay_fl}
Assume complete randomization and Condition \ref{asym}. We have
\begin{enumerate}[(a)]
\item 
$\rtn \hbsp \rightsquigarrow \mN(0, v_{\fisher0})$ for $* = \fl, \knd; \quad
\rtn (\hbtbp  - \htf)\rightsquigarrow \mN(0, v_{\fisher0} -\tau^2); \quad\rtn \hbmp \rightsquigarrow \mN(0, v_{\neyman0})$; 
\item 
 $\hbssp \rightsquigarrow \mN(0,1)$ and $ \hbssrp {\rightsquigarrow}\mN(0, 1)$  for $* = \fl, \knd,\mly$;\\ $(\hbtbp  - \htf)/\setbp \rightsquigarrow \mN(0, 1)$ and $ (\hbtbp  - \htf)/\tsetbp \rightsquigarrow \mN(0, 1)$
\end{enumerate}
hold $P_Z$-a.s., recalling $v_{\neyman0}$ and $v_{\fisher0}$ as the asymptotic variances of $\rtn  \htnp$ and $\rtn  \htfp$ defined in Theorems \ref{neyman} and \ref{fisher}, respectively. 
\end{theorem}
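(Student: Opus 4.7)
The four procedures admit a common treatment through the Frisch--Waugh--Lovell theorem. Let $\tilde Z$ denote the residual of $Z$ from the \textsc{ols} fit on $(\on, X)$; under complete randomization, $\bar x = 0$ and $N^{-1} Z^\T X = o_P(1)$, so $\tilde Z_i = Z_i - p_1 + o_P(1)$ pointwise and $\tilde Z^\T\tilde Z/N = p_0 p_1 + o_P(1)$. By \textsc{fwl}, for any $N$-vector $W$ the coefficient of $Z$ in the \textsc{ols} fit of $W$ on $(\on, Z, X)$ equals $\tilde Z^\T W/\tilde Z^\T\tilde Z$, asymptotically equivalent to $N_1^{-1}\sum_{Z_i=1} W_i - N_0^{-1}\sum_{Z_i=0} W_i$. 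When $W$ is a random permutation of a fixed $N$-vector $w$, the finite-population CLT under random permutation yields asymptotic normality of this difference with variance $S_w^2/(p_0 p_1)$ $\asz$. I apply this template to each of the four procedures and identify the relevant $S_w^2$.

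For part (a): for \fl, the construction $Y^\pi = Y - e + e_\pi$ has $Y - e$ fit exactly by $(\on, X)$, so the $Z$-coefficient coincides with that from regressing $e_\pi$ on $(\on, Z, X)$, giving $\hbflp$ equivalent to the difference in means of $e_\pi$ by $Z$ group. A decomposition $e_i = a_i(Z_i) + (2Z_i - 1) p_{1-Z_i} \tau + o_P(1)$, valid because $\hgr \to p_1\gamma_1 + p_0\gamma_0$ and $\bar Y - \bar Y(z) = -(2z-1)p_{1-z}\tau + o_P(1)$, yields $S_e^2 \to p_1 S_{a(1)}^2 + p_0 S_{a(0)}^2 + p_0 p_1 \tau^2$, whose division by $p_0 p_1$ gives $v_{\fisher0}$. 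For \knd, the identity $\hbkp = \hbflp$ follows from \textsc{fwl} applied to Kennedy's regression. For \tb, since $Y - \epf$ is fit exactly by $(\on, Z, X)$, the $Z$-coefficient of $Y - \epf + \epfp$ on $(\on, Z, X)$ equals $\htf$ plus that of $\epfp$ on the same covariates. Here $\epfi \approx a_i(Z_i)$ without any $\tau$-shift, since the full-model residuals are centered within each treatment group, so $\sum_i \epfi^2 /N \to p_1 S_{a(1)}^2 + p_0 S_{a(0)}^2$, yielding permutation variance $v_{\fisher0} - \tau^2$. For \mly, directly $\hbmp$ is the difference in means of $Y_\pi$ by $Z$ group, with $S_Y^2 \to p_1 S_1^2 + p_0 S_0^2 + p_0 p_1 \tau^2 = p_0 p_1 v_{\neyman0}$.

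For part (b), I show that both the classic and robust standard errors from each permutation regression consistently estimate the asymptotic permutation variance divided by $N$, so that the studentized ratios converge to $\mN(0,1)$. Take \fl\ as template: residuals from the \textsc{ols} fit of $Y^\pi$ on $(\on, Z, X)$ equal those from $e_\pi$ on the same covariates, whose squared sum is $\sum_i e_i^2 - O_P(1) = N S_e^2 + o_P(N)$, so $N(\seflp)^2 \to S_e^2/(p_0 p_1) = v_{\fisher0}$. The robust sandwich admits the same limit because a permutation weak law gives $N^{-1}\sum_i (Z_i - p_1)^2 e_{\pi(i)}^2 = p_0 p_1 S_e^2 + o_P(1)$. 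Combined with part (a), $\hbflp/\seflp, \hbflp/\tseflp \rightsquigarrow \mN(0,1)$ $\asz$. The arguments for \knd, \tb, and \mly\ are parallel, using $\epfp$ for \tb\ (noting that $\hbtbp - \htf$, not $\hbtbp$ itself, is the centered quantity) and $Y_\pi$ for \mly.

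The principal technical obstacle is upgrading the in-probability convergence to the $\asz$ form claimed in the theorem, which requires both a finite-population strong law and a permutation CLT whose quantile convergence holds for almost all sequences of $Z$. This can be handled by extending the tools used for Theorems \ref{neyman}--\ref{lin} (in particular Lemma \ref{as_Z}) from the potential outcomes $\{Y_i(z)\}$ to the derived residual vectors $\{e_i\}$ and $\{\epfi\}$, which themselves depend on $Z$ and thus call for an interchange-of-limits argument. A secondary subtlety is that the robust and classic standard errors share the same permutation limit, because the pointwise heteroscedasticity with weights $e_{\pi(i)}^2$ effectively averages out over $\pi$, rendering robust studentization asymptotically superfluous when the pseudo-outcome is already a permutation of a fixed vector.
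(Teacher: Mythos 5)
Your plan is correct and follows essentially the same route as the paper's proof: a Frisch--Waugh--Lovell reduction of each procedure to a linear permutation statistic $\delta^\T w_\pi / \|\delta\|_2^2$ with $w = e, \epf, Y$ (Lemma \ref{fwl}), the H\'ajek permutation CLT applied to that bilinear form with the variances identified through the $\asz$ limits $\hat S_e^2 \to S_a^2 + p_1p_0\tau^2$, $\hat S_\fisher^2 \to S_a^2$, $\hat S^2 \to S^2 + p_1p_0\tau^2$ (Lemmas \ref{as_Z} and \ref{qf}), and the observation that both the classic and the robust standard errors converge to the same permutation limit because $\delta_i^2$ and $e_{\pi(i)}^2$ decorrelate under permutation (Lemma \ref{qf}(c)). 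The almost-sure upgrade you flag is handled exactly as you anticipate, by extending the finite-population strong law to the derived residual vectors and their fourth moments.
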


The asymptotic distribution of $\hbflp/\seflp$ first appeared in \citet{fl}. \citet{ad} generalized it and gave a sketch of the proof for all four classic $t$-statistics in Table \ref{tb:pt}. We flesh out their proofs and furnish the new results on the unstudentized  coefficients and their robustly studentized variants.

Intuitively, the asymptotic variances of $\hbflp = \hbkp$, $\hbtbp  - \htf$, and $\hbmp$ are proportional to the finite-population variances of $e$, $\epf$, and $Y$, respectively,  with increasing variability in the order of $\epf$, $e$, and $Y$. In particular,  the {\olsf} of the reduced model adjusts for $X$ such that $e$ is less variable than $Y$;  the {\olsf} of the full model \eqref{lmf} further adjusts for $Z$ such that  $\epf$ is less variable than $e$. The four unstudentized test statistics elucidate the impact of different permutation schemes on the resulting reference distributions, with $\hbflp$ and $\hbkp$ sharing the same limiting distribution with $\htfp$ and $\htrp$ irrespective of the true value of $\tau$, whereas $\hbmp$ sharing that with $\htnp$.    The limiting distribution of $\hbtbp-\htf$, on the other hand, coincides with that of $\htfp$ or $\htrp$ if and only if $\HN$ is true.


Despite the distinction between $\hbsp$ for $* = \fl, \knd, \tb, \mly$, all eight studentized statistics  are asymptotically standard normal and thus coincide with the limiting distributions of $(\htf/\ese_\fisher)^\pi$ and $(\htf/\tsef)^\pi$   under {\frt}.
Juxtapose Theorem \ref{nay_fl} with Theorem \ref{fisher} to see the four robustly studentized variants as the only options proper for testing $\HN$. We state the result in Corollary \ref{coro::permutation-ols-student}. 

\begin{corollary}\label{coro::permutation-ols-student}
{\asym}   
The robustly studentized $\hbflp/\tseflp$, $\hbkp/\tsekp$,  $(\hbtbp  - \htf)/\tsetbp$, and $\hbmp/\tsemp$ are proper for testing $\HN$ whereas the unstudentized and the classically studentized alternatives are not. 
\end{corollary}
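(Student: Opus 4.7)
The plan is to invoke Definition \ref{prop} and read off the asymptotic comparison from Theorems \ref{fisher} and \ref{nay_fl}. Across all four procedures, the observed statistic extended to robust studentization is $T = \htf/\tse_\fisher$, while $T^\pi$ equals one of $\hbflp/\tseflp$, $\hbkp/\tsekp$, $(\hbtbp-\htf)/\tsetbp$, or $\hbmp/\tsemp$, depending on the permutation scheme. Because the two-sided $p$-value rejects when $|T|$ exceeds the quantiles of $|T^\pi|$, properness amounts to showing that the limiting distribution of $|T|$ is stochastically dominated by that of $|T^\pi|$ $\asz$; for centered normal limits this reduces to the reference variance being at least the sampling variance.

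Under $\HN$, Theorem \ref{fisher} yields $T \rightsquigarrow \mathcal{N}(0, c_\fisher)$ with $c_\fisher = v_\fisher/(v_\fisher + S_\tau^2) \leq 1$, while Theorem \ref{nay_fl}(b) supplies $T^\pi \rightsquigarrow \mathcal{N}(0, 1)$ $\asz$ for each of the four robustly studentized references. The variance inequality $c_\fisher \leq 1$ is immediate from Theorem \ref{fisher} itself, delivering properness for all four robustly studentized tests simultaneously.

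For the non-properness claims, I plan to exhibit configurations of $\mathcal{S}$ satisfying $\HN$ under which the variance comparison flips. In the unstudentized case, $\htf \rightsquigarrow \mathcal{N}(0, v_\fisher)$ by Theorem \ref{fisher}(a), while Theorem \ref{nay_fl}(a) gives reference variances $v_{\fisher0} = p_0^{-1}S_{a(1)}^2 + p_1^{-1}S_{a(0)}^2$ for FL, KND, and TB (with the $\pm\tau^2$ contributions vanishing under $\HN$), and $v_{\neyman0} = p_0^{-1}S_1^2 + p_1^{-1}S_0^2$ for Manly. The explicit computation
$$v_\fisher - v_{\fisher0} = (p_1^{-1} - p_0^{-1})(S_{a(1)}^2 - S_{a(0)}^2) - S_\tau^2$$
shows this difference is positive for instance when $S_\tau^2 = 0$, $p_1 < p_0$, and $S_{a(1)}^2 > S_{a(0)}^2$, breaking the required variance inequality; an analogous example breaks the Manly comparison $v_\fisher$ versus $v_{\neyman0}$. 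Classical studentization fails for the same reason, since Theorem \ref{fisher}(b) yields sampling variance $c_\fisher' = v_\fisher/v_{\fisher0}$ under $\HN$ whereas Theorem \ref{nay_fl}(b) keeps the classically studentized reference at $\mathcal{N}(0, 1)$, so the configurations above force $c_\fisher' > 1$. The main obstacle I anticipate is the TB reference, whose centering involves the stochastic quantity $\htf$ rather than zero; but $\htf = O_P(N^{-1/2})$ under $\HN$ matches the $\rtn$-scale of $\hbtbp$ itself, so Theorem \ref{nay_fl}(a)--(b) already absorbs this centering and no separate derivation beyond invoking those asymptotic statements is required.
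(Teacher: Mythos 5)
Your overall route is the same as the paper's: the paper proves this corollary simply by juxtaposing Theorem \ref{nay_fl} with Theorem \ref{fisher}, and your properness half does exactly that. Under $\HN$ the observed statistic $\htf/\tse_\fisher$ converges to $\mN(0,c_\fisher)$ with $c_\fisher\le 1$, while all four robustly studentized references converge to $\mN(0,1)$ $\asz$ by Theorem \ref{nay_fl}(b), so the two-sided rejection probability is asymptotically controlled; your observation that the TB centering at $\htf$ is already absorbed into Theorem \ref{nay_fl} is also correct, since $p_\tb$ compares against $|\hbtbp-\htf|/\tsetbp$.

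The gap is in your explicit counterexample for the non-properness claims. The configuration ``$S_\tau^2=0$, $p_1<p_0$, $S_{a(1)}^2>S_{a(0)}^2$'' is internally inconsistent: under $\HN$, $S_\tau^2=0$ forces $\tau_i=0$ for every $i$, i.e.\ $\HF$ holds, hence $Y_i(1)=Y_i(0)$ and therefore $a_i(1)=a_i(0)$ and $S_{a(1)}^2=S_{a(0)}^2$, which makes your difference $(p_1^{-1}-p_0^{-1})(S_{a(1)}^2-S_{a(0)}^2)-S_\tau^2$ equal to $-S_\tau^2=0$ rather than positive. Indeed, by Cauchy--Schwarz $|S_{a(1)}^2-S_{a(0)}^2|$ is of order $(S_\tau^2)^{1/2}$, so the two quantities cannot be decoupled in the way your example assumes. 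The fix is routine: take $S_\tau^2$ small but strictly positive, e.g.\ $Y_i(1)=(1+\delta)Y_i(0)$ recentered so that $\tau=0$, with no covariate signal, which gives $S_{a(1)}^2-S_{a(0)}^2 = (2\delta+\delta^2)S_0^2$ and $S_\tau^2=\delta^2 S_0^2$, so that $(p_1^{-1}-p_0^{-1})(2\delta+\delta^2)S_0^2-\delta^2 S_0^2>0$ for small $\delta>0$ when $p_1<p_0$. With that repair your variance-flipping argument, and its transfer to the classically studentized statistics via $c_\fisher'=v_\fisher/v_{\fisher 0}>1$, goes through; the paper itself only asserts (in Section \ref{sec::typeoneerror}) that these ratios ``can be either greater or less than $1$'' without exhibiting a configuration, so once corrected your argument is actually slightly more complete than the paper's.
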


Despite the positive result in Corollary \ref{coro::permutation-ols-student},  we do not recommend these permutation tests  for testing the treatment effects for several reasons. 
First,  they are not finite-sample exact for $\HF$ in the first place, such that properness under $\HN$ can be a rather weak requirement. 
For instance, \citet{manly}'s permutation test with the unstudentized statistic yields a reference distribution far from the true distribution of $\hat{\tau}_\fisher$ although the robust studentization repairs it asymptotically.
Second, even within the scope of testing $\HN$, the test statistic $\hat{\tau}_{\fisher} / \tse_\fisher$ is suboptimal compared to $\hat{\tau}_{\lin} / \tse_\lin$ and may deliver even less power than  $\htnsr$ despite the extra use of covariates. 
Further,  Corollary \ref{coro::permutation-ols-student} holds under complete randomization but may not extend to general designs even asymptotically. Overall, \textsc{\frt} is strictly superior for testing the treatment effects.  

\section{Notation and algebraic facts}\label{sec:af}

Let $1_m$ and $0_m$ be the $m\times 1$ vectors of ones and zeros, respectively,  and let $I_m$ be the $m\times m$ identity matrix. 
We suppress the subscript $m$ when the dimensions are clear from the context. 

For $(u_i, v_i)_{i=1}^N$, where $u_i$ and $v_i$ are $m \times 1$ and $k\times 1$ vectors, respectively, let 
$$
\bar u  = N^{-1}\sumN u_i,\quad
S_u^2 =  (N-1)^{-1}\sumN(u_i - \bar u) (u_i - \bar u)^\T,\quad
S_{uv}=  (N-1)^{-1}\sumN(u_i - \bar u) (v_i - \bar v)^\T
$$ 
be the finite-population mean and covariance matrices of $(u_i)_{i=1}^N$ with itself and $ (v_i) _{i=1}^N$, respectively. 
The matrices $S_u^2$ and $S_{uv}$ degenerate to the finite-population variance and covariance for $m=k = 1$.
We suppress the ``finite-population" when no confusion would arise.
Let $\lambda    = (N-1)/N$ be the scaling factor to accommodate the difference in normalizing by $N$ or $N-1$.  

\subsection{Potential outcomes and covariates}

Assume standardized covariates with mean $\bar x = 0_J$ and covariance matrix $\sx = I_J$  
throughout the finite-population analysis  to simplify the presentation. 
Let $X_j = (x_{1j}, \dots, x_{Nj})^\T$ be the $j$th column of $X$ with mean $N^{-1}\sumi x_{ij} =0$ and variance $(N-1)^{-1} \sumi x_{ij}^2 = 1$. All procedures discussed in this paper are invariant to non-degenerate transformations of the covariates and thus unaffected by the standardization.

Recall $\gamma_z = (S_x^2)^{-1}S_{xY(z)} = (N-1)^{-1} \sumi x_i Y_i(z)$ as the coefficient of $x_i$ in the \textsc{ols} fit of $Y_i(z)$ on $(1,x_i)$.
Recall  $S^2_z$, $ S_{a(z)}^2   $, $S_{b(z)}^2$, $S_\tau^2$, and $S_\xi^2$ as the finite-population variances of $Y_i(z)$, $a_i(z) =  Y_i(z) - \bar Y(z) -  x_i^\T (p_1\gamma_1+p_0\gamma_0)$, $b_i(z) = Y_i(z)  - \bar Y(z)  - x_i^\T\gamma_z$, $\tau_i = Y_i(1)-Y_i(0)$, and $\xi_i= b_i(1)-b_i(0)$, respectively. 
Let 
\begina
\gamma =  p_1  \gamma_1+ p_0  \gamma_0,\quad
S^2 = p_1 S_1^2 +p_0 S_0^2,\quad 
 S_a^2  = p_1 S_{a(1)}^2 + p_0 S_{a(0)}^2,\quad 
 S_b^2  = p_1 S_{b(1)}^2  + p_0 S_{b(0)}^2.
\enda 
We have $a_i(z) = Y_i(z)  - \bar Y(z)  - x_i^\T \gamma$, 
\beginy\label{Sab} 
\wa{1.4}
 S_{a(z)}^2  =  S_z^2  + \|\gamma\|^2_2 - 2\gamma_z^\T \gamma,\quad  S_a^2  = S^2 -\|\gamma\|^2_2,\quad S_{b(z)}^2  =  S_z^2  - \|\gamma_z\|^2_2 ,&S_\xi^2  = S_\tau^2  - \|\gamma_1-\gamma_0\|^2_2,\quad\quad
\endy
with  $ S_{a(z)}^2  -  S_{b(z)}^2  = \|\gamma_z-\gamma\|^2_2 \geq 0$ for $z = 0,1$. 

\subsection{Observed outcomes and \textsc{OLS} fits}

Let $\hY(z) = N_z^{-1}\sum_{i: Z_i = z}  Y_i $ and $\hat{S}_z^2 = (N_z-1)^{-1}\sum_{i:Z_i=z} \{Y_i - \hY(z) \}^2$  be the mean and variance of $\{Y_i: Z_i = z\}$, respectively.
Let $\htx= \hat x(1) - \hat x(0)$ be the difference in means of the covariates under treatment and control, where $\hat x(z) = N_z^{-1} \sum_{i: Z_i = z}x_i$.
Centered covariates ensure
\begin{equation}
\label{eq::simple-mean-zero-x}
\hat x   (1) = p_0 \htx,\quad\hat x   (0) = - p_1\htx . 
\end{equation}
Let $\hat{S}_{x(z)}^2= (N_z-1)^{-1} \sum_{i: Z_i = z} \{x_i - \hat x (z)\}\{x_i - \hat x (z)\}^\T$ 
and $\hat{S}_{xY(z)} = (N_z-1)^{-1} \sum_{i: Z_i = z} \{x_i - \hat x(z)\}\{Y_i - \hat Y(z)\} $
be the  covariance matrices of $\{x_i: Z_i = z\}$ with itself and $\{Y_i: Z_i = z\}$.  

Let $\hY = N^{-1}\sumN  Y_i $ and 
$\hat S^2 = (N-1)^{-1}\sumN (  Y_i  - \hY)^2$ be the mean and variance of $(Y_i)_{i=1}^N$, respectively. 
They satisfy 
\beginy\label{as_numeric}
\hY = p_1 \hY(1) + p_0 \hY(0), \quad
\hs^2 = \frac{N_1-1}{N-1} \hs^2_1 + \frac{N_0-1}{N-1} \hs^2_0 +  \frac{N}{N-1} p_0p_1 \htn^2. 
\endy
Let $\qxy = (N-1)^{-1}\sumi x_iY_i$ be the  covariance matrix of $\{(x_i, Y_i)\}_{i=1}^N$. 
The {\olsf} of $Y$ on $(\on,X)$ has coefficients $( \hat Y, \hgr )$ and residuals $e = (e_1, \dots, e_N)^\T$ that satisfy
\beginy\label{algebra_r}
&&\hgr = (N-1)^{-1}\sumi x_i Y_i = \hat S_{xY}, \quad \quad   e_i =Y_i - \hat Y - x_i^\T \hgr \quad (i = \ot{N}),\nonumber\\
&&\hat e = N^{-1}\sumN  e_i = 0, \quad \quad
\hat  S_e^2  = (N-1)^{-1}\sumN(\ei - \hat e)^2
= (N-1)^{-1}\esq 
=  \hat S^2 - \|\hgr\|^2_2.\quad 
\endy
Let $\he(z)= N_z^{-1}\sum_{i:Z_i=z} \e_i$ and $\hat  S_{e(z)}^2= (N_z-1)^{-1}\sum_{i:Z_i=z} \{\e_i - \he(z)\}^2$ be the sample mean and variance of $\{e_i: Z_i = z\}$ for units in treatment group $z$. 

Let $\ep_* = (\ep_{*,1}, \dots, \ep_{*,N})^\T$ be the residuals from the \textsc{ols} fit that generates $\hts$, where $\phs$.
Let 
$\hat\ep_*(z) = N_z^{-1}\sum_{i: Z_i = z}  \ep_{*,i} $ and $\hat S^2_{*(z)} = (N_z-1)^{-1} \sum_{i:Z_i = z}\{\ep_{*,i} - \hat\ep_*(z)\}^2$  be the sample mean and variance of $\{\ep_{*,i}: Z_i = z\}$ under treatment $z$. 
They satisfy 
\beginy\label{syse}
&&\hsnz = \hat S^2_z, \quad \hsrz = \hat  S_{e(z)}^2,  \quad \hssz   
=
\hat  S_z^2  +
\hg_*^\T \hat S_{x(z)}^2 \hg_* - 2\hg_*^\T\hat S_{xY(z)}\quad\text{for $*=\rosenbaum,\fisher$},\nonumber\\
&&\hslz 
=
\hat  S_z^2  +
\hg_{\lin,z}^\T \hat S_{x(z)}^2 \hg_{\lin,z} - 2\hg_{\lin,z}^\T\hat S_{xY(z)},\nonumber\\
&&\ese_*^2 = \frac{N(N_1-1)}{(N-2)N_1N_0} \hat  S_{*(1)}^2 +  \frac{N(N_0-1)}{(N-2)N_1N_0}  \hat  S_{*(0)}^2 \qquad \text{for $* = \neyman, \rosenbaum$},\nonumber\\
&&
\tse_*^2 =  \frac{N_1-1}{N_1^2} \hat  S_{*(1)}^2  + \frac{N_0-1}{N_0^2}\hat  S_{*(0)}^2 \qquad \text{for $* = \neyman, \rosenbaum$}.
\endy
The numerical expressions of $\ese_*^2$ and $\tse_*^2$ follow from standard least squares theory and \citet[][Chapter 8]{AngristEcon}. 
Analogous results for $* = \fisher, \lin$ are given in Lemma \ref{alg_f} and \cite{LD20}.

 \subsection{Projection matrices}
Let $H$ and $H_1=N^{-1}\on\ont$ be the projection matrices onto the column spaces of $(\on, X)$ and $\on$.
Let $\delta = (I-H)Z = (\delta_1, \dots, \delta_N)^\T$ be the residuals from the {\olsf} of $Z$ on $(\on,X)$. They satisfy 
\beginy
&&e=(I-H)Y, \quad He = 0_N, \quad H_1 e = 0_N,  \quad  H\on  = \on, \quad HX = X, \nonumber \\
&&H = H_1 + X (X^\T X)^{-1} X^\T = N^{-1}\on\ont + (N-1)^{-1}X X^\T,\nonumber\\
&&\delta_i = Z_i - p_1 - \lambda^{-1}p_1p_0 x_i^\T\htx, \quad   \hat \delta = N^{-1}\sumi \delta_i  = 0, \quad H\delta = 0_N,\quad  H_1\delta = 0_N, \nonumber\\
&&\|\delta\|_2^2 =  Z^\T(I-H)Z =N (p_1p_0 -   \lambda^{-1}p_1^2 p_0^2 \htx^\T \htx). \label{H}
\endy

\subsection{A lemma on the univariate \textsc{OLS}}

\begin{lemma}\label{lmz}
Let $u = (u_1, \dots, u_N)^\T$ and $v=(v_1, \dots, v_N)^\T$ be two $N\times 1$ vectors, and let $\htau_0$ be the coefficient of $v$ from the {\olsf} of $u$ on $v$,
with the residual vector $\eta = u -  v\htau_0 $. Let $\ese_0$ and  $\tse_0$ be the classic and robust standard errors, respectively. 
We have
\begina
\htau_0  = \frac{v^\T u}{\vsq}, \quad \ 
\ese_0^2 = \frac{1  }{N-1}  \frac{ \|\eta\|^2_2 }{\vsq} = 
 \frac{1  }{N-1}\left(\frac{\usq}{\vsq} - \htau_0^2\right), \quad\ 
\tse_0^2  = \frac{  v^\T \diag(\eta_i^2) v} { (\vsq)^2 } 
=\frac{\eta^\T \diag(v_i^2) \eta}{ (\vsq)^2} .
\enda
\end{lemma}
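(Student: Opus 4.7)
The lemma is essentially a bookkeeping exercise in univariate least squares, so the plan is to reduce each claim to a standard formula and verify the two equivalent forms by direct algebra.

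First, I would derive $\htau_0$ by applying the OLS normal equation to the single-regressor model without intercept: $\htau_0 = (v^\T v)^{-1} v^\T u = v^\T u / \|v\|_2^2$, provided $\|v\|_2^2 > 0$ (an implicit nondegeneracy assumption as elsewhere in the paper). Then for the classic standard error, I would invoke the usual homoskedastic formula $\ese_0^2 = \hat\sigma^2 (v^\T v)^{-1}$ with $\hat\sigma^2 = \|\eta\|_2^2 / (N-1)$ (there is one regressor, hence $N-1$ residual degrees of freedom), giving the first expression. To obtain the second expression for $\ese_0^2$, the key identity is
\begin{equation*}
\|\eta\|_2^2 \;=\; \|u - v\htau_0\|_2^2 \;=\; \|u\|_2^2 \;-\; 2\htau_0\, v^\T u \;+\; \htau_0^2\, \|v\|_2^2 \;=\; \|u\|_2^2 \;-\; \htau_0^2\, \|v\|_2^2,
\end{equation*}
where the last step substitutes $v^\T u = \htau_0 \|v\|_2^2$. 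Dividing by $(N-1)\|v\|_2^2$ yields the alternative form.

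For the robust (sandwich) standard error, the general formula with a scalar regressor reads $\tse_0^2 = (v^\T v)^{-1}\bigl(v^\T \diag(\eta_i^2) v\bigr)(v^\T v)^{-1} = v^\T \diag(\eta_i^2) v / (\|v\|_2^2)^2$. The equivalence with $\eta^\T \diag(v_i^2) \eta / (\|v\|_2^2)^2$ is immediate once one notices that both quadratic forms expand to the same scalar:
\begin{equation*}
v^\T \diag(\eta_i^2) v \;=\; \sum_{i=1}^N v_i^2 \eta_i^2 \;=\; \eta^\T \diag(v_i^2) \eta.
\end{equation*}

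There is no real obstacle here; the only thing to watch is the precise normalization convention adopted by the paper for $\ese$ and $\tse$ (e.g.\ whether the robust variant uses an HC0-type form without finite-sample correction, as the expressions in \eqref{syse} suggest). I would confirm consistency with those conventions and then the identities above complete the proof.
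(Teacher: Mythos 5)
Your proof is correct and is exactly the straightforward verification the paper has in mind: the authors explicitly omit the proof of this lemma as "straightforward," and your argument (normal equations, the residual sum-of-squares identity $\|\eta\|^2_2=\usq-\htau_0^2\vsq$, and the commutativity $\sum_i v_i^2\eta_i^2$ for the two sandwich forms) is the standard route, with the correct $N-1$ degrees-of-freedom and HC0 conventions matching \eqref{syse}.
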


\section{Probability measures and basic limiting theorems}
\label{sec::probability-clt-lln}

\subsection{Probability measures}
Recall $T^\pi = T(Z_\pi, Y(Z), X)$, where $\pi \sim \Unif(\Pi)$ and $Z_\pi \sim \Unif(\mz)$, as a random variable following the randomization distribution of $T$ conditioning on $Z$.  
Let $\sesp $ and $\tsesp $ be the classic and robust standard errors of $\htsp$ under $Z_\pi$  for $\phs$. 
By definition, $\htsp$, $
(\hat\tau_*/\ese_*)^\pi = \htsp /\sesp$ and $ (\hat\tau_*/\tse_*)^\pi = \htsp /\tsesp$ are the outputs from the {\olsf}s of 
$Y$ on $(\on, Z_\pi)$ for $*=\neyman$; 
 $e$ on $(\on, Z_\pi)$  for $*=\rosenbaum$;  
 $Y$ on $(\on, Z_\pi, X)$  for $*=\fisher$; 
 and
$Y$ on $(\on, Z_\pi, X, W^\pi)$  for $*=\lin$, respectively, where $W^\pi = (W_1^\pi,\dots, W_N^\pi)^\T $ with $W_i^\pi = Z_{\pi(i)} x_i$.
Let 
$(\hat S_{*(z)}^2)^\pi$ be the analogs of 
$\hat S_{*(z)}^2$ based on the corresponding residuals for $\phs$. 

Index by  $Z$ and \textsc{s}  the probability measures induced by the treatment assignment and random sampling from the population under the finite and super-population frameworks, respectively. 
Write $\asz$ if a result holds for almost all sequences of $Z$ under the finite-population framework,  
write $\assp$ if a result holds for almost all sequences of $\{Y_i(1), Y_i(0), x_i, Z_i\}_{i=1}^N$ under the super-population framework, and  write $\asp$ if a result holds for almost all sequences of $\pi\sim \unif(\Pi)$ conditioning on a sequence of observed data $\mathcal{D} = (Y_i, x_i, Z_i)_{i=1}^N$.  Write $\aszp$ if a result holds $\asp$ for almost all sequences of $Z$ under the finite-population framework.

Let $\var_\infty$ and $\cov_\infty$ be the asymptotic variance and covariance,  with the probability measure clear from the context. Write $A \dsim B$ for $\rtn (A- B) =\opss$ for $* = Z, \pi, \textsc{s}$.

\subsection{Central limit theorems}

To analyze complete randomization and ReM, we need a central limit theorem from \citet[][Theorem 5]{DingCLT}: Under complete randomization and Condition \ref{asym},
$$
\rtn   \left(\begin{array}{cc}
\htN - \tau \\
\htx
\end{array}
\right)
\rightsquigarrow  \mN \left\{
0_{J+1},  
  \left(\begin{array}{cc}
v_\neyman & p_1^{-1} \gamma^\T_1+p_0^{-1}\gamma^\T_0\\
p_1^{-1} \gamma_1+p_0^{-1}\gamma_0 & (p_1p_0)^{-1} I_J\end{array}\right)
\right\},
$$
recalling $v_\neyman = p_1^{-1}S_1^2 + p_0^{-1}S_0^2 - S^2_\tau$ from Theorem \ref{neyman}.  When citing this result, we will simply say ``by \textsc{fpclt}.''

To analyze random permutation, we need the following lemma due to \citet{hajek1961some}. To simplify the presentation, we give a version that involves slightly stronger moment conditions than  \citet[][Theorem 4.1]{hajek1961some}.

\begin{lemma} \label{romanoS33}
Let $u = (u_1, \dots, u_N)^\T$ and  $v = (v_1, \dots, v_N)^\T$ be two $N\times 1$ vectors of real numbers, possibly depending on $N$. 
Let $\bar u = N^{-1} \sumi u_i $, $S^2_u=(N-1)^{-1}\sumi  ( u_i - \bar u)  ^2 $, $\bar v = N^{-1} \sumi v_i $, and $S^2_v= (N-1)^{-1}\sumi  ( v_i - \bar v)  ^2$ be the means and variances, respectively. 
We have 
\begin{enumerate}[(a)]
\item\label{clt_1} $
E( N^{-1} u^\T v_\pi) = \bar u \bar v, \quad \cov(N^{-1} u^\T  v_\pi) = N^{-2}(N-1)   S^2_u S^2_v$;
\item\label{clt_2} $\rtn  ( N^{-1}   u^\T v_\pi - \bar u \bar v) \rightsquigarrow \mN(0, S^2_u S^2_v)$ if
(i) $S^2_u$ and $S^2_v$ have finite limits, and (ii) there exists an $\epsilon>0$ such that $N^{-1}\sum_{i=1}^N (u_i - \bar u)^{2+\epsilon} \leq c_0$ and $N^{-1}\sum_{i=1}^N (v_i - \bar v)^{2+\epsilon} \leq c_0 $ for some $ c_0 < \infty$ independent of $N$. 
\end{enumerate}
\end{lemma}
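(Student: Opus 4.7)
\textbf{Plan for Lemma \ref{romanoS33}.}
Part (a) is a direct moment computation under the uniform distribution on $\Pi$. Writing $u^\T v_\pi = \sumi u_i v_{\pi(i)}$, each $v_{\pi(i)}$ is marginally uniform on $\{v_1,\dots,v_N\}$ and, for $i\neq j$, the pair $(v_{\pi(i)},v_{\pi(j)})$ is uniform over ordered pairs of distinct values. The standard sampling-without-replacement formulas therefore give $E(v_{\pi(i)}) = \bar v$, $\var(v_{\pi(i)}) = \lambda S^2_v$, and $\cov(v_{\pi(i)}, v_{\pi(j)}) = -S^2_v/N$ for $i\neq j$. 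Linearity yields $E(N^{-1}u^\T v_\pi) = \bar u \bar v$, and collecting the cross terms via $\sumi u_i^2 - N\bar u^2 = (N-1)S^2_u$ gives the claimed $\cov(N^{-1}u^\T v_\pi) = N^{-2}(N-1)S^2_u S^2_v$.

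For part (b), the plan is to invoke Hoeffding's combinatorial central limit theorem, in the Hájek form of \citet{hajek1961some}: for a statistic of the shape $\sumi u_i v_{\pi(i)}$ with $\pi\sim\Unif(\Pi)$, the standardized version converges in distribution to $\mN(0,1)$ provided the H\'ajek--Sidak ``no outlier'' conditions $\max_i(u_i-\bar u)^2 / \sumi(u_i-\bar u)^2 \to 0$ and the analogous condition for $v$ both hold. These are implied by the assumed uniform $(2+\epsilon)$-moment bounds: since
\[
\max_i (u_i-\bar u)^2 \le \Bigl(\sumi |u_i-\bar u|^{2+\epsilon}\Bigr)^{2/(2+\epsilon)} \le (Nc_0)^{2/(2+\epsilon)} = o(N),
\]
while $\sumi(u_i-\bar u)^2 = (N-1)S^2_u$ stays of exact order $N$ by the assumed finite limit of $S^2_u$, the ratio is $O(N^{-\epsilon/(2+\epsilon)})\to 0$; the identical bound handles $v$. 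Combined with the exact variance $N^{-1}S^2_u S^2_v\{1+o(1)\}$ from part (a) and Slutsky, this delivers the stated $\mN(0,S^2_uS^2_v)$ limit.

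The main obstacle is the permutation CLT itself, which I plan to cite rather than reprove. If a self-contained argument were desired, the Hájek projection method is the natural route: decompose $\sumi u_i v_{\pi(i)}$ into its $L^2$ projection onto the linear span of sums of independent functionals of $\pi$, show the remainder is $\opz$ after rescaling by $\rtn$ via a second-moment calculation built from the covariance formulas of part (a), and apply Lindeberg's theorem to the projected sum -- its Lindeberg condition reducing once again to the uniform $(2+\epsilon)$-moment bound. Given the clean statement available in \citet{hajek1961some}, direct citation is the more efficient path.
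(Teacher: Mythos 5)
Your part (a) is the standard sampling-without-replacement moment calculation and is correct; the paper omits any proof, simply attributing the whole lemma to \citet[][Theorem 4.1]{hajek1961some}, so citing a combinatorial central limit theorem for part (b) is also exactly the intended route. The genuine problem is the sufficient condition you verify before invoking that theorem. For a bilinear permutation statistic $\sumi u_i v_{\pi(i)}$, the two \emph{marginal} Noether (``no outlier'') conditions --- $\max_i (u_i-\bar u)^2/\sumi(u_i-\bar u)^2\to 0$ together with the same for $v$ --- are \emph{not} sufficient for asymptotic normality. Take $u=v$ with $u_i=N^{1/4}$ for $i\leq N^{1/2}$ and $u_i=0$ otherwise: then $S_u^2\to 1$ and both Noether ratios are of order $N^{-1/2}$, yet $\rtn(N^{-1}u^\T v_\pi-\bar u\bar v)=H-1$, where $H=|\{i\leq N^{1/2}:\pi(i)\leq N^{1/2}\}|$ is a hypergeometric overlap count converging to Poisson$(1)$, not a normal. (This example violates the lemma's moment hypothesis, since $N^{-1}\sumi|u_i-\bar u|^{2+\epsilon}\asymp N^{\epsilon/4}$, so it does not contradict the lemma itself --- but it shows the condition you check cannot carry the conclusion.) What Hoeffding and H\'ajek actually require is a \emph{joint} condition on the pair $(u,v)$, namely a Lindeberg- or Lyapunov-type condition on the products $(u_i-\bar u)(v_j-\bar v)$.

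The repair is short, because your hypotheses do imply the correct joint condition --- you just have to use the $(2+\epsilon)$ bound directly rather than only through the maximum. The Lyapunov-type form of the combinatorial CLT with $r=2+\epsilon$ requires
\[
\frac{\bigl\{N^{-1}\sumi|u_i-\bar u|^{r}\bigr\}\bigl\{N^{-1}\sumi|v_i-\bar v|^{r}\bigr\}}{N^{(r-2)/2}\,\bigl(S_u^2S_v^2\bigr)^{r/2}}\;\to\;0,
\]
and under hypothesis (ii) the numerator is at most $c_0^2$ while the denominator is of exact order $N^{\epsilon/2}$ whenever $\liminf_{N\to\infty} S_u^2S_v^2>0$. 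When instead $S_u^2S_v^2\to 0$, the stated limit is the degenerate $\mN(0,0)$ and follows from part (a) and Chebyshev's inequality, so that case should be split off rather than passed to the CLT. With these two changes part (b) is complete; the final Slutsky step combining the exact variance from part (a) with the standardized limit is fine as written.
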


\subsection{A finite-population strong law of large numbers}

Based on \citet[][Lemma S1]{LassoTE16}, \citet[][Lemma A3]{wuanding2020jasa} proved a finite-population strong law of large numbers under simple random sampling. 
We further improve it to allow for rejective sampling in the sense of \citet{fuller}, which includes simple random sampling as a special case with $a = \infty$ below. 
This new finite-population strong law of large numbers in Lemma \ref{slln_rem} is useful for analyzing both complete randomization and ReM. 
Condition \ref{asym} ensures the sequences of $\{Y_i(z)\}_{i=1}^N$, $(x_{ij})_{i=1}^N$, and $\{x_{ij} Y_i(z)\}_{i=1}^N$  satisfy the condition required by Lemma \ref{slln_rem} for all $z = 0,1$ and $j = \ot{J}$.

\begin{lemma} \label{slln_rem} 
Let $(W_i, x_i)_{i=1}^N$ be a sequence of finite populations with means $\bar W = N^{-1}\sumi W_i$ and variances $ S^2_W  = (N-1)^{-1}\sumN (W_i - \bar W)^2$ for $N=\ot{\infty}$.  
Let $\mathcal{I}\subset \{\ot{N}\}$ be a random  sample under rejective sampling, in the sense that we start with $\mathcal{I}$ as a simple random sample yet only accept it if $\htx =|\mathcal{I}|^{-1}\sum_{i\in \mathcal{I}} x_i - (N-|\mathcal{I}|)^{-1}\sum_{i\not\in \mathcal{I}} x_i$ satisfies $ \htx ^\T \{\cov(\htx )\}^{-1}\htx   < a $.
Let  $\hat W_\mathcal{I}= |\mathcal{I}|^{-1}\sum_{i\in \mathcal{I}} W_i$ and 
$\hs^2_\mathcal{I}=(|\mathcal{I}|-1)^{-1}\sum_{i\in\mathcal{I}} (W_i-\hat W_\mathcal{I})^2$ be the sample mean and variance, respectively, and denote by $\ma$  the event of $ \htx ^\T \{\cov(\htx )\}^{-1}\htx   < a $. 

Assume as $N\to \infty$,  
(i) $\bar W$ and  $S^2_W$ have finite limits,   
(ii) there exists a $ c_0 < \infty$ independent of $N$ such that $N^{-1}\sum_{i=1}^N W^4_i \leq c_0$, 
(iii) $\lim_{N\to\infty}|\mathcal{I}|/N >0$, and (iv) $\lim_{N\to\infty} \prob(\mM) = r > 0$. 
We have $
\hat W_\mathcal{I} - \bar W = \oo $ and $
\hs^2_\mathcal{I} - S^2_W = \oo $ for almost all sequences of $\mathcal{I}$.
\end{lemma}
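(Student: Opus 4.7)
The plan is to reduce rejective sampling to simple random sampling via a conditioning identity and then invoke the existing SRS version of the strong law.

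First I would observe that the rejective distribution $P_N^{\mathrm{rej}}$ of $\mathcal{I}$ coincides with the simple random sampling distribution $P_N^{\mathrm{srs}}$ conditioned on the acceptance event $\mathcal{A}_N = \{\htx^\T[\cov(\htx)]^{-1}\htx < a\}$. Hence for any sample-measurable event $E$,
\[
P_N^{\mathrm{rej}}(E) \;=\; \frac{P_N^{\mathrm{srs}}(E \cap \mathcal{A}_N)}{P_N^{\mathrm{srs}}(\mathcal{A}_N)} \;\leq\; \frac{P_N^{\mathrm{srs}}(E)}{P_N^{\mathrm{srs}}(\mathcal{A}_N)},
\]
and assumption (iv) gives $P_N^{\mathrm{srs}}(\mathcal{A}_N) \geq r/2$ for all $N$ beyond some threshold $N_0$.

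Second, I would invoke the SRS version of the strong LLN (Lemma A3 of \citet{wuanding2020jasa} and Lemma S1 of \citet{LassoTE16}), which under assumptions (i)--(iii) yields summable tail probabilities: for every $\epsilon > 0$, $\sum_N P_N^{\mathrm{srs}}(|\hat W_\mathcal{I} - \bar W| > \epsilon) < \infty$ and $\sum_N P_N^{\mathrm{srs}}(|\hat S^2_\mathcal{I} - S^2_W| > \epsilon) < \infty$. Combining this with the conditioning bound gives
\[
\sum_{N \geq N_0} P_N^{\mathrm{rej}}\!\left(|\hat W_\mathcal{I} - \bar W| > \epsilon\right) \;\leq\; \frac{2}{r} \sum_{N \geq N_0} P_N^{\mathrm{srs}}\!\left(|\hat W_\mathcal{I} - \bar W| > \epsilon\right) \;<\; \infty,
\]
and analogously for $\hat S^2_\mathcal{I} - S^2_W$. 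A Borel-Cantelli argument on the product measure of rejective draws then delivers both claimed $o(1)$ statements almost surely.

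The main obstacle will be verifying that the cited SRS result indeed supplies summable, not merely vanishing, probabilities, especially for $\hat S^2_\mathcal{I}$. For $\hat W_\mathcal{I}$ the bound $E^{\mathrm{srs}}[(\hat W_\mathcal{I} - \bar W)^4] = O(N^{-2})$ is standard under the uniform fourth-moment assumption on $W_i$ and yields summability by Markov. For $\hat S^2_\mathcal{I}$ the naive variance bound is only $O(N^{-1})$, so I would decompose $\hat S^2_\mathcal{I}$ into sample means of $W_i$ and $W_i^2$ via $\hat S^2_\mathcal{I} = |\mathcal{I}|(|\mathcal{I}|-1)^{-1}(|\mathcal{I}|^{-1}\sum_{i\in\mathcal{I}} W_i^2 - \hat W_\mathcal{I}^2)$ and carry out a finer fourth-moment calculation, exploiting the bound on $N^{-1}\sum W_i^4$ to control the cross terms and upgrade the rate to $O(N^{-2})$. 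If the cited SRS proof already covers this, I would simply invoke it; otherwise this is the step where the real technical work lies.
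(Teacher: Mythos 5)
Your proposal matches the paper's proof essentially step for step: both identify rejective sampling with simple random sampling conditioned on $\mathcal{A}$, bound the conditional tail probability by $\{P(\mathcal{A})\}^{-1}$ times the unconditional one, use assumption (iv) to get $P(\mathcal{A})\geq r/2$ for all large $N$, and conclude via Borel--Cantelli. Your worry about summability is moot---the paper imports the exponential (hence summable) tail bound $\exp\{-|\mathcal{I}|^2t^2/(4NS_W^2)\}$ from Lemma A2 of \citet{wuanding2020jasa} and disposes of $\hat S^2_{\mathcal{I}}$ by the same reasoning as its Lemma A3, so your fourth-moment fallback (which for the variance term would effectively demand eighth moments of $W_i$ rather than the assumed fourth) is never needed.
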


\begin{proof}[Proof of Lemma \ref{slln_rem}]
The probability measure under rejective sampling is equivalent to the probability measure under simple random sampling conditioning on $\ma$. 
We take $\mathcal{I}\subset \{\ot{N}\}$ as a simple random sample of size $|\mathcal I|$ throughout the proof and reflect the rejective sampling via conditioning on $\ma$. 
We proceed by verifying that there exists an $ n_0 $ such that for all  $N >  n_0 $, 
\beginy\label{slln_rem_g}
\max\big\{ \prob( \hat W_\mathcal{I} - \bar W \geq t \mid \ma), \ \prob(  \hat W_\mathcal{I} - \bar W \leq -t \mid \ma)\big\} \leq
2 r^{-1} \exp \left( - \frac{|\mathcal I|^2t^2}{4N S_W^2 }\right)\ \text{for all $t\geq 0$}.
\endy
The result then follows from the Borel--Cantelli lemma via identical reasoning as in the proof of \citet[][Lemma A3]{wuanding2020jasa}.

Let $p_\ma =\prob(\mM)$ for notational simplicity, and let $\mMc$ be the complement of $\mM$. 
The law of total probability ensures
$
\prob(\hat W_\mathcal{I} -\bar W\geq t  ) 
 \geq p_\ma\cdot \prob(\hat W_\mathcal{I} -\bar W\geq t \mid \mM)  
$
and 
$
\prob(\hat W_\mathcal{I} -\bar W\leq t  ) 
 \geq p_\ma\cdot \prob(\hat W_\mathcal{I} -\bar W\leq t \mid \mM) 
$ for all $t\geq 0$, 
such that
\begina
&& \max\big\{ \prob( \hat W_\mathcal{I}  - \bar W \geq t \mid \mM), \ \prob(  \hat W_\mathcal{I}  - \bar W \leq -t \mid \mM)\big\}\\
&&\quad\quad \leq   p_\ma^{-1} \max\big\{ \prob( \hat W_\mathcal{I}  - \bar W \geq t ), \ \prob(  \hat W_\mathcal{I}  - \bar W \leq -t )\big\} \leq
  p_\ma^{-1} \exp \left( - \frac{|\mathcal I|^2t^2}{4 N S_W^2 }\right)\quad\text{for all $t\geq 0$}
  \enda
by \citet[][Lemma A2]{wuanding2020jasa}. The sufficient condition in \eqref{slln_rem_g} then follows from $\limN p_\ma=  r$ such that there exists an $ n_0 $ with
$ p_\ma \geq 2^{-1} r$ for all $N \geq  n_0 $.

\end{proof}

%

\section{Finite-population inference under complete randomization}\label{sec:app_fp}

%
\subsection{Core lemmas}\label{sec:lemma}

The following lemma gives some useful facts about \citet{Fisher35}'s analysis of covariance, i.e.,  the {\ols} fit of $Y$ on $(\on,Z, X)$.
\begin{lemma}\label{alg_f}
{\dgp} the coefficients from the {\ols} fit of $Y$ on $(\on,Z, X)$ are
\begina
\hat \mu_\fisher = \hY - p_1 \htf, \quad  \htF =  \frac{Z^\T(I-H)Y}{Z^\T(I-H)Z}= \htN -\hgf^\T\htx, \quad \hgf = (X^\T X)^{-1} X^\T Y - N p_1p_0\htf (X^\T X)^{-1} \htx,
\enda
respectively, with residuals $\epfi = Y_i -\hY - (Z_i - p_1) \htf - x_i^\T\hgf$ for $i = \ot{N}$ and standard errors
\begina
\ese^2_\fisher = \frac{1}{N-2-J}\left\{ \frac{Y^\T(I-H)Y  }{ Z^\T(I-H)Z} - \htF^2\right\}, \quad 
\tse^2_\fisher = \frac{\eta^\T \dd  \eta}{\{Z^\T(I-H)Z\}^2},
\enda
where $\eta = (I-H) Y -  (I-H)Z \htF$ and $\delta_i$ are the residuals from the \textsc{ols} fit  of  $Z$ on $(\on, X)$. 

For a sequence of $Z$ that ensures $\htx = o(1)$, $\htau_* - \tau = o(1)$ for $* = \neyman, \fisher$, $N^{-1}\sumi \|x_i\|_4^4 = O(1)$, and $N^{-1} \sumi \epfi^4 = O(1)$ as $N$ goes to infinity,  we have
\begina
N \ese_\fisher^2 - \left( p_0^{-1} \hat  S_{\fisher(1)}^2 +  p_1^{-1}  \hat  S_{\fisher(0)}^2 \right) =  o(1),  \quad\quad
N\tse_\fisher^2 - \left( p_1^{-1}\hs_{\fisher(1)}^2 + p_0^{-1}\hs_{\fisher(0)}^2\right) = o(1).
\enda 
\end{lemma}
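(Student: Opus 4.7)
The plan is to split the proof into two blocks: (i) the exact algebraic identities for $\hat\mu_\fisher$, $\htf$, $\hgf$, the residual formula, and the two standard errors, and (ii) the asymptotic approximations for $N\ese_\fisher^2$ and $N\tse_\fisher^2$.

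For the algebraic block, the central tool is \textsc{fwl}. Let $H$ project onto the column space of $(\on, X)$ and recall from \eqref{H} that $\delta = (I-H)Z$ satisfies $\delta^\T \delta = Z^\T(I-H)Z = N(p_1 p_0 - \lambda^{-1} p_1^2 p_0^2 \|\htx\|_2^2)$. By \textsc{fwl}, $\htf$ equals the univariate \textsc{ols} coefficient from regressing $(I-H)Y$ on $(I-H)Z$, which by Lemma~\ref{lmz} gives $\htf = Z^\T(I-H)Y / Z^\T(I-H)Z$. The residual vector from the full fit equals $\eta = (I-H)Y - (I-H)Z\htf$, from which $\ese_\fisher^2 = \|\eta\|_2^2 / \{(N-2-J) Z^\T(I-H)Z\}$ and the robust form $\tse_\fisher^2 = \eta^\T \diag(\delta_i^2) \eta / (Z^\T(I-H)Z)^2$ follow by standard least squares theory coupled with \textsc{fwl} for heteroskedasticity-robust variances. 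The formula for $\hat\mu_\fisher$ is the intercept \textsc{foc} with $\bar x = 0_J$ and $\bar Z = p_1$. For $\hgf$, the normal equation $X^\T Y = X^\T Z \, \htf + X^\T X\, \hgf$ (using $X^\T \on = 0$) combined with $X^\T Z = \sum_{i:Z_i=1} x_i = N_1\hat x(1) = Np_1p_0 \htx$ yields the stated expression. Substituting this into $\htf = \htn - \hgf^\T \htx$ (Proposition~\ref{FLRtoN}) closes the algebraic block.

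For the asymptotic block, the key observation is that $\hat\epsilon_\fisher(z) = 0$ for $z = 0,1$. Indeed, writing $\epfi = Y_i - \hat Y - (Z_i - p_1)\htf - x_i^\T \hgf$, averaging over units with $Z_i = 1$ gives $\hat\epsilon_\fisher(1) = \hat Y(1) - \hat Y - p_0 \htf - \hat x(1)^\T \hgf = p_0\htn - p_0 \htf - p_0 \htx^\T \hgf = p_0(\htn - \htf - \hgf^\T \htx) = 0$ by Proposition~\ref{FLRtoN}; an analogous calculation handles $z=0$. It then follows that $\|\eta\|_2^2 = \sumi \epfi^2 = (N_1-1)\hsfzp\big|_{z=1} + (N_0-1)\hsfzp\big|_{z=0}$, i.e., $\|\eta\|_2^2 = (N_1-1)\hat S^2_{\fisher(1)} + (N_0-1)\hat S^2_{\fisher(0)}$, with $\hat S^2_{\fisher(z)}$ as defined in \eqref{syse}. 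Combined with $Z^\T(I-H)Z/N = p_1 p_0 + o(1)$ under $\htx = o(1)$, this immediately yields the stated expansion of $N\ese_\fisher^2$.

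For $N\tse_\fisher^2$, substitute the explicit $\delta_i = (Z_i - p_1) - \lambda^{-1}p_1 p_0 x_i^\T \htx$, so that $\delta_i^2 = p_0^2 - 2\lambda^{-1}p_0^2 p_1 x_i^\T \htx + \lambda^{-2}p_1^2 p_0^2 (x_i^\T \htx)^2$ on $\{Z_i = 1\}$ and $\delta_i^2 = p_1^2 + 2\lambda^{-1}p_0 p_1^2 x_i^\T \htx + \lambda^{-2}p_1^2 p_0^2 (x_i^\T \htx)^2$ on $\{Z_i = 0\}$. Therefore $\eta^\T \diag(\delta_i^2)\eta = p_0^2 \sum_{i:Z_i=1}\epfi^2 + p_1^2 \sum_{i:Z_i=0}\epfi^2 + R_N$, where $R_N$ is a sum of terms of the form $\sumi c_i x_i^\T \htx \cdot \epfi^2$ or $\sumi c_i (x_i^\T \htx)^2 \epfi^2$. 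These are controlled by Cauchy--Schwarz using the hypotheses $N^{-1}\sumi \|x_i\|_4^4 = O(1)$ and $N^{-1}\sumi \epfi^4 = O(1)$, together with $\|\htx\|_2 = o(1)$, to give $R_N = o(N)$. Dividing by $(Z^\T(I-H)Z)^2 = N^2 p_1^2 p_0^2 (1 + o(1))$ and multiplying by $N$ produces the claimed expansion of $N\tse_\fisher^2$.

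The routine parts are the \textsc{fwl}-based derivations and the within-group decomposition of $\|\eta\|_2^2$ once $\hat\epsilon_\fisher(z) = 0$ is established. The main obstacle is the bookkeeping for the cross-terms in $R_N$: we need to argue that each of the several $\htx$-dependent sums is $o(N)$ uniformly in $N$, and for this the moment assumption on $\epfi$ (which is itself a function of $Z$ and data) must be used carefully rather than just invoking the moment assumptions on the original potential outcomes. This is also the reason the lemma explicitly includes $N^{-1}\sumi \epfi^4 = O(1)$ as a standing condition rather than deriving it in place.
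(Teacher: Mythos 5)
Your proposal is correct and follows essentially the paper's own route: the algebraic block is the same \textsc{fwl}-plus-normal-equations argument, and the asymptotic block rests on the same facts — $\hat\ep_\fisher(z)=0$ within each arm (the paper invokes $\hat\ep_\fisher(1)-\hat\ep_\fisher(0)=\htn-\htf-\htx^\T\hgf$, which is exactly your identity), the group-wise decomposition of $\sumi \delta_i^2\epfi^2$, and Cauchy--Schwarz control of the $\htx$-dependent remainders via $N^{-1}\sumi\|x_i\|_4^4=O(1)$ and $N^{-1}\sumi\epfi^4=O(1)$. The only cosmetic difference is that the paper organizes the robust-variance computation through the full sandwich matrix $(\chi^\T\chi)^{-1}(\chi^\T\Delta\chi)(\chi^\T\chi)^{-1}$ with $\chi=(\on,Z,X)$ and extracts the $(2,2)$ entry, whereas you expand $\delta_i^2$ scalar-wise in the \textsc{fwl}-reduced form; both reduce to the same leading term $(p_0p_1)^{-2}N^{-1}(Z-p_1\on)^\T\Delta(Z-p_1\on)$.
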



\begin{proof}[Proof of Lemma \ref{alg_f}]
First, let $u = e = (I-H)Y$ and $v =\delta = (I-H)Z $ in Lemma \ref{lmz} to see
\begina
\htau_0 =   \frac{\delta^\T e}{\dsq} ,\quad  \ese_0^2 = \frac{1}{N-1}\left( \frac{\esq  }{ \dsq} - \htF^2\right),   
\quad \tse_0^2 = \frac{\eta_0^\T \dd  \eta_0}{(\dsq)^2},  
\enda
where $\eta_0 = e-\delta\htau_0$. 
The numerical result for $\htf$, $\sef$, and $\tsef$ follows from $\htF = \htau_0$, $(N-2-J)\ese _\fisher^2  = (N-1)\ese_0^2$, and $\tse_\fisher^2=\tse_0^2$ by \textsc{fwl}. 

Second, let $\chi = (\on, Z, X)$ be the design matrix.
That $N^{-1}X^\T Z =  p_1p_0\htx$ by \eqref{eq::simple-mean-zero-x} ensures
\beginy\label{alg_chif}
N^{-1} \chi^\T\chi = \left(
\begin{array}{ccc}
1& p_1 & 0_J^\T\\
p_1 & p_1 &p_1p_0\htx^\T\\
0_J& p_1p_0\htx& \lambda \sxx
\end{array}\right), \quad \text{with} \quad  \left(
\begin{array}{cc}
1& p_1\\
p_1 & p_1 
\end{array}\right)^{-1} =  p_0 ^{-1}\left(
\begin{array}{cc}
1& - 1\\
- 1 &p_1^{-1} 
\end{array}\right),
\endy
such that 
\begina
N^{-1}\chi^\T \chi \left(
\begin{array}{c}
\hmu_\fisher\\
\htF\\
\hgf
\end{array}\right) = N^{-1} \chi^\T Y 
\quad \Longleftrightarrow\quad 
 \left(
\begin{array}{ccc}
1& p_1 & 0_J^\T\\
p_1 & p_1 & p_1p_0\htx^\T\\
0_J&p_1p_0\htx& \lambda \sxx
\end{array}\right)
\left(
\begin{array}{c}
\hmu_\fisher \\
\htF\\
\hgf
\end{array}\right) =  
\left(
\begin{array}{c}
\hY\\
p_1 \hY  (1) \\
\lambda \qxy
\end{array}\right).
\enda
Directly comparing the rows verifies $
\hat \mu_\fisher = \hY - p_1 \htf$, $ \htF = \htN -\hgf^\T\htx$, and $\hgf =  (X^\T X)^{-1} X^\T Y - \lambda ^{-1}p_1p_0\htf (S_x^2)^{-1} \htx$. 
The  expression of $\epfi$ then follows.

Third, $N^{-1}\sumi \epfi^4 = O(1)$ ensures $\hsfz = O(1)$ for $z=0,1$. 
With $\ese_\fisher^2$ being the $(2,2)$th element of 
\begina
 \frac{\sumi \ep_{\fisher,i}^2}{N-J-2} (\chi^\T \chi)^{-1},
\enda
the limit of $ \ese_\fisher^2 $ follows from 
\begina
(N-1)^{-1}\sumi \ep_{\fisher,i}^2 =  \frac{N_1-1}{N-1} \hs^2_{\fisher(1)} + \frac{N_0-1}{N-1} \hs^2_{\fisher(0)} +  \frac{N}{N-1} p_0p_1 \{\hat\ep_{\fisher}(1) - \hat\ep_{\fisher}(0)\}^2 = p_1  \hs^2_{\fisher(1)}  + p_0 \hs^2_{\fisher(0)} + o(1)
\enda
and 
\begina  
  (N^{-1}\chi^\T \chi)^{-1} =
 \left(
\begin{array}{ccc}
p_0^{-1}& - p_0^{-1} & 0_J^\T\\
- p_0^{-1} & p_0^{-1}p_1^{-1} & 0_J^\T \\
0_J&0_J &   (\sxx)^{-1}
\end{array}\right)
+o(1).
\enda
Further let  $\Delta = \diag(\epfi^2)$. The robust covariance estimator is 
$
\tilde V_\fisher =  ( \chi^\T \chi)^{-1}( \chi^\T \Delta \chi) ( \chi^\T \chi)^{-1}
$
with $\tse_\fisher^2$ as the $(2,2)$th element. 
That $N^{-1}\sumi \|x_i\|_4^4 = O(1)$ and $N^{-1} \sumi \epfi^4 = O(1)$ ensures 
\begina
N^{-1}\chi^\T \Delta \chi =N^{-1}
\left(\begin{array}{ccc}
\on ^\T \Delta \on  & \on ^\T \Delta Z &\on ^\T \Delta X \\
Z^\T \Delta \on  & Z^\T \Delta Z &Z^\T \Delta X \\
X^\T \Delta \on  & X^\T \Delta Z &X^\T \Delta X
\end{array}\right)  = O(1)
\enda
with 
(i) $ 0 <    \ont \Delta Z =   Z^\T \Delta Z  \leq    \on ^\T \Delta \on  \leq (N\sumi \epfi^4)^{1/2} = O(N)$, 
(ii) $X^\T\Delta X =\sumi \epfi^2 x_i x_i^\T= O(N)$,  
and (iii) 
$\ont \Delta X \leq  (\ont \Delta \on )^{1/2} (X^\T\Delta X)^{1/2}$ and $Z^\T \Delta X \leq  (Z^\T \Delta Z )^{1/2} (X^\T\Delta X)^{1/2}$. 
This, together with $\htx  =o(1)$ in \eqref{alg_chif}, ensures
\begina
N \tilde V_\fisher = p_0^{-1} \left(\begin{array}{ccc}
1& - 1 & 0_J^\T \\
- 1 &p_1^{-1} & 0_J^\T\\
 0_J& 0_J & p_0 S_x^2
\end{array}\right) N^{-1}
\left(\begin{array}{ccc}
\on ^\T \Delta \on  & \on ^\T \Delta Z &\on ^\T \Delta X \\
Z^\T \Delta \on  & Z^\T \Delta Z &Z^\T \Delta X \\
X^\T \Delta \on  & X^\T \Delta Z &X^\T \Delta X
\end{array}\right)
p_0^{-1}\left(\begin{array}{ccc}
1& - 1 & 0_J^\T \\
- 1 &p_1^{-1} & 0_J^\T\\
 0_J& 0_J & p_0 S_x^2
\end{array}\right) + o(1)
\enda
with the $(2,2)$th element as
\begina
N\tse_\fisher^2 
&=& p_0^{-2}N^{-1} (-1,p_1^{-1}) \left(\begin{array}{cccc}
\on ^\T \Delta \on  & \on ^\T \Delta Z \\
Z^\T \Delta \on  & Z^\T \Delta Z 
\end{array}\right) \left(\begin{array}{c}
-1\\ p_1^{-1}
\end{array}\right) + o(1) \\
&=& p_0^{-2} N^{-1}\left(\on ^\T\Delta \on  - p_1^{-1}Z^\T\Delta \on  - p_1^{-1} \on ^\T\Delta Z + p_1^{-2} Z^\T\Delta Z\right) + o(1)\\
&=& (p_0p_1)^{-2}   N^{-1}(Z - p_1 \on )^\T\Delta (Z - p_1 \on ) + o(1)\\
&=& p_1^{-1} \hs^2_{\fisher(1)} + p_0^{-1} \hs^2_{\fisher(0)} +o(1).
\enda
\end{proof}

 \begin{lemma}\label{as_Z}
Assume Condition \ref{asym} and complete randomization.
As $N\to \infty$, the following results hold $\asz$: 
\begin{enumerate}[(a)]
\item\label{as_1} $
\hY(z) - \bY(z) = \oo, \ \  \hs^2_z - S^2_z = \oo, \ \  \hat x(z) = \oo, \ \ \hs^2_{x(z)} - S_x^2 = \oo, \ \ \hs_{xY(z)} - S_{xY(z)} = o(1)$  for $z = 0,1$.
\item\label{as_2} 
The sequence of finite populations $(Y_i, Y_i, x_i)_{i=1}^N$ satisfies Condition \ref{asym} with $N^{-1}\sumi Y_i^4 = O(1)$ and $
\hY - \{ p_1   \bar{Y} (1) + p_0    \bar{Y} (0) \} =\oo $,  $\hat S^2 - S^2 - p_1p_0\tau^2 = \oo $, $\hgr  - \gamma =o(1)$, $\hat  S_e^2- S_a^2- p_1p_0\tau^2 = \oo$ giving the analogs of $\bar Y(z)$, $S_z^2$,  $\gamma_z$, and both $S_{a(z)}^2$ and $S_{b(z)}^2$ for $z=0,1$, respectively.

\item\label{as_3}  $
\hat e = 0, \quad \hat  S_e^2 =  S_a^2+ p_1p_0\tau^2+o(1), \quad N^{-1}\sumi e_i^4 = O(1)$.
\item\label{as_4} $\hat \delta = N^{-1}\sumi \delta_i = 0, \quad \hat S_\delta^2 = (N-1)^{-1}\sumi (\delta_i - \hat\delta)^2 = p_1p_0 + \oo, \quad N^{-1}\sumi \delta_i^4 = O(1)$. 
\item\label{as_f} $\hat\mu_\fisher - \hY = o (1), \quad \htf - \tau = o(1), \quad \hgf - \gamma=o(1),$ with \\
$
\hat\ep_\fisher = N^{-1}\sumi \epfi = 0, \quad \hat  S_\fisher^2 = (N-1)^{-1} \sumi\epfi ^2
= S_a^2 + o(1), \quad N^{-1}\sumi \epfi^4 = O(1).$ 
\end{enumerate}

\end{lemma}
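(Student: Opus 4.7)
The plan is to reduce every statement to repeated applications of the finite-population strong law of large numbers (Lemma \ref{slln_rem}), which under complete randomization applies with $a=\infty$ so that $\mathcal{I}=\{i:Z_i=1\}$ is a simple random sample of $\{1,\ldots,N\}$ of size $N_1$ (and symmetrically for $\{i:Z_i=0\}$). The only nontrivial work is to choose, for each estimator that appears in the statement, an appropriate finite-population sequence $W_i$ of the potential outcomes and covariates so that the sample quantity can be written as $\hat W_\mathcal{I}$ or a smooth function of a finite collection of such $\hat W_\mathcal{I}$'s. Condition \ref{asym}(ii)--(iii) supplies finite second and fourth moments of each $W_i$ uniformly in $N$, which is exactly what Lemma \ref{slln_rem} requires.

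For part \eqref{as_1} I would take $W_i=Y_i(z)$, $W_i=x_{ij}$, $W_i=x_{ij}x_{ik}$, $W_i=Y_i(z)^2$, and $W_i=x_{ij}Y_i(z)$ respectively, and apply Lemma \ref{slln_rem} to the sample $\{i:Z_i=z\}$ for $z=0,1$. The fourth-moment bounds in Condition \ref{asym}(iii) make the hypothesis of Lemma \ref{slln_rem} trivial to verify; using $(a+b)^2\le 2a^2+2b^2$ handles the cross products $x_{ij}x_{ik}$ and $x_{ij}Y_i(z)$. Centered covariates ($\bar x=0_J$) and $S_x^2=I_J$ produce the stated limits. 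For part \eqref{as_2} I would first decompose the observed quantities via $Y_i=Z_iY_i(1)+(1-Z_i)Y_i(0)$:
\[
\hat Y=p_1\hat Y(1)+p_0\hat Y(0),\qquad
\hat S_{xY}=\tfrac{N_1-1}{N-1}\hat S_{xY(1)}+\tfrac{N_0-1}{N-1}\hat S_{xY(0)}+\tfrac{N_1N_0}{N(N-1)}\{\hat x(1)-\hat x(0)\}\{\hat Y(1)-\hat Y(0)\}^\T,
\]
and similarly combine $\hat S_1^2,\hat S_0^2,\hat\tau_\neyman^2$ through \eqref{as_numeric}. Each piece has a known $P_Z$-a.s.\ limit by part \eqref{as_1}, so taking limits gives $\hat Y\to p_1\bar Y(1)+p_0\bar Y(0)$, $\hat S^2\to S^2+p_1p_0\tau^2$, and $\hat\gamma_\rosenbaum=\hat S_{xY}\to p_1S_{xY(1)}+p_0S_{xY(0)}=\gamma$. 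Then $\hat S_e^2=\hat S^2-\|\hat\gamma_\rosenbaum\|_2^2$ from \eqref{algebra_r} combined with \eqref{Sab} yields the claimed limit $S_a^2+p_1p_0\tau^2$. The uniform fourth-moment bound $N^{-1}\sum Y_i^4\le c_0$ is immediate from $Y_i^4\le Y_i(1)^4+Y_i(0)^4$.

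Parts \eqref{as_3}--\eqref{as_f} reduce to algebra plus the already established a.s.\ limits. The identities $\hat e=0$, $\hat\delta=0$, $\hat\varepsilon_\fisher=0$ are normal-equation consequences, hence numerical. For $\hat S_e^2$, $\hat S_\delta^2$, $\hat S_\fisher^2$ I would substitute the closed forms: $\hat S_e^2=\hat S^2-\|\hat\gamma_\rosenbaum\|_2^2$ from \eqref{algebra_r}; $\hat S_\delta^2=\|\delta\|_2^2/(N-1)=N\{p_1p_0-\lambda^{-1}p_1^2p_0^2\hat\tau_x^\T\hat\tau_x\}/(N-1)\to p_1p_0$ via \eqref{H} and $\hat\tau_x=o(1)$; and for $\hat S_\fisher^2$, writing $\varepsilon_{\fisher,i}=Y_i-\hat Y-(Z_i-p_1)\hat\tau_\fisher-x_i^\T\hat\gamma_\fisher$ and substituting the limits $\hat Y\to p_1\bar Y(1)+p_0\bar Y(0)$, $\hat\tau_\fisher\to\tau$, $\hat\gamma_\fisher\to\gamma$ (the last two from Lemma \ref{alg_f} combined with $\hat\tau_x=o(1)$), the residual $\varepsilon_{\fisher,i}$ collapses to $a_i(z)$ on the respective treatment groups, so part \eqref{as_1} applied to $W_i=a_i(z)^2$ gives $\hat S_\fisher^2\to p_1S_{a(1)}^2+p_0S_{a(0)}^2=S_a^2$. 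The fourth-moment bounds $N^{-1}\sum e_i^4$, $N^{-1}\sum\delta_i^4$, $N^{-1}\sum\varepsilon_{\fisher,i}^4=O(1)$ follow from the $c_r$-inequality: $e_i^4\lesssim Y_i^4+\hat Y^4+(x_i^\T\hat\gamma_\rosenbaum)^4$ with $(x_i^\T\hat\gamma_\rosenbaum)^4\le\|x_i\|_2^4\|\hat\gamma_\rosenbaum\|_2^4\le J\|x_i\|_4^4\|\hat\gamma_\rosenbaum\|_2^4$, and analogously for $\delta_i^4$ and $\varepsilon_{\fisher,i}^4$. Averaging these is $O(1)$ using the uniform fourth-moment assumptions of Condition \ref{asym}(iii) and the just-established a.s.\ boundedness of $\hat Y$, $\hat\gamma_\rosenbaum$, $\hat\gamma_\fisher$, $\hat\tau_\fisher$, and $\hat\tau_x$.

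The main obstacle is bookkeeping: there is no single hard step, but one must be careful to apply Lemma \ref{slln_rem} only to sequences whose uniform fourth-moment bound is known a priori, and to exploit the algebra in \eqref{as_numeric}, \eqref{algebra_r}, \eqref{H}, \eqref{Sab}, and Lemma \ref{alg_f} to avoid re-deriving identities that already hold in finite samples. Handling the dependence on $Z$ in $\delta_i$ requires using $\hat\tau_x=o(1)$ $P_Z$-a.s.\ uniformly, which comes from the $J$ coordinate-wise applications of part \eqref{as_1}; this is the one place where it is easy to conflate ``for each coordinate a.s.'' with ``jointly a.s.'', but since $J$ is fixed, a finite intersection of probability-one events remains probability one.
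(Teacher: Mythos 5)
Your overall strategy is exactly the paper's: everything is reduced to the finite-population strong law (Lemma \ref{slln_rem}, with $a=\infty$ so that each treatment arm is a simple random sample) plus the finite-sample identities \eqref{as_numeric}, \eqref{algebra_r}, \eqref{H}, \eqref{Sab} and Lemma \ref{alg_f}. The decomposition of $\hat S_{xY}$ into within-arm covariances plus the $\htx\,\htn^\T$ cross term, the treatment of $\hat S_e^2=\hat S^2-\|\hgr\|_2^2$, the normal-equation identities for $\hat e$, $\hat\delta$, $\hat\ep_\fisher$, and the $c_r$-inequality bounds on the fourth moments of the residuals all match the paper's argument, as does your closing remark about finite intersections of probability-one events over the $J$ coordinates.

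There is, however, one recurring technical slip that would make several steps fail as literally written. Lemma \ref{slln_rem} requires $N^{-1}\sumi W_i^4\leq c_0$ for the chosen $W_i$. When you take $W_i=Y_i(z)^2$, $W_i=x_{ij}x_{ik}$, or (in part \eqref{as_f}) $W_i=a_i(z)^2$ and invoke the sample-mean conclusion, the hypothesis becomes a uniform bound on $N^{-1}\sumi Y_i(z)^8$, $N^{-1}\sumi x_{ij}^4x_{ik}^4$, or $N^{-1}\sumi a_i(z)^8$, none of which is implied by Condition \ref{asym}(iii); the inequality $(a+b)^2\leq 2a^2+2b^2$ does not rescue this, since $(x_{ij}x_{ik})^4\leq\tfrac{1}{2}(x_{ij}^8+x_{ik}^8)$ still involves eighth moments. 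The correct (and essentially cost-free) repair, which is what the paper does, is to use the \emph{sample-variance} conclusion of Lemma \ref{slln_rem} applied to the base variables $W_i=Y_i(z)$, $W_i=x_{ij}$, and $W_i=a_i(z)$ (whose fourth moments are assumed or follow from Condition \ref{asym}(iii) since $a_i(z)$ is a linear combination of $Y_i(z)$ and $x_i$ with bounded coefficients), together with polarization, e.g.\ $W_i=x_{ij}\pm x_{ik}$, for the off-diagonal entries of $\hs^2_{x(z)}$; the only product variable you may legitimately feed into the sample-mean conclusion is $w_i(z)=x_iY_i(z)$, whose fourth moments Condition \ref{asym}(iii) assumes directly. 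Relatedly, in part \eqref{as_f} the paper does not claim $\epfi$ ``collapses to'' $a_i(z)$; it controls $\hsfz$ through the exact expansion \eqref{syse} evaluated at $\hgf$ and then substitutes limits, which avoids having to quantify the pointwise discrepancy between $\epfi$ and $a_i(z)$. With these adjustments your argument is complete and coincides with the paper's.
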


Lemma \ref{as_Z} ensures it suffices to focus on the sequences of $Z$ that satisfy \eqref{as_1}--\eqref{as_f} when verifying  results on almost sure convergence  under $P_Z$.

\begin{proof}[Proof of Lemma \ref{as_Z}]
Recall from Condition \ref{asym} that $w_i(z)= (S_x^2)^{-1}x_iY_i(z) = x_iY_i(z)$ with mean and covariance matrix
$\bar w(z) = N^{-1}\sumi w_i(z)  = \lambda S_{xY(z)}$ and $S^2_{w(z)} =(N-1)^{-1} \sumi \{w_i(z)- \bar w(z)\}^2$. Its observed analog
$w_i = x_i Y_i$ has sample means and covariance matrices $\hat w(z) = N_z^{-1} \sum_{i: Z_i = z} w_i$ and $\hs^2_{w(z)} = (N_z-1)^{-1} \sum_{i: Z_i = z} \{w_i - \hat w(z)\}\{w_i - \hat w(z)\}^\T$ for $z = 0,1$.
Lemma \ref{slln_rem} ensures 
\beginy\label{slln_w}
\hat w(z)  - S_{xY(z)} = \ooas, \quad \hs^2_{w(z)} - S^2_{w(z)} = \ooas
\endy
under Condition \ref{asym}. 
The result for $\hY(z)$, $\hs_z^2$, $\hat x(z)$, and $\hs_{x(z)}^2$ in statement \eqref{as_1} follows from Lemma \ref{slln_rem} directly.
The result for $\hat S_{xY(z)}$ then follows from 
\begina
\hat S_{xY(z)} = (N_z-1)^{-1}\sum_{i:Z_i = z} \{x_i - \hat x(z)\} \{Y_i - \hat Y(z)\} = \frac{N_z}{N_z-1}\hat w(z) - \frac{N_z}{N_z-1} \hat x(z) \hat Y(z). 
\enda
This verifies statement \eqref{as_1} and \eqref{slln_w} hold $\asz$, such that it suffices to verify  statements \eqref{as_2}--\eqref{as_4} hold for $Z$'s that satisfy  statement \eqref{as_1} and \eqref{slln_w}. 
Fix one such sequence for the rest of proof.

For statement \eqref{as_2}, the correspondence between the analogs follows from definitions with the analogs of $a_i(z)$ and $b_i(z)$ given by  
$Y _i - \hY - x_i^\T \hgr = e_i$ ensured by \eqref{algebra_r}.
The limits of $\hY$ and $\hs^2$ follow from 
 statement \eqref{as_1} and \eqref{as_numeric}.
With $(w_i, w_i)$ as the analogs of $\{w_i(1), w_i(0)\}$ in the finite population $(Y_i, Y_i, x_i)_{i=1}^N$, the limits of $\hat w = N^{-1}\sumi w_i$ and $\hat S^2_w = (N-1)^{-1}\sumi (w_i - \hat w) (w_i - \hat w)^\T$ follow from 
\eqref{slln_w} and applying \eqref{as_numeric} entry-wise.
This in turn ensures $\hat S_{xY}$, as the analog of  $S_{xY(z)}$, satisfies $\hat S_{xY}=\hgr =  \lambda ^{-1}  \hat w = \gamma +\oo$ with $
\hat  S_e^2  = \hat S^2 - \|\hgr\|^2_2$ having finite positive limit $S^2 +p_1p_0\tau^2 - \|\gamma\|^2_2= S^2_a +p_1p_0\tau^2$ by \eqref{Sab} and  \eqref{algebra_r}. 
This verifies Condition \ref{asym}(ii). 
Further, $N^{-1}\sumN Y_i^4 \leq N^{-1}\sumN \{Y^4_i(1) + Y^4_i(0)\} \leq 2c_0$; likewise for $N^{-1}\sumN \|w_i\|_4^4 \leq 2c_0$.
This verifies Condition \ref{asym}(iii) and hence statement \eqref{as_2}. 

For statement \eqref{as_3}, that $\hat e = 0$ follows from \eqref{algebra_r} and the variance follows from statement \eqref{as_2}. Further, $\hgr = \gamma+o(1)$  from statement \eqref{as_2} ensures $\|\hgr\|_\infty = O(1)$ such that $(x_i^\T \hgr)^4 \leq c_1 \|x_i\|_4^4$ for some $c_1$ independent of $N$.
This, together with 
$\ei^4 = (\ei^2)^2 \leq \{ 2(Y _i - \hat Y)^2 + 2(x_i^\T \hgr)^2 \}^2 \leq 8 \{ (Y _i - \hat Y)^4 +  (x_i^\T \hgr)^4  \}$, ensures 
$
N^{-1}\sumi \ei^4  = O(1)$ 
and hence  statement \eqref{as_3}.

For statement \eqref{as_4}, the limit of $\hs^2_\delta$ follows from \eqref{H} and $\htx = 0_J+ \oo$ by  statement \eqref{as_1}. 
With $\delta_i = Z_i -p_1 - \lambda^{-1}p_1p_0 x_i^\T\htx$ from \eqref{H}, we have  
$
 \delta_i^4 \leq  8  \{  (Z_i -p_1)^4  +   (\lambda^{-1}p_1p_0 x_i^\T\htx)^4  \}
$  and hence $
N^{-1}\sumi \delta_i^4 = O(1) 
$ by the same reasoning as that for $N^{-1}\sumi e_i^4 = O(1)$ in statement \eqref{as_3}. 

Statement \eqref{as_f} follows from $N^{-1} \zt(I-H) Y= N^{-1}\sumi \delta_iY_i = p_1\hat Y(1) - p_1\hY - p_1p_0\hgr^\T\htx = p_1p_0\htn  - p_1p_0\hgr^\T\htx$  and $N^{-1} Z(I-H)Z =  p_1p_0 + o(1)$ by \eqref{H}. 
This ensures $\htf = \tau+o(1)$ by Lemma \ref{alg_f} and thus the result on $\hat\mu_\fisher$ and $\hgf$. 
Further, replace $Y_i$ with $\ep_{\fisher,i}$ in \eqref{as_numeric} to see
\begina
\hs_\fisher^2 = 
 \frac{N_1-1}{N-1} \hs_{\fisher(1)}^2 + \frac{N_0-1}{N-1} \hs_{\fisher(0)}^2 
 + \frac{N}{N-1} p_0p_1 \left\{ \hat\ep_{\fisher}(1) - \hat\ep_{\fisher}(0) \right\} ^2,
\enda
where $\hsfz = S_{a(z)}^2 + o(1) $ by \eqref{Sab} and \eqref{syse}, and $ \hat\ep_{\fisher}(1) - \hat\ep_{\fisher}(0) = \htn - \htf - \htx^\T\hgf = o(1) $ by $\epfi = Y_i -\hY - (Z_i-p_1) \htf - x_i^\T\hgf$ from Lemma \ref{alg_f}. We have  $\hat\ep_\fisher = 0$ and 
$
N^{-1}\sumi \epfi^4 \leq  27 \{ N^{-1}\sumi (Y_i -\hY)^4  + N^{-1}\sumi Z_i-p_1)^4 \htf ^4+  N^{-1}\sumi (x_i^\T \hgf)^4\} = O(1) 
$ by the Cauchy--Schwarz inequality. 
\end{proof}

Technically, the first strategy by \citet{CovAdjRosen02} takes $e$ as the fixed input for conducting {\frt} and thus has no counterpart for $\hgr$ under $Z_\pi$.
Nevertheless, the procedure is identical to one that takes  $(Y_i, x_i)_{i=1}^N$ as the fixed input, regresses $Y$ on $(\on, X)$ to generate $e^\pi = e$ and $\hgr^\pi = \hgr$ independent of $Z_\pi$, and then regresses $e^\pi=e$ on $(\on,Z_\pi)$ to generate $\htrp$, $\ese_\rosenbaum^\pi$, and $\tse_\rosenbaum^\pi$. 
This unifies the four procedures,  $*=\neyman, \rosenbaum, \fisher,\lin$, as all taking $(Y_i,  x_i)_{i=1}^N$ as the fixed input for conducting {\frt}. 
We take this perspective to simplify the presentation. 

\begin{lemma}\label{v_lim}
{\asym}
\begin{enumerate}[(a)]
\item\label{vl_1}$\hg_*-\gamma=\oo$  for $* = \rosenbaum,\fisher, \quad \hg_\lin -(p_0\gamma_1+ p_1\gamma_0) =\oo$,\\
$N \sen ^2  -   (p_1p_0)^{-1} S^2 =o(1),  \quad  N \tse_\neyman^2 - \left(p_1^{-1}  S_1^2 + p_0^{-1}  S_0^2 \right)=o(1),$\\
$N \ese_*^2  -   (p_1p_0)^{-1}  S_a^2  =o(1),  \quad N \tse_*^2 - (p_1^{-1}  S_{a(1)}^2+ p_0^{-1}  S_{a(0)}^2)=o(1) \quad \text{for $*=\rosenbaum,\fisher$}$,\\
$N \ese_\lin^2  -   (p_1p_0)^{-1}  S_b^2 =o(1), \quad N \tse_\lin^2 - (p_1^{-1}  S_{b(1)}^2 + p_0^{-1}  S_{b(0)}^2 ) =o(1)$

hold $\asz$, with $N\ese_*^2$ and $N\tse_*^2$ all having positive finite limits.
\item\label{vl_2} $\hg_*^\pi  -\gamma=\oo \quad \text{for $* = \rosenbaum, \fisher, \lin$}$, with $\hgr^\pi = \hgr$, \\
$N (\sen^2)^\pi   -   (p_1p_0)^{-1} S^2 - \tau^2 =\oo,  \quad N (\tsen^2)^\pi   -   (p_1p_0)^{-1} S^2  - \tau^2 =\oo$,\\
$N (\ese^2_*)^\pi -   (p_1p_0)^{-1}  S_a^2 - \tau^2  =\oo,  \quad N (\tse^2_*)^\pi  -   (p_1p_0)^{-1}  S_a^2  - \tau^2 =\oo \quad \text{ for $* = \rosenbaum,\fisher,\lin$}$

hold $\aszp$, with $N(\ese_*^2)^\pi$ and $N(\tse_*^2)^\pi$ all having positive finite limits.
\end{enumerate}
\end{lemma}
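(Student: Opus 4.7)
My plan is to prove part (a) by combining the algebraic expansions in \eqref{syse} and Proposition \ref{FLRtoN} with the deterministic convergences in Lemma \ref{as_Z}, and then to deduce part (b) as an immediate corollary of part (a) applied to the pseudo finite population $(Y_i, Y_i, x_i)_{i=1}^N$, which Lemma \ref{as_Z}\eqref{as_2} shows satisfies Condition \ref{asym} for $P_Z$-almost all sequences of $Z$.

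For the coefficient statements in (a), I first restrict to a sequence of $Z$ satisfying Lemma \ref{as_Z}\eqref{as_1}--\eqref{as_f}, after which the arguments become deterministic. Then $\hgr = \hat S_{xY} = \gamma + o(1)$ by \eqref{algebra_r} and Lemma \ref{as_Z}\eqref{as_2}; Proposition \ref{FLRtoN} together with $\htx = o(1)$ (Lemma \ref{as_Z}\eqref{as_1}) and $\htf = \tau + o(1)$ (Lemma \ref{as_Z}\eqref{as_f}) yields $\hgf = \hgr - \lambda^{-1} p_1 p_0 \htf (\sx)^{-1} \htx = \gamma + o(1)$; and each $\hglzz = (\hat S_{x(z)}^2)^{-1} \hat S_{xY(z)} \to \gamma_z$ by Lemma \ref{as_Z}\eqref{as_1}, so $\hg_\lin = p_0 \hglo + p_1 \hglz \to p_0 \gamma_1 + p_1 \gamma_0$.

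For the standard errors in (a), I expand the variance terms using \eqref{syse}: with $\hgr \to \gamma$, Lemma \ref{as_Z}\eqref{as_1}, and the identities in \eqref{Sab},
\begin{align*}
\hsrz = \hat S_z^2 + \hgr^\T \hat S_{x(z)}^2 \hgr - 2 \hgr^\T \hat S_{xY(z)} \to S_z^2 + \|\gamma\|_2^2 - 2 \gamma^\T \gamma_z = S_{a(z)}^2,
\end{align*}
and identically $\hsfz \to S_{a(z)}^2$ using $\hgf \to \gamma$, and $\hslz \to S_z^2 - \|\gamma_z\|_2^2 = S_{b(z)}^2$ using $\hglzz \to \gamma_z$. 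Substituting these into the formulas for $\ese_*^2$ and $\tse_*^2$ in \eqref{syse} (for $* = \neyman, \rosenbaum$) and Lemma \ref{alg_f} (for $* = \fisher$, with a parallel derivation for $* = \lin$), and absorbing $(N_z-1)/N_z, N/(N-2) = 1 + o(1)$, gives $N \ese_*^2 \to p_0^{-1} S_{*(1)}^2 + p_1^{-1} S_{*(0)}^2 = (p_1 p_0)^{-1} S_*^2$ and $N \tse_*^2 \to p_1^{-1} S_{*(1)}^2 + p_0^{-1} S_{*(0)}^2$, matching the claimed identities for all four choices of $*$.

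Part (b) follows from observing that, conditional on $(Y_i)_{i=1}^N$, the permuted quantities $\hg_*^\pi$, $(\ese_*^2)^\pi$, and $(\tse_*^2)^\pi$ are exactly the analogs of $\hg_*$, $\ese_*^2$, and $\tse_*^2$ computed on the pseudo finite population $(Y_i, Y_i, x_i)_{i=1}^N$ with treatment vector $Z_\pi$ drawn from complete randomization. By Lemma \ref{as_Z}\eqref{as_2}, this pseudo population satisfies Condition \ref{asym} $P_Z$-a.s., with its $S_z^2$-analog equal to $\hat S^2 \to S^2 + p_1 p_0 \tau^2$, both $\gamma_z$-analogs equal to $\hgr \to \gamma$, and both $S_{a(z)}^2$- and $S_{b(z)}^2$-analogs equal to $\hat S_e^2 \to S_a^2 + p_1 p_0 \tau^2$. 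Applying part (a) to this pseudo population therefore gives $\hg_*^\pi \to \gamma$ for $* = \rosenbaum, \fisher, \lin$ (with the two Lin components collapsing to the same limit), along with $N(\ese_*^2)^\pi, N(\tse_*^2)^\pi \to (p_1 p_0)^{-1} S_a^2 + \tau^2$ for $* = \rosenbaum, \fisher, \lin$ and the analogous identity for $* = \neyman$. The main technical obstacle is the Fubini-type assembly of the $P_Z$-a.s.\ verification of Condition \ref{asym} for the pseudo population with the $P_\pi$-a.s.\ validity of part (a) applied to each fixed pseudo population to conclude $P_{(Z,\pi)}$-a.s.\ convergence, but this reduces to a routine measure-theoretic bookkeeping once part (a) is in hand.
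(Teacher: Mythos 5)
Your proposal is correct and follows essentially the same route as the paper's proof: reduce to $Z$-sequences satisfying Lemma \ref{as_Z}\eqref{as_1}--\eqref{as_f}, derive the coefficient and group-variance limits from Proposition \ref{FLRtoN}, \eqref{Sab}, and \eqref{syse} (with Lemma \ref{alg_f} handling $\ese_\fisher$ and $\tse_\fisher$, and a parallel argument for $\lin$), and obtain part (b) by applying part (a) to the pseudo finite population $(Y_i, Y_i, x_i)_{i=1}^N$ via the correspondence in Lemma \ref{as_Z}\eqref{as_2}. The computed limits all match the statement, so no changes are needed.
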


\citet[][Theorem 2]{Lin13} showed $N\ese_\fisher^2 - ( p_0^{-1}   S_{a(1)}^2 + p_1^{-1}   S_{a(0)}^2)=\opz$ and $N\tse_\fisher^2 - ( p_1^{-1}S_{a(1)}^2 + p_0^{-1}S_{a(0)}^2) = \opz $ as the probability limits. Lemma \ref{v_lim} strengthens the results to almost sure convergence.

\begin{proof}[Proof of Lemma \ref{v_lim}]
{\ass}

For statement \eqref{vl_1} regarding the sampling distributions, the limits of $\hg_*$ follow from $
\hgr =(\sxx)^{-1}\hat S_{xY}$,  $ \hgf = \hgr -  \lambda^{-1}p_1p_0\htf ( S_x^2)^{-1} \htx$, and $\hg_\lin = p_0 \hg_{\lin,1}+p_1\hg_{\lin,0}$ with $\hg_{\lin,z} = (\hs_{x(z)}^2)^{-1} \hs_{xY(z)}$ by Proposition \ref{FLRtoN}, where $\hgr = o(1)$, $\htf - \tau =o(1)$, and $\hg_{\lin,z} - \gamma_z = o(1)$ by Lemma  \ref{as_Z}. 
This ensures $\hat S_{\neyman(z)}^2 - S_{z}^2 =\oo$, $\hat S_{*(z)}^2 - S_{a(z)}^2 =\oo$ for $* = \rosenbaum,\fisher$, and $\hat S_{\lin(z)}^2 - S_{b(z)}^2 =\oo$ by \eqref{Sab} and \eqref{syse},   
and allows us to unify the standard errors as 
\begina
N \ese_*^2 - \left( p_0^{-1} \hat  S_{*(1)}^2 +  p_1^{-1}  \hat  S_{*(0)}^2 \right) =  o(1),\quad
N\tse_*^2 - \left( p_1^{-1}\hs_{*(1)}^2 + p_0^{-1}\hs_{*(0)}^2\right) = o(1) \ \ph.
\enda
In particular, the result for $\ese_*^2$ follows from \eqref{syse} provided $\hat S_{*(z)}^2$ all have finite limits. 
The result for $\tse_\neyman^2$ and $\tse_\rosenbaum^2$ follows from 
 \eqref{syse}.
The result for $\tse_\fisher^2$ follows from Lemma \ref{alg_f} with the regularity condition ensured by Lemma \ref{as_Z}\eqref{as_f}. 
The result for $\tse_\lin^2$ follows from \citet[][Theorem 8]{LD20} given Condition \ref{asym} implies  \citet[][Condition 1]{LD20} by \citet[][Proposition 1]{wuanding2020jasa}.
Alternatively,  almost identical algebra as in the proof of Lemma \ref{alg_f} for the limiting value of $N\tse_\fisher^2$ attains the same end. 
Condition \ref{asym}(ii) and \eqref{Sab} ensure the limits are all positive.

For statement \eqref{vl_2} regarding the  randomization distributions,  {\frt} takes $(Y_i, x_i)_{i=1}^N$ as the fixed input for permuting the treatment vector, and thereby induces a sequence of finite populations $(Y_i, Y_i, x_i)_{i=1}^N$ with $Y_i$ as the ``pseudo potential outcomes" under both treatment and control. The way we chose the fixed $Z$ further ensures $(Y_i, Y_i, x_i)_{i=1}^N$ satisfies Condition \ref{asym}.
The result in statement \eqref{vl_1} regarding the sampling distributions thus also holds for $\hg_*^\pi $ and $\hsszp$ $\asp$ if we replace $\gamma$ with $\hgr$, $S_z^2$ with $\hs^2$,  and both $S_{a(z)}^2$  and $S_{b(z)}^2$  with  $\hs_e^2$ by the correspondence result from Lemma \ref{as_Z}\eqref{as_2}.
The result follows from $\hgr - \gamma = o(1)$, $\hs^2-  S^2- p_1p_0\tau^2 = o(1)$, and $ \hs^2_e- S_a^2-p_1p_0\tau^2 = o(1)$.

\end{proof}


\begin{lemma}\label{juxta} 
\asym
\begin{enumerate}[(a)]
\item 
$
 \htL  \dsim \htN- (p_0\gamma_1 + p_1\gamma_0)^\T \htx, \quad \htF \dsim \htR \dsim \htn  - \gamma^\T \htx$.
\item $
 \htrp \dsim  \htfp \dsim \htlp \dsim \htnp - \gamma^\T \htxp$ holds $\asz$. 
 \end{enumerate}
\end{lemma}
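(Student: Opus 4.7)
The plan is to combine the exact algebraic decompositions in Proposition \ref{FLRtoN} with the probability limits of the regression coefficients from Lemma \ref{v_lim}, using that $\rtn\,\htx$ and $\rtn\,\htx^\pi$ are each stochastically bounded under the respective probability measures. Throughout, I will treat $\dsim$ as meaning that the $\rtn$-scaled difference is $o_P(1)$ under the appropriate measure.

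For part (a), I would start from Proposition \ref{FLRtoN}, which gives $\hts=\htN-\htx^{\T}\hg_{*}$ for $*=\rosenbaum,\fisher$ and $\htl=\htN-\htx^{\T}\hg_{\lin}$. Thus
\[
\rtn\{\hts-(\htN-\gamma^{\T}\htx)\}=\rtn\,\htx^{\T}(\gamma-\hg_{*}),\qquad *=\rosenbaum,\fisher,
\]
and analogously
\[
\rtn\{\htl-(\htN-(p_{0}\gamma_{1}+p_{1}\gamma_{0})^{\T}\htx)\}=\rtn\,\htx^{\T}\{(p_{0}\gamma_{1}+p_{1}\gamma_{0})-\hg_{\lin}\}.
\]
By \textsc{fpclt}, $\rtn\,\htx=O_{P,Z}(1)$. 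By Lemma \ref{v_lim}(a), $\hg_{\rosenbaum}-\gamma=o(1)$, $\hg_{\fisher}-\gamma=o(1)$, and $\hg_{\lin}-(p_{0}\gamma_{1}+p_{1}\gamma_{0})=o(1)$, all $\asz$. Multiplying an $O_{P,Z}(1)$ term by an $o_{P,Z}(1)$ term yields $o_{P,Z}(1)$, which gives the three claimed $\dsim$ statements.

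For part (b), the same algebraic identities hold after replacing $Z$ by $Z_{\pi}$:
\[
\hts^{\pi}=\htnp-(\htx^{\pi})^{\T}\hg_{*}^{\pi}\quad(*=\rosenbaum,\fisher),\qquad \htl^{\pi}=\htnp-(\htx^{\pi})^{\T}\hg_{\lin}^{\pi},
\]
so the target reduces to showing
\[
\rtn(\htx^{\pi})^{\T}(\gamma-\hg_{*}^{\pi})=o_{P,\pi}(1)\text{ for }*=\rosenbaum,\fisher,\lin,\qquad \asz.
\]
Two ingredients suffice: (i) $\rtn\,\htx^{\pi}=O_{P,\pi}(1)$ $\asz$, which follows from applying Lemma \ref{romanoS33}(b) coordinatewise to $\htx^{\pi}$ with $u=x_{\cdot j}$ and $v=Z_{\pi}$, using the moment bound on $\|x_{i}\|_{4}^{4}$ in Condition \ref{asym}; and (ii) $\hg_{*}^{\pi}-\gamma=o_{P,\pi}(1)$ $\aszp$ from Lemma \ref{v_lim}(b). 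In particular, the key simplification is that the $\lin$ limit collapses from $p_{0}\gamma_{1}+p_{1}\gamma_{0}$ (sampling) to $\gamma$ (randomization), because permutation uses the ``pseudo finite population'' $(Y_{i},Y_{i},x_{i})_{i=1}^{N}$ in which both treated and control coefficients converge to the common pooled slope $\gamma=(S_{x}^{2})^{-1}S_{xY}$, as spelled out in Lemma \ref{as_Z}(b) and used in Lemma \ref{v_lim}(b). Multiplying (i) by (ii) gives the desired $o_{P,\pi}(1)$ bound.

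The only non-routine step is the bookkeeping with the two almost-sure statements: the stochastic boundedness of $\rtn\,\htx^{\pi}$ must hold for $Z$ in a set of $P_{Z}$-probability one, and the coefficient convergence from Lemma \ref{v_lim}(b) is already stated $\aszp$. Intersecting the two full-measure sets of $Z$'s and invoking the product rule $O_{P,\pi}(1)\cdot o_{P,\pi}(1)=o_{P,\pi}(1)$ on the intersection yields the conclusion $\asz$. No further estimation is required, and in particular no delta-method expansion of the $\hg_{*}^{\pi}$'s: the product structure $\rtn\,\htx^{\pi}\cdot(\hg_{*}^{\pi}-\gamma)$ is enough because the first factor is already $O_{P,\pi}(1)$.
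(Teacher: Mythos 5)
Your proposal is correct and follows essentially the same route as the paper's proof: both rest on the algebraic decomposition $\hts=\htN-\htx^\T\hg_*$ from Proposition \ref{FLRtoN}, the almost-sure convergence of the fitted coefficients from Lemma \ref{v_lim}, the (permutational) central limit theorem for $\rtn\htx$ and $\rtn\htx^\pi$, and the product rule $O_P(1)\cdot o_P(1)=o_P(1)$ (Slutsky). Your only additions — making explicit that $\rtn\htx^\pi=O_{P,\pi}(1)$ via Lemma \ref{romanoS33} and the intersection of the two full-$P_Z$-measure sets — are details the paper leaves implicit under ``almost identical reasoning,'' so there is no substantive difference.
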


\begin{proof}[Proof of Lemma \ref{juxta}]
First, let $\gamma_\rosenbaum = \gamma_\fisher = \gamma$ and $\gamma_\lin = p_0 \gamma_1+ p_1\gamma_0$ to write $\hg_* - \gamma_* = \ooas$ by Lemma \ref{v_lim}.
This, together with $ \hts = \htn - \hg_*^\T \htx $ by Proposition \ref{FLRtoN} and $
\rtn  \htx \rightsquigarrow \mN\{0_J, (p_0p_1)^{-1}I_J\}$ by \textsc{fpclt}, ensures
$
\rtn  \left\{ \htH - ( \htN - \gamma_*^\T \htx)\right\}  = (\hg_* - \gamma_*)^\T \rtn  \htx = \opz 
$
by  Slutsky's theorem  and  hence  $\htH \dsim  \htN - \gamma_*^\T \htx$. 

Second, Proposition \ref{FLRtoN} is algebraic and holds under the probability measure induced by $\pi$ as well. This ensures $\htsp = \htnp - \gamma^\T \htxp - (\hg_*^\pi - \gamma)^\T \htxp$ for $* = \rosenbaum, \fisher,\lin$, with $\hgr^\pi = \hgr = \gamma + \oo \ \asz $ and 
$\hg_*^\pi - \gamma = \oo \ \asp$ for $*=\fisher,\lin$ $\asz$ by Lemma \ref{v_lim}. 
Almost identical reasoning as that for $\hts$ ensures $ (\hg_*^\pi - \gamma)^\T \rtn  \htxp = \opp$ holds $\asz$  for $* = \rosenbaum, \fisher,\lin$ and hence the result.
\end{proof}

\subsection{Proofs of the main results under the finite-population framework}

Theorem \ref{neyman} is  a special case of \citet{wuanding2020jasa}. 
We give below a unified proof. 

\begin{proof}[Proof of Theorems \ref{neyman}--\ref{lin}] 
First, Condition \ref{asym} ensures $ S_{a(z)}^2 $, $ S_a^2 $, $ S_{b(z)}^2 $, and $ S_b^2 $ have positive finite limits, and $ S_\xi^2 $ and $S_\tau^2$  have finite limits by \eqref{Sab}.

The result for $\htN$ and $\htnp$ follows from \citet{DD18},
with 
$
v_\neyman =  p_1^{-1} S_1^2  + p_0^{-1} S_0^2  -  S_\tau^2$ and $
 v_{\neyman0} = \lim_{N\to\infty}(p_1p_0)^{-1} \hs^2 =  p_0^{-1} S_1^2  + p_1^{-1} S_0^2 + \tau^2$.
This ensures $\rtn (\htN-\tau)/v_\neyman^{1/2} \rightsquigarrow  \mN(0, 1)$, and $
\rtn \htnp/v_{\neyman0}^{1/2} \rightsquigarrow  \mN(0, 1) $ $P_Z$-a.s.. 
The result for the studentized variants follows from 
\beginy \label{slutsky_neyman}
\wa{1.4}
\begin{array}{ll}
({\htN}-\tau)/{ \sen } 
= \rtn ({\htN}-\tau)/v_\neyman^{1/2} \cdot ({ v_\neyman }/{ N\sen ^2})^{1/2},
&\ 
\htnp/\senp  
= \rtn \htnp/v_{\neyman0}^{1/2} \cdot  \{{ v_{\neyman0} }/ N (\ese^2_{\neyman})^\pi \}^{1/2},\\
({\htN}-\tau)/{ \tse_\neyman} 
= \rtn ({\htN}-\tau)/v_\neyman^{1/2} \cdot  ({ v_\neyman }/{ N\tse_\neyman^2})^{1/2},
&
\ 
\htnp/\tsenp  
= \rtn \htnp/v_{\neyman0}^{1/2} \cdot  \{{ v_{\neyman0} }/ N (\tse^2_{\neyman})^\pi \}^{1/2}
\end{array}
\endy
by Slutsky's theorem with $(v_\neyman/ N \sen ^2) = c'_\neyman + \ooas, \ (v_\neyman /N \tse_\neyman^2)= c_\neyman + \ooas, \ v_{\neyman0}/ N(\ese^2_{\neyman})^\pi =   1 + \ooaszp$, and $v_{\neyman0} /N(\tse^2_{\neyman})^\pi =   1+ \ooaszp$
 by Lemma \ref{v_lim}. 

For the nine covariate-adjusted variants with $*=\rosenbaum, \fisher, \lin$, 
the result for $\hts$ and $\htsp$ follows from the result for $\htN$ and $\htnp$, \textsc{fpclt},  and Lemma  \ref{juxta}, 
 with  
\begina
v_\fisher = v_\rosenbaum & =& \var_\infty(\htr) = v_\neyman +  (p_1p_0)^{-1}  \|\gamma\|^2_2 - 2  (p_1^{-1} \gamma_1 + p_0^{-1} \gamma_0)^\T\gamma 
= p_1^{-1}  S_{a(1)}^2+ p_0^{-1} S_{a(0)}^2 -  S_\tau^2,\\
 v_\lin &=&   \var_\infty(\htl) = v_\neyman  -  (p_1p_0)^{-1}\|p_0\gamma_1 + p_1\gamma_0\|^2_2 = p_1^{-1}  S_{b(1)}^2+ p_0^{-1} S_{b(0)}^2 -  S_\xi^2,\\
 v_{*0} &=&   \var_\infty(\htsp)   = v_{\neyman0} -  (p_1p_0)^{-1} \|\gamma\|^2_2 = (p_1p_0)^{-1}  S_a^2 + \tau^2 \quad\text{for $*=\rosenbaum,\fisher, \lin$}
\enda
by \eqref{Sab}. 
This ensures   $\rtn (\hts-\tau)/{ v_* ^{1/2}}\rightsquigarrow  \mN(0, 1)$, and $\rtn \htsp/{ v_{*0}^{1/2}} \rightsquigarrow  \mN(0, 1) \ \asz$ for $* = \rosenbaum,\fisher,\lin$. 
The result for the studentized variants then follows from Slutsky's theorem and Lemma \ref{v_lim} via the same reasoning as in \eqref{slutsky_neyman}.
\end{proof}

\begin{proof}[Proof of Proposition \ref{FLRtoN}]
That $\htR = \htN - \htx^\T \hgr$ follows from \eqref{algebra_r}. 
That $\htF = \htN- \htx^\T\hgf $ and the expression for $\hgf$ follows from Lemma \ref{alg_f}. 
That $\htl = \htN - \{\hat x   (1)\}^\T \hglo +  \{\hat x   (0)\}^\T \hglz$ for centered covariates follows from \citet{Lin13}, which further  simplifies to $\htl = \htN-  \htx^\T\hg_{\lin}$ because of \eqref{eq::simple-mean-zero-x}.
\end{proof}

%
%
%

\begin{proof}[Proof of Corollary \ref{power}]
The result follows from Lemma \ref{v_lim} with $S^2_{b(z)} \leq S^2_{a(z)}$ and $S^2_{b(z)} \leq S^2_{z}$ by \eqref{Sab}.  
\end{proof}

\section{{\textsc{FRT}} under ReM} \label{sec:app_rem}
%
%

\begin{lemma}\label{v_Rem}
Assume Condition \ref{asym}. 
Lemmas \ref{as_Z}--\ref{v_lim} also hold under ReM.
\end{lemma}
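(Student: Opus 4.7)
The plan is to show that both the sampling distribution under ReM and the randomization distribution induced by permutations within $\pia$ fall within the scope of the finite-population strong law of large numbers in Lemma \ref{slln_rem}, so that the proofs of Lemmas \ref{as_Z}--\ref{v_lim} carry over with essentially no change. The key observation is that Lemma \ref{slln_rem} is already stated for rejective sampling with exactly the Mahalanobis-distance acceptance event $\ma$, provided $\lim_{N\to\infty}\prob(\ma) = r > 0$ and the basic moment/limit conditions hold.

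First I would verify the triggering condition $\lim_{N\to\infty}\prob(\ma) > 0$. Under complete randomization and Condition \ref{asym}, \textsc{fpclt} gives $\rtn \htx \rightsquigarrow \mN(0_J,(p_0p_1)^{-1}I_J)$, so $\htx^\T\{\cov(\htx)\}^{-1}\htx \rightsquigarrow \chi^2_J$ and $\prob(\ma)\to P(\chi^2_J\leq a)\in(0,1]$ for any fixed $a$. With this in hand, every invocation of Lemma \ref{slln_rem} in the proof of Lemma \ref{as_Z} (applied to $(Y_i(z))_{i=1}^N$, $(x_{ij})_{i=1}^N$, $(x_{ij}Y_i(z))_{i=1}^N$) goes through verbatim under ReM, because the set $\{i: Z_i = z\}$ is precisely a rejective sample in the sense required by Lemma \ref{slln_rem}. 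Statements \eqref{as_1}--\eqref{as_f} of Lemma \ref{as_Z} then follow by the same algebraic manipulations used under complete randomization, noting that none of those manipulations depended on the unrestricted nature of $Z$ beyond the almost sure convergence of sample moments.

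Next, I would handle the sampling-distribution part of Lemma \ref{v_lim}, i.e.\ statement \eqref{vl_1}. Given that Lemma \ref{as_Z} now holds under ReM, the limits of $\hgr$, $\hgf$, $\hg_\lin$, $\hs^2_{*(z)}$, and the standard errors $\ese_*^2$, $\tse_*^2$ follow from exactly the same algebraic identities (Proposition \ref{FLRtoN}, Lemma \ref{alg_f}, and \eqref{syse}) used in the complete-randomization proof, since those identities are purely numerical and indifferent to the randomization mechanism. The convergence of $\htf$ and $\htl$ to $\tau$ used in the argument for $\tse_\fisher^2$ and $\tse_\lin^2$ likewise persists because Lemma \ref{as_Z}\eqref{as_f} now holds under ReM.

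For the randomization-distribution part, statement \eqref{vl_2}, I would invoke the same ``pseudo finite population'' device as in the proof of Lemma \ref{v_lim}. Conditionally on $Z$, {\frt} under ReM takes $(Y_i, x_i)_{i=1}^N$ as fixed input and draws $Z_\pi$ uniformly from $\pia$. The induced pseudo-population $(Y_i,Y_i,x_i)_{i=1}^N$ satisfies Condition \ref{asym} for almost all sequences of $Z$ by the already-extended Lemma \ref{as_Z}\eqref{as_2}, and the draw of $Z_\pi$ from $\pia$ is precisely rejective sampling on that pseudo-population. Applying Lemma \ref{slln_rem} to this pseudo-population then yields the analogues of statement \eqref{vl_2} in exactly the form stated in Lemma \ref{v_lim}, $\aszp$.

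The only subtlety I expect is bookkeeping: one must check that the event used to define rejective sampling on the pseudo-population, namely $(\htx^\pi)^\T \{\cov(\htx^\pi)\}^{-1}\htx^\pi < a$, is indeed the correct restriction corresponding to $Z_\pi\in\pia$, and that its limiting probability is the same $P(\chi^2_J\leq a) > 0$ so that condition (iv) of Lemma \ref{slln_rem} is met. This follows because the randomization-distribution CLT for $\rtn\htx^\pi$ on the pseudo-population mirrors \textsc{fpclt} with the same limiting covariance $(p_0p_1)^{-1}I_J$, so the acceptance probability has the same positive limit. Once this is recorded, the rest of the proof is a direct transcription of the complete-randomization arguments in Sections \ref{sec:app_fp}.
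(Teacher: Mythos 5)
Your proposal is correct and follows essentially the same route as the paper: the paper's proof is a two-line appeal to Lemma \ref{slln_rem} (the strong law under rejective sampling), asserting that the complete-randomization arguments for Lemmas \ref{as_Z}--\ref{v_lim} carry over verbatim. Your additional checks—that $\lim_{N\to\infty}\prob(\ma)=P(\chi^2_J\leq a)>0$ so condition (iv) of Lemma \ref{slln_rem} is met, and that permuting within $\pia$ is rejective sampling on the pseudo finite population—are exactly the details the paper leaves implicit.
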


\begin{proof}
Lemmas \ref{as_Z}--\ref{v_lim} follow from the strong law of large numbers in Lemma \ref{slln_rem} via the same reasoning as that under complete randomization,  
with the convergence in probability of the sample variances to their respective finite-population variances ensured by  
\citet[][Lemma A16]{LD2018}. 
\end{proof}

\begin{proof}[Proof of Theorem \ref{Rem}]
The probability measure under ReM is equivalent to the probability measure under complete randomization conditioning on $\ma$. 
We take the distribution under complete randomization as the default distribution throughout this proof, and use ``$|\text{ }\mM$'' to denote the condition of either $\htx^\T\{\cov(\htx)\}^{-1}\htx < a$ under the sampling distribution or
$(\htxp)^\T\{\cov(\htxp)\}^{-1}\htxp < a$ under the randomization distribution.
As such, $T\mid \ma$ gives the sampling distribution of $T$ under ReM, and $T^\pi \mid \ma \sim T^{\pi|\ma}$  gives the randomization distribution of $T$ under {\frt} with ReM. 

The result for $\htl$, $\htl/\ese_\lin$, and $\htl/\tse_\lin$ follows from the asymptotic independence between $\htl$ and $\htx$ and between $\htlp$ and $\htxp$ under complete randomization \citep{LD20}, such that ReM in either case does not affect the limiting distributions.
We verify below the result for $* = \neyman, \rosenbaum,\fisher$ together, starting from the unstudentized $\hts$ and $\htsp$ for $*=\neyman,\rosenbaum,\fisher$ and then moving onto the studentized variants. 

For the distributions of the unstudentized $\hts$ and $\htsp$, where $*=\neyman,\rosenbaum,\fisher$, Lemma \ref{juxta} ensures $\htfp \dsim \ \htrp \dsim \htlp$ such that it suffices to verify the result for $\htn$, $\htr$, $\htf$, 
and $\htnp$. 
Let $D\sim \mN(0_J,I_J)$ and $\epsilon \sim \mN(0,1)$ be two independent standard normals to represent 
\begina\rtn \htx \rightsquigarrow v_xD,\quad \rtn \htxp \rightsquigarrow v_xD, \quad \rtn (\htl -\tau) \rightsquigarrow v_\lin^{1/2} \epsilon, \quad \rtn \htlp \rightsquigarrow v_{\lin0}^{1/2} \epsilon,
\enda
where $ v_x = (p_1p_0)^{-1/2}$ by \textsc{fpclt}.
With 
\begina
\htN  \dsim \htL + (p_0\gamma_1 + p_1\gamma_0)^\T \htx,  \quad
\htF \dsim \htR \dsim \htL  - (p_1-p_0)(\gamma_1-\gamma_0)^\T \htx,\quad
\htnp \dsim \htlp + \gamma^\T \htx \ \asz
\enda
also from Lemma \ref{juxta}, we have 
 \begin{eqnarray}
\rtn  (\htN -\tau) \mid\mM &\rightsquigarrow& v_\lin^{1/2} \epsilon   +v_x (p_0\gamma_1 + p_1\gamma_0)^\T D \mid (\| D\|^2_2 <a),\label{N}\\
\rtn \htnp \mid\mM &\rightsquigarrow& v_{\lin0}^{1/2} \epsilon  + v_x\gamma^\T D \mid (\| D\|^2_2 <a) \quad\quad  \asz,\label{Np}\\
 \rtn ( \hts-\tau)\mid\mM &\rightsquigarrow&  v_\lin^{1/2} \epsilon  - v_x(p_1-p_0)(\gamma_1-\gamma_0)^\T D \mid (\| D\|^2_2 <a) \quad\text{for $*=\rosenbaum,\fisher$},\label{FR}
\end{eqnarray}
with
\begin{eqnarray}\label{vs}
v_x (p_0\gamma_1 + p_1\gamma_0)^\T D &\sim& \mN(0, v_\neyman - v_\lin), 
\quad 
v_x\gamma^\T D \sim \mN(0, v_{\neyman0} - v_{\lin0}),\nonumber\\ 
-v_x(p_1-p_0)(\gamma_1-\gamma_0)^\T D&\sim& \mN(0, v_\rosenbaum - v_\lin).
\end{eqnarray}
Let $u = (v_\neyman - v_\lin )^{-1/2}v_x (p_0\gamma_1 + p_1\gamma_0)$ be a unit vector with $\|u\|^2_2 = 1$ and 
$D_\neyman = u^\T D \sim \mN(0,1)$ by the first ``$\sim$'' in \eqref{vs}. 
Complete $u$ into an orthogonal matrix $\Gamma_\neyman$ with $u^\T$ as the first row and $D_\neyman$ as the first element of $\Gamma_\neyman D$.
With  $\Gamma_\neyman D \sim \mN(0_J, I_J)$ and $\|\Gamma_\neyman D\|^2_2 = \|D\|^2_2$, it follows from \eqref{vs} that 
$$
v_x (p_0\gamma_1 + p_1\gamma_0)^\T D \mid (\| D\|^2_2 <a) \sim  (v_\neyman - v_\lin )^{1/2} D_\neyman \mid  (\|\Gamma_\neyman D\|^2_2 <a)
\sim   (v_\neyman - v_\lin )^{1/2} \mL.
$$
Plugging this back in \eqref{N} proves
$$ \rtn (\htN-\tau) \mid\mM  \rightsquigarrow   v_\lin^{1/2} \epsilon +  (v_\neyman - v_\lin )^{1/2} \mL = v^{1/2}_\neyman\big\{ (1-\rho ^2_\neyman)^{1/2} \cdot \epsilon + \rho _\neyman\cdot \mL \big\} =  v^{1/2}_\neyman \cdot \mU (\rho _\neyman).$$
The same reasoning verifies $
\rtn \htnp \mid \mM 
 \rightsquigarrow  v_{\neyman0}^{1/2} \cdot \mU (\rho _{\neyman0})$ and $
 \rtn (\htH -\tau) \mid \mM   \rightsquigarrow   v_*^{1/2} \cdot \mU (\rho _*)$ for $* =  \rosenbaum, \fisher$
from \eqref{Np} and \eqref{FR} 
with $u = (v_{\neyman0} - v_{\lin0} )^{-1/2}v_x\gamma$ and $u = (v_\rosenbaum - v_\lin )^{-1/2}v_x(p_1-p_0)(\gamma_1-\gamma_0)$, respectively. 
This verifies the result for $\hts$ and $\htsp$, and ensures $\rtn (\htH-\tau)/ v_*^{1/2} \mid \mM  \rightsquigarrow \mU (\rho _*)$  for  $*=\neyman, \rosenbaum, \fisher$, and 
\begina
\rtn \htnp/ v_{\neyman0}^{1/2} \mid\mM  \rightsquigarrow  \mU (\rho _{\neyman0})\quad\asz, \quad \rtn \hts^\pi /v_{*0}^{1/2} \mid \mM  \rightsquigarrow  \mN(0,1) \quad\asz\quad\text{for $*=\rosenbaum, \fisher$.}
\enda
It follows from Lemma \ref{v_Rem} and Slutsky's  Theorem that 
\begina
\rtn (\htH-\tau)/\ese_* \mid \mM 
&=& \rtn (\htH-\tau)/ v_*^{1/2}\cdot (v_* /N \ese^2_*)^{1/2} \mid \mM  \rightsquigarrow (c'_*)^{1/2}  \cdot \mU (\rho _*),\\
\rtn (\htH-\tau)/\tse_* \mid \mM 
&=& \rtn (\htH-\tau)/ v_*^{1/2}\cdot (v_* / N\tse^2_*)^{1/2}  \mid \mM  \rightsquigarrow  c^{1/2}_*\cdot \mU (\rho _*)
\enda
for $* = \neyman, \rosenbaum, \fisher$, and 
\begina
\rtn \htnp/\senp  \mid \mM 
&=& \rtn \htnp/ v_{\neyman0}^{1/2}\cdot (v_{\neyman0} / N\sen ^2)^{1/2}  \mid \mM  \rightsquigarrow   \mU (\rho _{\neyman0}),\\
\rtn \htnp/\tsenp \mid \mM 
&=& \rtn \htnp/ v_{\neyman0}^{1/2}\cdot (v_{\neyman0} /N\tse_\neyman^2)^{1/2} \mid \mM  \rightsquigarrow   \mU (\rho _{\neyman0}),\\
\rtn \hts^\pi/\sesp  \mid \mM
& \rightsquigarrow & \mN(0, 1),\quad\quad 
\rtn \hts^\pi/\tsesp  \mid \mM
 \rightsquigarrow  \mN(0, 1) \quad \text{for $* =\rosenbaum,  \fisher$}
\enda
holds $\asz$. This completes the proof.  
\end{proof}

\begin{proof}[Proof of Corollary \ref{cor:rem}]
Wider or equal asymptotic central quantile ranges imply greater or equal asymptotic variance. 
A test statistic is thus not proper if the asymptotic variance of its randomization distribution is not  greater than or equal to that of its sampling distribution for all $\mathcal{S}$.

For the unadjusted $\htn$, $\htns $, and $\htn/\tse_\neyman$, that $v_\lin/v_\neyman$ can be either greater or less than $v_{\lin0}/v_{\neyman0}$ suggests $\rho _\neyman/\rho _{\neyman0}$, and thus $v (\rho _\neyman)/v(\rho _{\neyman0})$, can be either greater or less than 1. We have  
\begina
\frac{\var_\infty(\htN)}{\var_\infty(\htnpa)} \to c'_\neyman \cdot \frac{v (\rho _\neyman)}{v(\rho _{\neyman0})}, \quad 
\frac{\var_\infty(\htns )}{\var_\infty\{\htnspa\}} \to c'_\neyman \cdot \frac{v (\rho _\neyman)}{v(\rho _{\neyman0})}, \quad 
\frac{\var_\infty(\htnsr)}{\var_\infty\{\htnsrpa\}} \to c_\neyman\cdot \frac{ v(\rho _\neyman)}{ v(\rho _{\neyman0})},
\enda
have limiting values that can be either greater or less than 1. This shows none of $\htN$, $\htns $, and $\htnsr$ are proper under ReM.
Likewise for  
\begina
\frac{\var_\infty(\hts)}{\var_\infty (\htspa)} \to c'_\rosenbaum \cdot v (\rho _\rosenbaum), \quad 
\frac{\var_\infty(\htss)}{\var_\infty\{\htsspa\}} \to c'_\rosenbaum \cdot v (\rho _\rosenbaum)\quad \text{ for $*=\rosenbaum,\fisher$}
\enda
to have limiting values that can be either greater or less than 1 with $c'_\rosenbaum$ can be either greater or less than 1. This shows none of 
$\hts$ and $\htss$   are proper under ReM for $*=\rosenbaum,\fisher$. 

Further, 
it follows from \citet[][Lemma A4]{LD20} that $q_{1-\alpha/2}(\rho )   \leq q_{1-\alpha/2}(0) =   q_{1-\alpha/2}$ for any $0< \alpha <1$, where $q_{1-\alpha/2}(\rho )$ is the $(1-\alpha/2)$th quantile of $\mU (\rho )$.
This ensures  $|\mU(\rho)|$ is stochastically dominated by $|\epsilon|$ for arbitrary $0\leq \rho \leq 1$, and thus the properness of $\hts/\tse_*$ under ReM for $*=\rosenbaum,\fisher, \lin$. 
\end{proof}

\begin{proof}[Proof of Proposition \ref{thm:rem_d}]
The sampling distributions of $\hts$ follow from \citet[][Theorem 2]{LD20}. The sampling distributions of $\htss$ and $\htssr$ follow from Slutsky's theorem and \citet[][Lemma A5]{LD20}, which ensures Lemma \ref{v_lim} holds under ReM even if $\xdi  \neq x_i$. 
\end{proof}

\begin{proof}[Proof of Corollary \ref{cor:rem_d}]
The result follows from comparing the asymptotic sampling distributions in Proposition \ref{thm:rem_d} with the asymptotic randomization distributions under unrestricted \textsc{frt} in Theorems \ref{neyman}--\ref{lin}. 
The reasoning is almost identical to that in the proof of Corollary \ref{cor:rem}, with $|\mU(\rho)|$ being stochastically dominated by $|\epsilon|$ for arbitrary $0\leq \rho \leq 1$. 
\end{proof}

\section{Proofs of the results in Section \ref{sec:connections}}\label{sec:app_sp}
 
\newcommand{\sels}{(\ese'_\lin)^2}
\newcommand{\tsels}{(\tse'_\lin)^2}
\newcommand{\esel}{\ese_\lin^2}
\newcommand{\tsel}{\tse_\lin^2}
\subsection{Connection with the super-population framework}
Let $
\gamma =  p_1  \gamma_1+ p_0  \gamma_0$, $\sigma^2 = p_1 \sigma_1^2 +p_0 \sigma_0^2$, $ \sigma_a^2  = p_1 \sigma_{a(1)}^2 + p_0 \sigma_{a(0)}^2$, and $\sigma_b^2  = p_1 \sigma_{b(1)}^2  + p_0 \sigma_{b(0)}^2$ analogous to $\gamma$, $S^2$, $S_a^2$, and $S_b^2$ in the finite-population setting, respectively. 
Let $\sels$ and $\tsels$ be the unmodified classic and robust standard errors of $\htl$ under the super-population framework such that 
$
\ese_\lin^2 = \sels + \hat\theta^\T S_x^2 \hat\theta/N
$
and
$
\tse_\lin^2 = \tsels + \hat\theta^\T S_x^2 \hat\theta/N.
$

\begin{lemma}\label{v_lim_sp}Assume Condition \ref{asym_SP}. 
\begin{enumerate}[(a)]
\item\label{as_sp} Lemma \ref{as_Z} statements \eqref{as_1}--\eqref{as_3} hold $\assp$. 
\item\label{vlim_sp} Lemma \ref{v_lim} statement \eqref{vl_1} hold $\assp$ after changing $\ese_\lin^2$ and $\tse_\lin^2$ to $\sels$ and $\tsels$, respectively,  and changing $S^2$ and $S^2_*$  to $\sigma^2$ and $\sigma^2_*$ for $* = a, b, z, a(z), b(z)$, where $z = 0,1$. 

\end{enumerate}

\end{lemma}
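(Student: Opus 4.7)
The plan is to follow the same architecture as the proofs of Lemma \ref{as_Z} and Lemma \ref{v_lim}, with the finite-population strong law of Lemma \ref{slln_rem} replaced throughout by the classical strong law of large numbers for i.i.d.\ sequences. Condition \ref{asym_SP} makes this substitution straightforward: it stipulates four finite moments for $Y_i(z)$, $x_i$, and $x_i Y_i(z)$, which is far more than needed for the SLLN. Crucially, Condition \ref{asym_SP}(ii) gives $Z_i \ind \{Y_i(0),Y_i(1),x_i\}$, so conditional on the treatment group, the $(Y_i(z), x_i)$ within $\{i: Z_i = z\}$ are still i.i.d.\ with the marginal law of $(Y(z), x)$.

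For part \eqref{as_sp}, I would verify each display in Lemma \ref{as_Z}\eqref{as_1}--\eqref{as_3} directly. The SLLN gives $N_z/N \to p_z$ a.s., and $N^{-1}\sum_i \mathbf{1}\{Z_i = z\} Y_i(z) \to p_z \mu_z$ a.s., which combine to yield $\hat Y(z) \to \mu_z$ $P_\textsc{s}$-a.s.; the same argument delivers $\hat S^2_z \to \sigma_z^2$, $\hat x(z) \to \mu_x = 0$, $\hat S^2_{x(z)} \to \sigma_x^2$, and $\hat S_{xY(z)} \to \sigma_{xY(z)}$ $P_\textsc{s}$-a.s. Statements \eqref{as_2}--\eqref{as_3} of Lemma \ref{as_Z} are purely algebraic consequences of \eqref{as_1} together with the identities in \eqref{as_numeric} and \eqref{algebra_r}, so they carry over verbatim once the sample moments are replaced by their super-population limits, with the analog of $\gamma$ now being $(\sigma_x^2)^{-1}E(x_i Y_i) = p_1\gamma_1 + p_0\gamma_0$.

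For part \eqref{vlim_sp}, the algebraic decompositions in \eqref{syse} and Lemma \ref{alg_f} hold irrespective of the data-generating process, so the limits of $\hg_*$, $\hat S_{*(z)}^2$, $\ese_*^2$, and $\tse_*^2$ for $* = \neyman, \rosenbaum, \fisher$ follow immediately once \eqref{as_sp} is in hand, with the finite-population variances $S^2, S_z^2, S_a^2, S_{a(z)}^2$ replaced by the super-population analogs $\sigma^2, \sigma_z^2, \sigma_a^2, \sigma_{a(z)}^2$. The only genuinely new bookkeeping is for $\hts$ with $* = \lin$: under the super-population framework, $\ese_\lin^2$ and $\tse_\lin^2$ are defined to carry the extra correction $\hat\theta^\T S_x^2 \hat\theta/N$, whereas the standard OLS outputs correspond to $\sels$ and $\tsels$. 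Since the OLS-based identities of Lemma \ref{alg_f} (suitably extended along the lines of \cite{LD20}) deliver limits for the uncorrected $\sels$ and $\tsels$ without any $\Delta_{\bar x}$ term, the Lemma \ref{v_lim}\eqref{vl_1} identities hold $P_\textsc{s}$-a.s.\ exactly as stated after the substitution indicated in the lemma.

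The only mild obstacle is making the within-stratum SLLN rigorous: one should argue that, conditional on the sequence $(Z_i)_{i\geq 1}$, the sub-sequences $\{i: Z_i = z\}$ are deterministic indices and $(Y_i(z), x_i)$ along them are i.i.d.\ with the marginal law, so applying the SLLN within each group and then integrating over $(Z_i)$ gives $P_\textsc{s}$-a.s.\ convergence of all needed sample moments. Given this, everything else is algebra that mirrors the finite-population proofs, so I would not re-derive the identities but simply cite the corresponding displays of Lemma \ref{as_Z} and Lemma \ref{v_lim} after the substitution.
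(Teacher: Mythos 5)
Your proposal is correct and follows essentially the same route as the paper: substitute the classical i.i.d.\ strong law for the finite-population strong law to get the base convergences of Lemma \ref{as_Z}\eqref{as_1}, then observe that the remaining statements and the standard-error limits are algebraic consequences that transfer verbatim with $S^2, S^2_*$ replaced by $\sigma^2, \sigma^2_*$ and with the uncorrected $\sels, \tsels$ in place of $\esel, \tsel$. The only detail the paper flags that you leave implicit is that Proposition \ref{FLRtoN} and the related OLS identities require re-centering the covariates at $\bar x$ (which is no longer exactly $0_J$ under i.i.d.\ sampling), but this is minor bookkeeping and does not affect the argument.
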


\begin{proof}
The proof is almost identical with that of Lemmas \ref{as_Z} and \ref{v_lim} under the finite-population framework. The classical strong law of large numbers ensures  Lemma \ref{as_Z} statement \eqref{as_1} and \eqref{slln_w} hold $\assp$ under Condition \ref{asym_SP}. 
The rest of Lemma \ref{as_Z} then follows. Proposition \ref{FLRtoN} is algebraic and thus holds also under the super-population framework after replacing $x_i$ with $x_i - \bar x$. Statement \eqref{vlim_sp} then follows from statement \eqref{as_sp} as Lemma \ref{v_lim} follows from Lemma \ref{as_Z}.
\end{proof}

\begin{proof}[Proof of Theorem \ref{lin_SP}]
The result for the sampling distributions of $\hts$, where $* = \neyman, \fisher, \lin$, including $v_\lin \leq v_*$ for $* = \neyman, \fisher$, follows from \citet{negi2020revisiting}. 
The asymptotic equivalence of $\rtn  \htr$ and $\rtn  \htf$ follows from 
$
\rtn (\htf - \htr) = \rtn\htx^\T (\hgf - \hgr) = \ops
$
by Proposition \ref{FLRtoN}, Lemma \ref{v_lim_sp}, and Slutsky's theorem. 

The result for  $\htss$ and $\htssr$ follows from Lemma \ref{v_lim_sp} \eqref{vlim_sp} and Slutsky's theorem via the same reasoning as in the proof of Theorems \ref{neyman}--\ref{lin}.
In particular, recall  $\hg_{\lin,z} = (\hat S^2_{x(z)})^{-1}\hat S_{xY(z)}$ as the coefficient of $x_i$ from the \textsc{ols} fit of $Y_i$ on $(1,x_i)$ with units in treatment group $z$. 
Algebraically, we have $\hat \theta = \hg_{\lin,1}-\hg_{\lin,0}$ such that $\hat\theta = \gamma_1-\gamma_0 +\oo \ \assp$ under 
Condition \ref{asym_SP} with $ \hg_{\lin,z} = \gamma_z +\oo \ \assp$.  
This, together with $S_x^2 = \sigma_x^2 + \oo \ \assp$, ensures $\hat\Delta_{\bar{x}}  = \Delta_{\bar{x}} + \oo \ \assp$ such that 
$N\esel = p_0^{-1}  \sigma_{b(1)}^2  + p_1^{-1}  \sigma_{b(0)}^2 + \Delta_{\bar{x}} + \oo\ \assp $ and 
$N\tsel = v_{\lin}+ \oo\ \assp$ by Lemma \ref{v_lim_sp}\eqref{vlim_sp}.

The result for the randomization distributions follows from Lemma \ref{v_lim_sp} \eqref{as_sp} and the sampling distribution results in Theorems \ref{neyman}--\ref{lin} via almost identical reasoning as the proof of the randomization distribution results in Theorems  \ref{neyman}--\ref{lin}. 

In particular, 
Lemma \ref{v_lim_sp} \eqref{as_sp} ensures the sequence of $\{(Y_i,Y_i,x_i)\}_{i=1}^N$ that the {\frt} procedure takes as fixed input satisfies Condition \ref{asym} $\assp$ under Condition \ref{asym_SP}, with $\hgr$, $\hat S^2$, and $\hat S^2_{e}$ giving the analogs of $\gamma_z$, $S^2_z$, and both $S^2_{a(z)}$ and $S^2_{b(z)}$, respectively. 
This ensures the analog of $\Delta_{\bar{x}}$ for the sequence of $\{(Y_i,Y_i,x_i)\}_{i=1}^N$ equals zero such that $\hat\Delta^\pi_{\bar{x}}  =  \oo \ \assp$, and thus $ \rtn  \ese_\lin^\pi  - \rtn  (\ese_\lin')^\pi = \oo \ \assp$, and $\rtn  \tse_\lin^\pi - \rtn  (\tse_\lin')^\pi =  \oo \ \assp$. 

The result follows from  replacing the limiting values of $\gamma_z$, $S_z^2$, $S_{a(z)}^2$, and $S_{b(z)}^2$ in the sampling distributions in Theorems \ref{neyman}--\ref{lin} with those of $\hgr$, $\hat S^2$, $\hat S^2_{e}$,  and $\hat S^2_{e}$, respectively, namely $\lim_{N\to\infty} \hgr = \gamma$,  $\lim_{N\to\infty} \hat S^2  = p_1  \sigma_1^2  + p_0 \sigma_0^2+p_1p_0\tau^2$ and $\lim_{N\to\infty} \hat S_e^2  =p_1  \sigma_{a(1)}^2  + p_0 \sigma_{a(0)}^2+p_1p_0\tau^2$.

The nuance here is that we are using the sampling distributions rather than the randomization distributions in Theorems \ref{neyman}--\ref{lin} for the above reasoning. The change would be ``replacing $S^2$ and $S^2_*$  with $\sigma^2$ and $\sigma^2_*$ for $* = a, b, z, a(z), b(z)$'' if we use the randomization distributions in Theorems \ref{neyman}--\ref{lin} instead; see the proof of Theorems \ref{neyman}--\ref{lin} for the correspondence between the sampling and randomization distributions under the finite-population framework in the first place. 
\end{proof}

\subsection{Permutation tests for linear models}\label{app_ols}
Recall $\delta = (I-H)Z$ as the residual vector from the {\olsf} of $Z$ on $(\on, X)$,  with $\delta_i= Z_i - p_1 - \lambda^{-1}p_1p_0 x_i^\T\htx$ by \eqref{H}.
Let $ C = (\dsq)^{-1} =\{\zt(I-H)Z\}^{-1}$. Lemma \ref{as_Z}\eqref{as_4} ensures
\beginy\label{limc}
NC= (p_1p_0)^{-1} + \ooas .
\endy

\begin{lemma}\label{fwl} 
{\dgp} and $\pi \in\Pi$, 
\begin{enumerate}[(a)]
\item\label{betas} the coefficients have explicit forms
\begina
&&\htrp =  \frac{Z_\pi^\T(I-H)e}{Z_\pi^\T{ (I-H_1)}Z_\pi}  , \quad\quad 
\htfp= \frac{Z_\pi^\T{(I-H)}e}{Z_\pi^\T(I-H)Z_\pi},\\
&&\hbflp = \hbkp = \frac{Z^\T{(I-H)}e_\pi}{Z^\T(I-H)Z},\quad \quad \hbtbp - \htf =   \frac{Z^\T{(I-H)}\epfp}{Z^\T(I-H)Z}, 
\quad\quad
\hbmp = \frac{Z^\T{(I-H)}Y_\pi}{Z^\T(I-H)Z};
\enda 
\item\label{ese} the classic standard errors have explicit forms
\begina
&&(\seflp)^2 =  \frac{1}{N-2-J} \left\{ C \cdot   \esq - C \cdot   e_\pi^\T H e_\pi - (\hbflp)^2 \right\}, \\
&&(\sekp)^2 =  \frac{1}{N-2} \left\{ C \cdot   \esq - (\hbkp)^2\right\}, \\
&&(\setbp)^2 
=\frac{1}{N-2-J}\left\{ C \cdot   \epfsq - C \cdot  \epfp ^\T H\epfp     - (\hbtbp -  \htf)^2  \right\},\\
&&(\semp)^2 
= \frac{1}{N-2-J}\left\{ C \cdot  \ysq - C \cdot  Y_\pi^\T H Y_\pi  - (\hbmp)^2\right\},
\enda
and the robust standard errors have a unified form
$$
(\tsesp)^2  
= C^2  (\eta_*^\pi)^\T \dd  \eta_*^\pi  \quad \quad (*=\fl, \knd, \tb, \mly)
$$ 
where 
$\efl= (I-H) e_\pi - \delta^\T\hbflp$, 
$\ek= e_\pi - \delta^\T\hbkp$, $\etb= (I-H)  \epfp  - \delta^\T(\hbtbp - \htf)$, 
and $\emly= (I-H)  Y_\pi -   \delta^\T\hbmp$. 
\end{enumerate}
\end{lemma}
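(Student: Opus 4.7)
The proof is essentially an exercise in the Frisch--Waugh--Lovell theorem combined with two key algebraic identities. The first is that $(I-H)Y = e$ with $He = 0$ and $H_1 e = 0$, so $e$ is invariant under both orthogonal projections $I-H$ and $I-H_1$. The second, derived by applying $(I-H)$ to $Y = \hat\alpha_\fisher + Z\htf + X\hgf + \epf$, is $\epf = e - \delta\htf$, using the fact that $\epf \perp (1_N, Z, X)$ implies $H\epf = 0$. In particular, $\epf \perp Z$ gives $\delta^\T \epf = 0$, and $e \perp X$ gives $He = 0$; these orthogonalities will be invoked repeatedly.

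For part (a), each coefficient is the slope of a treatment-like variable in a two- or three-variable OLS fit, and FWL reduces it to a univariate regression on $\delta = (I-H)Z$ (or on $(I-H_1)Z_\pi$ for Rosenbaum). The Rosenbaum and Fisher formulas follow by direct application of FWL together with $(I-H)Y = (I-H_1)e = e$. For Freedman--Lane, the reduced regression is $(I-H)(Y-e+e_\pi)$ on $\delta$; the piece $(I-H)(Y-e) = e - e = 0$ collapses the numerator to $Z^\T(I-H)e_\pi$. For Kennedy, applying Lemma \ref{lmz} directly to the fit of $e_\pi$ on $(1_N,\delta)$, and using $\bar\delta = 0$, produces the same expression, so that $\hbkp = \hbflp$. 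For ter Braak, I would write $\hbtbp = \delta^\T(Y-\epf+\epfp)/\|\delta\|^2$; the identity $\delta^\T\epf = 0$ combined with $\htf = \delta^\T Y/\|\delta\|^2$ then yields the stated form for $\hbtbp - \htf$. Manly is immediate from FWL.

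For part (b), the classic variances follow uniformly from $\hat\sigma^2 \cdot [(X_{\text{full}}^\T X_{\text{full}})^{-1}]_{\beta\beta} = \text{RSS}/\text{df} \cdot C$, and FWL equates the full-model RSS to the reduced-regression RSS $\|(I-H)R - \hat\beta \delta\|^2 = \|(I-H)R\|^2 - \hat\beta^2\|\delta\|^2$ for the relevant response $R$ and coefficient $\hat\beta$. For FL, taking $R = Y-e+e_\pi$ gives $(I-H)R = (I-H)e_\pi$ and, using $\|e_\pi\|^2 = \|e\|^2$, yields the claimed $C\|e\|^2 - C e_\pi^\T H e_\pi - (\hbflp)^2$. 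The ter Braak case requires the key trick of rewriting $(I-H)Y^\pi_{\tb} = \delta\htf + (I-H)\epfp$ via $\epf = e - \delta\htf$; the FWL residual then becomes $(I-H)\epfp - (\hbtbp - \htf)\delta$ and the stated formula drops out. Kennedy is a direct application of Lemma \ref{lmz} (with the denominator $N-2$ absorbing the intercept), and Manly is immediate.

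For part (c), the robust standard error from the generic formula $\tse_0^2 = \eta^\T\diag(v_i^2)\eta/(\|v\|^2)^2$ of Lemma \ref{lmz} (or equivalently from Lemma \ref{alg_f}), applied to the FWL-reduced univariate regression on $v = \delta$, gives the unified expression $C^2(\eta_*^\pi)^\T\diag(\delta_i^2)\eta_*^\pi$. All that remains is to identify the correct reduced residual $\eta_*^\pi$: for FL, $(I-H)e_\pi - \hbflp \delta$; for Kennedy, $e_\pi - \hbkp\delta$ (without the $(I-H)$ projection since the regression does not include $X$); for ter Braak, $(I-H)\epfp - (\hbtbp - \htf)\delta$, as computed above; for Manly, $(I-H)Y_\pi - \hbmp\delta$. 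The main obstacle is not depth but bookkeeping: one must consistently track which quantities are orthogonal to which subspaces, and use $\epf = e - \delta\htf$ in the ter Braak case to avoid a messy expansion in $\epfp$.
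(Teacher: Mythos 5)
Your proposal is correct and follows essentially the same route as the paper's proof: reduce every fit to a univariate regression on $\delta=(I-H)Z$ (or $(I-H_1)Z_\pi$) via the Frisch--Waugh--Lovell theorem together with Lemma \ref{lmz}, use $He=H_1e=0$ and the residual orthogonalities to simplify the numerators and RSS terms, and handle ter Braak by observing $(I-H)Y^\pi_\tb=\delta\htf+(I-H)\epfp$ so that the centered coefficient and residual drop out. The only cosmetic difference is that you route the ter Braak simplification through the identity $\epf=e-\delta\htf$, whereas the paper expands $(I-H)(\on\hat\mu_\fisher+Z\htf+X\hgf+\epfp)$ directly; the two are equivalent.
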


\begin{proof}[Proof of Lemma \ref{fwl}]
For $\htr$, let $(I-H_1) Z$ and $(I-H_1) e$ be the residual vectors from the \textsc{ols} fits of $Z$ on $1_N$ and $e$ on $1_N$, respectively. By \textsc{fwl}, $\htR$ equals the coefficient of $(I-H_1)Z$ from the \textsc{ols} fit of $(I-H_1)e$ on $ (I-H_1)Z$ by Lemma \ref{lmz}. 
This ensures
\begina
\htR =  \frac{Z^\T(I-H_1)e}{Z^\T{ (I-H_1)}Z} = \frac{Z^\T(I-H)e}{Z^\T{ (I-H_1)}Z},
\enda
where the last identity follows from $He = H_1 e = 0_J$ by \eqref{H}.  

The result for $\htF$ follows from Lemma \ref{alg_f} and $(I-H)Y = e$. 
%

The result for the Freedman--Lane procedure follows from replacing $Y$ with $Y^\pi_\fl= HY + e_\pi$ in Lemma \ref{alg_f}, with $(I-H)Y_\fl^\pi = (I-H)(HY + e_\pi) = (I-H) e_\pi$.


For the Kennedy procedure, 
let $\hb'$, $( \ese')^2$, and $( \tse')^2$ be the coefficient and standard errors from the {\olsf} of $e_{\pi}$ on $\delta$. By \eqref{H},
 $\delta = (I-H_1)\delta$ and $e_\pi=(I-H_1)e_\pi$, so $\delta$ and $e_\pi$ can also be viewed as  the residual vectors from the {\ols} fits of $\delta$ on $ 1_N$ and $e_{\pi} $ on $1_N$, respectively. 
With 
\begina
\hbkp = \hb', \quad (\sekp)^2 = \frac{N-1}{N-2} ( \se')^2, \quad 
(\tsekp)^2 = ( \tse')^2 
\enda
by \textsc{fwl}, the result follows from $ \hb' = C  \cdot  \delta^\T e_\pi$, 
$
  ( \se')^2 = (N-1)^{-1}\{ C \cdot \esq  - (\hb')^2\}$, and $  ( \tse')^2 = C^2  (e_\pi - \delta\hb' )^\T \dd  (e_\pi - \delta\hb')$ by replacing $u$ with $e_\pi$ and $v$ with $\delta$ in Lemma \ref{lmz}.

%

Recall $\hat \mu_\fisher$ and $\hgf$ as the coefficients of $1_N$ and $X$ in the {\olsf} of $Y$ on $( 1_N, Z, X)$. 
The  ter Braak procedure uses $Y_\tb^\pi= \on  \hat\mu_\fisher + Z \htf + X \hgf +\epfp $ as the synthetic outcome vector for estimating $\hbtbp$, $\setbp$, and $\tsetbp$, with 
\begina
(I-H)Y_\tb^\pi &=&(I-H)(\on  \hat\mu_\fisher+ Z \htf  + X \hgf +\epfp )  = (I-H) (Z \htf + \epfp ),\\
(Y_\tb^\pi)^\T(I-H)Y_\tb^\pi &=& (Z \htf + \epfp )^\T(I-H) (Z \htf + \epfp )\\
&=& C^{-1}   \htf^2 + 2\htf Z^\T(I-H)\epfp  +\epfp ^\T (I-H)\epfp 
\enda 
by \eqref{H}. Replace $Y$ with $Y_\tb^\pi$ in Lemma \ref{alg_f} to see
\begina
\hbtbp &=&C \cdot Z^\T{(I-H)}Y_\tb^\pi =  \htf + C \cdot Z^\T{(I-H)}\epfp , \\
(\setbp)^2 
&=& \frac{1}{N-2-J}\left\{ C \cdot (Y_\tb^\pi)^\T(I-H)Y_\tb^\pi   - (\hbtbp)^2\right\}\\
&=& \frac{1}{N-2-J}\left\{ C \cdot \ep_{\fisher,\pi}^\T(I-H)\epfp + 2\htf(\hbtbp-\htf) +  \htf^2   -  (\hbtbp)^2   \right\}\\
&=& \frac{1}{N-2-J}\left\{  C \cdot  \epfsq -  C \cdot \epfp ^\T H\epfp - (\hbtbp -  \htf)^2  \right\},\\
(\tsetbp)^2 &= &C^2 (\etb)^\T \dd  \etb.
\enda

The result for $\hbmp$, $\semp$, and $\tsemp$ follows from replacing $Y$ with $Y_\pi$ in Lemma \ref{alg_f}.
\end{proof}

 Let $\Lambda  = \dd$ and $\langle u,v\rangle = u^\T \Lambda  v$ for $u = (u_1, \dots, u_N)^\T$ and $v = (v_1, \dots, v_N)^\T$ be the corresponding inner product to simplify the presentation. The Cauchy--Schwarz inequality implies 
\beginy \label{ada} 
\langle u,u\rangle = \sumi u_i^2 \delta_i^2 \leq \left(  \|u\|_4^4 \right)^{1/2} \left( \|\delta\|_4^4 \right)^{1/2},\quad 
 \langle u,v\rangle  \leq  \langle u,u\rangle^{1/2} \langle v,v\rangle^{1/2} .
\endy

\begin{lemma}\label{qf}
Assume Condition \ref{asym} and a sequence of $Z$ that satisfies Lemma \ref{as_Z} statements \eqref{as_1}--\eqref{as_f}. 
\begin{enumerate}[(a)]
\item\label{qf_1} ${N^{-1/2}} \delta^\T e_\pi \rightsquigarrow \mN\{0, p_1p_0(S^2_a + p_1p_0\tau^2)\}, \quad {N^{-1/2}} \delta^\T \epfp\rightsquigarrow \mN(0,  p_1p_0 S^2_a),$\\
${N^{-1/2}} \delta^\T Y_\pi \rightsquigarrow \mN\{0,p_1p_0(S^2+ p_1 p_0 \tau^2)\}$.
\item\label{qf_2}
$N^{-1} X^\T e_\pi = \opp, \quad N^{-1} X^\T \epfp= \opp, \quad N^{-1} X^\T Y_\pi = \opp$. 
\item\label{qf_3}
$N^{-1} \langle e_\pi,e_\pi\rangle=   p_1p_0( S_a^2 + p_1p_0\tau^2) + \opp,\quad 
  N^{-1}\langle \epfp,\epfp\rangle =    p_1p_0 S_a^2 + \opp,$\\
  $
N^{-1} \langle Y_\pi - \on \hY, Y_\pi - \on \hY\rangle =   p_1p_0  (S^2  + p_1p_0\tau^2)  + \opp.
$
\end{enumerate}
\end{lemma}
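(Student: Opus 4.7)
The plan is to treat all three parts as applications of Lemma \ref{romanoS33} (Hájek's CLT and mean/variance formula for random permutations) to the fixed data after invoking Lemma \ref{as_Z} to control the moments of $\delta$, $e$, $\epf$, and $Y$. Throughout I would condition on a single sequence of $Z$ satisfying statements \eqref{as_1}--\eqref{as_f} of Lemma \ref{as_Z}, which supplies the limits $\bar\delta = 0$, $S_\delta^2 \to p_1p_0$, $\bar e = 0$, $S_e^2 \to S_a^2+p_1p_0\tau^2$, $\bar\epf = 0$, $S_\epf^2 \to S_a^2$, $\bar Y = p_1\bar Y(1)+p_0\bar Y(0)+\oo$, $\hat S^2 \to S^2+p_1p_0\tau^2$, together with uniformly bounded fourth moments of $\delta_i$, $e_i$, $\epfi$, and $Y_i$.

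For part (a), I would apply Lemma \ref{romanoS33}\eqref{clt_2} with $u=\delta$ and $v$ equal to $e$, $\epf$, or $Y$ in turn. Since $\bar\delta=0$, the centering term $\bar u\bar v$ vanishes (the $Y$ case being no exception), so the normalized permutation inner product $N^{-1/2}\delta^\T v_\pi$ is exactly $\rtn(N^{-1}\delta^\T v_\pi - \bar\delta\bar v)$. The bounded fourth moments from Lemma \ref{as_Z} verify the $(2+\epsilon)$-moment hypothesis of Lemma \ref{romanoS33}\eqref{clt_2} with $\epsilon=2$, yielding convergence to $\mathcal{N}(0,\, S_\delta^2 S_v^2)$ with limiting variances $p_1p_0(S_a^2+p_1p_0\tau^2)$, $p_1p_0 S_a^2$, and $p_1p_0(S^2+p_1p_0\tau^2)$, respectively.

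For part (b), I would invoke Lemma \ref{romanoS33}\eqref{clt_1} column-by-column with $u = X_j$ and $v\in\{e,\epf,Y\}$. Since the covariates are centered, $\bar X_j=0$, so $E_\pi(N^{-1}X_j^\T v_\pi)=\bar X_j\bar v=0$ in all three cases, while $\cov_\pi(N^{-1}X_j^\T v_\pi)=N^{-2}(N-1)S_{X_j}^2 S_v^2 = O(N^{-1})$ because Lemma \ref{as_Z} bounds $S_{X_j}^2=1$ and $S_v^2$. Chebyshev's inequality then gives $N^{-1}X_j^\T v_\pi=\opp$ for each fixed $j$, and since $J$ is a fixed constant this upgrades to the vector statement $N^{-1}X^\T v_\pi=\opp$.

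For part (c), I would rewrite $\langle u_\pi,u_\pi\rangle = \sumi\delta_i^2 u_{\pi(i)}^2 = \tilde\delta^\T (u^{\circ 2})_\pi$ with $\tilde\delta_i=\delta_i^2$ and $u^{\circ 2}_i=u_i^2$, and again apply Lemma \ref{romanoS33}\eqref{clt_1}. The mean equals $(N^{-1}\sumi\delta_i^2)(N^{-1}\sumi u_i^2)$, whose factors tend to $p_1p_0$ and to $S_a^2+p_1p_0\tau^2$, $S_a^2$, or $S^2+p_1p_0\tau^2$ in the three respective cases by Lemma \ref{as_Z}. The variance is $O(N^{-1})$ because $S_{\tilde\delta}^2$ and $S_{u^{\circ 2}}^2$ are bounded above by $N^{-1}\sumi\delta_i^4$ and $N^{-1}\sumi u_i^4$, which are $O(1)$ by the same lemma. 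Chebyshev's inequality then delivers the stated in-probability limits. The only mild subtlety, and the place where I expect the bookkeeping to require the most care, is the $Y$-case in (c): I would absorb the centering by the algebraic identity $Y_\pi-\on\hat Y=(Y-\on\hat Y)_\pi$ so that the argument reduces to the already-centered form and the limit $N^{-1}\sumi(Y_i-\hat Y)^2=\hat S^2(N-1)/N\to S^2+p_1p_0\tau^2$ can be read off directly from Lemma \ref{as_Z}\eqref{as_2}.
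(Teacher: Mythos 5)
Your proposal is correct and follows essentially the same route as the paper's proof: part (a) via Lemma \ref{romanoS33}\eqref{clt_2} with $u=\delta$ and $v=e,\epsilon_\fisher,Y$ (using $\bar\delta=0$ to kill the centering), part (b) via the mean/variance formulas of Lemma \ref{romanoS33}\eqref{clt_1} applied columnwise with $\bar X_j=0$, and part (c) by rewriting the quadratic forms as permutation inner products of the squared vectors $(\delta_i^2)$ and $(u_i^2)$, with the fourth-moment bounds from Lemma \ref{as_Z} controlling the variances. The only cosmetic difference is Chebyshev versus Markov for the concentration step, and your identity $Y_\pi-1_N\hat Y=(Y-1_N\hat Y)_\pi$ is exactly how the paper handles the centered $Y$-case.
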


\begin{proof}

Statement \eqref{qf_1} follows from Lemma \ref{romanoS33}\eqref{clt_2} by letting $u = \delta$ and $v = e, \epf, Y$, respectively, with the respective means, variances, and bounded fourth moments following from Lemma \ref{as_Z}.

For statement \eqref{qf_2}, $N^{-1} X_j^\T e_\pi = \opp$ follows from Markov's inequality given $E(N^{-1} X_j^\T e_\pi) = 0$ and $\var(N^{-1} X_j^\T e_\pi) =  N^{-1}\lambda   \hs^2_e = o(1)$ by Lemmas \ref{romanoS33}\eqref{clt_1} and \ref{as_Z}\eqref{as_3}. 
The proof for the rest of statement \eqref{qf_2} is almost identical and thus omitted.

For statement \eqref{qf_3}, with a slight abuse of notation, let $\delta^2 = (\delta_1^2, \dots, \delta_N^2)^\T$,  $e^2 = (e_1^2, \dots, e_N^2)^\T$,  $\epf^2 = (\ep_{\fisher,1}^2, \dots, \ep_{\fisher,N}^2)^\T$, and 
$\nu = (\nu_1, \dots, \nu_N)^\T$, where $\nu_i = (Y_i - \hY)^2$,  to write  
\begina
&&N^{-1}\langle e_\pi,e_\pi\rangle = N^{-1} \sumi \delta_i^2 e^2_{\pi(i)} = N^{-1} (\delta^2)^\T ( e^2)_\pi,\quad
N^{-1}\langle \epfp,\epfp\rangle  = N^{-1} (\delta^2)^\T ( \epf^2)_\pi,\quad\\
&&N^{-1}\langle Y_\pi - \on \hY, Y_\pi - \on \hY\rangle = N^{-1} (\delta^2)^\T v_\pi.
\enda
With $\hat{\delta^2} = N^{-1} \|\delta\|^2_2 = p_1p_0 + o(1)$, $\hat{e^2} = N^{-1} \|e\|_2^2 = S_a^2 +p_1p_0\tau^2 + o(1)$  and  
$S^2_{\delta^2} = (N-1)^{-1}\sumi (\delta_i^2 - \hat{\delta^2} )^2 = O(1)$, $S^2_{e^2} = (N-1)^{-1}\sumi (e_i^2 - \hat{e^2} )^2 = O(1)$ by Lemma \ref{as_Z}\eqref{as_3}--\eqref{as_4}, 
 Lemma \ref{romanoS33}\eqref{clt_1} ensures $
E\left\{N^{-1}(\delta^2)^\T( e^2)_\pi\right\} = \hat{\delta^2}\hat{e^2} =   p_1 (S_a^2 + p_1p_0\tau^2)+ o(1)$ and $
\var\left\{N^{-1}(\delta^2)^\T( e^2)_\pi\right\} = N^{-1}\lambda S^2_{\delta^2} S^2_{e^2}=o(1)$, which, coupled with Markov's inequality, imply
$$
N^{-1}\langle e_\pi,e_\pi\rangle= E\{N^{-1}(\delta^2)^\T( e^2)_\pi\} + \opp=  p_1p_0 ( S_a^2 + p_1p_0\tau^2) + \opp . 
$$

The proof for the rest of statement \eqref{qf_3} is almost identical, with 
 $\hat{\ep_\fisher^2} = N^{-1} \|\epf\|^2_2 =S_a^2 + o(1)$, $S^2_{\ep^2} = (N-1)\sumi (\ep_{\fisher,i}^2 -\hat{\ep_\fisher^2})^2 = O(1)$, 
  $\hat \nu = \lambda  \hs^2 = S^2  + p_1p_0\tau^2 + o(1)$, and $S_\nu = (N-1)^{-1}\sumi (v_i - \hat\nu)^2 = O(1)$ by Lemma \ref{as_Z} statements \eqref{as_2} and \eqref{as_f}.
\end{proof}

\citet{fl} and \citet{ad} sketched proofs for the reference distributions of the statistics studentized by the classic standard errors. Below we give a unified proof for the statistics studentized by both the classic and robust standard errors.

\begin{proof}[Proof of Theorem \ref{nay_fl}]
We first verify the asymptotic normality of $\rtn  \hbsp$ for $* = \fl, \knd, \tb,\mly$, and then prove the result on the studentized variants based on it.
\paragraph{Unstudentized coefficients}
 Lemma \ref{fwl}\eqref{betas} ensures 
$
\rtn  \hbflp = \rtn  \hbkp   
= (NC) \cdot  N^{-1/2} \delta^\T e_\pi 
$,
$
\rtn  (\hbtbp - \htf) =  (NC) \cdot N^{-1/2} \delta^\T \epfp 
$
and
$
\rtn  \hbmp =  (NC)  \cdot N^{-1/2} \delta^\T Y_\pi 
$,
where 
$N C= (p_1p_0)^{-1} + o(1) \ \asz$ by \eqref{limc}. 
The asymptotic normality of $  \hbsp$ follows from Lemma \ref{qf}\eqref{qf_1} by Slutsky's theorem.
This ensures  
\beginy\label{as_b}
\hbsp = \opp \quad \asz \quad \text{for $* = \fl, \knd, \mly$}, \quad  \hbtbp - \htf =\opp \quad \asz.
 \endy 

\paragraph{Studentized coefficients}
The result for $\hbssp$ and $\hbssrp$ follows from Slutsky's theorem and 
\beginy\label{pt_se}
\begin{array}{lll}
N (\seflp)^2 = (p_1p_0)^{-1} S_a^2 + \tau^2 + \opp, && N( \tseflp)^2 = (p_1p_0)^{-1} S_a^2+ \tau^2 + \opp, \\
N (\sekp)^2 =(p_1p_0)^{-1} S_a^2 + \tau^2 + \opp, && N( \tsekp)^2 = (p_1p_0)^{-1} S_a^2+ \tau^2  + \opp, \\
N (\setbp)^2 = (p_1p_0)^{-1} S_a^2+ \opp, && N( \tsetbp)^2 = (p_1p_0)^{-1} S_a^2 + \opp,\\
N (\semp)^2 = (p_1p_0)^{-1} S^2 + \tau^2+ \opp, && N( \tsemp)^2 = (p_1p_0)^{-1} S^2  + \tau^2+ \opp
\end{array}
\endy
hold $\asz$.

We finish the proof by verifying \eqref{pt_se}. 
Lemma  \ref{as_Z}  ensures  it suffices to focus on sequences of $Z$ that satisfy  Lemma  \ref{as_Z} statements \eqref{as_1}--\eqref{as_f} and \eqref{as_b}. 
Fix one such sequence with $N c = (p_1p_0)^{-1} + \oo$ by \eqref{limc}. 

For the classic standard errors $\ese^\pi_*$, with $N C = (p_1p_0)^{-1} + \oo$, $\hbsp = \opp$ by \eqref{as_b}, and 
\begina
N^{-1}\esq = S_a^2 + p_1p_0\tau^2 + \oo, \quad N^{-1}\epfsq = S_a^2  + \oo, \quad N^{-1} \|Y\|_2^2 - \hY^2 = S^2 + p_1p_0\tau^2 + \oo
\enda by Lemma \ref{as_Z}, a direct comparison of the expressions of $(\ese^\pi_*)^2$ in Lemma \ref{fwl}\eqref{ese} with \eqref{pt_se} suggests it suffices to verify
\beginy\label{ehe}
N^{-1} e_\pi^\T H e_\pi  =\opp,\quad N^{-1} \epfp^\T H \epfp   =\opp, \quad N^{-1} Y_\pi^\T H Y_\pi  -\hat Y^2 = \opp. 
\endy
Recall $H = N^{-1}\on \ont + (N-1)^{-1}X X^\T$  from \eqref{H} to write  
\beginy\label{he}
He_\pi =\lambda^{-1}N^{-1}X X^\T e_\pi, \quad H\epfp =\lambda^{-1}N^{-1}X X^\T \epfp, \quad H Y_\pi = \on\hY + \lambda^{-1}N^{-1}X X^\T Y_\pi.
\endy 
By Lemma \ref{qf}\eqref{qf_2}, 
$
N^{-1} e_\pi^\T H e_\pi  = \lambda^{-1}(N^{-1} e_\pi^\T X) ( N^{-1} X^\T e_\pi) =\opp 
$
and likewise $N^{-1} \epfp^\T H \epfp   =\opp$ and $ N^{-1} Y_\pi^\T H Y_\pi  =\hat Y^2 + \opp$. 
This verifies \eqref{ehe} and thus the result for $\sesp$. 

For the robust standard errors, with
$
N(\tsesp)^2  
= (NC)^2 \cdot N^{-1} \langle\eta_*^\pi, \eta_*^\pi\rangle$ by Lemma \ref{fwl}\eqref{ese}, where  $N C  = (p_1p_0)^{-1} + \oo$,
it suffices to determine the probability limits of $ N^{-1} \langle\eta_*^\pi, \eta_*^\pi\rangle$ for 
\begina
\begin{array}{ll}
\efl= (I-H) e_\pi - \delta^\T \hbflp, &\quad \ek= e_\pi - \delta^\T\hbkp,\\
\etb= (I-H)  \epfp - \delta^\T(\hbtbp - \htf),
&\quad\emly= (I-H)  Y_\pi -  \delta \hbmp = (I-H)  (Y-\on\hat Y)_\pi  -  \delta \hbmp.
\end{array}
\enda
Lemma \ref{as_Z} ensures 
\beginy\label{eq:bd}
\|Y\|_4^4 = O(N), \quad\|X_j\|_4^4 = O(N), \quad \|e\|_4^4= O(N), \quad \|\delta\|_4^4=O(N),\quad  \|\epf\|_4^4 = O(N). 
\endy

For $\ek =e_\pi - \delta\hbkp$,   \eqref{ada}--\eqref{eq:bd} ensures $N^{-1}\langle\delta ,   \delta\rangle = O(1)$ and  $N^{-1} \langle e_\pi, \delta \rangle= O(1)$  such that 
\beginy\label{tse_knd}
N^{-1}\langle \ek,\ek \rangle 
&=&  N^{-1}\langle e_\pi ,   e_\pi\rangle -  2N^{-1}\langle e_\pi ,   \delta\rangle \cdot \hbkp  + N^{-1}\langle\delta ,   \delta\rangle \cdot  (\hbkp)^2\nonumber \\
&=& N^{-1}\langle e_\pi ,   e_\pi\rangle + o(1)  
= p_1p_0(S_a^2 + p_1 p_0  \tau^2) + \opp
\endy
by \eqref{as_b} and $N^{-1}\langle e_\pi ,   e_\pi\rangle=p_1p_0(S_a^2 + p_1 p_0  \tau^2)$ from Lemma \ref{qf}\eqref{qf_3}. This verifies the result for $\tsekp$. 

For $\efl = e_\pi - \delta\hbflp - He_\pi$, \eqref{he} ensures
\begina
N^{-1} \langle He_\pi, He_\pi\rangle 
= \lambda^{-2}   (N^{-1}X^\T e_\pi)^\T (N^{-1}X^\T \Lambda X) (N^{-1}X^\T e_\pi)
 = \opp, 
\enda
where the last identity follows from 
$N^{-1}X^\T e_\pi=\opp$  by Lemma \ref{qf}\eqref{qf_2} and $N^{-1}X^\T \Lambda X = O(1)$ by $N^{-1}\|\delta\|_4^4 = O(1)$ and $N^{-1}\|X_j\|_4^4 = O(1)$ from \eqref{eq:bd}. 
This, together with 
\begina
N^{-1} \langle e_\pi - \delta\hbflp,e_\pi - \delta\hbflp\rangle 
=N^{-1}  \langle e_\pi ,   e_\pi\rangle + o(1) = p_1p_0 (S_a^2 + p_1 p_0  \tau^2) + \opp
\enda by the same reasoning as \eqref{tse_knd}, ensures
\begina
N^{-1} \langle e_\pi - \delta\hbflp, He_\pi\rangle  \leq  
\Big(N^{-1}  \langle e_\pi - \delta\hbflp,e_\pi - \delta\hbflp\rangle \Big)^{1/2} 
\Big( N^{-1}  \langle He_\pi, He_\pi\rangle  \Big)^{1/2}= \opp
\enda
by \eqref{ada} and thus 
\begina
N^{-1} \langle \efl,\efl \rangle  
&=& N^{-1}\langle e_\pi - \delta\hbflp - He_\pi,e_\pi - \delta\hbflp - He_\pi\rangle\\
&=& N^{-1}\langle e_\pi - \delta\hbflp,e_\pi - \delta\hbflp\rangle  +N^{-1}\langle He_\pi, He_\pi\rangle   -  2N^{-1}\langle e_\pi - \delta\hbflp, He_\pi\rangle \\
&=& p_1p_0 (S_a^2 + p_1 p_0  \tau^2) + \opp.
\enda
This verifies the result for $\tseflp$. 

The result for $\tsetbp$ and $\tsemp$ follows from almost identical reasoning after replacing $e_\pi$ with $\epfp$ and $Y_\pi - \on \hat Y$ in the proof of $\tseflp$. 
\end{proof}

\end{document}